\newtheorem{theorem}{Theorem}[section]
\newtheorem{lemma}[theorem]{Lemma}
\newtheorem{claim}[theorem]{Claim}
\newtheorem{definition}[theorem]{Definition}
\newtheorem{remark}[theorem]{Remark}
\newtheorem{problem}{Problem}
\newtheorem{conjecture}{Conjecture}
\newtheorem{fact}{Fact}
\newcommand{\R}{\mathbb{R}}
\newcommand{\Z}{\mathbb{Z}}
\newcommand{\calE}{\mathcal{E}}
\newcommand{\calC}{\mathcal{C}}
\newcommand{\calEHat}{\widehat{\mathcal{E}}}
\newcommand{\calETilde}{\widetilde{\mathcal{E}}}
\newcommand{\calN}{\mathcal{N}}
\newcommand{\calI}{\mathcal{I}}
\newcommand{\eps}{\varepsilon}
\newcommand{\Prob}[1]{\text{Pr}[#1]}
\newcommand{\ProbBig}[1]{\text{Pr}\Big[#1\Big]}
\newcommand{\N}{\mathbb{N}}
\newcommand{\poly}{\text{poly}}
\newcommand{\polylog}{\text{polylog}}
\newcommand{\argmin}{\text{argmin}}
\newcommand{\med}{\text{med}}
\newcommand{\nnz}{\text{nnz}}
\title{Optimal $\ell_1$ Column Subset Selection and a Fast PTAS for Low Rank Approximation}
\author{Arvind V. Mahankali\\CMU\\amahanka@andrew.cmu.edu \and David P. Woodruff\\CMU\\dwoodruf@cs.cmu.edu}
\date{}
\begin{document}
\maketitle
\begin{abstract}
We study the problem of entrywise $\ell_1$ low rank approximation. We 
give the first polynomial time column subset selection-based $\ell_1$ low rank approximation algorithm sampling $\widetilde{O}(k)$ columns and achieving an $\widetilde{O}(k^{1/2})$-approximation for any $k$, improving upon the previous best $\widetilde{O}(k)$-approximation and matching a prior lower bound for column subset selection-based $\ell_1$-low rank approximation which holds for any $\poly(k)$ number of columns. We extend our results to obtain tight upper and lower bounds for column subset selection-based $\ell_p$ low rank approximation for any $1 < p < 2$, closing a long line of work on this problem.

We next give a $(1 + \eps)$-approximation algorithm for entrywise $\ell_p$ low rank approximation of an $n \times d$ matrix, for $1 \leq p < 2$, that is not a column subset selection algorithm. First, we obtain an algorithm which, given a matrix $A \in \R^{n \times d}$, returns a rank-$k$ matrix $\widehat{A}$ in $2^{\poly(k/\eps)} + \poly(nd)$ running time that achieves the following guarantee:
$$\|A - \widehat{A}\|_p \leq (1 + \eps) \cdot OPT + \frac{\eps}{\poly(k)}\|A\|_p$$
where $OPT = \min_{A_k \text{ rank }k} \|A - A_k\|_p$. Using this algorithm, in the same running time we give an algorithm which obtains error at most $(1 + \eps) \cdot OPT$ and outputs a matrix of rank at most $3k$ --- these algorithms significantly improve upon all previous $(1 + \eps)$- and $O(1)$-approximation algorithms for the $\ell_p$ low rank approximation problem, which required at least $n^{\poly(k/\eps)}$ or $n^{\poly(k)}$ running time, and either required strong bit complexity assumptions (our algorithms do not) or had bicriteria rank $3k$. Finally, we show hardness results which nearly match our $2^{\poly(k)} + \poly(nd)$ running time and the above additive error guarantee.
\end{abstract}

\newpage
\tableofcontents
\newpage

\thispagestyle{empty}
\clearpage
\setcounter{page}{1}

\section{Introduction}

Low rank approximation is one of the most fundamental problems in data science and randomized numerical linear algebra. In this problem, one is
given an $n \times d$ matrix $A$ and a rank parameter $k$, and one would like to approximately decompose $A$ as $U \cdot V$, where
$U \in \R^{n \times k}$ and $V \in \R^{k \times d}$ are the low rank factors. This gives a form of compression, since rather than storing the $nd$ parameters
needed to represent $A$, we can store only $(n+d)k$ parameters to store the low rank factors. One can also multiply $U \cdot V$ by a vector in
$O((n+d)k)$ time, rather than the $O(nd)$ time needed for multiplication by $A$. The singular value decomposition (SVD) can be used to find the
best low rank approximation of $A$ with respect to the sum of squares of differences, i.e., the Frobenius norm error measure, but this measure
is often not robust enough in applications, since the need to fit single large outliers in $A$ is exacerbated by the squared error measure, causing
$U$ and $V$ to overfit such outliers and not capture enough of the remaining entries of $A$.

To overcome this, a large body of work has studied other, more robust error measures, with a notable one being the entrywise $\ell_1$-low rank approximation problem: given $A \in \mathbb{R}^{n \times d}$, a rank parameter $k$, and an approximation parameter $\alpha \geq 1$, 
find $U \in \mathbb{R}^{n \times k}$ and $V \in \mathbb{R}^{k \times d}$ so that:
$$\|U \cdot V - A\|_1 \leq \alpha \cdot \min_{U' \in \mathbb{R}^{n \times k}, V' \in \mathbb{R}^{k \times d}} \|U'V'-A\|_1,$$
where for a matrix $B \in \mathbb{R}^{n \times d}$, $\|B\|_1 = \sum_{i=1}^n \sum_{j=1}^d |B_{i,j}|$ is its entrywise $1$-norm. Since we
take the sum of absolute values of differences of entries of $A$ and corresponding entries of $U \cdot V$, this is often considered more robust than
the Frobenius norm error measure, which takes the squared differences. There is a large body of work on applications of $\ell_1$ matrix factorization, as well as practical improvements obtained by optimizing the $\ell_1$ error measure rather than the more well-understood Frobenius norm error measure, in areas such as computer vision and machine learning. For instance, in \cite{kk05} and later in \cite{zlsyO12} it was shown that optimizing the $\ell_1$-based objective above, or a regularized version of it, yields much better performance than an IRLS-based approach or other $\ell_2$-based methods on the structure-from-motion problem. Various other works mentioned below, also motivated by problems in machine learning and signal processing, developed heuristics for $\ell_1$ low rank approximation and similar problems.

The $\ell_1$-low rank approximation problem, for $\alpha = 1$, was shown to be NP-hard in \cite{gv15}, and assuming the Exponential Time Hypothesis,
it requires $2^{\Omega(1/\epsilon)}$ time for $\alpha = 1+\epsilon$. While these results rule out extremely accurate solutions to the $\ell_1$-low
rank approximation problem in polynomial time, they leave open the possibility of larger approximation factors. Such approximation factors are of
considerable interest, and we note that a low rank approximation $U \cdot V$ corresponding to an approximation factor $\alpha \ll \sqrt{nd}$ for 
the $\ell_1$-low rank approximation problem may be much better in applications than a low rank approximation corresponding to an exact solution to the 
Frobenius norm error measure, since the error measures are incomparable\footnote{We note that the exact solution to the Frobenius norm error measure gives
a $\sqrt{nd}$-approximation to $\ell_1$-low rank approximation by relating the entrywise $2$-norm to the entrywise $1$-norm.}. A number of heuristics were proposed for
the $\ell_1$-low rank approximation problem in  \cite{kk03,kk05,kimefficient,kwak08,zlsyO12,bj12,bd13,bdb13,meng2013cyclic,mkp13,mkp14,mkcp16,park2016iteratively}. The first rigorous approximation factors
were proven in \cite{l1_lower_bound_and_sqrtk_subset}, where it was shown that $\alpha = \textrm{poly}(k) \log n$-approximation is achievable in poly$(ndk)$ time\footnote{Note that the problem is symmetric in $n$ and $d$, so if $d < n$, we can instead write this as $\textrm{poly}(k) \log d$.}. Later, in \cite{algorithm3_original} an
alternative poly$(ndk)$ time algorithm with poly$(k \log(nd))$-approximation factor was given which held for every entrywise $\ell_p$ norm, for any constant $p \geq 1$.

In fact, each of the algorithms above is, or can be used to obtain a poly$(dk)$ time {\it column subset selection} algorithm for the $\ell_1$-low rank approximation problem with a
$\textrm{poly}(k \log (nd))$ approximation factor. Column subset selection is a special
type of low rank approximation in which the left factor $U$ corresponds to a subset of columns of $A$ itself. Column subset selection is a widely studied and extremely important special case of low rank approximation (see, e.g., \cite{bmd09,dr10,bdm11} and the references therein); it has a number of advantages - for example, if the columns of $A$ are sparse, then the columns in the left factor $U$ are also sparse. One might argue that the column subset selection problem, also known as feature selection (since the columns of $U$ can be thought of as the important features), is sometimes more important than the low rank approximation problem itself. Due to these advantages, it has been of interest to determine how well column subset selection algorithms can do compared to the optimal low-rank approximation error, in the Frobenius norm \cite{dv06, dr10, bmd09}, spectral norm \cite{dr10, bmd09} and in the $\ell_1$ norm \cite{algorithm3_original, SongWZ19a} and even for more general loss functions \cite{DBLP:conf/nips/SongWZ19}.

Often for column subset selection,
we allow for a {\it bicriteria } approximation, namely, for outputting a low rank matrix of rank $r$ with $r$ a bit larger than $k$. 
Bicriteria approximations are common in the low rank approximation literature
\cite{dv07,FFSS07,cw15focs}, and correspond to the case when $k$ is not known or is not a hard constraint; note that bicriteria approximations still capture the original compression and
fast multiplication motivations of low rank approximation discussed above. Moreover, bicriteria approximations are necessary in the context of column subset
selection for some norms; indeed, for Frobenius norm it is known \cite{dv06} that with exactly $k$ columns the best approximation factor possible is $\Theta(k)$, 
while with $O(k)$ columns an $O(1)$ approximation is possible.

A natural question is what the limits of column subset selection algorithms for $\ell_1$-low rank approximation are. It was shown in \cite{DBLP:conf/nips/SongWZ19} that the $O(k \log (nd))$-approximation can be improved to $O(k \log k)$ with bicriteria rank $r = O(k \log n)$, via a polynomial time algorithm. Moreover, if one is willing to spend $n^{O(k \log k)}$
time, it was shown in \cite{l1_lower_bound_and_sqrtk_subset} how to obtain an approximation factor\footnote{See Theorem C.1, part III, of \cite{l1_lower_bound_and_sqrtk_subset}. Note that part I of that theorem also obtains an $O(\sqrt{k \log k})$-approximation, but it is not a column subset selection algorithm.} of $O(\sqrt{k \log k})$. From a purely combinatorial perspective,
this is close to best possible as \cite{l1_lower_bound_and_sqrtk_subset} shows that there exist matrices for which any subset of at most poly$(k)$ columns provides at best a $k^{1/2-\gamma}$-approximation, for an arbitrarily small constant $\gamma > 0$. We note
that for every $p \geq 2,$ there are tight bounds on the size of the best column subset selection algorithms for entrywise $\ell_p$-low rank approximation known\footnote{For entrywise $\ell_p$-low rank approximation, we seek to find rank-$k$ factors $U$ and $V$ so as to minimize $\sum_{i,j} |(U \cdot V-A)_{i,j}|^p$.}, and there are polynomial time algorithms achieving these (see Theorem 4.1 of \cite{dwzzr19}). For $1 \leq p \leq 2$, however, the best known upper bounds have an approximation factor of roughly $k^{1/p}$, while the known lower bounds are only $k^{1-1/p}$ for $1 < p < 2$ \cite{dwzzr19}. In addition, the lower bounds of \cite{dwzzr19} only hold when $k$ columns are selected, and do not rule out smaller approximation factors through slightly larger bicriteria ranks. While these results impose a limit on the approximation error achievable through column subset selection, the current gap of an $O(k \log k)$ approximation
factor versus a $k^{1/2-\gamma}$ lower bound for column subset selection algorithms for $p = 1$ has remained elusive:
\begin{center}
    {\it {\bf Question 1:} What is the best approximation factor for column subset selection for $\ell_1$-low rank approximation achievable by a polynomial time algorithm? What about for
    $1 < p < 2$?}
\end{center}

While the above approximation factors are highly non-trivial, the lower bound of \cite{l1_lower_bound_and_sqrtk_subset} rules out better than a $k^{1/2-\gamma}$-approximation with column subset selection
methods, for arbitrarily small constant $\gamma > 0$. A natural question is if one can improve these approximation factors to $O(1)$ or even $(1 + \eps)$ in polynomial time. This was the main question underlying 
\cite{ptas_for_lplra}, which, motivated by the fact that $k$ is often small, took a parameterized complexity approach and showed how to obtain a $(1+\epsilon)$-approximation in $n^{\textrm{poly}(k/\epsilon)}$ time under the assumption that the entries of the input matrix $A$
are integers in the range $\{-\textrm{poly}(n), \ldots, \textrm{poly}(n)\}$, or alternatively, outputting a matrix of rank $3k$ instead of $k$ with the same approximation factor guarantee, while removing this assumption on the entries of $A$. This algorithm has several drawbacks though: (1) the $n^{\Omega(k)}$ running time even for constant $\epsilon$ is prohibitive and makes the algorithm super-polynomial time for any $k = \omega(1)$, i.e., if $k$ is larger than any constant, and (2) the $O(\log n)$ 
bit complexity assumption on the entries of $A$ may not be realistic; ideally one should allow poly$(nd)$ bit complexity. This motivates the following central question:
\begin{center}
    {\it {\bf Question 2:} Is it possible to obtain a polynomial time algorithm for $\ell_1$-low rank approximation for $k = \omega(1)$ with a small bicriteria rank $r$ and with a $(1 + \eps)$ approximation factor? }
\end{center}

\subsection{Our Results}
In this work we obtain results for both column subset selection and general low rank approximation.

\subsubsection{$\ell_p$ Column Subset Selection for $1 \leq p < 2$} 
We resolve Question 1 above up to small factors by giving a $\poly(ndk)$ time algorithm for finding a subset of $O(k \log k \log d)$ columns providing an $O(\sqrt{k} \log^{3/2} k)$-approximation for $\ell_1$-low rank approximation. This nearly matches the lower bound of \cite{l1_lower_bound_and_sqrtk_subset} which shows that there exist matrices for which the span of {\it any } subset of poly$(k)$ columns has approximation
factor at least $k^{1/2-\gamma}$ for an arbitrarily small constant $\gamma > 0$. 
See Table \ref{table:results} for a table containing our results and a comparison to prior work on column subset selection-based algorithms for $\ell_1$-low rank approximation.

\begin{table}[h!]
\centering
    \begin{tabular}{| c c c c |}
    \hline
    Paper & Number of Columns & Approximation Factor & Time Complexity \\
    \hline
    \cite{l1_lower_bound_and_sqrtk_subset} & $O(k \log k)$ & $O(k \log k \log d)$ & poly$(ndk)$ \\
    \cite{l1_lower_bound_and_sqrtk_subset} & $O(k \log k)$ & $O(\sqrt{k \log k})$ & $d^{O(k \log k)}$ \\
    \cite{algorithm3_original} and \cite{dwzzr19} & $O(k \log d)$ & $O(k \log d)$ & poly$(ndk)$\\
    \cite{algorithm3_original} and \cite{dwzzr19} & $k$ & $O(k)$ & $d^{O(k)}$ \\ 
    \cite{DBLP:conf/nips/SongWZ19} & $O(k \log d)$ & $O(k \log k)$ & poly$(ndk)$\\
    \hline
    This paper & $O(k \log k \log d)$ & $O(\sqrt{k} (\log k)^{\frac{3}{2}})$ & poly$(ndk)$\\
    \hline
    \end{tabular}\\
    \caption{Summary of results for $\ell_1$ column subset selection. The result in this paper is nearly optimal as it was shown in \cite{l1_lower_bound_and_sqrtk_subset} that there exist matrices for which any subset of columns of size at least $\poly(k)$ spans at best a $k^{1/2-\gamma}$-approximation, for arbitrarily small constant $\gamma > 0$. We note that \cite{dwzzr19} obtains the same results as \cite{algorithm3_original} for $p = 1$, but obtains strict improvements 
    for every $p > 1$. Another related work is \cite{SongWZ19a} which obtains a column subset giving a $(1+\epsilon)$-approximation, but assumes the entries of the input matrix are i.i.d. from a distribution with a certain moment assumption.}
    \label{table:results}
\end{table} 

We next extend our result
to any $1 < p < 2$, giving a $\poly(ndk)$ time column subset selection-based algorithm for $\ell_p$ low rank approximation, that chooses $O(k \log k (\log \log k)^2 \log d)$ columns and 
provides an $O(k^{1/p-1/2} \log^{2/p - 1/2} k (\log \log k)^{2/p - 1})$-approximation. We are not aware of any matching lower bounds for column subset selection for $1 < p < 2$ (the tight lower bounds
in \cite{dwzzr19} are specifically for $p > 2$ and the lower bound in \cite{l1_lower_bound_and_sqrtk_subset} is only for $p = 1$), so we also prove the first lower bounds for $1 < p < 2$, showing that there exist matrices for which
the span of any column subset of size $k \cdot \polylog(k)$ has approximation factor at least $k^{1/p-1/2-\gamma}$ for an arbitrarily
small constant $\gamma > 0$. This shows that our algorithm is nearly optimal for $1 < p < 2$ as well.

Our results complement the results of \cite{dwzzr19} for the case of $p > 2$; together with our results we obtain
optimal bounds on column subset selection, up to small factors, for every $\ell_p$-norm, $p \geq 1$, closing a line
of work on this problem \cite{l1_lower_bound_and_sqrtk_subset,DBLP:conf/nips/SongWZ19,SongWZ19a,algorithm3_original,dwzzr19}. We note that all the column subset selection algorithms in this line of work are bicriteria algorithms, as motivated and defined above, meaning that the actual number of columns returned is $k \cdot \poly(\log(nk))$, which is optimal up to polylogarithmic factors.

We use the above algorithm as a subroutine to achieve two additional results, a polynomial time $\widetilde{O}(k)$-approximation algorithm with bi-criteria rank $\widetilde{O}(k^2)$ (and no dependence on $n$ and $d$, even logarithmic, in either the approximation factor or rank), and a $\poly(k)$-approximation algorithm with rank $k$ (and no dependence on $n$ and $d$, even logarithmic, in the approximation factor) with running time $\poly(nd)$ --- the first of these algorithms is a column subset selection algorithm, while the second is not. These algorithms may be of independent interest --- note that all previously known algorithms for getting a $\poly(k)$-approximation factor independent of $n$ or $d$ require either $d^k$ time or a $\log d$ term in the bi-criteria rank.

We note that there are also works studying $\ell_1$ column subset selection, which obtain guarantees in terms of the error from the optimal \textit{column subset} --- two such works are \cite{bl_18_css_nonnegative, css_distributed_2020}. Our column subset selection results are not comparable to these. \cite{bl_18_css_nonnegative} studies the problem of fitting the columns of a matrix $B$ using a subset of the columns of a (potentially different) matrix $A$, and considers the case where both $A$ and $B$ have nonnegative entries. The following guarantee is obtained by \cite{bl_18_css_nonnegative}: if $A$ has a column subset of size $k$ obtaining error at most $\eps \|B\|_1$, then the algorithm of \cite{bl_18_css_nonnegative} (given a $\delta > 0$) finds a subset of columns of $A$ of size $O(k\log(1/\delta)/\delta^2)$ that fits $B$ with error at most $\sqrt{\delta + \eps}\|B\|_1$. This error could potentially be larger than that of our algorithm which obtains relative error guarantees, if $\|B\|_1$ is much larger than $OPT$ --- in that case, even if there is a column subset obtaining error at most $OPT$, the algorithm of \cite{bl_18_css_nonnegative} could obtain error up to $\sqrt{OPT \cdot \|B\|_1}$, according to this guarantee. Hence, our algorithm could potentially be significantly better in the case where $\|B\|_1 \gg \poly(k)OPT$. \cite{css_distributed_2020} gives a protocol for distributed column subset selection in the $\ell_p$-norm, which obtains an $\widetilde{O}(\sqrt{k} \log(d))$-approximation \textit{relative to the error of the best column subset} when $p = 1$. This could be larger than the error from our algorithm in cases where the error due to the best column subset is significantly larger than $OPT$ --- note that by the lower bound of \cite{l1_lower_bound_and_sqrtk_subset}, the error of the best column subset could be $\Omega(k^{\frac{1}{2} - \alpha} \cdot OPT)$ for an arbitrarily small constant $\alpha$.

\subsubsection{General $\ell_p$ Low Rank Approximation for $1 \leq p < 2$}

We next consider Question 2 
and give a new bicriteria algorithm for entrywise $\ell_1$
low rank approximation which goes beyond column subset selection. We improve upon the previous best bicriteria algorithms achieving $O(1)$ and $(1 + \eps)$-approximation, which require at least $n^{\textrm{poly}(k/\eps)}$ time and are not polynomial time when $k = \omega(1)$. In contrast, our algorithm is polynomial time even for slow-growing functions $k$ of $n$, such as $k = \Theta(\log^c n)$, for an absolute constant $c > 0$. 

As mentioned above, bicriteria algorithms are natural and well-studied in this context. We note though that the main algorithm
of \cite{ptas_for_lplra} outputs a rank-$k$ matrix while ours is bicriteria; however, no $(1 + \eps)$ or $O(1)$ approximations in less than $n^{\textrm{poly}(k)}$ time were known even for bicriteria algorithms. In particular, even the bicriteria algorithm of \cite{ptas_for_lplra} requires $n^{\poly(k/\eps)}$ time.

We give a $(1 + \eps)$-approximation algorithm with $2^{\poly(k/\eps)} + \poly(nd)$ running time and output rank at most $3k$. See Table \ref{table:resultsGeneral} for a table containing our result and a comparison to prior
work for $\ell_1$-low rank approximation, where we list all previous $O(1)$ and $(1 + \eps)$ approximation algorithms. Our algorithm also works for $\ell_p$ low rank approximation for $1 \leq p < 2$.

An interesting aspect of our result is that it shows entrywise $\ell_p$-low rank approximation for $p \in [1, 2)$, with
the above approximation factor and bicriteria rank, is {\it not $W[1]$-hard}, in the language of parameterized
complexity, which a priori may have been the case as all previous algorithms required $n^{\textrm{poly}(k)}$ time.

\begin{table}[h!] 
\centering
    \begin{tabular}{| c c c c c |}
    \hline
    Paper & Bicriteria Rank & Approximation Factor & Time Complexity & Notes \\
    \hline
    \cite{l1_lower_bound_and_sqrtk_subset}, Lemma C.10 & $k$ & $O(1)$ & $n^{\textrm{poly}(k)}$ & BCA \\
    \cite{l1_lower_bound_and_sqrtk_subset}, Theorem C.9 & $3k$ \footnotemark & $O(1)$ & $n^{\textrm{poly}(k)}$ & \\
    \cite{ptas_for_lplra} & $k$ & $1 + \eps$ & $n^{\textrm{poly}(k/\eps)}$ & BCA \\
    \cite{ptas_for_lplra} & $3k$ & $1 + \eps$ & $n^{\textrm{poly}(k/\eps)}$ & \\
    \hline
    This paper & $3k$ & $1 + \eps$ & $2^{\poly(k/\eps)} + \poly(nd)$ & \\
    \hline
    \end{tabular}\\
    \caption{Summary of our results for general $\ell_1$-Low Rank Approximation. We write BCA in the Notes next to each algorithm if it makes bit-complexity assumptions on the entries of the input matrix. We note that the non-bicriteria algorithm of \cite{ptas_for_lplra} actually has an
    $(Mn)^{\textrm{poly}(k)}$ running time, where $M$ is the maximum value of an entry of the input matrix, and thus is not even polynomial time for constant
    $k$ if the entries of the input matrix are expressed with more than $\log n$ bits. In contrast, our algorithm is polynomial in the input description
    length, and thus for example, can handle entries as large as $M = 2^{\textrm{poly}(n)}$. The prior algorithms from \cite{l1_lower_bound_and_sqrtk_subset} and
    \cite{ptas_for_lplra} require time $n^{\textrm{poly}(k)}$, and thus are not polynomial time for any $k = \omega(1)$. In contrast, our algorithms are
    all polynomial time even if $k$ is a slow-growing function of $n$, such as $\Theta(\log^c n)$ for an absolute
    constant $c > 0$. Another work \cite{k18_gd_lp_lra} obtains a $(1 + \eps)$-approximation algorithm via gradient descent. This algorithm is not polynomial time or fixed-parameter tractable in the worst case, since the number of iterations of gradient descent required is $O(\frac{\sigma_k(\widehat{X}_k^*)}{\eps OPT})$, where $\widehat{X}_k^*$ is the optimal rank-$k$ approximation to $A$ in the Frobenius norm, and $\sigma_k(\widehat{X}_k^*)$ is its $k^{th}$ singular value. This convergence rate can be arbitrarily large: for instance, when the input matrix $A \in \R^{n \times n}$ is a diagonal matrix with $2^{\poly(n)}$ for its first $k$ entries, and $1$ for the remaining $n - k$ entries, it is $\frac{2^{\poly(n)}}{n - k}$.}
    \label{table:resultsGeneral}
\end{table}

\footnotetext{There is a slight typo in \cite{l1_lower_bound_and_sqrtk_subset} in Theorem C.9 and its proof, where the bicriteria rank is said to be $2k$ --- the bicriteria rank of that algorithm is actually $3k$, since once the $\poly(n)$-approximation $B$ is subtracted off from the target matrix $A$, a good rank-$2k$ approximation $M$ for $B - A$ is needed to recover the original optimum for $A$, from $B - A$.}

\subsubsection{Hardness for $\ell_p$ Low Rank Approximation with Additive Error - Appendix \ref{appendix:additive_error_hardness}}

As an intermediate step for $\ell_p$ low rank approximation, we obtain an algorithm (Algorithm \ref{algorithm:guessing_eps_approximation} for $p = 1$ and Algorithm \ref{algorithm:guessing_eps_approximation_general_lp} for general $p$) which achieves the following guarantee: given $A \in \R^{n \times d}$ and $k \in \N$, it obtains a matrix $\widehat{A}$ of rank at most $k$ such that
\begin{equation} \label{eq:additive_error_guarantee_lp_lra}
\begin{split}
\|\widehat{A} - A\|_p \leq (1 + \eps) \min_{A_k \text{ rank }k} \|A_k - A\|_p + \frac{\eps}{f}\|A\|_p
\end{split}
\end{equation}
in $f^{\poly(k/\eps)} + \poly(nd)$ time, where $f$ can be any desired number greater than $1$. To our knowledge, there do not exist hardness results for such a guarantee (or for bi-criteria approximations). It is known, due to \cite{ptas_for_lplra}, that achieving an $O(1)$-approximation for $\ell_p$ low rank approximation ($p \in (1, 2)$) requires at least $2^{k^{\Omega(1)}}$ running time assuming the Small Set Expansion Hypothesis (SSEH) \cite{rs_2010_sseh_original} and Exponential-Time Hypothesis (ETH) \cite{ip_2001_eth_original}. We extend the techniques of \cite{ptas_for_lplra} to show that even achieving the guarantee in Equation \ref{eq:additive_error_guarantee_lp_lra} requires $2^{k^{\Omega(1)}}$ time assuming SSEH and ETH, for $p \in (1, 2)$, if $f \geq 2^{\poly(k)}$, even when $\eps = \Theta(1)$. Hence, our algorithm is close to optimal in a sense, since it can achieve that guarantee in $2^{\poly(k)} + \poly(nd)$ time.

We also show that it is optimal in the following related sense: it can achieve a similar guarantee in $2^{\poly(k)} + \poly(nd)$ time for the related problem of constrained $\ell_1$ low rank approximation.

\begin{problem}[Constrained $\ell_1$ Low Rank Approximation]
Given a matrix $A \in \R^{n \times d}$ and a subspace $V \subset \R^n$, find a matrix $\widehat{A}$ of rank at most $k$ minimizing $\|\widehat{A} - A\|_1$, such that the columns of $\widehat{A}$ are in $V$.
\end{problem}
Our algorithm for $p = 1$ (see Algorithm \ref{algorithm:guessing_eps_approximation}) can be modified very slightly so that it achieves the same guarantee for this problem as well --- the modified algorithm can compute, in $2^{\poly(k)} + \poly(nd)$ time, a matrix $\widehat{A}$ of rank at most $k$ such that
$$\|\widehat{A} - A\|_1 \leq O(1) \min_{A_k} \|A_k - A\|_1 + \frac{1}{2^{\poly(k)}}\|A\|_1$$
and the columns of $\widehat{A}$ are contained in $V$ --- here, the minimum on the right-hand side is also taken over $A_k$ with rank at most $k$, such that the columns of $A_k$ are in $V$. Assuming the SSEH and randomized ETH (used for instance in \cite{dhmtw14_randomized_eth}), we show that achieving this guarantee also requires at least $2^{k^{\Omega(1)}}$ time.

\subsection{Our Techniques}

We give an overview of our arguments for $\ell_1$-column subset selection, then for general $\ell_1$-low rank approximation. The arguments for $\ell_p$-column subset selection and $\ell_p$-low rank approximation, for $p \in (1, 2)$, are similar and are included in Section \ref{section:lp_css} and Section \ref{section:fpt_approx} respectively.

\subsubsection{$\ell_1$ Column Subset Selection}

\paragraph{Algorithm and Initial $\widetilde{O}(\sqrt{k}) \log(d)$ Approximation Factor.}

In our algorithm, we uniformly sample a column subset $S^{(0)}$ of size $t = O(k \log k)$ of our input matrix $A$ 
and argue that we can approximately span a constant fraction of remaining columns of $A$ using $S^{(0)}$, where to approximately span
the $i$-th column means to obtain a column vector $v^i$ for which $\|v^i - A_i\|_1 = O(\sqrt{k \log k}) OPT/d$, where
$OPT$ is the cost of the optimal rank-$k$ approximation to $A$. Thus, our total cost to cover a constant fraction of columns
will be $O(\sqrt{k \log k}) OPT$. We then recurse on the remaining fraction of columns. If there are $d_1$
columns in the next recursive call, we argue we approximately span a constant fraction of the remaining columns, where now to
approximately span the $i$-th column means to obtain a column vector $v^i$ for which $\|v^i-A_i\|_1 = O(\sqrt{k \log k}) OPT/d_1$. 
Again, the total cost to cover a constant fraction of remaining columns is $O(\sqrt{k \log k}) OPT$. We then recurse on 
the columns still remaining. Since we approximately span a constant fraction of columns in each recursive step, after $O(\log d)$ recursive
steps we will have spanned all $d$ columns. The total number of columns we will have chosen is $O(k \log k \log d)$
and the overall approximation factor will be $O(\sqrt{k \log k} \log d)$.

The algorithm described above is simple, and reminiscent of column sampling algorithms \cite{algorithm3_original,dwzzr19,DBLP:conf/nips/SongWZ19}
in prior work. 
However, our analysis is completely new and does not involve going through maximum determinant subsets, as in each of these previous column sampling
algorithms. These works argued that if one uniformly samples a set $S$ of $2k$ columns of $V^* \in \mathbb{R}^{k \times n}$, 
where $A_k = U^*V^*$ is the best rank-$k$ approximation to $A$, and considers a random additional 
column $c$, then with probability at least $1/2$, the maximum determinant subset (which is of size $k$) of columns of $V^*_{S \cup \{c\}} \in \mathbb{R}^{k \times (2k+1)}$
does not contain $c$, and consequently, by Cramer's rule, $c$ can be expressed as a linear combination of columns in our sample set $S$ with coefficients of absolute
value at most $1$, and thus by the triangle inequality one pays a cost at most what the subset $S$ pays to approximate the $c$-th column of $A$,
which since $S$ was chosen uniformly at random, is at most $O(k) OPT/d$ with constant probability. 

We do not know how to reduce the approximation factor in the analyses of all of these previous algorithms; intuitively, the difficulty
stems from the fact that the maximum determinant subset may not be the best subset to look at for $\ell_1$; indeed, it could be that for a random $c$,
one needs coefficients of absolute value $1$ to span it using the columns in the set $S$, and it is unclear how the error propagates other than
through the triangle inequality. We thus give the first analysis of the above sampling framework that {\it does not go through maximum determinant subsets}.

Instead, we argue that if one had $V^*$, then one could sample columns using its so-called {\it Lewis weights} \cite{lewis_weights}, creating a sampling and rescaling 
matrix $R$ with $t/2 = O(k \log k)$ columns, 
so that the solution $U = \textrm{argmin}_U \|A_{S \cup c}R -U (V^*)_{S \cup c} R\|_1$ would be an $O(1)$-approximate rank-$k$ left factor
for the submatrix of $A$ indexed by $S \cup c$. By relating $\ell_1$ and $\ell_2$-norms of the rows of $A_{S \cup c}R - U(V^*)_{S \cup c}R$ and using the normal equations for
least squares regression, we get that $U' = A_{S \cup c} R ((V^*)_{S \cup c} R)^{+}$ provides an $O(\sqrt{k \log k})$-approximate rank-$k$ left
factor for the submatrix of $A$ indexed by $S \cup c$. The advantage of $U'$ is that it is in the column span of $A_{S \cup c} R$. Note that this subroutine
should be reminiscent of the algorithm of \cite{l1_lower_bound_and_sqrtk_subset}, which argued one could obtain an $O(\sqrt{k \log k})$-approximate column
subset selection algorithm by enumerating over all subsets of $t/2$ columns of $A$; however \cite{l1_lower_bound_and_sqrtk_subset} actually does this and suffers
$d^{\Omega(k \log k)}$ time.

Instead, we argue as follows. As noted by \cite{l1_lower_bound_and_sqrtk_subset}, since $A_{S \cup c}R$ is an $O(\sqrt{k \log k})$-approximate left factor for $A_{S \cup c}$ with high probability (over the randomness of $R$), in particular, there exists a fixed matrix $R_0$, with one nonzero entry per column, such that $A_{S \cup c}R_0$ is an $O(\sqrt{k \log k})$-approximate left factor for $A_{S \cup c}$ (and in particular, $A_{S \cup c}R_0$ is a column subset of $A_{S \cup c}$). Recall that $A_c$ is a uniformly random column of $A_{S \cup c}$, $A_{S \cup c}$ has $t + 1$ columns, and $A_{S \cup c}R_0$ has $t/2$ columns (where $t/2 = O(k \log k)$ is the number of columns required by \cite{lewis_weights} to perform $\ell_1$ Lewis weight sampling). Hence, we can now simply argue as in \cite{algorithm3_original, DBLP:conf/nips/SongWZ19} that with probability at least $\frac{1}{2}$ over $S$ and $c$, $A_c$ is not in $A_{S \cup c}R_0$.

In summary, we use the facts that (1) column $A_c$ is not chosen by the matrix $R_0$ with probability at least $\frac{1}{2}$ over $S$ and $c$, (2) the overall cost of using the columns of $A_{S \cup c}R_0$ to approximately span all the columns of $A_{S \cup c}$ is $O(\sqrt{k \log k})OPT_{S \cup c}$, where $OPT_{S \cup c}$ is the total cost on $A_{S \cup c}$ under the optimal rank-$k$ approximation, (3) since $S \cup \{c\}$ is itself a uniformly random subset of $[d]$, the value $OPT_{S \cup c}$ is $O(t/d)OPT$ in expectation, and (4) $A_c$ is among $t/2$ random columns of $A_{S \cup c}$ which are not in the chosen subset of columns $A_{S \cup c}R_0$, and thus has at most a $2/t$ fraction of the total cost on $A_{S \cup c}$, in expectation. Combining these statements gives us that with large constant probability, the cost of approximately spanning $A_c$ using our sampled set is $O(\sqrt{k \log k})OPT/d$, completing the argument.

\begin{remark}
In fact, by considering the properties of the Lewis weight sampling matrix $R$, we can even show $A_{S \cup c}R$ is unlikely to contain the column $A_c$. The subtlety with such an approach is that the distribution of the sampling matrix $R$ may itself depend on $A_c$. In short, we can argue that since $c$ is a uniformly random column in $S \cup c$, this means that in expectation (and with constant probability), the Lewis weight of the column $A_c$ is small. Precisely, the probability that any individual column of $R$ selects $A_c$ is $\Theta(\frac{1}{r}) = \Theta(\frac{1}{k \log k})$. Thus, with constant probability, $A_c$ is not sampled by $R$, since $R$ samples $r = O(k \log k)$ columns of $A$ independently with replacement (where the sampling distribution is determined by the Lewis weights of $V^*$). In summary, with constant probability, it is simultaneously true that (i) $A_{S \cup c}R$ covers $A_{S \cup c}$, meaning it incurs an error of $O(\frac{\sqrt{k \log k}}{d})OPT$ on $A_c$ in expectation, since $A_c$ is a uniformly random column of $A_{S \cup c}$, and (ii) it does not contain $A_c$. The argument is the same from here onwards. This is the approach we took in the previous version of this work on arXiv.
\end{remark}

We generalize this approach to obtain optimal column subset selection algorithms for entrywise $\ell_p$ low rank approximation, for every $1 < p < 2$, replacing the $\ell_1$ Lewis
weights with the $\ell_p$ Lewis weights in the above analysis. Note we cannot use earlier sampling distributions, such as $\ell_p$ leverage scores or total sensitivities, as it
is also important in the argument above that one only needs to sample $O(k \log k)$ columns for a rank-$k$ space and these latter sampling distributions would require
a larger $k^{1+c}$ samples for a constant $c > 0$ (see, e.g., \cite{w14} for a survey); this is important not only for the overall number of
sampled columns but also for the approximation factor, since we also relate the $p$-norm to the $2$-norm through this number.

\paragraph{Removing $\log d$ from the Approximation Factor.}

An improved version of the above argument, inspired by \cite{DBLP:conf/nips/SongWZ19}, gives an $O(\sqrt{k}(\log k)^{\frac{3}{2}})$-approximation rather than $O(\sqrt{k \log k} \log d)$ --- to achieve this, we note that in each round, we can condition on the event that the columns being chosen are not among the $\frac{1}{t}$-fraction of columns which have the highest cost, under the optimal $\ell_1$ rank-$k$ approximation. This event occurs with constant probability. Moreover, for each of the other columns which are not in this top $\frac{1}{t}$-fraction of columns (which are indexed by a subset $F \subset [d]$), we can bound the cost by $O(\sqrt{k \log k})\frac{OPT_{-F}}{d}$, where $OPT_{-F}$ denotes the cost under the optimal rank-$k$ approximation, excluding the errors from the top $\frac{1}{t}$-fraction of columns $F$. Finally, we show that over the course of the $O(\log d)$ recursive rounds, a particular column of $A$ can contribute to $OPT_{-F}$ in at most $O(\log k)$ rounds --- that is, it cannot be outside of $F$ for more than $O(\log k)$ rounds without being approximately covered and discarded. A similar technique was used in \cite{DBLP:conf/nips/SongWZ19} to obtain an $O(k \log k)$-approximation factor independent of $\log d$.

\paragraph{Lower Bound for $\ell_p$ Column Subset Selection, $1 \leq p < 2$.}
Our nearly matching lower bound for entrywise $\ell_p$ low rank approximation is a technical generalization of that for $\ell_1$-low rank approximation given in \cite{l1_lower_bound_and_sqrtk_subset} and we defer the details to Appendix \ref{appendix:lp_css_lower_bound_full_proofs}. For $p \in (1, 2)$, we show that a proof strategy similar to that of \cite{l1_lower_bound_and_sqrtk_subset} can be used to show that any column subset selection algorithm that selects at most $O(k(\log k)^{c_1})$ columns (where $c_1$ can be any constant) achieves no better than an $\Omega(k^{\frac{1}{p} - \frac{1}{2} - \alpha})$ approximation factor in the worst case, where $\alpha$ can be an arbitrary number in $(0, \frac{1}{p} - \frac{1}{2})$.

\paragraph{Additional Column Subset Selection Results (Appendix \ref{appendix:poly_k_bicriteria_rank}): Decreasing the Bicriteria Rank.}

Our polynomial time $O(k(\log k)^2)$-approximation algorithm with $O(k^2 (\log k)^2)$ bicriteria rank makes use of the improved approximation factor that is independent of $\log d$, and relies on the following simple observation. Let $U \in \R^{n \times O(k \log k \log d)}$ be the left factor ultimately returned by our main column subset selection algorithm. In each recursive round, if $S$ is the set of $t = O(k \log k)$ columns which are sampled, then for each column $A_i$ which is discarded during that round, $A_i$ can be approximately covered using only the $t = O(k \log k)$ columns belonging to $S$. The implication of this is that there exists $M$ having rank $O(k \log k \log d)$, which provides an $O(\sqrt{k}(\log k)^{\frac{3}{2}})$-approximation for $A$, such that each column of $M$ can be written exactly as a linear combination of $O(k \log k)$ columns of $U$ (more specifically, $O(k \log k)$ columns of $U$ that were obtained in a single round of sampling from $A$). 

Now, as mentioned above, it was shown in \cite{l1_lower_bound_and_sqrtk_subset} that any matrix has a column subset of size $O(k \log k)$ spanning an $O(\sqrt{k\log k})$-approximation --- hence, $M$ has a column subset of size $O(k \log k)$ which spans an $O(\sqrt{k \log k})$-approximation \textit{to $M$}. By the triangle inequality, one can show that since $M$ is an $O(\sqrt{k}(\log k)^{\frac{3}{2}})$-approximation for $A$, this $O(k \log k)$-sized column subset of $M$ spans an $O(k(\log k)^2)$-approximation for $A$. To form our left factor, for each of these columns $A_i$, we could collect the $O(k \log k)$ columns that were sampled in the round when $A_i$ was covered. Since we do not actually know this $\poly(k)$-approximate subset of columns of $M$, we could na\"ively try all of them --- however, rather than checking all column subsets of $M$ of size $O(k \log k)$, which takes time $d^{O(k \log k)}$, it suffices to check all $O(k \log k)$-sized subsets of the $O(\log d)$ rounds of sampling done by our main algorithm, and take the best subset. There are $\binom{O(\log d)}{O(k \log k)} \leq d^{O(1)}$ such subsets, so this algorithm is polynomial time. Finally, to obtain a $\poly(k)$-approximate matrix with rank at most $k$, we combine this algorithm with an algorithm from \cite{l1_lower_bound_and_sqrtk_subset} (analyzed in Theorem C.19 of that work), which takes a bi-criteria solution of rank $r$ as input and reduces the rank to $k$, at the cost of an increase in the error by a factor of $\poly(r)$. Since the bi-criteria rank of our solution is only $\poly(k)$, the approximation error only increases by a factor of $\poly(k)$. \footnotemark\footnotetext{In an earlier version of this work, we instead used an algorithm based on Algorithm 4 of \cite{algorithm3_original} to reduce the rank to at most $k$. This older version of our algorithm had a running time of $2^{O(k \log k)} + \poly(nd)$. Our new algorithm that returns a matrix of rank at most $k$, shown in Algorithm \ref{algorithm:poly_k_rank_exactly_k} and using an algorithm of \cite{l1_lower_bound_and_sqrtk_subset}, is polynomial time.}

\subsubsection{General $\ell_1$ Low Rank Approximation} \label{subsubsection:general_l1_lra_techniques}

We next turn to general low rank approximation, where it is possible to obtain much smaller approximation factors than with column subset selection. A crucial novelty in our algorithm is the use of a randomized rounding technique for solving an integer linear program (ILP) --- to our knowledge, such an approach was not previously considered in the context of $\ell_1$ or $\ell_p$ low rank approximation. Randomized rounding of relaxations has been previously used in other subspace optimization problems, such as by \cite{dtv_11_soda_subspace_approximation} in the related problem of subspace approximation (in the $\ell_{p, 2}$ norm). However, \cite{dtv_11_soda_subspace_approximation} uses it to select random linear combinations of singular vectors of a matrix obtained by solving a convex relaxation, while we use randomized rounding of an LP to choose columns satisfying multiple linear constraints. One appealing aspect of our algorithm is that it does not use polynomial system solvers, which are somewhat impractical --- these have been used for several other NP-hard matrix factorization problems, such as in \cite{weighted_lra, tensor_lra}.

\paragraph{Background: The Algorithm of \cite{ptas_for_lplra} and its Bicriteria Variant.}

As a starting point, we recall the bicriteria variant of the main algorithm of \cite{ptas_for_lplra} (shown in Algorithm \ref{algorithm:previous_eps_approximation_algorithm}), which uses a median-based sketch to obtain a $(1 + \eps)$-approximation for $\ell_1$ low rank approximation. This sketch was previously considered in \cite{biprw_2018_median_estimator_first}, where it was shown that for a $k$-dimensional subspace $V$ of $\R^n$, if $S \in \R^{\poly(k/\eps) \times n}$ is a random matrix with i.i.d. standard Cauchy entries, then with constant probability, for all $v \in V$, $\med(Sv)$ is within a $(1 + \eps)$ factor of $\|v\|_1$, where $\med(v)$ is the median of the absolute values of the entries of the vector $v$. This sketch was then considered in the context of $\ell_1$ low rank approximation by \cite{ptas_for_lplra}, where the following ``one-sided embedding" property of this sketch is shown: for a given matrix $U \in \R^{n \times k}$, and a fixed matrix $A \in \R^{n \times d}$, if $S \in \R^{\poly(k/\eps) \times n}$, then with constant probability $\med(SUV - SA) \geq (1 - \eps) \|UV - A\|_1$ for \textit{all} matrices $V \in \R^{k \times d}$, where if $M$ is a matrix with $d$ columns, then $\med(M) := \sum_{i = 1}^d \med(M_i)$. These properties make this sketch useful in $\ell_1$ low rank approximation, as we now see.

\begin{algorithm}
\caption{$(1 + \eps)$-approximation algorithm from \cite{ptas_for_lplra} that gives bi-criteria rank $3k$. This is adapted from Algorithm 1 of \cite{ptas_for_lplra} and Theorems 10 and 23 of \cite{ptas_for_lplra}.}
\label{algorithm:previous_eps_approximation_algorithm}
\begin{algorithmic}
\Require $A \in \R^{n \times d}$, $k \in \N$, $\eps > 0$ with $n \geq d$
\Ensure $\hat{A} \in \R^{n \times d}$
\Procedure{PreviousOnePlusEpsApproximation}{$A, k, \eps$}
\State {$B \gets $ The rank-$k$ SVD of $A$}
\State {$C \gets A - B$}
\State {$S \gets $ A $\poly(k/\eps) \times n$ matrix of i.i.d. standard Cauchy random variables}
\State {$U, V \gets $ The $n \times 2k$ and $2k \times d$ zero matrices}
\State {$\widetilde{S} \gets S$ with each entry rounded to the nearest integer multiple of $\frac{\eps}{\poly(n)}$}
\State {Guess all possible values of $\widetilde{S}U^*$ with each entry rounded to the nearest integer multiple of $\frac{\eps^3}{\poly(n)}\|C\|_1$.}
\For {each guessed value $M$ of $\widetilde{S}U^*$}
    \State {$V_{guess} \gets \argmin_{V'} \med(MV' - \widetilde{S}C)$ such that $\|V'\|_\infty \leq \poly(k)$}
    \State {$U_{guess} \gets \argmin_{U'} \|U'V_{guess} - C\|_1$}
    \If {$\|U_{guess} V_{guess} - C\|_1 \leq \|UV - C\|_1$}
        \State {$U \gets U_{guess}$, $V \gets V_{guess}$}
    \EndIf
\EndFor \\
\Return {$B + UV$}
\EndProcedure
\end{algorithmic}
\end{algorithm}

Algorithm \ref{algorithm:previous_eps_approximation_algorithm} is the bicriteria variant of the main algorithm of \cite{ptas_for_lplra} (we cannot directly modify the main algorithm since it requires bit complexity assumptions). The main algorithm is given in Algorithm 1 of \cite{ptas_for_lplra}, and Theorem 10 of \cite{ptas_for_lplra} gives the analysis of that algorithm, while Theorem 23 of \cite{ptas_for_lplra} describes how the algorithm should be modified to remove the need for bit complexity assumptions at the cost of a bicriteria rank of $3k$. Hence, our summary of the analysis largely follows Theorem 10 of \cite{ptas_for_lplra}, with minor modifications as given by Theorem 23 of \cite{ptas_for_lplra}.

Briefly, its analysis proceeds as follows. Define $U^* \in \R^{n \times 2k}$ and $V^* \in \R^{2k \times d}$ so that $U^*V^*$ is the optimal rank-$2k$ approximation for $C$. First, $V^*$ is assumed without loss of generality to be a $\poly(k)$-well-conditioned basis, meaning for all row vectors $x \in \R^{2k}$, $\frac{1}{\poly(k)}\|x\|_1 \leq \|x^TV^*\|_1 \leq \poly(k)\|x\|_1$. Then, $U^*$ is assumed to have all of its entries rounded to the nearest integer multiple of $\frac{\eps \|C\|_1}{\poly(n)}$. This is not an issue, because if $\widetilde{U}$ is the rounded version of $U^*$, then
$$\|(\widetilde{U} - U^*)V^*\|_1 \leq \poly(k) \cdot \|\widetilde{U} - U^*\|_1 \leq \poly(k) \cdot O(nk) \cdot \frac{\eps \|C\|_1}{n \cdot \poly(n)} = \frac{\eps}{\poly(n)} \|C\|_1 \leq \eps \cdot OPT$$
where the second inequality is because $\widetilde{U}, U^*$ have $2nk$ entries, and the last inequality is because the rank-$k$ SVD of $A$ gives an $O(n)$-approximation, assuming $n \geq d$. 

In addition, $S$ is also assumed to be discretized, and the discretized version is written as $\widetilde{S}$. This is done as follows. First, because $V^*$ is a $\poly(k)$-well-conditioned basis, each of its entries is at most $\poly(k)$ (note that in Theorem 10 of \cite{ptas_for_lplra}, it is mentioned that we can assume each entry of $V^*$ is at most $\poly(ndk/\eps)$, but this can be decreased further to $\poly(k)$). Hence, we can restrict ourselves to right factors $V'$ for which each of its entries is at most $\poly(k)$. Now, if $V \in \R^{2k \times d}$ has each entry at most $\poly(k)$, then
$$\|U^*V - C\|_1 \leq \|U^*V\|_1 + \|C\|_1 = \sum_{i = 1}^d \|U^*V_i\|_1 + \|C\|_1 \leq \poly(k)d \|U^*\|_1 + \|C\|_1 = \poly(k) \cdot d \cdot \|C\|_1$$
where the first inequality is by the triangle inequality, the second inequality is because $\|V_i\|_\infty \leq \poly(k)$, and the last equality is because $\|U^*\|_1 \leq \poly(k) \|U^*V^* \|_1 = \poly(k)\|C\|_1$ since $V^*$ is a well-conditioned basis.

Now, the reason for rounding the entries of $S$ to the nearest integer multiple of $\frac{\eps}{\poly(n)}$ (and why this does not significantly increase the error) is that, for any $V$ with $\|V\|_\infty \leq \poly(k)$,
\begin{equation} \label{eq:prev_algorithm_additive_error}
\begin{split}
|\med(\widetilde{S}U^*V - \widetilde{S}C) - \med(SU^*V - SC)|
& \leq \sum_{i = 1}^d \|(\widetilde{S}U^*V_i - \widetilde{S}C_i) - (SU^*V_i - SC_i)\|_\infty \\
& = \sum_{i = 1}^d \|(\widetilde{S} - S)(U^*V_i - C_i)\|_\infty \\
& \leq \sum_{i = 1}^d \|\widetilde{S} - S\|_\infty \|U^*V_i - C_i\|_1 \\
& = \|S - \widetilde{S}\|_\infty \|U^*V - C\|_1
\end{split}
\end{equation}
where the first inequality is due to the fact that, if $v_1, v_2 \in \R^n$, then $|\med(v_1 + v_2) - \med(v_1)| \leq \|v_2\|_\infty$, and the second is because, for a matrix $B$ and a vector $v$, $\|Bv\|_\infty \leq \|B\|_\infty \|v\|_1$. 

Since the entries of $S$ are rounded to the nearest multiple of $\frac{\eps}{\poly(n)}$, $\|S - \widetilde{S}\|_\infty \leq \frac{\eps}{\poly(n)}$ and $\|S - \widetilde{S}\|_\infty \|U^*V - C\|_1$ is at most $\frac{\eps}{\poly(n)} \|C\|_1 = \eps \cdot OPT$. Therefore, not much additional error is incurred when minimizing $\med(\widetilde{S}U^*V - \widetilde{S}C)$, subject to the constraint that $\|V\|_\infty \leq \poly(k)$, as opposed to minimizing $\med(SU^*V - SC)$.

In summary, the algorithm works by guessing all possible values of $\widetilde{S}U^*$. By well-known properties of Cauchy matrices, the entries of $S$ are bounded above by $\poly(n)$, and those of $U^*$ can also be bounded above by $\poly(n) \|C\|_1$, meaning there are $\poly(n/\eps)$ choices for each entry of $\widetilde{S}U^*$, and $\widetilde{S}U^*$ has $\poly(k/\eps)$ entries, meaning the overall running time is $n^{\poly(k/\eps)}$.

\paragraph{Our Approach: Achieving FPT Time by Reducing the Number of Guesses Per Entry.}

Our approach, like those of \cite{ptas_for_lplra} and Theorem C.9 of \cite{l1_lower_bound_and_sqrtk_subset}, follows the general strategy of first taking a good initialization $B$, subtracting it from $A$, and finding a good rank $2k$ approximation for the residual $C := B - A$. Now, the running time of Algorithm \ref{algorithm:previous_eps_approximation_algorithm} is dominated by the time it takes to guess $\widetilde{S}U^*$, and there are $\poly(n/\eps)$ guesses per entry of $\widetilde{S}U^*$. We now describe how with our approach, we reduce the number of possibilities per entry to $\poly(k/\eps)$, while still obtaining a $(1 + \eps)$-approximation.

Perhaps the most obvious optimization to make is to use a better initialization --- rather than letting $B$ be the rank-$k$ SVD of $A$, we could run a $\poly(k)$-approximation algorithm on $A$ to obtain $B$, such as our Algorithm \ref{algorithm:poly_k_rank_exactly_k} which gives a $\poly(k)$-approximation with rank at most $k$ in $\poly(nd)$ time. A $\poly(k)\log(d)$-approximation algorithm, such as that of \cite{l1_lower_bound_and_sqrtk_subset}, would also suffice.

Using an initialization algorithm with a better approximation factor reduces the number of guesses per entry of $\widetilde{S}U^*$, but the number of guesses remains $\poly(n/\eps)$, rather than $\poly(k/\eps)$, mainly for the following reasons:
\begin{itemize}
    \item When $U^*$ is being discretized (recall that this is not explicitly done in the algorithm, but the analysis assumes $U^*$ is discretized in order to have a finite number of entries to guess) the entries still need to be rounded to the nearest integer multiple of $\frac{\eps \|C\|_1}{n \cdot \poly(k)}$. This is because $U^*$ has $n$ rows, and therefore, if $\widetilde{U}$ denotes the rounded version of $U^*$, then the additional error from using $\widetilde{U}$ instead of $U^*$ can still only be upper bounded by
    $$\|(\widetilde{U} - U^*)V^*\|_1 \leq \poly(k) \|\widetilde{U} - U^*\|_1 \leq \poly(k) \cdot O(nk) \cdot \|\widetilde{U} - U^*\|_\infty$$
    in the worst case --- meaning a rounding granularity of at least $\frac{1}{n}$ is needed.
    \item When $S$ is being discretized to obtain $\widetilde{S}$, then as mentioned above in Equation \ref{eq:prev_algorithm_additive_error}, the additive error is at most $\|S - \widetilde{S}\|_\infty \|U^*V - C\|_1$, where $V$ is the right factor that Algorithm \ref{algorithm:previous_eps_approximation_algorithm} obtains. Recall that the upper bound for $\|U^*V - C\|_1$ is $\poly(k) \cdot d \cdot \|C\|_1$ for \textit{all} $V$ with no entry larger than $\poly(k)$ (in our exposition of the algorithm of \cite{ptas_for_lplra}, we showed that we just need to consider $V$ with no entry larger than $\poly(k)$, while the original algorithm in \cite{ptas_for_lplra} in fact allowed $V$ to have entries at most $\poly(n/\eps)$ --- this small change can only improve the upper bound on the additive error). Hence, when rounding $S$, a granularity of at least $\frac{1}{d}$ seems to be needed.
\end{itemize}

It is not clear how to circumvent these issues if we round $S$ and $U^*$ separately. Instead, to avoid rounding with granularities of $\frac{\eps}{n}$ or $\frac{\eps}{d}$, we round $SU^*$ itself in our analysis. Specifically, we show the following. If $M$ is $SU^*$, but with each entry rounded to the nearest power of $1 + \frac{1}{\poly(k/\eps)}$, or set to $0$ if it is below $\poly(\eps/k) \cdot OPT$, then we obtain a small additive error by solving the following problem instead:
$$\min_{V'} \med(MV' - SC) \text{ subject to } \|V'\|_1 \leq \poly(k)$$
and then again finding a good left factor $U'$ for $V'$ through linear programming. The number of choices for each entry of $M$ is then $\poly(k/\eps)$, and since $M$ is a $\poly(k/\eps) \times k$ matrix (in fact, a $k \cdot \poly(1/\eps) \times k$ matrix) the number of guesses for $M$ is $2^{O(k^2 \cdot \poly(1/\eps) \cdot \polylog(k/\eps))}$. Note that the constraint on $V'$ is now different --- instead of having the constraint that $\|V'\|_\infty \leq \frac{\poly(n)}{\eps}$, as in the main algorithm of \cite{ptas_for_lplra}, or $\|V'\|_\infty \leq \poly(k)$, as in our presentation of that paper's algorithm, we instead enforce a constraint on the $\ell_1$-norm of $V'$. This has the benefit that the additive error obtained by minimizing $\med(MV' - SC)$ instead of $\med(SU^*V' - SC)$ is small --- this is necessary because we are now rounding $SU^*$ to a coarser granularity, $\frac{1}{\poly(k/\eps)}$ instead of $\frac{\eps^2}{\poly(n)}$. However, enforcing the constraint that $\|V'\|_1 \leq \poly(k)$ is nontrivial, as we now see.

\begin{remark}
We round each entry of $SU^*$ to the nearest power of $1 + \frac{1}{\poly(k/\eps)}$ --- a similar alternative approach that seems to work is rounding the entries to the nearest multiple of $\frac{1}{\poly(k/\eps)} \cdot \|C\|_1$, which is $\frac{1}{\poly(k/\eps)} \cdot OPT$ if $C$ is the residual from a $\poly(k)$-approximation algorithm.
\end{remark}

\paragraph{Ensuring that the Candidate Right Factor Has Norm At Most $\poly(k)$ Through Randomized Rounding of an ILP.}

How do we enforce the $\ell_1$-norm constraint on $V'$? First, let us discuss how the original median-based problem is solved in \cite{ptas_for_lplra}. For each column index $i \in [d]$, Algorithm \ref{algorithm:previous_eps_approximation_algorithm} finds $V_i \in \R^{2k}$ such that $V_i$ minimizes $\med(\widetilde{S}U^*V_i - \widetilde{S}C_i)$ subject to the constraint that $\|V_i\|_\infty$ is small. Note that there are $r!$ orderings of the coordinates of $\widetilde{S}U^*V_i - \widetilde{S}C_i$, where $r$ is the number of rows in $S$, meaning that all of those orderings can be tried --- for a fixed ordering of the coordinates, the ordering turns into a linear constraint, and the $\ell_\infty$ norm constraint can also be written as a linear constraint, meaning this can be solved with linear programming, and the overall running time is $r! \poly(nd) = 2^{O(r \log r)} \poly(nd)$, and this fits within the $n^{\poly(k/\eps)}$ running time of Algorithm \ref{algorithm:previous_eps_approximation_algorithm}.

Enforcing the constraint that $\|V'\|_1 \leq \poly(k)$ is more subtle. If we solve a similar problem on each $i \in [d]$ --- for instance, minimizing $\med(MV_i' - SC_i)$ such that $\|V_i'\|_1 \leq \poly(k)$ --- then the overall norm of $\|V'\|_1$ could still depend on $d$ in the worst case, if each minimizer $V_i'$ has norm roughly equal to $\poly(k)$. It is also not easy to directly include this constraint inside a median-based optimization problem that includes information from all the columns. For instance, one na\"ive way of minimizing $\med(MV' - SC)$ such that $\|V'\|_1 \leq \poly(k)$ is to do the following: simultaneously try all possible orderings, for each $i \in [d]$, of the coordinates of $MV'_i - SC_i$. The number of such orderings is $(r!)^d$, and this does not lead to an FPT running time.

Instead, we still solve separate median-based optimization problems for each $i \in [d]$, and combine the results for different $i$ using a relaxation of a suitable ILP. Instead of finding $V_i$ minimizing $\med(MV_i - SC_i)$, we instead seek to minimize the \textit{$\ell_1$ norm} of $V_i$. At the same time, we would like the cost $\med(MV_i - SC_i)$ to be small enough. Hence, for each column index $i \in [d]$, we find a column $V_{i, c}$ minimizing $\|V_{i, c}\|_1$, subject to the constraint that $\med(MV_{i, c} - SC_i) \leq c$ for a well-chosen $c$. For any $c$, the running time of this step is $r! \poly(nd)$ (by trying all orderings of the coordinates of $MV_{i, c} - SC_i$ and including the orderings as linear constraints in the LP), and this fits in our desired FPT running time.

Here, $c$ should be chosen so that it is not much higher than the cost $\med(MV_i^* - SC_i)$ of $V_i^*$ --- precisely, it should be within a $(1 + O(\eps))$ factor of $\med(MV_i^* - SC_i)$. Although we do not know $\med(MV_i^* - SC_i)$, we can guess all powers of $(1 + \eps)$ less than $O(1)\|C\|_1$ and greater than $O(\frac{\eps^2}{\poly(k)d}) \|C\|_1$ in the place of $c$. (The lower bound is chosen so that, even if the cost on some of the columns is $O(\frac{\eps^2}{\poly(k)d}) \|C\|_1$, the overall additive error of these columns is at most $O(\frac{\eps^2}{\poly(k)}) \|C\|_1$ which is acceptable.) The number of such cost bounds $c$ is thus polynomial in $d$, $k$ and $1/\eps$.

For each $i \in [d]$, we now have several minimizers $V_{i, c}$ for each possible cost bound $c$. Now, the question is, which cost bound $c$ should we pick for each $i \in [d]$? We can decide this through the following $0-1$ integer linear program. For each $i \in [d]$ and each possible cost bound $c$, we create a variable $x_{i, c}$ which can be $0$ or $1$ ($1$ representing the minimizer $V_{i, c}$ being chosen as the $i^{th}$ column of $V'$, and $0$ representing $V_{i, c}$ not being chosen). It is then natural to add the constraint that $\sum_c x_{i, c} = 1$ for each $i \in [d]$, since only one $V_{i, c}$ can be chosen as the $i^{th}$ column of $V'$. 

In addition, we wish to have $\|V'\|_1 \leq \poly(k)$ and $\med(MV' - SC) \leq (1 + O(\eps)) OPT_{C, 2k} + O(\eps^2/f) \|C\|_1$ (where $OPT_{C, 2k}$ is the optimal rank-$2k$ approximation error for $C$). Note that these can be made to hold if $V'$ is taken to be $V^*$, since for each $i \in [d]$, there is at least one cost bound $c$ for which $V^*_i$ is feasible. These can be represented as constraints that are linear in the $x_{i, c}$, since
$$\|V'\|_1 = \sum_{i = 1}^d \sum_c x_{i, c} \|V_{i, c}\|_1$$
and
$$\med(MV' - SC) = \sum_{i = 1}^d \sum_c x_{i, c} \med(MV_{i, c} - SC_i)$$

Now, solving this ILP will again take at least $2^{\Omega(kd)}$ time --- instead, we can relax the $0-1$ constraint on the $x_{i, c}$, so that we now have the constraints $x_{i, c} \in [0, 1]$ for all $i, c$. Since, for each $i \in [d]$, we also have the constraint $\sum_{c} x_{i, c} = 1$, this means that for each column $V_i'$ of $V'$, the $x_{i, c}$ give us a probability distribution on the cost bound $c$ to be chosen for $V_i'$. By the constraints of the new LP, if we sample for each $i \in [d]$ a single $V_{i, c}$ to be the $i^{th}$ column of $V'$, according to the distribution given by the $x_{i, c}$ (i.e., for each $i \in [d]$, $V_{i, c}$ is chosen with probability $x_{i, c}$) then the expectation of $\|V'\|_1$ is $\poly(k)$, while the expectation of $\med(MV' - SC)$ is at most $(1 + O(\eps))OPT + \frac{\eps}{\poly(k)}\|C\|_1$.

This gives the desired result, but a few subtleties arise when sampling according to the $x_{i, c}$ and analyzing this using Markov's inequality. To obtain a $(1 + O(\eps))$-approximation, we need $\med(MV' - SC)$ to be at most $(1 + O(\eps))$ times its expectation. By Markov's inequality, $\med(MV' - SC)$ is at most $(1 + 2\eps)$-times its expectation with probability at least $\eps$ (meaning this fails to occur with probability at most $1 - \eps$). To apply a union bound to control $\|V'\|_1$ as well, we note that $\|V'\|_1$ is at most $\frac{2}{\eps}$ times its expectation with failure probability at most $\frac{\eps}{2}$ --- having $\|V'\|_1 \leq \frac{\poly(k)}{\eps}$ instead of $\|V'\|_1 \leq \poly(k)$ is enough for our purposes, since we round $SU^*$ with a granularity of $\frac{1}{\poly(k/\eps)}$. By a union bound, $\|V'\|_1$ and $\med(MV' - SC)$ are both small enough with probability $\frac{\eps}{2}$ (i.e., failure probability $1 - \frac{\eps}{2}$), and we can simply sample $V'$ a total of $O(1/\eps)$ times, choosing the best solution found, to reduce this failure probability to a small constant independent of $\eps$.

Finally, since $V'$ has norm at most $\frac{\poly(k)}{\eps}$, the difference between $\med(MV' - SC)$ and $\med(SU^*V' - SC)$ is at most $\frac{1}{\poly(k/\eps)}\|C\|_1$, meaning $\med(SU^*V' - SC)$ is also small enough, and so is $\|U^*V' - C\|_1$. At this point, we can find an appropriate left factor $U'$ for $V'$ through linear programming.

As a summary of this discussion, we show our algorithm in the $\ell_1$-case in Algorithms \ref{algorithm:guessing_eps_approximation} and \ref{algorithm:get_eps_approximation_after_initialization}. Algorithm \ref{algorithm:get_eps_approximation_after_initialization} simply shows the process of obtaining an initial crude approximation $B$ and subtracting it from $A$ to obtain $C$, and Algorithm \ref{algorithm:guessing_eps_approximation} shows how we obtain a matrix $UV$ such that
$$\|UV - C\|_1 \leq (1 + O(\eps)) \min_{C^* \text{ rank }2k} \|C^* - C\|_1 + O\Big(\frac{\eps}{\poly(k)}\Big) \|C\|_1$$
meaning that $UV + B$ is a $(1 + \eps)$-approximation to the optimal rank-$k$ approximation error for $A$.

\begin{remark} \label{remark:guess_opt_remark_intro}
Note that we need to know $OPT$ in order to enforce the linear constraint that $\med(MV' - SC)$ is at most $(1 + O(\eps))OPT + \frac{\eps}{\poly(k)}\|C\|_1$. Observe that it suffices to have an estimate $\widehat{OPT}$ of $OPT$ that is accurate within a $(1 + \eps)$-factor. We can obtain such an $\widehat{OPT}$ as follows --- if $E$ is the error achieved by the rank-$k$ SVD of $C$, then $E$ is within a $\sqrt{nd}$ factor of $OPT$, meaning it suffices to guess all powers of $(1 + \eps)$ that are between $OPT$ and $\frac{1}{\sqrt{nd}}OPT$, and one of these will give a $(1 + O(\eps))$-approximate factorization with additive $\frac{\eps}{\poly(k)}\|C\|_1$ error.
\end{remark}

\begin{remark}
For each column index $i$ and cost bound $c$, we minimize the norm of $V_{i, c}$ such that $\med(MV_{i, c} - SC_i) \leq c$. The argument would also proceed similarly if we minimized $\med(MV_{i, c} - SC_i)$ while having a constraint on the norm of $V_{i, c}$. In particular, we can try all powers of $(1 + \eps)$ between $\frac{\poly(k)}{n}$ and $\poly(k)$, and the linear program will still be feasible because $\|V^*\|_1 \leq \poly(k)$.
\end{remark}

\begin{algorithm}
\caption{Obtaining a matrix $\widehat{A}$ such that $\|\widehat{A} - A\|_1 \leq (1 + \eps)OPT_{A, k} + \frac{\eps}{f}\|A\|_1$, where $OPT_{A, k} = \min_{A_k \text{ rank }k}\|A - A_k\|_1$. We first guess a sketched left factor $SU$, then find an appropriate right factor $V$ with norm at most $\poly(k)$. The argument $\eps$ is assumed to be at most $c$ for some absolute constant $c$.}
\label{algorithm:guessing_eps_approximation}
\begin{algorithmic}
\Require $A \in \R^{n \times d}$, $k \in \N$, $\eps \in (0, c)$, $f > 1$, $\widehat{OPT} \geq 0$
\Ensure $U \in \R^{n \times k}, V \in \R^{k \times d}$
\Procedure{GuessingAdditiveEpsApproximation}{$A, k, \eps, f, \widehat{OPT}$}
\State {If $A$ has rank $k$, return $A$.}
\State {$r \gets O(\max(k/\eps^6 \log(k/\eps), 1/\eps^9)$}
\State {$q \gets \poly(k)$}
\State {$S \gets $ An $r \times n$ matrix of i.i.d. standard Cauchy random variables}
\State {$\mathcal{I} \gets \{0\} \cup \Big\{\sigma \cdot (1 + \frac{1}{f\poly(k/\eps)})^t \mid t \in \Z, \frac{1}{f \poly(k/\eps)}\|A\|_1 \leq (1 + \frac{1}{f\poly(k/\eps)})^t \leq \poly(k/\eps)\|A\|_1 ,\, \sigma = \pm 1\Big\}$}
\State {$\calC \gets \Big\{ M \in \R^{r \times k} \mid M_{i, j} \in \mathcal{I} \,\,\, \forall i \in [r], j \in [k]\Big\}$ --- This is the set of (sketched) left factors we will guess.}

\item[]
\State{// Guess possible rounded (sketched) left factors and find a good right factor $V$, with $\|V\|_1 \leq \poly(k)$.}
\State {$U_{best} \gets 0 ,\, V_{best} \gets 0$}
\For {$M \in \calC$}
    \State {$\textsc{CostBounds} \gets \Big\{\frac{\eps^2\|A\|_1}{fd} \leq c \leq O(\|A\|_1)$ and $c$ is an integer power of $(1 + \eps)\Big\}$}
    \For {$i \in [d]$, $c \in \textsc{CostBounds}$}
        \State {$V_{i, c} \gets \argmin_{V_i} \|V_i\|_1$ subject to the constraint that $\med(MV_i - SA_i) \leq c$}
        \State {$C_{i, c} \gets \med(MV_{i, c} - SA_i)$}
    \EndFor
    
    \item[]
    \State{// Create LP to find a good distribution over $c \in \textsc{CostBounds}$ for each $i \in [d]$.}
    \State{$\textsc{Variables} \gets \{x_{i, c} \,\, \forall i \in [d], c \in \textsc{CostBounds}\}$}
    \State{$\textsc{Constraints} \gets \Big\{0 \leq x_{i, c} \,\, \forall i \in [d], c \in \textsc{CostBounds} \text{ and } \sum_{c \in \textsc{CostBounds}} x_{i, c} = 1 \,\, \forall i \in [d]\Big\}$}
    \State{$\textsc{Constraints} \gets \textsc{Constraints} \cup \Big\{ \sum_{i \in [d], c \in \textsc{CostBounds}} x_{i, c}\|V_{i, c}\|_1 \leq kq = \poly(k)\Big\}$}
    \State{$\Delta \gets (1 + O(\eps))\widehat{OPT} + O(\frac{\eps^2}{f})\|A\|_1$}
    \State{$\textsc{Constraints} \gets \textsc{Constraints} \cup \Big\{ \sum_{i \in [d], c \in \textsc{CostBounds}} x_{i, c}C_{i, c} \leq \Delta\Big\}$}
    \State{$x_{i, c} \gets $ Solution to the LP given by \textsc{Variables} and \textsc{Constraints}, for all $i \in [d]$, $c \in \textsc{CostBounds}$}
    \State{If LP is infeasible, then \textbf{continue} to next $M \in \calC$.}
    
    \item[]
    \State{// For each column, sample an appropriate cost bound. Do this $O(1/\eps)$ times, then $V'$ meets both}
    \State{// the cost and norm constraints with constant probability.}
    \For {$t = 1 \to 10/\eps$}
        \State{$c_i \gets $ An element $c \in \textsc{CostBounds}$ sampled according to the distribution on $\textsc{CostBounds}$} 
        \State{given by $\{x_{i, c} \mid c \in \textsc{CostBounds}\}$} 
        \State{$V_i' \gets V_{i, c_i}$ for all $i \in [d]$}
        \State{\textbf{Break} if $\|V'\|_1 \leq \frac{2kq}{\eps}$ and $\med(MV' - SA) \leq (1 + 2\eps)\Delta$}
    \EndFor
    \State{$U' \gets \argmin_U \|UV' - A\|_1$}
    \State{If $\|U'V' - A\|_1 \leq \|U_{best} V_{best} - A\|_1$ then $U_{best} \gets U'$ and $V_{best} \gets V'$.}
\EndFor \\
\Return {$U', V'$}
\EndProcedure
\end{algorithmic}
\end{algorithm}

\begin{algorithm}
\caption{Obtaining a $(1 + \eps)$-approximation with bicriteria rank at most $3k$. First apply \textsc{PolyKErrorNotBiCriteriaApproximation} from Algorithm \ref{algorithm:poly_k_rank_exactly_k} to $A$ to obtain a $\poly(k)$-approximation $B$ --- then, apply Algorithm \ref{algorithm:guessing_eps_approximation} to the residual to obtain an approximation $UV$ with additive error $\eps/\poly(k) \|A - B\|_1$. Finally, $B + UV$ gives a $(1 + \eps)$-approximation with rank $3k$.}
\label{algorithm:get_eps_approximation_after_initialization}
\begin{algorithmic}
\Require $A \in \R^{n \times d}$, $k \in \N$, $\eps \in (0, c)$
\Ensure $\widehat{A} \in \R^{n \times d}$ having rank $3k$
\Procedure{RoundingGuessingEpsApproximation}{$A, k, \eps$}
\State {$W, Z \gets \textsc{PolyKErrorNotBiCriteriaApproximation}(A, k)$}
\State {$B \gets WZ$}
\State {$C \gets A - B$}
\State {$f \gets \poly(k)$, the approximation factor of Algorithm \ref{algorithm:poly_k_rank_exactly_k}}

\item[]
\State {// Guess all $O((\log nd)/\eps)$ possible values for $\widehat{OPT}$ and try them for Algorithm \ref{algorithm:guessing_eps_approximation}} 
\State {// as described in Remark \ref{remark:guess_opt_remark_intro}.}
\State {$C_{SVD, 2k} \gets $ The optimal rank-$2k$ approximation for $C$ under the $\ell_2$ norm.}
\State {$\textsc{SvdError} \gets \|C - C_{SVD, 2k}\|_1$}
\State {$\widehat{A} \gets 0 \in \R^{n \times d}$}
\For {$t = 0 \to O(\frac{\log nd}{\eps})$}
    \State {$\widehat{OPT} \gets \textsc{SvdError}/(1 + \eps)^t$}
    \State {$U, V \gets \textsc{GuessingAdditiveEpsApproximation}(C, 2k, \eps, f, \widehat{OPT})$}
    \State {If $\|(B + UV) - A\|_1 \leq \|\widehat{A} - A\|_1$ then $\widehat{A} \gets B + UV$}
\EndFor \\
\Return {$\widehat{A}$}
\EndProcedure
\end{algorithmic}
\end{algorithm}

\paragraph{Hardness for Additive Error - Appendix \ref{appendix:additive_error_hardness}.} 

Our techniques for our hardness results are based on the proof by \cite{ptas_for_lplra} that, assuming the Small-Set Expansion Hypothesis and the Exponential Time Hypothesis, finding a constant-factor approximation for the optimal rank-$k$ approximation error takes at least $2^{k^c}$ time for some constant $c > 0$. To obtain our first result, that computing a matrix $\widehat{A}$ with rank at most $k$ such that
$$\|\widehat{A} - A\|_p \leq O(1) \min_{A_k \text{ rank }k} \|A_k - A\|_p + \frac{1}{2^{\poly(k)}}\|A\|_p$$
requires $2^{k^c}$ time for $p \in (1, 2)$ with the same hardness assumptions, we show that the reduction of \cite{ptas_for_lplra} from the Small Set Expansion problem to $\ell_p$ Low Rank Approximation can be performed in such a way that each entry of the input matrix $A$ ultimately has $\poly(k)$ bits in both its numerator and denominator. If this holds, then we can assume without loss of generality that the entries of $A$ are in fact integers with at most $\poly(k)$ bits, meaning $\|A\|_p \leq 2^{\poly(k)} \min_{A_k \text{ rank }k} \|A_k - A\|_p$, and the above guarantee is in fact equivalent to obtaining an $O(1)$-approximation.

The hardness results of \cite{ptas_for_lplra}, and our above hardness result, apply to $\ell_p$ low rank approximation for $p \in (1, 2)$, but not to $\ell_1$ low rank approximation. Intuitively, this is because the reduction of \cite{ptas_for_lplra} ultimately shows that finding the best rank---$(d - 1)$ approximation to an $n \times d$ matrix is NP-hard, assuming the Small Set Expansion Hypothesis. However, this is not true in the $p = 1$ case: finding the best rank---$(d - 1)$ subspace can actually be done in polynomial time when $p = 1$ \cite{bd13, sw_2011_hyperplane}. Instead, we show that a similar hardness result holds for the somewhat more general constrained $\ell_1$ low rank approximation problem. We do this by combining the techniques of \cite{ptas_for_lplra} with the following theorem:

\begin{theorem}[Embedding $\ell_p^n$ into $\ell_1^{n^{O(\log n)}}$ Deterministically]
\label{theorem:lp_to_l1_deterministic_intro}
Let $n \in \N$, and $p \in (1, 2)$. Then, there exists a matrix $R \in \R^{k \times n}$, for some $k = n^{O(\log n)}$, such that for all $x \in \R^n$,
$$\Omega(1) \|x\|_p \leq \|Rx\|_1 \leq O(1) \|x\|_p$$
$R$ can be computed deterministically in $n^{O(\log n)}$ time.
\end{theorem}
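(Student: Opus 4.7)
The plan is to derandomize the classical randomized embedding of $\ell_p$ into $\ell_1$ via $p$-stable distributions, using a small-seed pseudorandom generator so that enumeration over all seeds fits in $n^{O(\log n)}$ time. The three key steps are: (i) establish a high-probability randomized construction, (ii) discretize the entries, and (iii) derandomize by enumeration over a polynomially small number of random bits.

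First I would set up the randomized construction: let $R \in \R^{k \times n}$ have i.i.d.\ entries drawn from a truncated standard $p$-stable distribution, with $k = \poly(n)$. By the defining property of $p$-stable distributions, for any fixed $x \in \R^n$ the rowwise inner products $\langle R_i, x\rangle$ are distributed as $\|x\|_p \cdot Y$ for a $p$-stable $Y$; since $p \in (1, 2)$, $Y$ has finite first moment, so $\|Rx\|_1 = \sum_i |\langle R_i, x\rangle|$ concentrates around $c_p k \|x\|_p$ via a Bernstein-type bound for the truncated sum. For $k = \poly(n)$ the failure probability for a single $x$ is $\exp(-\poly(n))$, and a union bound over a $(1/\poly(n))$-net of the $\ell_p$-unit sphere (of size $\exp(n)$) yields the uniform two-sided bound $\Omega(1)\|x\|_p \le \|Rx\|_1 \le O(1)\|x\|_p$ simultaneously for all $x$.

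Second I would derandomize: discretize each entry of $R$ to one of $\poly(n)$ values (changing $\|Rx\|_1$ by only a $(1 \pm 1/\poly(n))$ factor), and then replace full independence of the $kn$ entries by either Nisan's pseudorandom generator for small-space computation or an explicit $t$-wise independent family with $t = O(\log n)$ constructed from BCH codes. In either case the seed length is $O(\log^2 n)$, so the number of candidate matrices is $2^{O(\log^2 n)} = n^{O(\log n)}$. The $2t$-th moment method then shows that the concentration bound from the first step survives the switch to limited independence, with essentially the same failure probability as under full independence.

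The main obstacle I anticipate is deterministic verification: computing $\|R\|_{p \to 1}$ exactly is NP-hard in general, so naively checking each candidate against the $\exp(n)$-sized net is ruled out. My plan to bypass this is to certify each candidate via polynomial-time computable surrogates rather than the full operator norm --- for example, the rowwise $\ell_{p^*}$-norms of $R$ upper-bound $\|R\|_{p \to 1}$, and the second- and higher-moment statistics guaranteed by $O(\log n)$-wise independence give a matching lower bound. Because all but an $\exp(-n)$-fraction of seeds pass these surrogate checks by the moment-method analysis, deterministic enumeration finds a valid $R$ in $n^{O(\log n)}$ total time, completing the construction.
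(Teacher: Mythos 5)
There are two genuine gaps here, both in the derandomization step. First, the union bound over an $\exp(n)$-sized net of the $\ell_p$-sphere requires per-point failure probability $\exp(-\Omega(n))$, but once you replace full independence by $O(\log n)$-wise independence (or a seed of length $O(\log^2 n)$), the $2t$-th moment method with $t = O(\log n)$ only delivers failure probability $n^{-O(\log n)} = \exp(-O(\log^2 n))$ per point. To survive the net you would need $t = \Omega(n/\log n)$, hence seed length $\Omega(n)$ and $2^{\Omega(n)}$ candidates, destroying the running time. Second, even granting the existence of a good seed, your enumeration must \emph{identify} one, and the quantity you need to verify --- $\min_{\|x\|_p = 1} \|Rx\|_1$ bounded below by a constant --- is exactly the $\min_{p^* \to 1}$ quantity whose hardness this paper's reduction is built around. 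Your surrogate certificates handle only the easy direction: rowwise $\ell_{p^*}$-norms do upper-bound $\|R\|_{p\to 1}$, but no polynomial-time-checkable moment statistic of the rows certifies a \emph{uniform lower bound} over all $x$; the natural spectral certificate $\|Rx\|_1 \geq \sigma_{\min}(R)\|x\|_2$ loses a $\poly(n)$ factor against $\|x\|_p$.

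The paper avoids both problems by not selecting a seed at all. Using Observation 2.7.2 of Matousek, one builds $R$ with one row per element $\omega_i$ of the sample space of an $O(\log n)$-wise independent family of truncated, discretized $p$-stable variables $Z_1,\dots,Z_n$, setting $R_{i,j} = \Prob{\omega_i} Z_j(\omega_i)$; then $\|Rx\|_1 = E[|\langle Z, x\rangle|]$ holds \emph{exactly and deterministically} for every $x$. The ``for all $x$'' quantifier therefore comes for free --- no net, no concentration, no verification --- and the only thing to prove is the distributional fact $E[|\langle Z, x\rangle|] = \Theta_p(1)\|x\|_p$ for each fixed $x$ under limited independence (which the paper does via the smooth-indicator and tail-comparison lemmas of \cite{knw_10_approximating_lp_norm_p_stable} and \cite{knpw11_pth_moment}). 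The price is that the target dimension equals the sample-space size $n^{O(\log n)}$ from the Koller--Megiddo construction, which is precisely why the theorem is stated with $k = n^{O(\log n)}$ rather than $\poly(n)$. If you want to salvage your approach, this averaging-over-all-seeds device is the missing idea.
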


Below, we describe how we construct this embedding, and compare it with existing results related to embeddings of $\ell_p^n$ into $\ell_1^m$, for different values of $m$. Using this embedding, we first adapt the techniques of \cite{ptas_for_lplra} to show a similar hardness result for $\ell_{1, p}$ norm low-rank approximation:

\begin{theorem}[Hardness for Sum of Column $\ell_p$ Norms]
Let $A \in \R^{n \times d}$, $k \in \N$ and $p \in (1, 2)$. Then, assuming the Small-Set Expansion Hypothesis and the Exponential-Time Hypothesis, at least $2^{k^{\Omega(1)}}$ time is required to find a matrix $\widehat{A} \in \R^{n \times d}$ of rank at most $k$ such that
$$\|\widehat{A} - A\|_{1, p} \leq O(1) \min_{A_k \text{ rank }k} \|A_k - A\|_{1, p} + \frac{1}{2^{\poly(k)}}\|A\|_{1, p}$$
where for a matrix $M \in \R^{n \times d}$, its $\ell_{1, p}$ norm is $\|M\|_{1, p} = \sum_{j = 1}^d \|M_j\|_p$.
\end{theorem}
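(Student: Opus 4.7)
The plan is to adapt the reduction of \cite{ptas_for_lplra} from Small Set Expansion (SSE) to entrywise $\ell_p$ low rank approximation, combining it with the deterministic $\ell_p \to \ell_1$ embedding of Theorem \ref{theorem:lp_to_l1_deterministic_intro} to obtain the analogous hardness result in the $\ell_{1,p}$ norm.

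First, I would recall the reduction of \cite{ptas_for_lplra}: from an SSE instance it constructs a matrix $A \in \R^{n \times d}$ such that $O(1)$-approximating the optimal rank-$k$ entrywise $\ell_p$ error distinguishes the YES/NO cases of SSE, yielding a $2^{k^{\Omega(1)}}$ lower bound under SSEH + ETH. I would then incorporate the bit-complexity refinement used in this paper's earlier hardness result, encoding all SSE gadgets with $\poly(k)$-bit entries so that $\|A\|_{1,p} \le 2^{\poly(k)} \cdot OPT_{1,p}$; this absorbs the additive slack $\frac{1}{2^{\poly(k)}}\|A\|_{1,p}$ into the $O(1)$ multiplicative factor in the theorem statement.

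Next, I would use the embedding $R \in \R^{m \times n}$ of Theorem \ref{theorem:lp_to_l1_deterministic_intro} to bridge the two norms. Applying $R$ column-wise to any matrix $M$ gives $\|RM\|_1 = \sum_j \|RM_j\|_1 \asymp \sum_j \|M_j\|_p = \|M\|_{1,p}$ by the two-sided $\ell_p \to \ell_1$ distortion. I would replace each column of the hard instance $A$ by its $R$-image, obtaining $A' = RA \in \R^{m \times d}$; for any rank-$k$ candidate $\widehat{A}$ on the original side, the induced error $R(A - \widehat{A})$ has entrywise $\ell_1$ value $\Theta(\|A - \widehat{A}\|_{1,p})$, while on the specific error profiles produced by the SSE reduction (where the column-coefficient vectors of rank-one near-optimal differences are controlled in spread) this quantity is also $\Theta(\|A - \widehat{A}\|_p)$. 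This lets me transfer the entrywise $\ell_p$ YES/NO gap to $\ell_{1,p}$ up to universal constants, and the $2^{k^{\Omega(1)}}$ lower bound follows.

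The main obstacle is handling rank-$k$ approximations $\widehat{A'}$ of $A'$ that are not of the form $R\widehat{A}$ for any rank-$k$ matrix $\widehat{A}$: the ambient space $\R^m$ has $m = n^{O(\log n)} \gg n$ degrees of freedom, so in principle a solver could exploit directions outside $\mathrm{range}(R)$ to obtain a spuriously cheap $\ell_{1,p}$ approximation that does not correspond to any good approximation of $A$ in entrywise $\ell_p$. The technical core of the proof is therefore to show that any candidate $\widehat{A'}$ can be projected column-wise back onto $\mathrm{range}(R)$ while losing at most a constant factor in the $\ell_{1,p}$ cost, using that (i) $R$ has constant $\ell_p \to \ell_1$ distortion in both directions simultaneously (giving each column a unique pullback up to constants), and (ii) the columns of $A'$ themselves lie in $\mathrm{range}(R)$, so a column-wise nearest-point projection in $\mathrm{range}(R)$ cannot increase the $\ell_{1,p}$ error by more than a universal constant. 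Once this projection step is established, the quantitative parameters — the $O(1)$ multiplicative factor, the $\frac{1}{2^{\poly(k)}}$ additive slack, and the $2^{k^{\Omega(1)}}$ running-time lower bound — are inherited directly from the corresponding statements in \cite{ptas_for_lplra}.
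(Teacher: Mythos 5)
There is a genuine gap, and it stems from setting the reduction up in the wrong orientation and omitting the key structural lemma. The paper's proof does not relate the $\ell_{1,p}$ error to the entrywise $\ell_p$ error on the same matrix; instead it proves and uses an exact identity (Lemma \ref{lemma:equivalence_of_sum_of_p_norms_min_p_to_1}): for $A \in \R^{n \times d}$ with $n \geq d$ and target rank $d-1$, $\min_{U,V} \|UV - A^T\|_{1,p} = \min_{\|x\|_{p^*}=1}\|Ax\|_1$. Taking $A = RP_3$ with $R$ the embedding of Theorem \ref{theorem:lp_to_l1_deterministic_intro}, the right-hand side is $\Theta(\min_{\|x\|_{p^*}=1}\|P_3 x\|_p) = \Theta(\min_{p^*\to p}(P_3))$, which is the SSE-hard quantity; the hard $\ell_{1,p}$ instance is therefore $B = P_3^T R^T \in \R^{k \times k^{O(\log k)}}$ with target rank $k-1$. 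Your proposal instead asserts that on "the specific error profiles" one has $\|A - \widehat{A}\|_{1,p} = \Theta(\|A - \widehat{A}\|_p)$ (entrywise $\ell_p$). This is the crux and it is not justified: the two norms differ by a factor of up to $d^{1-1/p}$, the reduction must handle \emph{arbitrary} residuals $A - \widehat{A}$ produced by an adversarial solver (not just the structured near-optimal ones), and the YES/NO gap in the SSE reduction is a constant independent of $k$, so a $\poly(k)$ loss between the two norms destroys the reduction.

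Your "main obstacle" — candidates $\widehat{A'}$ whose columns leave $\mathrm{range}(R)$ after forming $A' = RA$ — is an artifact of applying $R$ on the left to make tall columns; in the paper's transposed construction the columns of $B$ live in $\R^k$ and every rank-$(k-1)$ candidate corresponds to a hyperplane in $\R^k$, so the issue never arises for the $\ell_{1,p}$ theorem. Moreover, the fix you propose (column-wise nearest-point projection onto $\mathrm{range}(R)$ losing only a constant factor) is not implied by $R$ having constant two-sided $\ell_p \to \ell_1$ distortion: that property controls $\|Ry\|_1$ for $y$ in the domain, but says nothing about how far an arbitrary point of $\R^m$ can be from $\mathrm{range}(R)$ relative to its distance to a fixed point of $\mathrm{range}(R)$. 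This difficulty is precisely why the paper formulates the $p=1$ endpoint as \emph{constrained} $\ell_1$ low rank approximation (the subsequent theorem) rather than unconstrained; it cannot be waved away here. The pieces of your proposal that do align with the paper are the $\poly(k)$ bit-complexity control (so that $\|B\|_{1,p} \leq 2^{\poly(k)} \cdot OPT_{1,p}$ absorbs the additive term, using $OPT_{1,p} \geq 1/\poly(k)$) and the use of the deterministic embedding; what is missing is the duality identity that converts $\min_{p^* \to 1}$ into a rank-$(k-1)$ $\ell_{1,p}$ approximation problem.
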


In fact, finding the best rank-$k$ approximation to $A$ in the $\ell_{1, p}$ norm requires $2^{k^{\Omega(1)}}$ time even in the special case when $A \in \R^{(k + 1) \times k^{O(\log k)}}$ --- these are the dimensions of the hard instances for the above theorem. Thus, to reduce from $\ell_{1, p}$ low rank approximation (on this family of hard instances $A$) to constrained $\ell_1$ low rank approximation, we can simply multiply on the left by an embedding matrix $R \in \R^{k^{O(\log k)} \times (k + 1)}$. In other words, to find $\widehat{A}$ such that
$$\|\widehat{A} - A\|_{1, p} \leq O(1) \min_{A_k \text{ rank }k} \|A_k - A\|_{1, p} + \frac{1}{2^{\poly(k)}} \|A\|_{1, p}$$
it suffices to find $\widehat{A} \in \R^{(k + 1) \times k^{O(\log k)}}$ with rank at most $k$ such that
$$\|R\widehat{A} - RA\|_1 \leq O(1) \min_{A_k \text{ rank }k} \|RA_k - RA\|_1 + \frac{1}{2^{\poly(k)}} \|RA\|_1$$
This is exactly constrained $\ell_1$ low rank approximation: the input matrix is $RA$, and the goal is to find a matrix $M$ of rank at most $k$, such that its columns are in the column span of $R$ (i.e. can be written as $R\widehat{A}$ for $\widehat{A}$ with rank at most $k$). The overall running time of the reduction is $k^{O(\log k)} = 2^{O(\log^2 k)} \ll 2^{k^{\Omega(1)}}$, meaning the same running time lower bound applies for constrained $\ell_1$ low rank approximation:

\begin{theorem}[Hardness for Constrained $\ell_1$ Low Rank Approximation]
Let $A \in \R^{m \times d}$, $R \in \R^{m \times n}$ and $k \in \N$. Assuming the Small-Set Expansion Hypothesis and Exponential-Time Hypothesis, at least $2^{k^{\Omega(1)}}$ time is required to find a matrix $\widehat{A} \in \R^{m \times d}$ with rank at most $k$, such that the columns of $\widehat{A}$ are in the column span of $R$ and, with constant probability,
$$\|\widehat{A} - A\|_1 \leq O(1) \min_{A_k} \|A_k - A\|_1 + \frac{1}{2^{\poly(k)}}\|A\|_1$$
The minimum on the right-hand side is taken over all matrices $A_k$ with rank at most $k$, whose columns are contained in the span of $R$.
\end{theorem}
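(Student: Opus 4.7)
The plan is to reduce directly from the immediately preceding $\ell_{1,p}$ low-rank approximation hardness theorem, whose hard instances are matrices $A \in \R^{(k+1)\times k^{O(\log k)}}$. Given such an $A$, I invoke Theorem~\ref{theorem:lp_to_l1_deterministic_intro} with source dimension $n = k+1$ to obtain a deterministic embedding $R \in \R^{m \times (k+1)}$ with $m = (k+1)^{O(\log(k+1))} = k^{O(\log k)}$ satisfying $\Omega(1)\|x\|_p \le \|Rx\|_1 \le O(1)\|x\|_p$ for every $x \in \R^{k+1}$. Summing this column-wise over any matrix $M$ with $k+1$ rows upgrades it to a matrix-level distortion $\Omega(1)\|M\|_{1,p} \le \|RM\|_1 \le O(1)\|M\|_{1,p}$. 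Since the lower bound is strict for $x \neq 0$, $R$ has full column rank $k+1$, so every matrix whose columns lie in $\mathrm{span}(R)$ is uniquely of the form $R\widetilde{A}$ for some $\widetilde{A}$ of the same rank. This establishes a rank-preserving bijection between rank-$k$ candidates for the constrained $\ell_1$ instance $(RA, R, k)$ and rank-$k$ candidates for the $\ell_{1,p}$ instance $(A, k)$.

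The reduction outputs $(RA, R, k)$; constructing $R$ and forming $RA$ both cost at most $k^{O(\log k)} = 2^{O(\log^2 k)}$ time. Suppose, toward contradiction, that some algorithm solves the constrained $\ell_1$ problem in time $T = 2^{o(k^{c})}$ (where $2^{k^c}$ is the lower bound from the $\ell_{1,p}$ hardness theorem), with constant success probability, producing $\widehat{B} = R\widehat{A}$ of rank at most $k$ with
\[ \|\widehat{B} - RA\|_1 \le O(1)\min_{B_k \in \mathrm{span}(R)}\|B_k - RA\|_1 + 2^{-\poly(k)}\|RA\|_1. \]
Applying the matrix-level distortion bound to each of the three norms and using the bijection to rewrite the minimum over feasible $B_k$ as a minimum over arbitrary rank-$k$ $A_k \in \R^{(k+1)\times k^{O(\log k)}}$ yields
\[ \Omega(1)\,\|\widehat{A} - A\|_{1,p} \le O(1)\min_{A_k}\|A_k - A\|_{1,p} + \frac{O(1)}{2^{\poly(k)}}\|A\|_{1,p}, \]
which, after absorbing absolute constants, is exactly the $\ell_{1,p}$ additive-error guarantee. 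Recovering $\widehat{A}$ from $\widehat{B}$ amounts to solving a consistent $(k+1)$-variable linear system on each column (consistent because $\widehat{B} \in \mathrm{span}(R)$), so the composed algorithm solves the $\ell_{1,p}$ hard instance in $2^{o(k^c)} + 2^{O(\log^2 k)}$ time with constant probability, contradicting the preceding theorem under randomized ETH and SSEH.

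The main obstacle, and the reason Theorem~\ref{theorem:lp_to_l1_deterministic_intro} is needed rather than a generic random-Cauchy $\ell_p \to \ell_1$ embedding, is that $R$ itself appears as the subspace-constraint in the output of the reduction, so its row count $m$ directly enters the \emph{input size} of the constrained $\ell_1$ instance. We therefore cannot afford $m$ to be polynomial in $n$ with large hidden constants or distortion-gap dependencies, since that would inflate the input size to something that washes out the intended $2^{k^{\Omega(1)}}$ lower bound after the constrained algorithm's time is measured in the size of $RA$ rather than of $A$. The size $m = n^{O(\log n)}$ is just small enough to keep the reduction in $2^{O(\log^2 k)}$ time. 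Beyond this, the only remaining details are routine: the multiplicative constants from the embedding combine cleanly with the $O(1)$ of the constrained guarantee; the additive $2^{-\poly(k)}$ term survives the rescaling $\|RA\|_1 \le O(1)\|A\|_{1,p}$; and since the $\ell_{1,p}$ hard instance has $\poly(k)$-bit entries (inherited from the Small-Set Expansion reduction of \cite{ptas_for_lplra}) and Theorem~\ref{theorem:lp_to_l1_deterministic_intro} is deterministic and polynomial-time in its output size, $RA$ has bit complexity $\poly(k)$ as well.
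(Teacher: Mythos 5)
Your proposal is correct and follows essentially the same route as the paper: reduce from the $\ell_{1,p}$ additive-error hardness on the $k^{O(\log k)}$-column hard instances by applying the deterministic $\ell_p^{k+1} \to \ell_1^{k^{O(\log k)}}$ embedding $R$ of Theorem~\ref{theorem:lp_to_l1_deterministic_intro}, using $\|RM\|_1 = \Theta(1)\|M\|_{1,p}$ column-wise to transfer the guarantee and noting the reduction costs only $2^{O(\log^2 k)}$ time. The only quibble is your motivational aside about why $m$ cannot be polynomial in $n$ — a polynomial-size deterministic embedding would in fact be even better; the quasi-polynomial size is simply what the $O(\log n)$-wise-independence construction delivers — but this does not affect the correctness of the argument.
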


Note that Algorithm \ref{algorithm:guessing_eps_approximation} can be modified to work for constrained $\ell_1$ low rank approximation, with the same running time guarantee. The only modification to the algorithm itself would be that, once the right factor $V'$ is obtained, then the left factor $U$ is set to be $R \cdot (\argmin_{U_0} \|RU_0V' - A\|_1)$ instead of $\argmin_U \|UV' - A\|_1$, where $R$ is the subspace to which the low-rank solution is constrained --- note that this new minimizer can also be found using linear programming. The only change that will be made to the analysis in the introduction and in Theorem \ref{theorem:correctness_of_guessing_eps_approx} is that $U^*$ will be the optimal rank-$k$ left factor whose columns are contained in the span of $R$, rather than the optimal (un-constrained) rank-$k$ left factor.

\begin{remark}
This reduction corrects an error in an earlier version of this work. Previously we claimed that $\ell_p$ low rank approximation could be directly reduced to constrained $\ell_1$ low rank approximation through the use of an embedding matrix. However, this line of reasoning leads to a reduction from $\ell_p$ low rank approximation to constrained $\ell_{p, 1}$ low rank approximation, and it is unclear how to obtain a reduction from constrained $\ell_{p, 1}$ low rank approximation to constrained $\ell_1$ low rank approximation. In this version, we fix this issue by adding the intermediate step of reducing to $\ell_{1, p}$ low rank approximation.
\end{remark}

\paragraph{Deterministic Embedding of $\ell_p$ into $\ell_1$}

We deterministically construct the embedding matrix mentioned in Theorem \ref{theorem:lp_to_l1_deterministic_intro}, using an observation from \cite{matousek_metric_embeddings}:
\begin{fact}[Observation 2.7.2 of \cite{matousek_metric_embeddings}]
Let $R_1, R_2, \ldots, R_n$ be real random variables on a probability space that has $k$ elements $\omega_1, \omega_2, \ldots, \omega_k$, and let $A \in \R^{k \times n}$ such that $A_{i, j} = \Prob{\omega_i} R_j(\omega_i)$. For $x \in \R^n$ let us set $X := \sum_{j = 1}^n R_j x_j$. Then, $E[ |X| ] = \|Ax\|_1$.
\end{fact}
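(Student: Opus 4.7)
The plan is to prove this by a direct computation, observing that the definition of $A$ has been rigged precisely so that the $i$-th entry of $Ax$ equals $\Prob{\omega_i}$ times the value $X(\omega_i)$ of the random variable $X$ at the sample point $\omega_i$. From there, taking the $\ell_1$ norm and pulling the (nonnegative) probabilities out of the absolute values will recover the expectation $E[|X|]$.

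Concretely, first I would expand the $i$-th coordinate of $Ax$: by definition of matrix-vector multiplication,
$$(Ax)_i \;=\; \sum_{j=1}^n A_{i,j}\, x_j \;=\; \sum_{j=1}^n \Prob{\omega_i}\, R_j(\omega_i)\, x_j \;=\; \Prob{\omega_i}\, X(\omega_i),$$
where the last equality uses the definition $X = \sum_{j=1}^n R_j x_j$ evaluated at $\omega_i$. Next, I would compute $\|Ax\|_1$ by summing the absolute values of these coordinates:
$$\|Ax\|_1 \;=\; \sum_{i=1}^k \bigl|\Prob{\omega_i}\, X(\omega_i)\bigr| \;=\; \sum_{i=1}^k \Prob{\omega_i}\, |X(\omega_i)|,$$
where pulling $\Prob{\omega_i}$ out of the absolute value is justified because probabilities are nonnegative. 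Finally, recognizing the right-hand side as the expectation of $|X|$ over the finite probability space $\{\omega_1, \ldots, \omega_k\}$ gives $\|Ax\|_1 = E[|X|]$, as desired.

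There is essentially no conceptual obstacle in this proof — the only thing to watch is making sure the probabilities are nonnegative so that they can be moved outside the absolute values, which is automatic for a probability measure. The statement and proof are really just a restatement of the definition of expectation on a finite sample space, packaged in a form that makes it easy to construct $\ell_1$-embeddings by prescribing $R_1, \ldots, R_n$ so that $E[|\sum_j R_j x_j|]$ mimics a target norm of $x$ (which is exactly how I expect this fact to be applied to build the deterministic embedding matrix of Theorem \ref{theorem:lp_to_l1_deterministic_intro}).
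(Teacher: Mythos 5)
Your proof is correct, and it is exactly the intended one-line argument behind this observation (the paper itself just cites Matou\v{s}ek's notes without reproducing the computation): the $i$-th coordinate of $Ax$ is $\Prob{\omega_i}X(\omega_i)$ by construction, and summing absolute values recovers $E[|X|]$ since probabilities are nonnegative. Nothing further is needed.
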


By this observation, if $E[|X|] = \Theta(1) \|x\|_p$, then the corresponding matrix $A$ gives an embedding of the $\ell_p$ norm into the $\ell_1$ norm. For this, we can select the random variables $R_1, R_2, \ldots, R_n$ in the observation above to be standard $p$-stable random variables, for $p \in (1, 2)$ --- to show that $E[|X|] = \Theta(1) \|x\|_p$, we can make use of results on $p$-stable random variables from \cite{l1_lower_bound_and_sqrtk_subset}. For the sample space to have a finite number of elements $k$, we can round the $R_i$ to the nearest multiple of $\poly(n)$ and truncate them so that they are not more than $\poly(n)$.

However, if we let the $R_i$ be fully independent, then the sample space may have a very large size $k$ (i.e. it may be exponential in $n$). Instead, we let the $R_i$ be $O(\log n)$-wise independent --- by a result of \cite{km93_small_sample_space}, the $R_i$ can be constructed using a sample space of size $k = n^{O(\log n)}$, and we can use lemmas from \cite{knw_10_approximating_lp_norm_p_stable} and \cite{knpw11_pth_moment} on $p$-stable random variables with limited independence to show that $E[|X|] = \Theta(1)\|x\|_p$ even when the $R_i$ are $O(\log n)$-wise independent.

\begin{remark}
In an earlier version of this work, we instead used the randomized embedding of \cite{js1982_embedding}. To obtain a running time lower bound when this randomized embedding is used, we assumed a randomized version of ETH that was previously used in \cite{dhmtw14_randomized_eth}, for instance. In the current version, we use the deterministic embedding mentioned in Theorem \ref{theorem:lp_to_l1_deterministic_intro} in its place. The randomized embedding had the advantage that it provided an embedding of $\ell_p^n$ into $\ell_1^{O(n)}$, rather than $\ell_p^n$ into $\ell_1^{n^{O(\log n)}}$. One issue with our previous proof is that we assumed that the entries of the embedding of \cite{js1982_embedding} can be computed in $\poly(n)$ time --- while we believe this is true, it is not immediately clear how to prove this. Another work \cite{karnin2011_embedding_lp_into_l1} deterministically constructs $(1 - \eps)n$-dimensional subspaces of $\R^n$ in which the $\ell_1$ norm and the $\ell_p$ norm are the same up to normalization and constant factors. However, \cite{karnin2011_embedding_lp_into_l1} does not provide a matrix $R$ which can be used to embed $\R^{(1 - \eps)n}$ with the $\ell_p$ norm into $\R^n$ with the $\ell_1$ norm with low distortion, and hence is not applicable for our reduction.
\end{remark}

\subsection{Paper Outline}

\subsubsection{Main Results - Algorithms}

In Section \ref{section:lp_css}, we describe our $\widetilde{O}(k^{1/p - 1/2})$-approximate algorithm for $\ell_p$ column subset selection for $p \in [1, 2)$, and its analysis. In Section \ref{section:fpt_approx}, we analyze our $(1 + \eps)$-approximation algorithm for $\ell_p$ low rank approximation, for $p \in [1, 2)$, which returns a matrix of rank at most $3k$ in $2^{\poly(k/\eps)} + \poly(nd)$ time.

\subsubsection{Appendices - Additional Results}

In Appendix \ref{appendix:additive_error_hardness}, we show how to extend the hardness results of \cite{ptas_for_lplra} to show that even obtaining an $O(1)$-approximation for $\ell_p$ low rank approximation with $\frac{1}{2^{\poly(k)}}\|A\|_p$ additive error is hard, when $p \in (1, 2)$. Using this, we show that obtaining an $O(1)$-approximation for constrained $\ell_1$ low rank approximation with $\frac{1}{2^{\poly(k)}}\|A\|_1$ additive error is hard. Next, in Appendix \ref{appendix:lp_css_lower_bound_full_proofs} we show how the lower bound for $\ell_1$ column subset selection due to \cite{l1_lower_bound_and_sqrtk_subset} can be extended to obtain a lower bound for $\ell_p$ column subset selection, for $p \in (1, 2)$. Finally, in Appendix \ref{appendix:poly_k_bicriteria_rank}, we show how to obtain a $\poly(k)$-approximation algorithm with running time $\poly(nd)$, based on Algorithm \ref{algorithm:column_subset_sampling} and techniques from \cite{algorithm3_original, l1_lower_bound_and_sqrtk_subset}.

\newpage
\section{Optimal $\ell_1$ Column Subset Selection via Random Sampling}
\label{section:lp_css}

\subsection{Preliminaries: Notation and Lewis Weight Sampling} \label{subsection:preliminaries}

Suppose $A \in \R^{n \times d}$. We use the following notation for the rows, columns, and submatrices of $A$. For $i \in [n]$, we let $A^i$ be the $i^{th}$ row of $A$, and for $j \in [d]$, we let $A_j$ be the $j^{th}$ column of $A$. Moreover, for $S \subset [d]$, we let $A_S$ be the submatrix of $A$, such that its columns are those of $A$ whose indices are in $S$, and for $R \subset [n]$, we let $A^R$ be the submatrix of $A$, such that its rows are those of $A$ whose indices are in $R$.

We now recall basic facts about row sampling with Lewis weights. We use Lewis weight sampling as a black box, for more details we refer the reader to \cite{lewis_weights, l1_lower_bound_and_sqrtk_subset}.

\begin{lemma} [Sampling and Rescaling Matrix Based on Lewis weights - Adapted from Theorem 7.1 of \cite{lewis_weights}]
\label{lemma:sampling_by_lewis_weights}
Let $A \in \R^{n \times d}$, $1 \leq p < 2$, and let $r = O(d \log d)$ if $p = 1$ and $r = O(d \log d (\log \log d)^2)$ otherwise. There exists a distribution $(p_1, p_2, \ldots, p_n)$ on the rows of $A$ such that if $S$ is a matrix with $r$ rows, each row chosen independently as the $i^{th}$ standard basis vector times $\frac{1}{(rp_i)^{\frac{1}{p}}}$ with probability $p_i$, then with probability $0.9$,
$$\Omega(1)\|Ax\|_p \leq \|SAx\|_p \leq O(1) \|Ax\|_p$$
for all $x \in \R^d$. The distribution $(p_1, p_2, \ldots, p_n)$ can be computed in $\nnz(A) + \poly(d)$ time.
\end{lemma}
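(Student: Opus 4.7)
The plan is to follow the standard Cohen-Peng framework for $\ell_p$ Lewis weight sampling, treating the main theorem of \cite{lewis_weights} essentially as a black box and only reshaping the conclusion into the form stated. First I would define the sampling distribution: let $w_1,\ldots,w_n$ be the $\ell_p$ Lewis weights of $A$, i.e.\ the unique positive weights satisfying the fixed-point equation $w_i = (a_i^\top (A^\top W^{1-2/p} A)^{-1} a_i)^{p/2}$, where $a_i$ is the $i$-th row of $A$ and $W = \mathrm{diag}(w_1,\ldots,w_n)$. A standard calculation shows $\sum_i w_i = d$, so we can take $p_i := w_i / d$.

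The core analytic step is a one-shot unbiased estimator argument: for any fixed $x \in \R^d$, each sampled row contributes $(rp_i)^{-1} |a_i^\top x|^p$ in expectation equal to $\|Ax\|_p^p / r$, and the Lewis weight fixed-point equation forces the key boundedness property $|a_i^\top x|^p / p_i = O(d^{1-p/2}) \cdot \|Ax\|_p^p$ for every row $i$. Plugging this into a Bernstein- or Rosenthal-type tail bound for the sum of $r$ independent samples yields a constant-factor approximation for that single $x$ with failure probability $\exp(-\Omega(r/d))$ (for $p=1$; for $1<p<2$ one needs a slightly more careful moment bound, which is the source of the extra $(\log\log d)^2$ factor).

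To pass from a fixed $x$ to a uniform statement over all $x$ in the $d$-dimensional column span of $A$, I would build an $\varepsilon$-net of size $\exp(O(d))$ on $\{Ax : \|Ax\|_p = 1\}$, union-bound the one-shot estimate over it, and control off-net points by a dyadic chaining argument that again exploits the Lewis weight boundedness to avoid blow-up from heavy rows. Choosing $r = O(d\log d)$ for $p=1$ and $r = O(d\log d(\log\log d)^2)$ for $1<p<2$ makes the union bound succeed with probability at least $0.9$, yielding the two-sided inequality $\Omega(1)\|Ax\|_p \leq \|SAx\|_p \leq O(1)\|Ax\|_p$ simultaneously for all $x$. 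I would point the reader to \cite{lewis_weights, l1_lower_bound_and_sqrtk_subset} for the detailed chaining, since only the conclusion is used in the sequel.

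Finally, for the computational claim, I would invoke the Cohen-Peng iterative scheme: starting from uniform weights, repeat $w_i \leftarrow (a_i^\top (A^\top W^{1-2/p} A)^{-1} a_i)^{p/2}$. For $p<2$ this iteration is a strict contraction and converges in $O(\log\log d)$ rounds to accuracy sufficient for the sampling guarantee, and each round reduces to an approximate leverage-score computation on a rescaled matrix, which can be done in $\nnz(A) + \poly(d)$ time using standard Johnson-Lindenstrauss-type sketching as in \cite{lewis_weights}. The main obstacle, and the one place where I would lean most heavily on prior work rather than reproducing details, is the chaining argument in the $1<p<2$ regime: the Lewis weight boundedness gives the correct per-sample tail behaviour, but removing an extra $\log d$ factor from the sample complexity (and replacing it with $(\log\log d)^2$) is genuinely delicate and is the technical heart of \cite{lewis_weights}.
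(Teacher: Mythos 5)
This lemma is imported by the paper as a black box (it is stated as ``Adapted from Theorem 7.1 of \cite{lewis_weights}'' and no proof is given), so your proposal is being compared against a citation rather than an argument; your outline of how the cited result is actually proved --- Lewis weights via the fixed-point equation, $\sum_i w_i = d$, sampling with $p_i = w_i/d$, a per-point concentration bound upgraded to a uniform statement over the unit ball of the column span, and the contractive iteration for computing the weights in input-sparsity time --- is structurally faithful to Cohen--Peng and is an acceptable way to discharge the lemma.

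One concrete slip worth fixing: the key boundedness property you state, $|a_i^\top x|^p / p_i = O(d^{1-p/2})\,\|Ax\|_p^p$, is not correct as written. The inequality that the Lewis weight fixed point actually yields for $p \le 2$ is $|a_i^\top x|^p \le w_i \,\|Ax\|_p^p$ (proved by Cauchy--Schwarz against $M = A^\top W^{1-2/p}A$ and then bootstrapping the resulting self-bound on $\max_j |a_j^\top x|^p/w_j$), which after dividing by $p_i = w_i/d$ gives a sensitivity of $d$, not $d^{1-p/2}$; your exponent would, for instance, give an $O(1)$ sensitivity at $p=2$, which is false for leverage-score sampling. This matters because with sensitivity $d$ a naive net-plus-Bernstein union bound only yields $r = O(d^2\log d)$; getting down to $O(d\log d)$ (and the $(\log\log d)^2$ overhead for $1<p<2$, which comes from the depth of the recursive argument rather than from a sharper moment bound) is precisely the nontrivial content of \cite{lewis_weights} that you correctly defer to. Since the paper itself only uses the conclusion, none of this affects the sequel, but the stated inequality should be corrected if you keep the sketch.
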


\begin{lemma} [$O(1)$ Dilation and Contraction for Lewis Weights - Lemma D.11 $(p = 1)$ and E.11 $(p \in (1, 2))$ of \cite{l1_lower_bound_and_sqrtk_subset}]
\label{lemma:lewis_weights_no_dilation_contraction}
Let $M \in \R^{n \times d}$ and $U \in \R^{n \times t}$. Let $r = O(t \log t)$ if $p = 1$ and $O(t \log t (\log \log t)^2)$ if $p \in (1, 2)$, and suppose $S \in \R^{r \times n}$ is a sampling and rescaling matrix generated according to the $\ell_p$ Lewis weights of $U$. Then, with probability $0.9$, $\|SM\|_p^p \leq O(1) \|M\|_p^p$, and with probability $0.9$, for all $x \in \R^t$, $\|SUx\|_p^p \geq \Omega(1)\|Ux\|_p^p$.
\end{lemma}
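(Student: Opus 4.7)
The two claims decouple, and each requires only a short argument once we unpack the definition of Lewis weight sampling from Lemma~\ref{lemma:sampling_by_lewis_weights}. The plan is to handle the contraction on $U$ by directly invoking that lemma, and to handle the no-dilation claim on the arbitrary matrix $M$ by computing an expectation and applying Markov's inequality.

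For the contraction statement, I would simply apply Lemma~\ref{lemma:sampling_by_lewis_weights} to the matrix $U \in \R^{n \times t}$: with the chosen value of $r$, there is a distribution over rows of $U$ (its $\ell_p$ Lewis weight distribution) such that the associated sampling and rescaling matrix $S$ satisfies $\|SUx\|_p \geq \Omega(1)\|Ux\|_p$ for all $x \in \R^t$ with probability at least $0.9$. Raising both sides to the $p$-th power gives $\|SUx\|_p^p \geq \Omega(1)\|Ux\|_p^p$ for all $x$, as desired.

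For the no-dilation statement, write the rows of $S$ as $s_1, \ldots, s_r$, where each $s_\ell$ is independently equal to $\frac{1}{(r p_i)^{1/p}} e_i^\top$ with probability $p_i$. Then
\[
\|SM\|_p^p \;=\; \sum_{\ell=1}^r \|s_\ell M\|_p^p \;=\; \sum_{\ell=1}^r \frac{1}{r\, p_{i_\ell}} \|M^{i_\ell}\|_p^p,
\]
where $i_\ell$ denotes the index drawn at step $\ell$. Taking expectation over $i_\ell$ and using that the probability of drawing row $i$ is $p_i$, each term equals $\frac{1}{r}\sum_{i=1}^n \|M^i\|_p^p = \frac{1}{r}\|M\|_p^p$, so $E\bigl[\|SM\|_p^p\bigr] = \|M\|_p^p$. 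Markov's inequality then yields $\|SM\|_p^p \leq 10\,\|M\|_p^p = O(1)\|M\|_p^p$ with probability at least $0.9$.

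There is really no main obstacle here; the only care needed is to verify that the scaling $(r p_i)^{-1/p}$ built into the Lewis weight sampling matrix is exactly the scaling that makes $E\bigl[\|SM\|_p^p\bigr] = \|M\|_p^p$ for \emph{every} matrix $M$, regardless of whether the sampling distribution is tailored to $U$ rather than to $M$. Once this is observed, both halves of the lemma follow immediately, and the two events each hold with probability at least $0.9$ as claimed.
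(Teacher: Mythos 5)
Your proposal is correct. The paper itself states this lemma only as a citation to Lemmas D.11 and E.11 of \cite{l1_lower_bound_and_sqrtk_subset} without reproducing a proof, and your two arguments are exactly the standard ones underlying that reference: the contraction half is an immediate specialization of the subspace-embedding guarantee in Lemma \ref{lemma:sampling_by_lewis_weights} to $A = U$, and the no-dilation half is the unbiasedness computation $E[\|SM\|_p^p] = \|M\|_p^p$ followed by Markov, which works for arbitrary $M$ precisely because the rescaling $(rp_i)^{-1/p}$ cancels the sampling probability regardless of which matrix the weights were computed from. The only cosmetic caveat is that if some $p_i = 0$ the corresponding row never appears, so the expectation is $\frac{1}{r}\sum_{i: p_i > 0}\|M^i\|_p^p \leq \|M\|_p^p$ rather than an exact equality; this only strengthens the Markov bound, so nothing breaks.
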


\begin{lemma} [$O(1)$ Contraction on Affine Subspace for Lewis Weights - From Lemmas D.11 and D.7 of \cite{l1_lower_bound_and_sqrtk_subset} for $p = 1$, and Lemmas E.11 and E.7 of \cite{l1_lower_bound_and_sqrtk_subset} for $p \in (1, 2)$]
\label{lemma:lewis_weights_affine_contraction}
Let $A \in \R^{n \times d}$, $U \in \R^{n \times k}$, and let $V^* = \argmin_{V \in \R^{k \times d}} \|UV - A\|_p$. Let $r = O(k \log k)$ if $p = 1$ and $r = O(k \log k (\log \log k)^2)$ if $p \in (1, 2)$. Let $S \in \R^{r \times n}$ be a sampling and rescaling matrix generated according to the $\ell_p$ Lewis weights of $U$. Then, with probability $0.9$, simultaneously for all $V \in \R^{k \times d}$,
$$\|SUV - SA\|_p^p \geq \Omega(1) \|UV - A\|_p^p - O(1) \|UV^* - A\|_p^p$$
\end{lemma}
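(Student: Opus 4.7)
The plan is to decompose $UV - A = U(V - V^*) + (UV^* - A)$ and apply the two previous Lewis-weight lemmas to the two pieces separately, then recombine via the reverse triangle inequality in the $\ell_p$ norm. I would first slightly inflate the constants hidden in $r$ so that each of the events in Lemma \ref{lemma:sampling_by_lewis_weights}, Lemma \ref{lemma:lewis_weights_no_dilation_contraction} (applied to $U$), and Lemma \ref{lemma:lewis_weights_no_dilation_contraction} (applied to the fixed matrix $M := UV^* - A$) holds with probability at least $0.97$; by a union bound, condition on all three holding simultaneously, which happens with probability at least $0.9$. Note that $M$ is fixed (independent of $S$), so the "no dilation" guarantee $\|SM\|_p^p \leq O(1)\|M\|_p^p$ applies.

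Under this conditioning, write
\[
\|SUV - SA\|_p \;=\; \|\,SU(V - V^*) + S(UV^* - A)\,\|_p \;\geq\; \|SU(V - V^*)\|_p - \|S(UV^* - A)\|_p
\]
by the reverse triangle inequality. To convert this to $p$-th powers I would use the elementary fact that for $a, b \geq 0$ and $p \geq 1$,
\[
(\max(a - b, 0))^p \;\geq\; 2^{-p}\, a^p - C_p\, b^p
\]
for a constant $C_p$ depending only on $p$ (split into the cases $a \geq 2b$ and $a < 2b$). Applying this with $a = \|SU(V - V^*)\|_p$ and $b = \|S(UV^* - A)\|_p$, and then invoking Lemma \ref{lemma:lewis_weights_no_dilation_contraction} on both the contraction of $SU(V - V^*)$ and the no-dilation of $S(UV^* - A)$, gives
\[
\|SUV - SA\|_p^p \;\geq\; \Omega(1)\, \|U(V - V^*)\|_p^p \;-\; O(1)\, \|UV^* - A\|_p^p.
\]

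The remaining step is to lower bound $\|U(V - V^*)\|_p^p$ by $\Omega(1)\|UV - A\|_p^p - O(1)\|UV^*-A\|_p^p$. This is deterministic (it involves no sketching): from the reverse triangle inequality $\|U(V - V^*)\|_p \geq \|UV - A\|_p - \|UV^* - A\|_p$, split again into the cases $\|UV - A\|_p \geq 2\|UV^* - A\|_p$ (where one gets $\|U(V - V^*)\|_p^p \geq 2^{-p}\|UV - A\|_p^p$) and $\|UV - A\|_p < 2\|UV^* - A\|_p$ (where the target lower bound is already nonpositive once the $O(1)$ constant exceeds $2^p\Omega(1)$, so it holds trivially). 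Chaining this with the previous inequality yields the lemma.

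The only subtlety I anticipate is making the $p$-th-power reverse triangle inequality clean for general $p \in [1,2)$: it is not the standard $p$-triangle inequality, so one has to argue it by case analysis rather than quote it. Everything else is a direct application of the two preceding lemmas followed by a union bound. Since $V$ enters the argument only through the fixed matrix $U$ (whose Lewis weights govern $S$) and through the uniform-in-$x$ contraction guarantee of Lemma \ref{lemma:lewis_weights_no_dilation_contraction}, the resulting bound holds simultaneously for all $V \in \R^{k \times d}$ on the good event.
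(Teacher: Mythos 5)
Your proof is correct, and it is essentially the standard argument: the paper itself imports this lemma as a black box from Lemmas D.7/D.11 (resp.\ E.7/E.11) of \cite{l1_lower_bound_and_sqrtk_subset}, where exactly this decomposition $UV - A = U(V - V^*) + (UV^* - A)$, combined with the uniform contraction on the column span of $U$ and the no-dilation bound on the fixed matrix $UV^* - A$, is used. Your case analysis for passing the reverse triangle inequality to $p$-th powers is sound (with $C_p = 1$), and you correctly note that simultaneity over all $V$ follows because $V$ enters only through the uniform-in-$x$ contraction guarantee.
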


\subsection{$\ell_p$ Column Subset Selection Algorithm and Analysis} \label{subsection:lp_css_algorithm_and_analysis}

In this section, we analyze Algorithm \ref{algorithm:column_subset_sampling}. Note that the algorithm itself is nearly the same as Algorithm 1 of \cite{DBLP:conf/nips/SongWZ19} \footnote{Our pseudocode in Algorithm \ref{algorithm:column_subset_sampling} is based on Algorithm 2 of \url{https://arxiv.org/pdf/1811.01442v1.pdf}, which is version 1 of \cite{DBLP:conf/nips/SongWZ19} on arXiv.}, with the difference being that $2r = O(k \log k)$ columns are sampled per iteration, instead of $2k$ --- however, its analysis is significantly different, and does not use maximum-determinant column subsets. Instead, we rely on the following key existence result, shown in Theorem C.1 of \cite{l1_lower_bound_and_sqrtk_subset}:

\begin{theorem} \label{theorem:subset_existence}
(Existence of a Good Column Subset)
Let $A \in \R^{n \times d}$, $p \in [1, 2)$, $k \in \N$, and $r = O(k \log k)$ if $p = 1$ and $r = O(k \log k (\log \log k)^2)$ if $1 < p < 2$. Then, there exist matrices $U \in \R^{n \times r}$ and $V \in \R^{r \times d}$, such that the columns of $U$ are columns of $A$, and
$$\|UV - A\|_p \leq O\Big(r^{\frac{1}{p} - \frac{1}{2}}\Big) \min_{A_k \text{ rank }k} \|A - A_k\|_p$$
Moreover, this occurs with probability $\frac{9}{10}$ if the columns of $U$ are sampled from the columns of $A$ according to the Lewis weights of $V^*$, where $U^*V^*$ is an optimal rank-$k$ approximation to $A$.
\end{theorem}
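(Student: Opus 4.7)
My plan is to reduce the column-selection statement to a regression problem produced by Lewis-weight column sampling of $A$, and then pass from the $\ell_p$ minimizer to an $\ell_2$ minimizer so that the resulting approximation provably lies in the column span of the sampled columns, paying only a factor of $r^{1/p-1/2}$ in the process.

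Fix an optimal rank-$k$ $\ell_p$ factorization $U^*V^*$ with error $OPT = \min_{A_k \text{ rank }k} \|A - A_k\|_p$. Transposing the roles, $V^{*T}$ is a (best) left factor for $A^T$, so sampling rows of $A^T$ (equivalently, columns of $A$) by the $\ell_p$ Lewis weights of $V^{*T}$ yields a sampling-and-rescaling matrix which, written in column form, is a matrix $S \in \R^{d \times r}$ whose column support selects the sampled index set $T \subseteq [d]$, with $|T| = r$. Applying Lemma \ref{lemma:lewis_weights_affine_contraction} and Lemma \ref{lemma:lewis_weights_no_dilation_contraction} to $(A^T, V^{*T})$ and then transposing, with constant probability the two events hold simultaneously: (i) $\|WV^*S - AS\|_p^p \geq \Omega(1)\|WV^* - A\|_p^p - O(1)\,OPT^p$ for every $W \in \R^{n \times k}$, and (ii) $\|U^*V^*S - AS\|_p \leq O(1)\,OPT$. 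A constant-probability boost to $9/10$ comes from adjusting the implicit constants in $r$.

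Let $W_1 = \argmin_{W} \|WV^*S - AS\|_p$ and $W_2 = AS\,(V^*S)^+$, the $\ell_2$ minimizer, which lies in the column span of $AS$ by construction. Since $\ell_2$ regression decouples across rows, $\|(AS - W_2V^*S)^i\|_2 \leq \|(AS - W_1V^*S)^i\|_2$ for every row index $i$. Combining this with the row-wise inequalities $\|y\|_p \leq r^{1/p - 1/2}\|y\|_2$ and $\|y\|_2 \leq \|y\|_p$ valid on $\R^r$ for $p \in [1, 2)$, and using (ii),
$$\|AS - W_2V^*S\|_p^p \leq r^{1 - p/2}\sum_i \|(AS - W_1V^*S)^i\|_p^p = r^{1 - p/2}\|AS - W_1V^*S\|_p^p \leq O(r^{1 - p/2})\,OPT^p.$$
Plugging this bound into (i) and taking $p^{\text{th}}$ roots gives $\|W_2V^* - A\|_p \leq O(r^{1/p - 1/2})\,OPT$.

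To finish, observe that $W_2V^* = AS\,(V^*S)^+ V^*$, so every column of $W_2V^*$ is a linear combination of columns of $AS$, which are just the columns of $A_T$ (up to the Lewis rescalings). Taking $U = A_T \in \R^{n \times r}$ and letting $V \in \R^{r \times d}$ be $(V^*S)^+ V^*$ with rows rescaled to absorb the diagonal rescaling in $S$, we obtain $UV = W_2V^*$, the columns of $U$ are columns of $A$, and $\|UV - A\|_p \leq O(r^{1/p - 1/2})\,OPT$ as required. The step I expect to be most delicate is the switch from the $\ell_p$ to the $\ell_2$ regressor on the sketched problem: the crucial point is that $\ell_2$ least squares decouples row-by-row, which lets me compare $W_2$ to $W_1$ row-wise without losing any approximation factor, so the only cost is the $r^{1/p - 1/2}$ factor from the row-wise embedding of $\ell_2^r$ into $\ell_p^r$.
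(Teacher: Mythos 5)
Your proposal is correct and follows essentially the same route as the paper's proof (which itself defers to Theorem C.1 of \cite{l1_lower_bound_and_sqrtk_subset}): sample columns by the $\ell_p$ Lewis weights of $V^*$, invoke the affine contraction and dilation lemmas for the sketched problem, and replace the sketched $\ell_p$ regressor by the row-decoupled $\ell_2$ regressor $AS(V^*S)^+$ so that the solution lies in the span of the sampled columns, paying only the $r^{1/p-1/2}$ norm-conversion factor on $\R^r$. Your writeup is a correctly fleshed-out version of the paper's sketch.
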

\begin{proof}
This result was essentially shown in Theorem C.1 of \cite{l1_lower_bound_and_sqrtk_subset} for the case $p = 1$ --- we give a brief sketch here. The key point is that $U = AR$, where $R$ is given by (column) Lewis weights of $V^*$ (here we first generate a sampling matrix based on the Lewis weights of $(V^*)^T$, then transpose the sampling matrix).

Indeed, if $R$ is chosen according to the column Lewis weights of $V^*$, then letting $U^{i\prime}$ be the minimizer of $\|U^iV^*R - A^iR\|_2$ instead of $\|U^iV^*R - A^iR\|_1$ for all $i \in [n]$ gives an $O(r^{\frac{1}{2}})$-approximation. This minimizer $U^{i\prime}$ is given by $U^{i\prime} = A^iR(V^*R)^{+}$, where $(V^*R)^{+}$ is the pseudo-inverse of $V^*R$. Hence we can choose $U' = AR(V^*R)^{+}$, and hence there exists an $O(r^{\frac{1}{2}})$-approximation for $A$ with left factor $AR$.

For $1 < p < 2$, the proof is nearly identical \cite{l1_lower_bound_and_sqrtk_subset}. Instead of Lemma D.11 of \cite{l1_lower_bound_and_sqrtk_subset} we can use part (III) of Lemma E.11 of the same work to extend the result to $p \neq 1$. Similarly, instead of Lemma D.8, we can use Lemma E.8 of \cite{l1_lower_bound_and_sqrtk_subset}. Finally, we apply Lemma B.10 of \cite{l1_lower_bound_and_sqrtk_subset} to convert between the $\ell_p$-norm and the $\ell_2$-norm, rather than between the $\ell_1$-norm and the $\ell_2$-norm as was done in \cite{l1_lower_bound_and_sqrtk_subset} --- now we obtain a distortion of $r^{\frac{1}{p} - \frac{1}{2}}$.
\end{proof}

\begin{algorithm}
\caption{Randomly sample columns of $A$ repeatedly, to obtain $O(k \cdot \polylog(k) \log(d))$ columns of $A$ spanning a good approximation. This is a variant of Algorithm 1 in \cite{DBLP:conf/nips/SongWZ19}, with the difference being that we sample $k \cdot \polylog(k)$ columns in each round instead of $2k$ columns. Here, \textsc{MultipleRegressionSolver}$(n, d, m, U, B)$ (where $U \in \R^{n \times d}$, $B \in \R^{n \times m}$) is a subroutine which computes $\min_x \|Ux - B_j\|_p^p$ for each $j \in [m]$. The call $\textsc{BottomK}(\textsc{Sort}(cost), \Omega(m))$ serves to find the $\Omega(m)$ column indices in $[m]$ having the smallest regression cost.}
\label{algorithm:column_subset_sampling}
\begin{algorithmic}
\Require $A \in \R^{n \times d}$, $k \in \N$, $p \in [1, 2)$
\Ensure $S \subset [d]$, $|S| = O(k \log k \log d)$ if $p = 1$ and $O(k \log k (\log \log k)^2 \log d)$ otherwise
\Procedure{RandomColumnSubsetSelection}{$A, k, p$}
\State {$\text{samples} \gets O(\log d)$}
\State {$T_0 \gets [d]$}
\State {$r \gets O(k \log k)$ if $p = 1$ and $O(k \log k (\log \log k)^2)$ otherwise}
\For{$i = 1 \to \text{samples}$}
    \State {$m \gets |T_{i - 1}|$}
    \For{$j = 1 \to O(\log d)$} 
        \State {Sample $S^{(j)}$ from $\binom{T_{i - 1}}{2r}$ uniformly at random}
        \State {$m \gets |T_{i - 1} \setminus S^{(j)}|$}
        \State {$\{\text{cost}_t\}_{t \in T_{i - 1} \setminus S_j} \gets \textsc{MultipleRegressionSolver}(n, 2r, m, A_{S^{(j)}}, A_{T_{i - 1} \setminus S^{(j)}})$}
        \State {$R^{(j)} \gets \textsc{BottomK}(\textsc{Sort}(cost), \Omega(m))$}
        \State {$c_j \gets \sum_{t \in R^{(j)}} \text{cost}_t$}
    \EndFor
    \State {$j^* \gets \min_{j \in [O(\log d)]} {c_j}$}
    \State {$S_i \gets S^{(j^*)} \cup R^{(j^*)}$}
    \State {$T_i \gets T_{i - 1} \setminus S_i$}
\EndFor
\State {$S \gets \cup_i S_i$} \\
\Return {$S, S_1, S_2, \ldots, S_{O(\log d)}$}
\EndProcedure
\end{algorithmic}
\end{algorithm}

Now, we analyze Algorithm \ref{algorithm:column_subset_sampling}:

\begin{theorem} [Column Sampling Approximation Factor]
\label{theorem:column_subset_selection_log_d_approximation}
Let $A \in \R^{n \times d}$, $p \in [1, 2)$, $k \in \N$, and $r = O(k \log k)$ if $p = 1$ and $r = O(k \log k (\log \log k)^2)$ otherwise. Let $S \subset [d]$ be the set of columns of $A$ that is returned by Algorithm \ref{algorithm:column_subset_sampling}. Let $U = A_S \in \R^{n \times O(r \log d)}$, and let $V = \argmin_{V' \in \R^{O(r \log d) \times d}} \|UV' - A\|_p$, i.e., $V$ is the optimal right factor for $A_S$. Then, with probability $1 - o(1)$,
$$\|UV - A\|_p \leq O(r^{\frac{1}{p} - \frac{1}{2}} (\log d)^{\frac{1}{p}})) \min_{A_k \text{rank }k} \|A - A_k\|_p$$
\end{theorem}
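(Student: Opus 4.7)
The plan is an iteration-by-iteration analysis. Let $OPT^p := \min_{A_k\text{ rank }k}\|A-A_k\|_p^p$. I will establish the following per-iteration statement: each outer iteration $i$ of Algorithm \ref{algorithm:column_subset_sampling} removes a constant fraction of the columns of $T := T_{i-1}$ while the $\ell_p^p$ regression cost of the removed columns against that iteration's sampled columns is $O(r^{p(1/p-1/2)})\cdot OPT^p$. Since enlarging the column set only reduces regression cost, and since $|T_i|\le (1-c)\,|T_{i-1}|$ for a fixed constant $c>0$, summing the per-iteration costs over the $O(\log d)$ outer iterations (which suffice to cover all of $[d]$) gives total $\ell_p^p$ error $O(r^{p(1/p-1/2)}\log d)\cdot OPT^p$; taking the $p$-th root gives the claimed $O(r^{1/p-1/2}(\log d)^{1/p})$ approximation factor. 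The per-iteration statement reduces to a per-trial claim stating that for a uniformly random $S\subset T$ of size $2r$, with constant probability over $S$ there is a set of $\Omega(m)$ columns in $T\setminus S$ (with $m:=|T|$) whose total regression cost against $A_S$ is at most $O(r^{p(1/p-1/2)})\cdot OPT^p$; the $O(\log d)$ independent inner trials then amplify the success probability to $1 - d^{-\Theta(1)}$, and a union bound handles all outer iterations.

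For the per-trial claim I would use the equivalent two-stage sampling view: first draw $T'\subset T$ uniformly of size $2r+1$, then pick $c\in T'$ uniformly and set $S := T'\setminus\{c\}$. Applying Theorem \ref{theorem:subset_existence} to $A_{T'}$ (rank parameter $k$) yields a deterministic $R_0 = R_0(T')\subset T'$ of size $r$ such that
$\sum_{c'\in T'\setminus R_0}\text{cost}(A_{c'}, A_{R_0}) \le O(r^{p(1/p-1/2)})\cdot OPT(A_{T'})^p$,
where $\text{cost}(A_{c'}, B) := \min_x \|Bx - A_{c'}\|_p^p$. Since $R_0$ depends only on the set $T'$ and $|T'\setminus R_0| = r+1$, the column $c$ satisfies $c\notin R_0$ with probability at least $1/2$ conditional on $T'$, and in that event $R_0\subset S$ so $\text{cost}(A_c, A_S)\le \text{cost}(A_c, A_{R_0})$. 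Averaging over the uniform $c\in T'$ and then over $T'$, combined with the bound $\mathbb{E}_{T'}[OPT(A_{T'})^p] \le \frac{2r+1}{m}\cdot OPT^p$ (immediate from comparing $OPT(A_{T'})$ with the restriction of the global optimal rank-$k$ solution to the columns indexed by $T'$), yields the key per-column expectation
\[
\mathbb{E}\!\left[\text{cost}(A_c, A_S)\,\mathbf{1}(c\notin R_0)\right] \;\le\; \frac{O(r^{p(1/p-1/2)})}{m}\cdot OPT^p.
\]

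The main obstacle will be converting this expectation bound into a simultaneous high-probability statement about both the number and the total cost of the cheap columns. Defining $N_S := \{c\in T\setminus S : c\notin R_0(S\cup\{c\})\}$ and $Y_S := \sum_{c\in N_S}\text{cost}(A_c, A_S)$, multiplying the per-column bound by $m - 2r$ gives $\mathbb{E}[Y_S]\le O(r^{p(1/p-1/2)})\cdot OPT^p$, while a symmetric linearity-of-expectation argument yields $\mathbb{E}[|N_S|]\ge (m-2r)/2$. Applying Markov to $Y_S$ and to the nonnegative variable $(m-2r)-|N_S|$, then union-bounding, gives with constant probability the simultaneous event $|N_S|\ge (m-2r)/4$ and $Y_S \le O(r^{p(1/p-1/2)})\cdot OPT^p$. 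When this event holds, choosing the $\Omega(m)$ constant in the algorithm's bottom-$K$ step at most $1/4$ ensures $R^{(j)}$ is contained in the cheapest $|N_S|$ columns, whose total cost is bounded by $Y_S$, proving the per-trial claim. The only technical subtlety worth flagging is that $R_0(T')$ must depend measurably only on the unordered set $T'$ so that the inner expectation over the uniform $c\in T'$ decouples; this is automatic since Theorem \ref{theorem:subset_existence} is a statement about the column set, not an ordering.
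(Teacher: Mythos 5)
Your proposal is correct and follows essentially the same route as the paper: the same existence result (Theorem \ref{theorem:subset_existence}) applied to a uniformly random $(2r+1)$-column subset with a uniformly random distinguished column, the same probability-$\geq \frac{1}{2}$ event that the distinguished column avoids the good $r$-subset, the same $\mathbb{E}[OPT(A_{T'})^p] \leq \frac{2r+1}{m}OPT^p$ bound, and the same $O(\log d)$-fold amplification per iteration followed by a union bound over the $O(\log d)$ iterations. The only (harmless) difference is bookkeeping in the expectation-to-probability step: you apply Markov once to the aggregate cost $Y_S$ and once to $|N_S|$, whereas the paper first proves a per-column constant-probability claim (via two intermediate Markov events) and then Markov-bounds the count of individually cheap columns; both yield the same per-iteration guarantee of $\Omega(m)$ columns covered at total cost $O(r^{1-p/2})OPT^p$.
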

\begin{proof}
We first show the following claim (whose role is analogous to Lemma 6 of \cite{algorithm3_original}), which is that after sampling $2r$ columns of $A$, at least a constant fraction of the remaining columns are covered, up to our desired approximation factor, with constant probability. The main theorem is then a consequence of this claim, together with a Markov bound (to show that a constant fraction of the columns of $A$ are covered with constant probability) and a union bound over the samples in each of the iterations (note that we perform $O(\log d)$ repetitions in each iteration, and use the result of the best repetition, to boost the probability of a constant fraction of columns being covered).

\begin{claim} \label{claim:each_column_covered_constant probability}
Let $B \in \R^{n \times 2r}$ be a submatrix of $A$, whose columns are a uniformly random subset of those of $A$, of size $2r$. Furthermore, let $A_i$ be an additional, uniformly random, column of $A$ not among those of $B$. Then, with constant probability (where the probability is taken over $B$ and $A_i$),
$$\min_{x \in \R^{2r}} \|Bx - A_i \|_p^p \leq O(r^{1 - \frac{p}{2}}) \frac{OPT^p}{d}$$
where $OPT = \min_{A_k \text{rank-}k} \|A - A_k\|_p$.
\end{claim}

\begin{proof}
The proof is somewhat along the lines of Lemma 6 of \cite{algorithm3_original}, but the analysis is based on the existence result in Theorem \ref{theorem:subset_existence} rather than the existence of an $O(k)$-approximate column subset, as in \cite{algorithm3_original}. Let $B' \in \R^{n \times (2r + 1)}$ be $B$ with $A_i$ adjoined. Then, the columns of $B'$ also form a uniformly random subset of the columns of $A$. Define $\Delta \in \R^{n \times d}$ so that if $U^*V^*$ is the optimal rank-$k$ approximation to $A$, then $\Delta = A - U^*V^*$. For a subset $S \subset [d]$, let $\Delta_S$ denote the submatrix of $\Delta$ containing those columns whose indices are in $S$. Finally, let $T \subset [d]$ such that $B' = A_T$, and let $B'_k$ be the best rank-$k$ approximation to $B'$.

Then, by the definition of $B_k'$, $\|B' - B_k'\|_p^p \leq \|\Delta_T\|_p^p$. Hence, taking expectations gives
\begin{equation} \label{eq:expectation}
\begin{split}
E\Big[\|B' - B_k'\|_p^p\Big] \leq E\Big[\|\Delta_T\|_p^p\Big] = \frac{|T|}{d}OPT^p = O\Big(\frac{r}{d}OPT^p \Big)
\end{split}
\end{equation}
where the last equality is because $T$ is a uniformly random subset of $[d]$ of size $2r + 1$. Hence, by Markov's inequality, with probability $19/20$, $\|B' - B_k'\|_p^p \leq O\Big(\frac{r}{d}OPT^p\Big)$ --- denote this event by $\calE_1$.

Now we use Theorem \ref{theorem:subset_existence}, by which there exists a subset $S$ of column indices of $B'$ of size $r$ such that, if we let $B_S'$ denote the corresponding submatrix of $B'$, and $V = \argmin_{V' \in \R^{r \times |T|}} \|B_S'V' - B'\|_p$, then
$$\|B_S'V - B'\|_p \leq O(r^{\frac{1}{p} - \frac{1}{2}}) \|B_k' - B'\|_p$$
i.e. $\|B_S'V - B'\|_p^p \leq O(r^{1 - \frac{p}{2}}) \|B_k' - B'\|_p^p$. If $\calE_1$ holds, then this implies that
$$\|B_S' V - B'\|_p^p \leq O(r^{1 - \frac{p}{2}}) \frac{r \cdot OPT^p}{d}$$
Finally, let $i \in T$ be a uniformly random column index --- for instance, the uniformly random column of $B'$ that is not in $B$. Then,
\begin{equation} \label{eq:uniformly_random_column}
\begin{split}
E\Big[\|B_S'V_i - B_i'\|_p^p \mid \calE_1 \Big] \leq \frac{1}{2r + 1} O(r^{1 - \frac{p}{2}}) \frac{r \cdot OPT^p}{d} = O(r^{1 - \frac{p}{2}}) \frac{OPT^p}{d}
\end{split}
\end{equation}
and hence, assuming $\calE_1$, with probability at least $19/20$, $\|B_S'V_i - B_i'\|_p^p \leq O(r^{1 - \frac{p}{2}}) OPT^p/d$. Let $\calE_2$ denote the event that this holds --- then, the probability of $\calE_1 \cap \calE_2$ is at least $18/20$. 

Finally, let $\calE_3$ be the event that $A_i$ is not in the $O(r^{\frac{1}{p} - \frac{1}{2}})$-approximate column subset $S$. Note that $S$ has size $r$, while $B'$ has $2r + 1$ columns. Moreover, $S$ is determined entirely by $B'$ --- therefore, since $A_i$ is also a uniformly random column of $B'$, with probability at least $\frac{1}{2}$, $A_i$ is not in $S$, meaning that it is covered well by $B$ (since $S$ is contained entirely in $B$).

In summary, if $\calE_1$, $\calE_2$ and $\calE_3$ all hold, and $x$ is obtained by performing $\ell_p$ regression, using $B$ to fit $A_i$, then
$$\|Bx - A_i\|_p^p \leq \|B_S'V_i - B_i'\|_p^p \leq O(r^{1 - \frac{p}{2}}) \frac{OPT^p}{d}$$
Moreover, the failure probability of $\calE_1 \cap \calE_2 \cap \calE_3$ is at most $\frac{1}{2} + \frac{2}{10} = \frac{7}{10}$. This proves the lemma.
\end{proof}

We can combine the above claim with a Markov bound, as follows. Let $B \in \R^{n \times 2r}$ be a submatrix of $A$, such that the column indices of $B$ form a uniformly random subset of $T_{i - 1}$ (here, we are using the notation of Algorithm \ref{algorithm:column_subset_sampling} --- $T_i$ is the set of indices of columns of $A$ which have not been discarded after $i$ iterations). For $i \in [m]$, let $Z_i$ be equal to $1$ if $A_i$ is approximately covered by $B$, i.e., 
$$\min_{x \in \R^{2r}} \|Bx - A_i\|_p^p \leq O\Big(r^{1 - \frac{p}{2}} \frac{OPT^p}{d}\Big)$$
and $0$ otherwise. Then, $E_{B, i}[Z_i] = c$, where $0 < c < 1$, and the expectation is taken over uniformly random $B$ and $i$. Hence, if we let $Z := \sum_i Z_i$ be the number of approximately covered columns, then $E[Z] = cm$, and $E[m - Z] = (1 - c)m$. Therefore, by Markov's inequality, with constant probability $F > 0$, $(m - Z) \leq (1 - \frac{c}{2})m$, meaning $Z \geq \frac{c}{2}m$. In other words, with probability $F$, there exists $R \subset T_{i - 1}$ with $|R| \geq \frac{c}{2}|T_{i - 1}|$ such that for each $j \in R$,
$$\min_{V_j \in \R^k} \|BV_j - A_j\|_p^p \leq O\Big(r^{1 - \frac{p}{2}} \frac{OPT^p}{|T_{i - 1}|}\Big)$$
meaning
$$\min_{V \in \R^{k \times |R|}} \|BV - A_R\|_p^p \leq O\Big(r^{1 - \frac{p}{2}} \frac{OPT^p \cdot |T_{i - 1}|}{|T_{i - 1}|}\Big) = O\Big(r^{1 - \frac{p}{2}} OPT^p\Big)$$
Denote this event by $\calE$, meaning $\calE$ occurs with probability $F$. Then, as done in Algorithm \ref{algorithm:column_subset_sampling}, it is sufficient to sample the submatrix $B$ in $O(\log d)$ independent iterations, and take the sampled submatrix which minimizes the sum of the lowest $\frac{c}{2}|T_{i - 1}|$ residuals. By choosing $B$ in this way, we ensure that $\calE$ has failure probability at most $(1 - F)^{O(\log d)} = \frac{1}{d^{O(1)}}$. Since we perform $O(\log d)$ iterations, we only have to perform a union bound over $O(\log d)$ such events, meaning the overall failure probability of Algorithm \ref{algorithm:column_subset_sampling} is $O(\log d) \cdot \frac{1}{d^{O(1)}} = o(1)$.
\end{proof}

\begin{remark}
Note that our application of linearity of expectation in the above proof is valid. For a fixed $i \in [m]$, it may not seem that we can take the expectation $E_{B, i}[Z_i]$ over a uniformly random column index $i$, since $i$ is determined by $Z_i$. However, we could for instance shuffle the indices $i \in [m]$, and sum the $Z_i$ in the shuffled order --- then, each index $i \in [m]$ is a uniformly random column index, and we can use linearity of expectation.
\end{remark}

We can also remove the $O((\log d)^{\frac{1}{p}})$ term in the approximation factor from Theorem \ref{theorem:column_subset_selection_log_d_approximation} with the following refined analysis, reminiscent of one performed in \cite{DBLP:conf/nips/SongWZ19} --- we examine the number of times a column of $A$ can remain ``uncovered" before it is removed:

\begin{theorem} [Column Sampling - Better Approximation Factor] 
\label{thm:column_sampling_better_approximation_factor}
Let $A \in \R^{n \times d}$, $p \in [1, 2)$, $k \in \N$ and $r = O(k \log k)$ if $p = 1$ and $r = O(k \log k (\log \log k)^2)$ otherwise. Let $S \subset [d]$ be the set of columns of $A$ that is returned by Algorithm \ref{algorithm:column_subset_sampling}. Let $U = A_S \in \R^{n \times O(r \log d)}$ and let $V = \argmin_{V' \in \R^{O(r \log d) \times d}} \|UV' - A\|_p$, i.e., $V$ is the optimal right factor for $A_S$. Then, with probability $1 - o(1)$,
$$\|UV - A\|_p \leq O(r^{\frac{1}{p} - \frac{1}{2}} (\log k)^{\frac{1}{p}}) \min_{A_k \text{ rank } k} \|A - A_k\|_p$$
\end{theorem}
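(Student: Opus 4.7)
The goal is to trade the $(\log d)^{1/p}$ factor in Theorem~\ref{theorem:column_subset_selection_log_d_approximation} for $(\log k)^{1/p}$ by following the refinement sketched in the introduction (inspired by~\cite{DBLP:conf/nips/SongWZ19}): in each round I will condition the analysis on the random sample avoiding the top-cost fraction of columns, so that the cost charged per round is bounded only against the remaining ``light'' mass. Set $t = 2r + 1$, and in each round $i$ let $F_i \subseteq T_{i-1}$ be the set of $\lceil |T_{i-1}|/t \rceil$ columns of $T_{i-1}$ with the largest per-column optimal errors $\Delta_j^p := \|(A - U^*V^*)_j\|_p^p$, where $U^*V^*$ is a fixed optimal rank-$k$ approximation of $A$. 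Write $OPT_{i,-F_i}^p := \sum_{j \in T_{i-1} \setminus F_i}\Delta_j^p \leq OPT^p$.

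First I would strengthen Claim~\ref{claim:each_column_covered_constant probability} by additionally conditioning on the random $(2r+1)$-subset $B'$ being contained in $T_{i-1}\setminus F_i$; this event has constant probability since $|F_i|/|T_{i-1}| = 1/t$. Under this conditioning $B'$ is uniform on $\binom{T_{i-1}\setminus F_i}{2r+1}$, and the analogue of Equation~\ref{eq:expectation} becomes
\[
E\bigl[\|B' - B'_k\|_p^p\bigr] \leq O\!\bigl(r/|T_{i-1}|\bigr)\,OPT_{i,-F_i}^p,
\]
while the remainder of the proof (Theorem~\ref{theorem:subset_existence} inside $B'$, the $1/2$-probability that the uniform target $A_c$ avoids the resulting $r$-subset, and Markov over $A_c$) is unchanged. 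Boosting by the best of $O(\log d)$ repetitions as in Theorem~\ref{theorem:column_subset_selection_log_d_approximation} then shows the $\Omega(|T_{i-1}|)$ columns captured by \textsc{BottomK} in round $i$ account for total $\ell_p^p$-error at most $O(r^{1-p/2})\,OPT_{i,-F_i}^p$.

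The remaining task is to bound $\sum_i OPT_{i,-F_i}^p \leq O((\log k)^p)\cdot OPT^p$. Writing $N_j = |\{i : j \in T_{i-1}\setminus F_i\}|$, this sum equals $\sum_j \Delta_j^p N_j$, so it suffices to show $N_j = O(\log k)$ for every $j$ with high probability. Whenever $j\in T_{i-1}\setminus F_i$, symmetry of the uniform sample within $T_{i-1}\setminus F_i$ makes $j$ exchangeable with every other column of $T_{i-1}\setminus F_i$, so $j$ is captured by \textsc{BottomK} with probability $\Omega(1)$ in round $i$; hence $N_j$ is stochastically dominated by a geometric random variable. The $O(\log k)$ bound, rather than the trivial $O(\log d)$, is obtained by coupling this geometric survival analysis with the fact that $|T_{i-1}|$ shrinks geometrically while $|F_i| = \lceil|T_{i-1}|/t\rceil$: after $\Theta(\log k)$ ``miss'' rounds any column $j$ with persistently large $\Delta_j^p$ is forced into some $F_i$ and so cannot contribute further to the sum. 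Combining the per-round bound with this estimate and taking $p$-th roots gives
\[
\|UV - A\|_p \leq O\!\bigl(r^{1/p - 1/2}(\log k)^{1/p}\bigr)\cdot OPT,
\]
which matches the claimed approximation factor using $r = O(k\log k)$ or $O(k\log k(\log\log k)^2)$.

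The main obstacle will be the $O(\log k)$ survival bound on $N_j$. The symmetry-based per-round coverage probability is standard, but sharpening the tail from the trivial $O(\log d)$ to $O(\log k)$ requires a careful charging argument that ties together the geometric shrinkage of the surviving-column set with the relative $\Delta$-rank of each column, so that the union bound over columns only has to pay $\poly(k)$ rather than $\poly(d)$ failure probability. Executing this cleanly for both $p = 1$ (using Lemma~\ref{lemma:lewis_weights_no_dilation_contraction}) and for $p \in (1,2)$ (with the extra $(\log\log k)^2$ factors inherited from Lemma~\ref{lemma:lewis_weights_affine_contraction}) is where most of the subtlety lies.
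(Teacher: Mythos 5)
Your per-round analysis matches the paper's: condition on the sampled block avoiding the heaviest $\Theta(1/r)$-fraction $F_i$ of surviving columns (constant probability), rerun Claim~\ref{claim:each_column_covered_constant probability} relative to $T_{i-1}\setminus F_i$, and conclude that the discarded columns in round $i$ cost $O(r^{1-p/2})\,OPT^p_{T_{i-1}\setminus F_i}$. The gap is in the step you yourself flag as the main obstacle: bounding $\sum_i OPT^p_{T_{i-1}\setminus F_i}$ by $O(\log k)\,OPT^p$. Your reduction to ``$N_j = O(\log k)$ for every $j$ with high probability'' targets a statement that is false for the actual process. A column $j$ whose $\Delta_j$ is among the smallest is never placed in any $F_i$, and the columns actually discarded each round are those with the smallest \emph{regression cost} against the sampled block, not those with the smallest $\Delta_j$; nothing prevents such a $j$ from being the last column removed, in which case $N_j = \Theta(\log d)$. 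Your geometric stochastic-domination bound likewise only yields $O(\log d)$ after a union bound over $d$ columns, and the proposed rescue (``persistently large $\Delta_j^p$ is forced into some $F_i$'') says nothing about the small-$\Delta_j$ columns that dominate this failure mode.

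The paper closes this gap with a deterministic exchange argument rather than a per-column tail bound. One shows that replacing, at the latest offending round, the removal of a larger-$\Delta$ column by the removal of a smaller-$\Delta$ column only increases (coordinatewise) the multiset of surviving $\|\Delta_j\|_p$ values, and hence only increases $OPT^p_{T_{i'-1}\setminus F_{i'}}$ for all subsequent rounds $i'$. Iterating, the sum is maximized when columns are removed in increasing order of $\|\Delta_j\|_p$. Under that worst-case order the survivors at each round are exactly a top segment of the $\Delta$-order, so a column $j$ lies outside $F_i$ only in rounds where $|T_{i-1}|/r < \rho_j \leq |T_{i-1}|$ for its fixed rank $\rho_j$; since $|T_{i-1}|$ shrinks geometrically, this happens in at most $O(\log r) = O(\log k)$ rounds, giving $\sum_j \Delta_j^p N_j \leq O(\log k)\,OPT^p$ without any probabilistic claim about which columns the algorithm actually discards. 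You would need to supply this (or an equivalent charging argument that bounds the sum rather than each $N_j$) for your proof to go through.
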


\begin{proof}
Define $\Delta$ as in the proof of Theorem \ref{theorem:column_subset_selection_log_d_approximation}. Consider the $i^{th}$ iteration of Algorithm \ref{algorithm:column_subset_sampling}, and recall that $T_{i - 1}$ is the set of remaining column indices. Let $m = |T_{i - 1}|$. Finally, let $T_{i - 1, big} \subset T_{i - 1}$ consist of the $\frac{m}{r}$ indices of $T_{i - 1}$ with greatest cost, rounding down if $\frac{m}{r}$ is not an integer (i.e. $j \in T_{i - 1, big}$ if $\|\Delta_j\|_p$ is among the top $\frac{m}{r}$ column norms of $\Delta$).

Note that with constant probability, $S^{(j)}$ will be disjoint from $T_{i - 1, big}$ (provided $r$ is multiplied by a sufficiently large constant and $m \geq \Omega(r)$), since the probability of not selecting an element of $T_{i - 1, big}$ in a uniformly random subset of size $2r$ is at least $(1 - O(\frac{1}{r}))^{2r} \geq \Omega(1)$ for $r$ sufficiently large. Since we take $O(\log d)$ samples per iteration and choose the best one, this occurs on all iterations with probability $1 - o(1)$.

Condition on this event (which we can call $\calE_1$) occurring --- then, for uniformly random $j \in T_{i - 1} \setminus T_{i - 1, big}$, with constant probability, if $B = A_{S^{(i)}}$,
$$\min_{x \in \R^{2r}} \|Bx - A_j\|_p^p \leq O\Big(r^{1 - \frac{p}{2}} \frac{OPT_{T_{i - 1} \setminus T_{i - 1, big}}^p}{m}\Big)$$
This is by Claim \ref{claim:each_column_covered_constant probability}, since $S^{(j)}$ is a uniformly random subset of $T_{i - 1} \setminus T_{i - 1, big}$. Therefore, conditioning on $\calE_1$, with constant probability, the smallest $\Omega(m)$ columns have a cost of $O(r^{1 - \frac{p}{2}} OPT_{T_{i - 1} \setminus T_{i - 1, big}}^p)$. Finally, $\calE_1$ occurs in the $i^{th}$ iteration with probability at least $1 - \frac{1}{\poly(d)}$ since we repeat the sampling process $O(\log d)$ times and take the sample which gives the smallest cost on the lowest $\Omega(m)$ columns --- hence, with probability $1 - \frac{1}{\poly(d)}$, the smallest $\Omega(m)$ columns have a cost of $O(r^{1 - \frac{p}{2}} OPT_{T_{i - 1} \setminus T_{i - 1, big}}^p)$, and this occurs on every iteration $i$ with probability $1 - o(1)$ by a union bound (since there are $O(\log d)$ iterations).

It remains to bound $\sum_i OPT_{T_{i - 1} \setminus T_{i - 1, big}}^p$ where $i$ ranges across all of the iterations --- we show that
$$\sum_i OPT_{T_{i - 1} \setminus T_{i - 1, big}}^p = O((\log k)OPT^p)$$
First assume that on each iteration $i$, the column indices $j$ which are ``covered'' (and hence discarded) have the smallest $\|\Delta_j\|_p$. If we remove this assumption, then this can only decrease $\sum_i OPT_{T_{i - 1} \setminus T_{i - 1, big}}^p$. To see why, suppose $i$ is the index of the latest iteration on which this does not occur, meaning there exist column indices $j_1$ and $j_2$ such that in iteration $i$, column $j_1$ is not removed and column $j_2$ is removed, and $\|\Delta_{j_1}\|_p < \|\Delta_{j_2}\|_p$. Let $(A_1, A_2, \ldots, A_t)$ be the $\|\Delta_j\|_p$s of the remaining columns if column $j_2$ is removed, and $(B_1, B_2, \ldots, B_t)$ be the $\|\Delta_j\|_p$s of the remaining columns if column $j_1$ is removed instead. Then, for all $j \in [t]$, $A_j \leq B_j$, and hence, if on all subsequent iterations, the columns are removed in order of their $\|\Delta_j\|_p$s, then this will only increase $OPT_{T_{i - 1} \setminus T_{i - 1, big}}$ for \textit{all} subsequent $i$.

We can therefore argue that on the last iteration (and by recursing, on all iterations) where the columns are not removed in order of their $\|\Delta_j\|_p$s, removing them in order of their $\|\Delta_j\|_p$s instead can only increase $\sum_i OPT_{T_{i - 1} \setminus T_{i - 1, big}}^p$. Hence, we can assume without loss of generality that in iteration $i$, the indices $j$ in $T_{i - 1}$ with the $\Omega(m)$ lowest values of $\|\Delta_j\|_p$ will be removed. If this holds, then for $j \in T_{i - 1} \setminus T_{i - 1, big}$, after $O(\log r)$ iterations, i.e. for iteration $i'$ where $i' \geq i + O(\log r)$, $j$ will not be in $T_{i' - 1}$, since $T_{i' - 1}$ will be equal to $T_{i - 1, big}$ (since a constant fraction of the columns in $T_{i - 1}$ are removed in each iteration).
\end{proof}

\newpage
\section{$(1 + \eps)$-Approximation in FPT Time with Bicriteria Rank $3k$} \label{section:fpt_approx}

In this section, we give an algorithm for $\ell_p$-low rank approximation, for $p \in [1, 2)$, which runs in $2^{\poly(k/\eps)} + \poly(nd)$ time and outputs a matrix of rank $3k$. Our algorithm discussed in Subsubsection \ref{subsubsection:general_l1_lra_techniques} is a special case of this algorithm, in the case $p = 1$. The analysis here is similar to the analysis given in the introduction, but uses sketching matrices whose entries are $p$-stable random variables, rather than Cauchy random variables.

\subsection{Preliminaries: Median-Based Estimator for $\ell_p$-norm Dimension Reduction from \cite{ptas_for_lplra}}

We first recall some concepts from \cite{ptas_for_lplra} related to sketches based on medians and dense $p$-stable random matrices (Section 2 of \cite{ptas_for_lplra} for the $p = 1$ case and Subsection 3.2 of \cite{ptas_for_lplra} for the $p \in (1, 2)$ case). In the following, we let $B \in \R^{n \times d}$.

\begin{definition}[$p$-Stable Random Variables - As Defined in Section 3.2 of \cite{ptas_for_lplra}]
\label{def:p_stable_rvs}
Suppose $Z, Z_1, Z_2, \ldots, Z_n$ are i.i.d. random variables, and $p \in [1, 2]$. We say $Z$ and $Z_i$ are $p$-stable if, for any $x \in \R^n$, $\|x\|_pZ$ and $\sum_{i = 1}^n x_i Z_i$ have the same distribution.
\end{definition}

Note that $p$-stable random variables only exist for $p \in (0, 2]$ --- we consider $p \in [1, 2)$. $1$-stable random variables are also called Cauchy random variables, as in our description of our algorithm in the introduction. For $p \in [1, 2)$, we use $\med_p$ to denote the median of a half-$p$-stable random variable --- that is, if $Z$ is a $p$-stable random variable, then $|Z|$ is a half-$p$-stable random variable. Note that the median of a half-Cauchy random variable is just $1$, but for $p \in (1, 2)$, there is no simple closed form for $\med_p$. $\med_p$ can be computed up to a $(1 \pm \eps)$-factor, as described in \cite{knw_10_approximating_lp_norm_p_stable} --- this is enough for our purposes. This definition is relevant for the median-based sketch of \cite{ptas_for_lplra}, as we see below.

\begin{definition} [Medians and Quantiles of Vectors (Definition 4 in \cite{ptas_for_lplra})]
For a vector $v \in \R^n$, we let $\med(v)$ be the median of $|v_i|$ for $i \in [n]$. In addition, for $\alpha \in [0, 1]$, we let $q_{\alpha}(v)$ denote the minimum value greater than $\lceil \alpha n \rceil$ of the values $|v_1|, |v_2|, \ldots, |v_n|$.
\end{definition}

The following lemmas from \cite{ptas_for_lplra} allow us to obtain very accurate estimates of the $\ell_p$-norms of matrices after first multiplying by a dense $p$-stable matrix to reduce the dimension.

\begin{lemma}[$p$-stable Matrix + Median Preserves Norms (from \cite{ptas_for_lplra})]
\label{lemma:p_stable_matrix_median_preserves_norm}
Let $S$ be an $m \times n$ matrix with i.i.d. standard $p$-stable entries and let $M$ be an $n \times d$ matrix. For $\eps > 0$, with probability $1 - \frac{1}{\Omega(1)}$,
$$(1 - \eps)\|M\|_p \leq \Big(\sum_i \med(SM_i)^p\Big)^{\frac{1}{p}} \Big/ \med_p \leq (1 + \eps)\|M\|_p$$
as long as $m = \poly(1/\eps)$.
\end{lemma}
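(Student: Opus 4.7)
The plan is to establish per-column concentration of $\med(SM_i)$ around $\med_p \|M_i\|_p$ and then aggregate over $i$ using linearity of expectation and Markov's inequality, exploiting the fact that the claim only asks for multiplicative $(1 \pm \eps)$ accuracy on the \emph{sum} $\sum_i \med(SM_i)^p$, not per column.

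First, fix a column $M_i$. By the defining property of standard $p$-stable entries (Definition~\ref{def:p_stable_rvs}), each coordinate $(SM_i)_j = \sum_k S_{j,k}(M_i)_k$ is distributed as $\|M_i\|_p \cdot Z$ for a standard $p$-stable $Z$, and these coordinates are i.i.d.~across $j \in [m]$. Hence $\med(SM_i)$ has the same distribution as $\|M_i\|_p \cdot Y_m$, where $Y_m$ denotes the sample median of $m$ i.i.d.~half-$p$-stable variables. Since the density of $|Z|$ is continuous and bounded away from $0$ in a neighborhood of $\med_p$, a Hoeffding-style bound on the number of samples falling above or below $(1\pm\eta)\med_p$ gives $\Prob{|Y_m - \med_p| > \eta \med_p} \leq 2\exp(-c_p m \eta^2)$ for a constant $c_p = c_p(p) > 0$ and all $\eta \in (0, 1/2)$. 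Using $|a^p - b^p| \leq p \max(a,b)^{p-1}|a - b|$ on the ``good'' event $\{|Y_m - \med_p| \leq \eta \med_p\}$, combined with a crude polynomial-tail bound on the contribution from the ``bad'' event, yields $E[|Y_m^p - \med_p^p|] \leq \eps' \med_p^p$ whenever $m = \poly(1/\eps')$.

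Next, aggregate over columns. Writing $X_i := \med(SM_i)^p$, the per-column bound and linearity of expectation (which does not require independence, so the fact that the $X_i$ share the randomness of $S$ is immaterial) give
$$E\left[\sum_{i=1}^d \bigl|X_i - \med_p^p \|M_i\|_p^p\bigr|\right] = \sum_{i=1}^d \|M_i\|_p^p \cdot E[|Y_m^p - \med_p^p|] \leq \eps' \cdot \med_p^p \cdot \|M\|_p^p.$$
By Markov's inequality, with probability at least $1 - 1/\Omega(1)$ we have
$$\Bigl|\sum_{i=1}^d X_i - \med_p^p \|M\|_p^p\Bigr| \leq C \eps' \cdot \med_p^p \cdot \|M\|_p^p$$
for an absolute constant $C$. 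Choosing $\eps' = \eps/C'$ for a suitable constant $C'$, taking $p$-th roots, and using $(1 \pm O(\eps))^{1/p} = 1 \pm O(\eps)$ for $p \in [1,2)$ and small $\eps$, gives the stated two-sided bound while keeping $m = \poly(1/\eps)$.

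The main obstacle is the tail-control step: since $p$-stable distributions are heavy-tailed for $p < 2$, one must verify that $Y_m^p$ concentrates tightly enough \emph{in expectation} rather than merely with high probability, otherwise the rare event $\{Y_m \gg \med_p\}$ could blow up $E[Y_m^p]$. This is handled by observing that $Y_m > T$ forces at least $m/2$ of the underlying half-$p$-stable samples to exceed $T$, an event whose probability decays (by the polynomial $p$-stable tail together with a binomial tail bound) fast enough in $T$ for the tail contribution to $E[|Y_m^p - \med_p^p|]$ to be negligible once $m$ is a large enough polynomial in $1/\eps$.
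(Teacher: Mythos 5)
Your proposal is correct. Note that the paper does not actually prove this lemma --- its ``proof'' is a citation to Lemmas 6 and 12 of \cite{ptas_for_lplra} --- so there is no in-paper argument to compare against; your write-up is a self-contained reconstruction that follows the same structure as the cited proof (per-column concentration of the empirical median of i.i.d.\ scaled half-$p$-stables around $\med_p\|M_i\|_p$, then aggregation over columns via linearity of expectation and Markov, which correctly sidesteps the dependence of the columns through the shared sketch $S$). You also correctly identify the one genuinely delicate point, namely that the heavy upper tail of $Y_m$ must be controlled \emph{in expectation}. The only detail worth spelling out there: the bound $\Prob{Y_m > t} \leq \binom{m}{\lceil m/2\rceil}\Prob{|Z|>t}^{\lceil m/2\rceil} \leq (4\Prob{|Z|>t})^{m/2}$ is only nontrivial once $\Prob{|Z|>t} < 1/4$, so the tail integral $\int_{(1+\eta)\med_p}^{\infty} p\,t^{p-1}\Prob{Y_m>t}\,dt$ should be split at a constant threshold $T_1 = T_1(p)$: on $[(1+\eta)\med_p, T_1]$ use the sub-Gaussian median concentration (contributing $O_p(1)e^{-c_p m\eta^2}$), and on $[T_1,\infty)$ use the binomial-times-polynomial-tail bound, where $4C_p/t^p \leq 1/2$ makes the integral converge and decay like $2^{-\Omega(m)}$ for $m$ at least a constant. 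With that split, $E[|Y_m^p - \med_p^p|] \leq O(\eta)\med_p^p + O_p(1)e^{-c_pm\eta^2} + O_p(1)2^{-\Omega(m)}$, and choosing $\eta = \Theta(\eps')$ and $m = \poly(1/\eps')$ completes your argument as stated.
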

\begin{proof}
This is Lemma 6 from \cite{ptas_for_lplra} in the case $p = 1$, and Lemma 12 of \cite{ptas_for_lplra} in the case $p \in (1, 2)$.
\end{proof}

\begin{remark}
From inspecting the proof of the above lemma in the $p = 1$ case, it can be applied as long as $m \geq \frac{1}{\eps^3}$. This is because $\Prob{\med(SM_i) = (1 \pm \eps)\|M_i\|_1} \leq e^{-\Theta(\eps^2)m}$ (by Lemma 4 of \cite{ptas_for_lplra}) and as long as $m \geq \frac{1}{\eps^3}$, this probability is at most $e^{-\frac{1}{\eps}} \leq \eps$, which is sufficient for this lemma.
\end{remark}

\begin{lemma} [$p$-stable Matrix + Top Quantile Does not Cause Dilation (from \cite{ptas_for_lplra}]
\label{lemma:p_stable_top_quantile_no_dilation}
When $S$ is an $m \times n$ matrix with i.i.d. standard $p$-stable entries, $m = \poly(1/\eps)$, and $M$ is an $n \times d$ matrix, then with probability $1 - \frac{1}{\Omega(1)}$,
$$\Big( \sum_i q_{1 - \frac{\eps}{2}} (SM_i)^p \Big)^{\frac{1}{p}} \Big/ \med_p \leq O\Big(\frac{1}{\eps}\Big) \|M\|_p$$
\end{lemma}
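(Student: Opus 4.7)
The plan is to reduce to a per-column analysis using $p$-stability, bound the $p$-th moment of the empirical $(1-\eps/2)$-quantile in terms of the heavy-tail behavior of $|Z|^p$ for a standard $p$-stable $Z$, and then aggregate across columns via Markov's inequality.

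First I would invoke Definition \ref{def:p_stable_rvs}: since $S$ has i.i.d.\ standard $p$-stable entries, for each column $M_i$ the $m$ entries of $SM_i$ are jointly distributed as $\|M_i\|_p$ times $m$ i.i.d.\ standard $p$-stables $Z_1,\dots,Z_m$. Consequently $q_{1-\eps/2}(SM_i) \stackrel{d}{=} \|M_i\|_p\cdot Q_m$, where $Q_m$ denotes the $(1-\eps/2)$-quantile of $|Z_1|,\dots,|Z_m|$. It therefore suffices to prove the column-agnostic bound $E[Q_m^p] = O(1/\eps)$: by linearity of expectation this gives $E\bigl[\sum_i q_{1-\eps/2}(SM_i)^p\bigr] \le O(1/\eps)\cdot\|M\|_p^p$, and then Markov's inequality yields $\sum_i q_{1-\eps/2}(SM_i)^p \le O(1/\eps)\cdot\|M\|_p^p$ with probability $1-1/\Omega(1)$. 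Taking $p$-th roots, dividing by the constant $\med_p$, and using $(1/\eps)^{1/p}\le 1/\eps$ for $p\ge 1$ then gives the stated $O(1/\eps)\|M\|_p$ bound.

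To control $E[Q_m^p]$ I would use the standard heavy-tail bound for $p$-stable random variables, $\Pr[|Z|>t]\le C_p/t^p$ for some constant $C_p$, so that $Y_j:=|Z_j|^p$ satisfies $\Pr[Y_j>u]\le C_p/u$. Let $N_u$ be the number of $Y_j$ that exceed $u$; then $Q_m^p>u$ iff $N_u>\eps m/2$. Split the range of integration at a threshold $T=\Theta(1/\eps)$ chosen so that $E[N_u]=m\Pr[Y_j>u]\le\eps m/8$ for $u\ge T$. On $[0,T]$ I would use the trivial bound $\Pr[Q_m^p>u]\le 1$, contributing $T=O(1/\eps)$. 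For $u\ge T$, a multiplicative Chernoff bound gives
$$\Pr[N_u>\eps m/2]\;\le\;\Bigl(\tfrac{C'}{\eps u}\Bigr)^{\eps m/2}$$
for some constant $C'$, and provided $m=\poly(1/\eps)$ is chosen so that $\eps m\ge 4$, the integral $\int_T^\infty\Pr[Q_m^p>u]\,du$ converges and is $O(1/\eps)$. Summing the two pieces gives $E[Q_m^p]=O(1/\eps)$, completing Step 1.

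The main obstacle I expect is Step 3: making the tail integral converge requires that $m=\poly(1/\eps)$ be large enough so that the Chernoff exponent $\eps m/2$ produces polynomial-in-$u$ decay at a rate that dominates the linear-in-$u$ integrand, and constants must be set so the Chernoff regime already kicks in at the $O(1/\eps)$ scale — otherwise one could pay an extra factor of $1/\eps$ or a logarithm. A minor subtlety is that $q_{1-\eps/2}$ is defined via the ceiling $\lceil(1-\eps/2)m\rceil$ rather than the population quantile, but this only changes constants by a factor of at most $2$ after tweaking the split between $\eps/2$ and $\eps/4$, and does not affect the final $O(1/\eps)$ rate.
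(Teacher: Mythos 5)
Your proof is correct, and it essentially reconstructs the argument that the paper outsources entirely to Lemmas 7 and 13 of \cite{ptas_for_lplra}: the per-coordinate heavy-tail bound $\Pr[|Z|>t]\le C_p/t^p$ plus a Chernoff bound on the number of coordinates exceeding a threshold is exactly the mechanism the paper's own remark alludes to (``$\Pr[q_{1-\eps/2}(SM_i)\ge \frac{c}{\eps}\|M_i\|_1]\le e^{-\Theta(\eps)c\cdot m}$''), and your additional step of integrating the tail to get $E[Q_m^p]=O(1/\eps)$ and then applying Markov across columns is the standard (and correct) way to aggregate, since a per-column union bound over $d$ columns would not work. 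The final bookkeeping — $\med_p=\Omega(1)$ from Lemma \ref{lemma:p_stable_median_omega_1} and $(1/\eps)^{1/p}\le 1/\eps$ — is also right, so no gaps.
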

\begin{proof}
This is Lemma 7 from \cite{ptas_for_lplra} in the $p = 1$ case, and Lemma 13 of \cite{ptas_for_lplra} in the $p \in (1, 2)$ case.
\end{proof}

\begin{remark}
From inspecting the proof of the above lemma in the $p = 1$ case, we see that it can be applied as long as $m \geq \frac{1}{\eps^2}$. This is because $\Prob{q_{1 - \eps/2}(SM_i) \geq \frac{c}{\eps}\|M_i\|_1} \leq e^{-\Theta(\eps)c \cdot m} \leq e^{-O(1/\eps)} \leq \eps$ (by Lemma 4 of \cite{ptas_for_lplra}) as long as $m \geq \frac{1}{\eps^2}$.
\end{remark}

\begin{remark}
In addition, the failure probabilities in the above two lemmas can be as small as desired, since they are obtained from Markov bounds.
\end{remark}

\begin{lemma} [Quasi-Subspace Embedding with Median Estimator (from \cite{ptas_for_lplra})]
\label{lemma:p_stable_median_subspace_embedding}
Let $X$ be a $k$-dimensional subspace of $\R^n$ and $\eps, \delta > 0$. Let $S$ be an $m \times n$ matrix whose entries are i.i.d. standard $p$-stable random variables, where $m = O(1/\eps^2 \cdot k \log (k/\eps \delta))$. Then, with probability at least $1 - \Theta(\delta)$, for all $x \in X$,
$$(1 - \Theta(\eps)) \|x\|_p \leq q_{\frac{1}{2} - \eps}(Sx/\med_p) \leq q_{\frac{1}{2} + \eps}(Sx/\med_p) \leq (1 + O(\eps))\|x\|_p$$
\end{lemma}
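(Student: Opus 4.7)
The plan is to establish the two-sided quantile embedding on $X$ via a three-stage argument: single-point concentration of the empirical quantile, a net argument over the unit $\ell_p$-sphere of $X$, and a perturbation step extending the bound from the net to the whole subspace. Throughout I will use only that the entries of $Sx$ are i.i.d.\ samples of $\|x\|_p \cdot Z$ for $Z$ standard $p$-stable (by Definition \ref{def:p_stable_rvs}), together with standard facts about the density of $|Z|$ at its median $\med_p$ and its tail $\Pr[|Z| > t] = \Theta(t^{-p})$.

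For the single-point step, fix $x \in X$ with $\|x\|_p = 1$. The CDF $F$ of $|Z|$ is smooth near $\med_p$ with $F'(\med_p)$ bounded away from $0$, so $F(\med_p(1\pm\eps)) = \tfrac{1}{2} \pm \Theta(\eps)$. Thus the Bernoulli indicators $\mathbf{1}\bigl[|(Sx)_i| \le \med_p (1+\eps)\bigr]$ have mean $\tfrac{1}{2} + \Theta(\eps)$, and a Chernoff bound gives
\[
\Pr\bigl[q_{1/2+\eps}(Sx/\med_p) > (1+O(\eps)) \|x\|_p\bigr] \;\le\; e^{-\Omega(\eps^2 m)},
\]
and symmetrically for the lower quantile.

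For the net step, let $N$ be a standard $\eps^{2}$-net of $\{x \in X : \|x\|_p = 1\}$ in the $\ell_p$ metric, of size $|N| \le (C/\eps)^{O(k)}$. Union-bounding the single-point tail over $N$ with parameter $\eps$ yields failure probability at most $(C/\eps)^{O(k)} e^{-\Omega(\eps^2 m)}$, which is at most $\delta/2$ provided $m = \Omega\bigl(\eps^{-2}\, k \log(k/(\eps\delta))\bigr)$, matching the lemma's hypothesis on $m$. So with probability $\ge 1 - \delta/2$ the desired two-sided bound holds simultaneously for every $y \in N$.

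The main obstacle, and the heart of the argument, is the extension from $N$ to all of $X$, since $q_\alpha$ is not a continuous function of its argument. My plan is to show that for unit $x \in X$ and its nearest $y \in N$, the coordinatewise perturbation $w := S(x-y)$ is small on a $1 - O(\eps)$ fraction of indices. Since $w_i$ is distributed as $\|x-y\|_p \cdot Z_i \le \eps^2 Z_i$, the tail bound $\Pr[|Z_i| > 1/\eps] = \Theta(\eps^p)$ implies $E[|\{i : |w_i| > \eps\}|] = O(\eps^p \cdot m) \le O(\eps m)$. A Chernoff bound on this count shows, with failure probability $e^{-\Omega(\eps m)} \le \delta/2$ (absorbable by the net union bound), that at most $\eps m$ coordinates of $|w|$ exceed $\eps$. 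On the remaining coordinates $|Sx|_i$ and $|Sy|_i$ agree up to an additive $\eps \le \eps \|y\|_p$, so the sorted values differ in at most $\eps m$ positions and by at most $\eps$ elsewhere. Therefore the $\bigl(\tfrac{1}{2}\pm\eps\bigr)$-quantile of $Sx/\med_p$ lies within $(1 \pm O(\eps))$ of the $\bigl(\tfrac{1}{2} \pm 2\eps\bigr)$-quantile of $Sy/\med_p$, which by Step 2 is within $(1 \pm O(\eps))\|y\|_p = (1 \pm O(\eps))$ of $\|x\|_p$. A union bound over the two failure events and a rescaling of $\eps$ by a constant complete the proof.
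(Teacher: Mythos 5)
The paper does not actually prove this lemma; it cites Lemma 5 (for $p=1$) and Lemma 11 (for $p\in(1,2)$) of \cite{ptas_for_lplra}, so what you have written is a reconstruction of that cited argument. Your overall architecture --- single-point concentration of the empirical quantile via a Chernoff bound on the indicators $\mathbf{1}[|(Sx)_i|\le \med_p(1\pm\Theta(\eps))]$, a net over the unit $\ell_p$-sphere of $X$ with $m=\Omega(\eps^{-2}k\log(k/(\eps\delta)))$ absorbing the union bound, and a perturbation step to pass from the net to all of $X$ --- is the right one and matches the structure of the cited proof. Steps 1 and 2 are fine, modulo the routine fact (worth stating) that the density of $|Z|$ is bounded below near $\med_p$ uniformly in $p\in[1,2)$.

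The gap is in Step 3. You apply a Chernoff bound to the count $|\{i: |(S(x-y))_i| > \eps\}|$ and claim the failure probability is ``absorbable by the net union bound.'' But $x$ ranges over the continuum of the unit sphere of $X$, so the perturbation $x-y$ ranges over a continuum of small vectors in $X$; it is not one of finitely many net points, and there is no finite family over which to union bound this event. This is precisely the crux of the lemma: $q_\alpha$ is not a continuous functional, so the uniform statement over $X$ cannot be reduced to single-vector concentration plus a naive net. The standard repair, and the one used in the cited proof, is a multi-scale (chaining) decomposition: write $x = y^{(0)} + \sum_{j\ge 1} c_j y^{(j)}$ with each $y^{(j)}$ a net point at scale $\gamma^j$ and $|c_j|\le \gamma^j$, prove for every net point $y$ and every level $t$ a tail-count bound of the form $|\{i: |(Sy)_i| > C^t\}| \le m/2^{\Omega(t)}$ with probability high enough to union bound over all net points and all levels, and then observe that for an arbitrary $x$ the set of coordinates where $Sx$ deviates substantially from $Sy^{(0)}$ is contained in the union over $j\ge 1$ of the (now simultaneously controlled) bad sets of the $y^{(j)}$, whose total size is $O(\eps m)$. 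Alternatively, one can invoke a dilation bound valid simultaneously for all vectors of the subspace (e.g.\ $\|SU\|_p \le \poly(k/\delta)\,\|U\|_p$ for a well-conditioned basis $U$ of $X$, in the spirit of Lemma \ref{lemma:p_stable_lp_norm_distortion}) to deterministically bound, for every $x$ at once, the number of coordinates where $|S(x-y)|_i>\eps$, at the cost of taking the net granularity $\poly(\eps/k)$ rather than $\eps^2$. One of these two fixes is needed; as written, Step 3 does not establish the ``for all $x\in X$'' claim.
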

\begin{proof}
This is Lemma 5 of \cite{ptas_for_lplra} in the $p = 1$ case, and Lemma 11 of \cite{ptas_for_lplra} in the $p \in (1, 2)$ case.
\end{proof}

\begin{remark}
Note that we can select any number of rows that is larger than some $O(\frac{1}{\eps^2}k\log(\frac{k}{\eps\delta}))$, by inspecting the proof of Lemma 5 of \cite{ptas_for_lplra} --- any number of rows larger than the specified $O(\frac{1}{\eps^2}k\log(\frac{k}{\eps\delta}))$ allows the net argument in that proof to work.
\end{remark}

The following lemma shown in \cite{ptas_for_lplra} allows this sketch to serve as a ``quasi-affine embedding":

\begin{lemma}
Let $U \in \R^{n \times k}$ and $A \in \R^{n \times d}$. Let $V^*$ be chosen to minimize $\|UV^* - A\|_p$. Suppose $S$ is an $m \times n$ random matrix such that:
\begin{enumerate}
    \item $q_{\frac{1}{2} - \eps}(SUx/\med_p) \geq (1 - \Theta(\eps)) \|Ux\|_p$ for all $x \in \R^k$
    \item For each $i \in [d]$, with probability at least $1 - \eps^3$, $\med(S[U, A_i]x/\med_p) \geq (1 - \eps^3) \|[U, A_i]x\|_p$ for all $x \in \R^{k + 1}$
    \item $(\sum_i \med(SUV_i^* - SA_i)^p)^{\frac{1}{p}} / \med_p \leq (1 + \eps^3) \|UV^* - A\|_p$
    \item $(\sum_i q_{1 - \eps/2}(SUV_i^* - SA_i)^p)^{\frac{1}{p}} / \med_p \leq O(\frac{1}{\eps}) \|UV^* - A\|_p$
\end{enumerate}
If statements 1, 3 and 4 each hold with probability $1 - \frac{1}{\Omega(1)}$, then with probability $1 - \frac{1}{\Omega(1)}$, for all $V \in \R^{k \times d}$,
$$(\sum_i \med(SUV_i - SA_i)^p)^{\frac{1}{p}} \Big/ \med_p \geq (1 - O(\eps)) \|UV - A\|_p$$
\end{lemma}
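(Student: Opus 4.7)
The proof plan combines a per-column quantile decomposition with Minkowski aggregation and a case analysis on how close $V$ is to $V^*$. Writing the residual as $SUV_i - SA_i = SU(V - V^*)_i - S(A_i - UV^*_i)$, I rely on the following pointwise estimate: for $u, v \in \R^m$ and $\alpha, \beta \in [0, 1]$ with $\beta - \alpha \geq 1/2$,
$$\med(u - v) \geq q_\alpha(u) - q_\beta(v),$$
since the set of coordinates where $|u_j| \geq q_\alpha(u)$ (of size $\geq (1-\alpha)m$) intersected with those where $|v_j| \leq q_\beta(v)$ (of size $\geq \beta m$) contains at least $(\beta - \alpha) m \geq m/2$ indices, and on each such coordinate $|u_j - v_j| \geq q_\alpha(u) - q_\beta(v)$. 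Choosing $\alpha = 1/2 - \eps$ and $\beta = 1 - \eps/2$ matches the quantiles in hypotheses 1 and 4 and yields, column by column,
$$\med(SUV_i - SA_i) \geq q_{1/2 - \eps}\bigl(SU(V - V^*)_i\bigr) - q_{1 - \eps/2}\bigl(S(UV^*_i - A_i)\bigr).$$

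I would then aggregate across columns by raising to the $p$-th power and invoking Minkowski: since $\med(SUV_i - SA_i) + q_{1-\eps/2}(S(UV^*_i - A_i)) \geq q_{1/2-\eps}(SU(V - V^*)_i)$ pointwise, Minkowski's inequality gives
$$\Big(\sum_i \med(SUV_i - SA_i)^p\Big)^{1/p} \geq \Big(\sum_i q_{1/2 - \eps}(SU(V - V^*)_i)^p\Big)^{1/p} - \Big(\sum_i q_{1 - \eps/2}(S(UV^*_i - A_i))^p\Big)^{1/p}.$$
Hypothesis 1 applied column by column (with $x = (V - V^*)_i$) lower-bounds the first right-hand-side term by $(1 - O(\eps))\med_p \|U(V - V^*)\|_p$, and hypothesis 4 upper-bounds the second by $O(1/\eps) \med_p \|UV^* - A\|_p$.

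A case split on the relative size of $\|U(V - V^*)\|_p$ and $\|UV^* - A\|_p$ finishes the argument. In the ``far'' regime $\|U(V - V^*)\|_p \geq (C/\eps^2)\|UV^* - A\|_p$ for a sufficiently large constant $C$, the triangle inequality yields $\|U(V - V^*)\|_p = (1 \pm O(\eps))\|UV - A\|_p$, and the bound above is already the desired $(1 - O(\eps)) \med_p \|UV - A\|_p$. The main obstacle is the complementary ``close'' regime where the crude bound can be negative (for instance at $V = V^*$). Here I would invoke hypothesis 2: by Markov over $i \in [d]$, outside a ``bad'' subset $B \subset [d]$ containing at most an $O(\eps)$-fraction of columns, hypothesis 2 applied with $x = [V_i; -1]^\top$ yields the strong per-column bound $\med(SUV_i - SA_i) \geq (1 - \eps^3) \med_p \|UV_i - A_i\|_p$ simultaneously for all $V$. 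A second Markov application gives $\sum_{i \in B}\|UV^*_i - A_i\|_p^p = O(\eps^2)\|UV^* - A\|_p^p$. Summing the $p$-th powers of the good-column estimates and using the triangle-inequality decomposition $\|UV_i - A_i\|_p \leq \|U(V - V^*)_i\|_p + \|UV^*_i - A_i\|_p$ on bad columns (together with hypothesis 3 and the close-regime hypothesis $\|UV-A\|_p = O(\eps^{-2})\|UV^*-A\|_p$) shows that the omitted bad-column share of $\|UV-A\|_p^p$ is at most an $O(\eps)$-fraction, which is absorbable in the $(1 - O(\eps))$ slack. A union bound over the constant-probability events in hypotheses 1, 3, 4 together with the Markov events from hypothesis 2 yields the claimed overall success probability.
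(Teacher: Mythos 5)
You are reconstructing an argument the paper itself does not give: its ``proof'' of this lemma is a citation to Theorems 11 and 12 of \cite{ptas_for_lplra}, plus a remark on the probability bookkeeping. Your skeleton --- the quantile-shift inequality $\med(u-v) \geq q_\alpha(u) - q_\beta(v)$ for $\beta - \alpha \geq \frac12$, Minkowski aggregation across columns, a far/close case split on $\|U(V-V^*)\|_p$ versus $\|UV^*-A\|_p$, and Markov over hypothesis 2 to isolate a small bad set $B$ --- is the right one, and the pointwise estimate, the far-regime argument, and the per-column use of hypothesis 2 with $x = [V_i;-1]$ are all correct.

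The gap is in the close regime. You discard the bad columns and claim $\sum_{i\in B}\|UV_i - A_i\|_p^p \leq O(\eps)\|UV-A\|_p^p$, justified by $\|UV_i-A_i\|_p \leq \|U(V-V^*)_i\|_p + \|UV^*_i-A_i\|_p$ together with the Markov bound $\sum_{i\in B}\|UV^*_i-A_i\|_p^p = O(\eps^2)\|UV^*-A\|_p^p$. But this controls only the second term of that decomposition; nothing you have established bounds $\sum_{i\in B}\|U(V-V^*)_i\|_p^p$. In the close regime you only know $\|U(V-V^*)\|_p \leq (C/\eps^2)\|UV^*-A\|_p$, and since $B$ is determined by $S$ alone, an adversarial $V$ may set $V_i = V^*_i$ off $B$ and concentrate the entire perturbation on $B$: then $\sum_{i\in B}\|UV_i-A_i\|_p^p$ is of order $\eps^{-2p}\|UV^*-A\|_p^p$, which is essentially \emph{all} of $\|UV-A\|_p^p$ rather than an $O(\eps)$ fraction, so dropping $B$ loses everything. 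A repair must keep the quantile lower bound in play on the bad columns too --- e.g.\ apply the Minkowski step restricted to $B$, or split $B$ according to whether $\|U(V-V^*)_i\|_p$ dominates both $\|UV^*_i - A_i\|_p$ and $q_{1-\eps/2}(SUV^*_i - SA_i)/\med_p$, crediting the dominated columns via hypotheses 1 and 4 and discarding only the rest --- and controlling the remaining columns is exactly where hypotheses 3 and 4 must be deployed, which your sketch never does beyond a parenthetical mention of hypothesis 3. As written, the close-regime step does not go through.
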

\begin{proof}
This is Theorem 11 of \cite{ptas_for_lplra} in the $p = 1$ case and Theorem 12 of \cite{ptas_for_lplra} in the $p \in (1, 2)$ case. Note that in the statements of these Theorems in \cite{ptas_for_lplra}, it is not explicitly stated that statements 1, 3 and 4 only need to hold with constant probability. However, this is true because, by inspecting the proof of Theorem 11 of \cite{ptas_for_lplra}, we see that statement 2 is only used to perform a Markov bound, which needs to hold with constant probability --- once that Markov bound is obtained, a union bound can be performed over statements 1, 2, 3 and 4.
\end{proof}

Specializing this lemma to $p$-stable matrices gives:

\begin{lemma} [Lower Bound for One-Sided Embedding with Median Estimator (from \cite{ptas_for_lplra})]
\label{lemma:affine_embedding_lower_side_p_stable_median}
Let $U \in \R^{n \times k}$ and $A \in \R^{n \times d}$. If $S$ is an $m \times n$ random matrix with i.i.d. standard $p$-stable entries, where $m = O(\max(k/\eps^6 \log (k/\eps), 1/\eps^9))$, then with probability $1 - \frac{1}{\Omega(1)}$,
$$(\sum_i \med(SUV_i - SA_i)^p)^{\frac{1}{p}} \Big/ \med_p \geq (1 - O(\eps)) \|UV - A\|_p$$
for all $V \in \R^{k \times d}$.
\end{lemma}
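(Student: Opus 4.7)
The plan is to instantiate the preceding general lemma for the case where $S$ has i.i.d.\ $p$-stable entries, by verifying each of its four hypotheses with the stated value of $m$. Once all four conditions are checked to hold, the desired one-sided embedding conclusion follows directly, so essentially no new argument is needed beyond a careful bookkeeping of parameters.

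\textbf{Verifying conditions 1 and 2 via Lemma~\ref{lemma:p_stable_median_subspace_embedding}.} For condition 1, apply Lemma~\ref{lemma:p_stable_median_subspace_embedding} to the $k$-dimensional subspace $X = \mathrm{colspan}(U)$ with distortion parameter $\eps$ and failure probability $\delta = \Theta(1)$; this requires $m = \Omega(k/\eps^2 \cdot \log(k/\eps))$, which is dominated by our choice of $m$. The resulting bound $q_{\frac{1}{2}-\eps}(SUx/\med_p) \geq (1-\Theta(\eps))\|Ux\|_p$ for all $x \in \R^k$ is exactly condition 1. For condition 2, fix $i \in [d]$ and apply Lemma~\ref{lemma:p_stable_median_subspace_embedding} to the $(k+1)$-dimensional subspace $\mathrm{colspan}([U, A_i])$, this time with distortion $\eps^3$ and failure probability $\delta = \eps^3$. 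This requires $m = \Omega\!\big((k+1)/\eps^6 \cdot \log((k+1)/\eps^6)\big) = O(k/\eps^6 \log(k/\eps))$, which is again covered by our $m$. Since the median is at least $q_{\frac{1}{2}-\eps^3}$, the resulting bound implies $\med(S[U, A_i]x/\med_p) \geq (1-\eps^3)\|[U, A_i]x\|_p$ for all $x \in \R^{k+1}$ with probability $1 - \Theta(\eps^3)$, giving condition 2 (after adjusting the constant in $\eps^3$ if needed).

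\textbf{Verifying conditions 3 and 4 via Lemmas~\ref{lemma:p_stable_matrix_median_preserves_norm} and~\ref{lemma:p_stable_top_quantile_no_dilation}.} For condition 3, apply Lemma~\ref{lemma:p_stable_matrix_median_preserves_norm} to $M := UV^* - A$ with parameter $\eps^3$. As noted in the remark following that lemma, this requires only $m = \poly(1/\eps^3) = O(1/\eps^9)$ rows, and gives the upper bound $(\sum_i \med(SUV_i^* - SA_i)^p)^{1/p}/\med_p \leq (1+\eps^3)\|UV^* - A\|_p$ with arbitrarily large constant probability (the remark on Markov bounds lets us tune this). For condition 4, apply Lemma~\ref{lemma:p_stable_top_quantile_no_dilation} to the same matrix $M = UV^* - A$ with parameter $\eps$; as noted in its remark, $m = \poly(1/\eps)$ suffices, and yields $(\sum_i q_{1-\eps/2}(SUV_i^* - SA_i)^p)^{1/p}/\med_p \leq O(1/\eps)\|UV^* - A\|_p$ with constant probability.

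\textbf{Applying the general lemma and finishing.} By construction, $m = O(\max(k/\eps^6 \log(k/\eps), 1/\eps^9))$ simultaneously satisfies the row requirements for all four conditions. Conditions 1, 3, and 4 each hold with probability $1 - \frac{1}{\Omega(1)}$, and condition 2 holds for each individual $i \in [d]$ with probability at least $1 - \eps^3$ as required. The previous lemma then yields the conclusion $(\sum_i \med(SUV_i - SA_i)^p)^{1/p}/\med_p \geq (1 - O(\eps))\|UV - A\|_p$ for all $V \in \R^{k \times d}$ with constant probability, which is what we wanted.

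The only real subtlety, and the step where one has to be most careful, is condition 2: its per-column failure probability must be as small as $\eps^3$ (and the distortion also as small as $\eps^3$), since the outer proof of the general lemma uses a Markov-style argument that consumes exactly this slack across the $d$ columns. This is what forces $m$ up to $\Theta(k/\eps^6 \log(k/\eps))$ rather than the $\Theta(k/\eps^2 \log(k/\eps))$ one would naively expect from a single-scale subspace embedding. All remaining work is routine parameter matching.
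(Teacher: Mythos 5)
Your proposal is correct and takes essentially the same route as the paper: both verify the four hypotheses of the preceding quasi-affine-embedding lemma using Lemmas \ref{lemma:p_stable_median_subspace_embedding}, \ref{lemma:p_stable_matrix_median_preserves_norm}, and \ref{lemma:p_stable_top_quantile_no_dilation} with the same parameter choices (in particular, distortion and failure probability $\eps^3$ on the $(k+1)$-dimensional subspaces for condition 2, which is what forces $m = \Theta(k/\eps^6\log(k/\eps))$). Your write-up is simply a more detailed version of the paper's one-line bookkeeping.
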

\begin{proof}
This is a corollary of all the above lemmas. For statement 1 of the previous lemma to hold, it is sufficient for $S$ to have at least $O(k/\eps^2 \log (k/\eps))$ rows by Lemma \ref{lemma:p_stable_median_subspace_embedding}. For statement 2 of the previous lemma to hold, it is sufficient for $S$ to have at least $O(k/\eps^6 \log (k/\eps))$ rows, again by Lemma \ref{lemma:p_stable_median_subspace_embedding}. For statement 3 of the previous lemma to hold, it is sufficient for $S$ to have $O(1/\eps^9)$ rows, by Lemma \ref{lemma:p_stable_matrix_median_preserves_norm}. Finally, for statement 4 of the previous lemma to hold, it is sufficient for $S$ to have $O(1/\eps^2)$ rows, by Lemma \ref{lemma:p_stable_top_quantile_no_dilation}. 
\end{proof}

We recall another useful lemma on $p$-stable matrices, which bounds the $\ell_p$-norm of $SM$ if $S$ is a $p$-stable matrix and $M$ is a fixed matrix:

\begin{lemma} [Distortion in $\ell_p$-norm with $p$-stable Matrices - Lemma E.11 of \cite{l1_lower_bound_and_sqrtk_subset}]
\label{lemma:p_stable_lp_norm_distortion}
Let $M \in \R^{n \times d}$, and let $S \in \R^{r \times n}$ be a random matrix whose entries are i.i.d. standard $p$-stable random variables. Then, with probability $1 - \frac{1}{\Omega(1)}$,
$$\|SM\|_p^p \leq O(r\log d) \|M\|_p^p$$
\end{lemma}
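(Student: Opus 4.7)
My plan is to prove the bound columnwise using the scaling property of $p$-stable random variables together with a truncation argument to handle the heavy tails.

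First, I would decompose $\|SM\|_p^p = \sum_{k=1}^d \sum_{j=1}^r |(SM)_{j,k}|^p$ and analyze each column separately. For a fixed column index $k$, the $p$-stability property (Definition \ref{def:p_stable_rvs}) gives $(SM)_{j,k} = \sum_{i=1}^n S_{j,i} M_{i,k} \stackrel{d}{=} \|M_k\|_p \cdot Z_{j,k}$, where $Z_{j,k}$ is a standard $p$-stable random variable. Moreover, since different rows of $S$ are independent, the variables $Z_{1,k}, \ldots, Z_{r,k}$ can be taken i.i.d. (we will not need independence across $k$, only the marginal distribution). The key tail estimate for a standard $p$-stable $Z$ is $\Prob{|Z|^p > t} \leq O(1/t)$ for $t \geq 1$; this is the crucial heavy-tailed behavior which makes $E[|Z|^p] = \infty$ and forces us to truncate.

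Next I would truncate at level $T := C \cdot rd$ for a large constant $C$. Let $X_{j,k} := |Z_{j,k}|^p \cdot \mathbf{1}[|Z_{j,k}|^p \leq T]$. A union bound over the $rd$ indices gives
\[
\Prob{\exists (j,k):\ |Z_{j,k}|^p > T} \leq rd \cdot O(1/T) = O(1/C),
\]
so with constant probability the truncated random variables equal the untruncated ones and $\|SM\|_p^p = \sum_{j,k} \|M_k\|_p^p \cdot X_{j,k}$. For the expectation of a single truncated term,
\[
E[X_{j,k}] \leq \int_0^T \Prob{|Z|^p > s}\, ds \leq 1 + \int_1^T \frac{O(1)}{s}\, ds = O(\log T) = O(\log(rd)).
\]
Then by linearity of expectation (which holds without independence across $k$),
\[
E\Big[\sum_{j,k} \|M_k\|_p^p X_{j,k}\Big] = r \cdot E[X_{1,1}] \cdot \|M\|_p^p \leq O(r \log(rd)) \cdot \|M\|_p^p.
\]
Markov's inequality then yields $\sum_{j,k} \|M_k\|_p^p X_{j,k} \leq O(r \log(rd)) \|M\|_p^p$ with constant probability. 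Combining the two constant-probability events via a union bound, we get $\|SM\|_p^p \leq O(r \log(rd)) \|M\|_p^p$; assuming $r \leq \poly(d)$ (which will always hold in our applications where $r = O(k \log k (\log \log k)^2)$ and $k \leq d$), this simplifies to the claimed $O(r \log d)$.

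The main obstacle is the fact that $|Z|^p$ has infinite expectation for a $p$-stable $Z$, so a naive application of linearity fails; the $\log(rd)$ factor is a genuine consequence of choosing the truncation level $T$ large enough that all $rd$ entries fall below it with constant probability, then integrating the $1/t$ tail up to $T$. A minor subtlety is being careful about independence: we only need the $Z_{j,k}$ to have the stated marginal distribution to push through both the union bound and the expectation computation, so the non-independence across columns $k$ causes no issue.
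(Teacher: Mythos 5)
Your argument is correct and is essentially the standard truncation proof of this dilation bound; the paper itself gives no proof here (it only cites Lemma E.11 of \cite{l1_lower_bound_and_sqrtk_subset}, noting the extra factor of $r$ from using unrescaled $p$-stables), and the cited proof proceeds exactly as you do: reduce each entry of $SM_k$ to $\|M_k\|_p$ times a standard $p$-stable via stability, use the $\Pr[|Z|^p>t]=O(1/t)$ tail, truncate at $\Theta(rd)$, bound the truncated expectation by $O(\log(rd))$, and finish with Markov plus a union bound, never needing independence across columns. One point worth being explicit about: what you actually prove is $\|SM\|_p^p\leq O(r\log(rd))\|M\|_p^p$, and this is tight in general (for $d=1$ the sum of $r$ i.i.d.\ $|Z_j|^p$ with $1/t$ tails is $\Theta(r\log r)$ with constant probability), so the stated $O(r\log d)$ genuinely requires something like $r\leq\poly(d)$. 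In this paper the lemma is applied to $M=U^*$ with $d=k$ columns and $r=\poly(k/\eps)$, and the only consequence used is the crude bound $\|SU^*\|_p\leq\poly(k/\eps)\|U^*\|_p$, so the $\log(rd)$ versus $\log d$ distinction is immaterial there; still, your flagging of the condition is the right instinct.
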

Note that in the original statement of this lemma in \cite{l1_lower_bound_and_sqrtk_subset}, the $p$-stable matrices are rescaled by $\Theta(1/r^{1/p})$ --- since our sketch uses $p$-stable matrices that are not rescaled, we include this factor in the distortion.

In the course of our analysis, it will also be useful to note that $\med_p$ is bounded away from $0$ --- that is, there exists a constant $K > 0$ such that $\med_p \geq K$. To our knowledge, this fact was not explicitly shown elsewhere, and we prove it below.

\begin{lemma} [$\med_p$ is $\Omega(1)$]
\label{lemma:p_stable_median_omega_1}
There exists an absolute constant $K > 0$ such that $\med_p \geq K$ for all $p \in [1, 2]$.
\end{lemma}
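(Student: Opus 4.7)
The plan is to prove the claim by deriving a uniform pointwise upper bound on the density $f_p$ of a standard symmetric $p$-stable random variable $Z$, and then using it to show $\Prob{|Z| \leq K} \leq 1/2$ for an explicit $K$. Note that the definition of $p$-stability given in Definition \ref{def:p_stable_rvs} forces $Z$ to have a symmetric distribution (take $x = e_1 - e_2$ to see that $2^{1/p}Z$ has the same distribution as $Z_1 - Z_2$), so the characteristic function of $Z$ is $\phi(t) = e^{-|t|^p}$, and by Fourier inversion the density is
$$f_p(x) \;=\; \frac{1}{2\pi}\int_{-\infty}^\infty e^{-itx}\,e^{-|t|^p}\,dt \;=\; \frac{1}{\pi}\int_0^\infty \cos(tx)\,e^{-t^p}\,dt.$$
Since $|\cos(tx)|\le 1$ for all $t,x$, I immediately get the pointwise bound
$$f_p(x) \;\leq\; f_p(0) \;=\; \frac{1}{\pi}\int_0^\infty e^{-t^p}\,dt \;=\; \frac{\Gamma(1+1/p)}{\pi}$$
for every $x \in \R$, where the last equality is the standard substitution $u = t^p$.

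For the second step, I would observe that for $p \in [1,2]$ the argument $1+1/p$ lies in $[3/2,2]$, on which $\Gamma$ is bounded above by $\max(\Gamma(3/2),\Gamma(2)) = \max(\sqrt{\pi}/2,\,1) = 1$. Hence $f_p(x) \leq 1/\pi$ uniformly in $x$ and in $p \in [1,2]$. Since $f_p$ is symmetric about $0$,
$$\Prob{|Z| \leq K} \;=\; \int_{-K}^K f_p(x)\,dx \;\leq\; \frac{2K}{\pi}.$$
Choosing $K = \pi/4$ makes the right-hand side at most $1/2$, so $\Prob{|Z| \leq \pi/4} \leq 1/2$. By the definition of the median, this gives $\med_p \geq \pi/4$, and we may take the absolute constant to be $K = \pi/4$.

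There is no real obstacle here. The key ingredient is the cosine-integral representation of $f_p$, which trivially yields $f_p(x) \leq f_p(0)$ without needing to invoke the unimodality of symmetric stable laws, and the rest is a one-line integration. If one wanted a slightly sharper constant, one could also replace the crude bound $\Gamma(1+1/p)\leq 1$ by the actual minimum of $\Gamma$ on $[3/2,2]$, but this is cosmetic and not needed since any fixed positive constant suffices for the applications in which Lemma \ref{lemma:p_stable_median_omega_1} is used.
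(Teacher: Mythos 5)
Your proof is correct, but it takes a genuinely different and substantially more elementary route than the paper. The paper invokes Nolan's integral representation of the c.d.f. of a standard $p$-stable variable, $\Prob{X > x} = 1 - \frac{1}{\pi}\int_0^{\pi/2} e^{-x^{p/(p-1)}V(\theta;p)}\,d\theta$, and then lower-bounds the integral by restricting to $\theta \in [\pi/8, \pi/2 - c]$ and controlling $V(\theta;p)$ uniformly via a compactness argument on $[1,2]\times[\pi/8,\pi/2-c]$; note that the formula it quotes is stated only for $p \in (1,2]$, and the resulting constant $K = \sqrt{C}/(2D)$ is not explicit. You instead work directly from the characteristic function $e^{-|t|^p}$: Fourier inversion gives $f_p(x) = \frac{1}{\pi}\int_0^\infty \cos(tx)e^{-t^p}\,dt \le \frac{1}{\pi}\Gamma(1+1/p) \le \frac{1}{\pi}$ uniformly over $p \in [1,2]$ (using convexity of $\Gamma$ to take the max at the endpoints of $[3/2,2]$), whence $\Prob{|Z|\le \pi/4} \le 1/2$ and $\med_p \ge \pi/4$. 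Each step checks out: the inversion formula applies because $e^{-|t|^p}$ is integrable, the bound $f_p(x)\le f_p(0)$ needs only $|\cos|\le 1$ rather than unimodality, the substitution $u=t^p$ gives $\Gamma(1+1/p)$, and your observation that Definition \ref{def:p_stable_rvs} forces symmetry (take $x = e_1 - e_2$) justifies using the symmetric characteristic function, whose normalization is consistent with the paper's convention that $\med_1 = 1$ for the half-Cauchy. What your approach buys is an explicit constant $K = \pi/4$, a uniform treatment of the whole range $p \in [1,2]$ including both endpoints, and independence from the Nolan machinery; what the paper's approach buys is reuse of the same c.d.f. formula it needs anyway in Appendix \ref{appendix:additive_error_hardness} to evaluate stable c.d.f.'s numerically.
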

\begin{proof}
First, we recall the following formula from \cite{nolan_97_pstable_density} for the c.d.f. of a standard $p$-stable random variable for $p \in (1, 2]$. For $x > 0$, if $X$ is a standard $p$-stable random variable, then
$$\Prob{X > x} = 1 - \frac{1}{\pi} \int_0^{\frac{\pi}{2}} e^{-x^{\frac{p}{p - 1}} \cdot V(\theta; p)} d\theta$$
where
$$V(\theta; p) = \Big(\frac{\cos \theta}{\sin p\theta}\Big)^{\frac{p}{p - 1}} \cdot \frac{\cos (p - 1)\theta}{\cos \theta}$$
(This is a corollary of Theorem 1 of \cite{nolan_97_pstable_density}.) Hence, because $X$ is symmetric, $x > 0$ is less than $\med_p$ if and only if
$$\frac{3}{4} > \Prob{X > x} = 1 - \frac{1}{\pi} \int_0^{\frac{\pi}{2}} e^{-x^{\frac{p}{p - 1}}V(\theta; p)} d\theta$$
or equivalently
$$I := \int_0^{\frac{\pi}{2}} e^{-x^{\frac{p}{p - 1}} V(\theta; p)} d\theta > \frac{\pi}{4}$$
(recall that $\med_p$ is the median of $|X|$ rather than $X$). Now, our goal is to bound the integral $I$ on the left-hand side from below. As a first step, we show the following claim.
\begin{claim}
Let $c > 0$ be a sufficiently small absolute constant (to be chosen outside of this claim). There exists a constant $D > 0$ (which may depend on $c$) such that for all $p \in [1, 2]$ and $\theta \in [\frac{\pi}{8}, \frac{\pi}{2} - c]$,
$$\frac{\cos(\theta)}{\sin(p\theta)} \cdot \Big(\frac{\cos((p - 1)\theta)}{\cos(\theta)}\Big)^{\frac{p - 1}{p}} \leq D$$
\end{claim}
\begin{proof}
Observe that for $p \in [1, 2]$ and $\theta \in [\frac{\pi}{8}, \frac{\pi}{2} - c]$, $\frac{\pi}{8} \leq p\theta \leq \pi - 2c$, meaning $\sin(p\theta) > 0$ and is in fact bounded away from $0$ because $[1, 2] \times [\frac{\pi}{8}, \frac{\pi}{2} - c]$ is compact. Let $C_0 > 0$ be such that $\sin(p\theta) \geq C_0$ for all $(p, \theta) \in [1, 2] \times [\frac{\pi}{8}, \frac{\pi}{2} - c]$. Then,
\begin{equation}
\begin{split}
\frac{\cos(\theta)}{\sin(p\theta)} \cdot \Big(\frac{\cos((p - 1)\theta)}{\cos(\theta)}\Big)^{\frac{p - 1}{p}} \leq  \cos(\theta)^{\frac{1}{p}} \cdot (\cos((p - 1)\theta))^{\frac{p - 1}{p}} \cdot \frac{1}{C_0} \leq \frac{1}{C_0}
\end{split}
\end{equation}
where the second inequality holds because $\cos(\theta)$ and $\cos((p - 1)\theta)$ are at most 1. (Note that $\cos(\theta)^{\frac{1}{p}}$ and $\cos((p - 1)\theta)^{\frac{p - 1}{p}}$ are well-defined because $\cos(\theta)$ and $\cos((p - 1)\theta)$ are nonnegative for this choice of $\theta$.)
\end{proof}

Hence, for $p \in [1, 2]$ and $\theta \in [\frac{\pi}{8}, \frac{\pi}{2} - c]$, $$V(\theta; p) \leq \Big(\frac{\cos \theta}{\sin p\theta}\Big)^{\frac{p}{p - 1}} \cdot \frac{\cos (p - 1)\theta}{\cos \theta} = \Big(\frac{\cos(\theta)}{\sin(p\theta)} \cdot \Big(\frac{\cos((p - 1)\theta)}{\cos(\theta)}\Big)^{\frac{p - 1}{p}}\Big)^{\frac{p}{p - 1}} \leq D^{\frac{p}{p - 1}}$$

Hence, we can bound $I$ from below:
\begin{equation}
\begin{split}
\int_0^{\frac{\pi}{2}} e^{-x^{\frac{p}{p - 1}}V(\theta; p)} d\theta
& \geq \int_{\frac{\pi}{8}}^{\frac{\pi}{2} - c} e^{-x^{\frac{p}{p - 1}} D^{\frac{p}{p - 1}}} d\theta \\
& = \Big(\frac{3\pi}{4} - c\Big) e^{-(xD)^{\frac{p}{p - 1}}} 
\end{split}
\end{equation}
We can choose $c = \frac{\pi}{8}$, meaning $I > \frac{\pi}{4}$ as long as $\frac{5\pi}{8} \cdot e^{-(xD)^{\frac{p}{p - 1}}} > \frac{\pi}{4}$ or $e^{-(xD)^{\frac{p}{p - 1}}} > \frac{2}{5}$. This holds as long as $x < \frac{1}{D}(-\log(2/5))^{\frac{p - 1}{p}}$. Letting $C = -\ln(2/5)$, we observe that $0 < C < 1$, meaning $\sqrt{C} \leq C^{\frac{p - 1}{p}} \leq 1$.

In summary, $x \leq \med_p$ as long as $x < \frac{\sqrt{C}}{D}$, and hence we can take $\frac{\sqrt{C}}{2D}$ to be the desired $K$.
\end{proof}

\subsection{$(1 + \eps)$-Approximation Algorithm and Analysis}

We now present and analyze our $(1 + \eps)$-approximation algorithm with bicriteria rank $3k$ for $\ell_p$-low rank approximation, where $p \in [1, 2)$.

\begin{algorithm}
\caption{Guessing a sketched left factor $SU$, and finding an appropriate right factor $V$ with norm at most $\poly(k)$, to obtain a $(1 + \eps)$-approximation with additive $\eps/(f\poly(k)) \|A\|_p^p$ error.}
\label{algorithm:guessing_eps_approximation_general_lp}
\begin{algorithmic}
\Require $A \in \R^{n \times d}$, $k \in \N$, $\eps \in (0, c)$, $f > 1$, $\widehat{OPT} \geq 0$, $p \in [1, 2)$
\Ensure $U \in \R^{n \times k}, V \in \R^{k \times d}$
\Procedure{GuessingAdditiveEpsApproximation}{$A, k, \eps, f, \widehat{OPT}, p$}
\State {If $A$ has rank $k$, return $A$.}
\State {$r \gets O(\max(k/\eps^6 \log (k/\eps), 1/\eps^9))$}
\State {$q \gets \poly(k)$}
\State {$S \gets $ An $r \times n$ matrix of i.i.d. standard $p$-stable random variables}
\State {$\mathcal{I} \gets \{0\} \cup \Big\{\sigma \cdot (1 + \frac{1}{f\poly(k/\eps)})^t \mid t \in \Z, \frac{1}{f \poly(k/\eps)}\|A\|_p \leq (1 + \frac{1}{f\poly(k/\eps)})^t \leq \poly(k/\eps)\|A\|_p ,\, \sigma = \pm 1\Big\}$}
\State {$\calC \gets \Big\{ M \in \R^{r \times k} \mid M_{i, j} \in \mathcal{I} \,\,\, \forall i \in [r], j \in [k]\Big\}$ --- This is the set of (sketched) left factors we will guess.}

\item []
\State {// Guess possible rounded (sketched) left factors and find a good right factor $V$, with $\|V\|_p \leq \poly(k)$.}
\State {$U_{best} \gets 0 ,\, V_{best} \gets 0$}
\For {$M \in \calC$}
    \State {$\textsc{CostBounds} \gets \Big\{\frac{\eps^2}{f}\|A\|_p/d^{\frac{1}{p}} \leq c \leq O(\|A\|_p)$ and $c$ is an integer power of $(1 + \eps)\Big\}$}
    \For {$i \in [d]$, $c \in \textsc{CostBounds}$}
        \State {$V_{i, c} \gets \argmin_{V_i} \|V_i\|_p$ subject to the constraint that $\med(MV_i - SA_i)/\med_p \leq c$}
        \State {$C_{i, c} \gets \med(MV_{i, c} - SA_i)/\med_p$}
    \EndFor
    
    \item[]
    \State {// Create LP to find a good distribution over $c \in \textsc{CostBounds}$ for each $i \in [d]$.}
    \State{$\textsc{Variables} \gets \{x_{i, c} \,\, \forall i \in [d], c \in \textsc{CostBounds}\}$}
    \State{$\textsc{Constraints} \gets \Big\{0 \leq x_{i, c} \,\, \forall i \in [d], c \in \textsc{CostBounds} \text{ and } \sum_{c \in \textsc{CostBounds}} x_{i, c} = 1 \,\, \forall i \in [d]\Big\}$}
    \State{$\textsc{Constraints} \gets \textsc{Constraints} \cup \Big\{ \sum_{i \in [d], c \in \textsc{CostBounds}} x_{i, c}\|V_{i, c}\|_p^p \leq kq^p\Big\}$}
    \State{$\Delta \gets (1 + O(\eps))^p \Big(\widehat{OPT} + \frac{1}{f\poly(k/\eps)}\|A\|_p \Big)^p + \Big(\frac{\eps^2}{f}\Big)^p \|A\|_p^p$}
    \State{$\textsc{Constraints} \gets \textsc{Constraints} \cup \Big\{ \sum_{i \in [d], c \in \textsc{CostBounds}} x_{i, c}C_{i, c}^p \leq \Delta\Big\}$}
    \State{$x_{i, c} \gets $ Solution to the LP given by \textsc{Variables} and \textsc{Constraints}, for all $i \in [d]$, $c \in \textsc{CostBounds}$}
    \State{If LP is infeasible, then \textbf{continue} to next $M \in \calC$.}
    
    \item[]
    \State{// For each column, sample an appropriate cost bound. Do this $O(1/\eps)$ times, then $V'$ meets both} 
    \State{// the cost and norm constraints with constant probability.}
    \For {$t = 1 \to 10/\eps$}
        \State{$c_i \gets $ An element $c \in \textsc{CostBounds}$ selected according to the distribution on $\textsc{CostBounds}$} 
        \State{given by $\{x_{i, c} \mid c \in \textsc{CostBounds}\}$ (note that for $i \in [d]$, the $x_{i, c}$ are nonnegative and sum to $1$)}
        \State{$V_i' \gets V_{i, c_i}$ for all $i \in [d]$}
        \State{\textbf{Break} if $\|V'\|_p^p \leq \frac{2kq^p}{\eps}$ and $\sum_{i = 1}^d \med(MV_i' - SA_i)^p / \med_p^p \leq (1 + 2\eps)\Delta$}
    \EndFor
    \State{$U' \gets \argmin_U \|UV' - A\|_p$}
    \State{If $\|U'V' - A\|_p \leq \|U_{best} V_{best} - A\|_p$ then $U_{best} \gets U'$ and $V_{best} \gets V'$.}
\EndFor \\
\Return {$U', V'$}
\EndProcedure
\end{algorithmic}
\end{algorithm}

\begin{algorithm}
\caption{First apply \textsc{PolyKErrorNotBiCriteriaApproximation} from Algorithm \ref{algorithm:poly_k_rank_exactly_k} to $A$ to obtain a $\poly(k)$-approximation $B$ --- then, apply Algorithm \ref{algorithm:guessing_eps_approximation_general_lp} to the residual to obtain an approximation $UV$ with additive error $1/\poly(k/\eps) \|A - B\|_p \leq \eps \cdot OPT$. Finally, $B + UV$ gives a $(1 + \eps)$-approximation with rank $3k$.}
\label{algorithm:get_eps_approximation_after_initialization_general_lp}
\begin{algorithmic}
\Require $A \in \R^{n \times d}$, $k \in \N$, $\eps \in (0, c)$, $p \in [1, 2)$
\Ensure $\widehat{A} \in \R^{n \times d}$ having rank $3k$
\Procedure{RoundingGuessingEpsApproximation}{$A, k, \eps, p$}
\State {$W, Z \gets \textsc{PolyKErrorNotBiCriteriaApproximation}(A, k, p)$}
\State {$B \gets WZ$}
\State {$C \gets A - B$}
\State {$f \gets \poly(k)$, the approximation factor of Algorithm \ref{algorithm:poly_k_rank_exactly_k}}

\item[]
\State {// Guess all $O((\log nd)/\eps)$ possible values for $\widehat{OPT}$ and try them for Algorithm \ref{algorithm:guessing_eps_approximation_general_lp}} 
\State {// as described in Remark \ref{remark:guess_opt_remark}.}
\State {$C_{SVD, 2k} \gets $ The optimal rank-$2k$ approximation for $C$ under the $\ell_2$ norm.}
\State {$\textsc{SvdError} \gets \|C - C_{SVD, 2k}\|_p$}
\State {$\widehat{A} \gets 0 \in \R^{n \times d}$}
\For {$t = 0 \to O(\frac{\log nd}{\eps})$}
    \State {$\widehat{OPT} \gets \textsc{SvdError}/(1 + \eps)^t$}
    \State {$U, V \gets \textsc{GuessingAdditiveEpsApproximation}(C, 2k, \eps, f, \widehat{OPT}, p)$}
    \State {If $\|(B + UV) - A\|_p \leq \|\widehat{A} - A\|_p$ then $\widehat{A} \gets B + UV$}
\EndFor \\
\Return {$\widehat{A}$}
\EndProcedure
\end{algorithmic}
\end{algorithm}

\begin{theorem} [Correctness and Running Time of Algorithm \ref{algorithm:guessing_eps_approximation_general_lp}]
\label{theorem:correctness_of_guessing_eps_approx}
Let $A \in \R^{n \times d}$, $k \in \N$, $\eps \in (0, C)$ (where $C$ is a sufficiently small absolute constant), $f > 1$ and $p \in [1, 2)$. Furthermore, if $OPT = \min_{A_k \text{ rank }k} \|A_k - A\|_p$, then suppose $(1 - O(\eps)) OPT \leq \widehat{OPT} \leq (1 + O(\eps)) OPT$. Finally, let $U \in \R^{n \times k}, V \in \R^{k \times d}$ be the output of $\textsc{GuessingEpsApproximation}(A, k, \eps, f, \widehat{OPT}, p)$ (shown in Algorithm \ref{algorithm:guessing_eps_approximation_general_lp}). Then,
$$\|UV - A\|_p \leq (1 + O(\eps))OPT + O\Big(\frac{\eps}{f}\Big) \|A\|_p$$
The running time of Algorithm \ref{algorithm:guessing_eps_approximation_general_lp} is at most $f^{O(rk)} + 2^{O(rk\log(k/\eps))} + \poly(fnd/\eps)$, where $r$, the number of rows in the $p$-stable sketching matrix, is $O(\max(k^2/\eps^6\log(k/\eps), 1/\eps^9))$.
\end{theorem}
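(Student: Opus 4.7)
The plan is to follow the outline of Subsubsection \ref{subsubsection:general_l1_lra_techniques}, extended to general $p \in [1, 2)$ by replacing Cauchy sketches with $p$-stable ones and invoking the lemmas from Subsection 3.1. Let $U^*V^*$ be an optimal rank-$k$ factorization of $A$, so $\|U^*V^* - A\|_p = OPT$. Following \cite{ptas_for_lplra}, assume $V^*$ is a $\poly(k)$-well-conditioned basis for its row space; standard estimates then give $\|V^*\|_p^p \leq \poly(k) \leq kq^p$, $\|V_i^*\|_1 \leq \poly(k)$ for every column, and $\|U^*_j\|_p \leq \poly(k)(\|A\|_p + OPT) \leq \poly(k)\|A\|_p$ for every column (the last bound via $(a+b)^p \leq 2^{p-1}(a^p+b^p)$ applied row-by-row to $U^*V^* = A - (A - U^*V^*)$). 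Condition on the constant-probability events for $S$ from Lemmas \ref{lemma:p_stable_matrix_median_preserves_norm}, \ref{lemma:p_stable_top_quantile_no_dilation}, and \ref{lemma:affine_embedding_lower_side_p_stable_median} with tolerance $\eps$; for the chosen $r$ these hold simultaneously. Define $M^*$ by zeroing each entry of $SU^*$ of magnitude below $\|A\|_p/(f\poly(k/\eps))$ and rounding each remaining entry to the closest signed power of $(1 + 1/(f\poly(k/\eps)))$. Since each $(SU^*)_{ij}$ is $p$-stable with scale $\|U^*_j\|_p \leq \poly(k)\|A\|_p$, the $|Z|^{-p}$ tail and a union bound over the $rk$ entries place $M^* \in \calC$ with constant probability, and also give
\[
\bigl|\med(M^*V_i - SA_i) - \med(SU^*V_i - SA_i)\bigr| \;\leq\; \|(M^* - SU^*)V_i\|_\infty \;\leq\; \|M^* - SU^*\|_{\max}\,\|V_i\|_1 \;\leq\; \tfrac{\eps^2}{f\poly(k)}\|A\|_p
\]
whenever $\|V_i\|_1 \leq \poly(k/\eps)$, provided the polynomial hidden in the rounding granularity is a sufficiently high power of $k/\eps$.

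I will analyze the outer-loop iteration at $M = M^*$. To certify LP feasibility, consider the degenerate distribution $x_{i,c_i^\star} = 1$, where $c_i^\star \in \textsc{CostBounds}$ is the smallest power of $(1 + \eps)$ at least $\med(M^*V_i^* - SA_i)/\med_p$ (clipped to the floor $(\eps^2/f)\|A\|_p/d^{1/p}$ of \textsc{CostBounds} if smaller). Then $V_i^*$ is feasible for the per-column $\ell_p$-minimization, so $\|V_{i,c_i^\star}\|_p \leq \|V_i^*\|_p$ and $C_{i,c_i^\star} \leq c_i^\star$. Summing $p$-th powers, the norm constraint $\sum_i \|V_{i,c_i^\star}\|_p^p \leq \|V^*\|_p^p \leq kq^p$ holds directly; the cost constraint follows by
\[
\sum_i C_{i,c_i^\star}^p \leq (1+\eps)^p \sum_i \med(M^*V_i^* - SA_i)^p/\med_p^p + (\eps^2/f)^p\|A\|_p^p \leq \Delta,
\]
where I use the displayed rounding bound with $V = V^*$, Lemma \ref{lemma:p_stable_matrix_median_preserves_norm} applied to $\|U^*V^* - A\|_p = OPT$, and the closeness of $\widehat{OPT}$ to $OPT$.

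The LP solution therefore satisfies the same two constraints, and $\{x_{i,c}\}_c$ is a probability distribution over $c$ for each $i$. Independent sampling of $c_i \sim \{x_{i,c}\}_c$ gives $E[\|V'\|_p^p] \leq kq^p$ and $E[\sum_i \med(MV'_i - SA_i)^p/\med_p^p] \leq \Delta$. A Markov bound at level $2/\eps$ yields $\|V'\|_p^p \leq 2kq^p/\eps$ with failure probability at most $\eps/2$; a Markov bound at level $1+2\eps$ yields the median cost at most $(1+2\eps)\Delta$ with probability at least $\eps$; a union bound gives both with probability at least $\eps/2$ per trial, so the $10/\eps$ trials succeed with constant probability. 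On a successful trial, $\|V'_i\|_1 \leq k^{1-1/p}\|V'\|_p \leq \poly(k/\eps)$, so the displayed rounding bound applies with $V = V'$ and yields $\sum_i \med(SU^*V'_i - SA_i)^p/\med_p^p \leq (1+O(\eps))^p OPT^p + O(\eps/f)^p\|A\|_p^p$. Lemma \ref{lemma:affine_embedding_lower_side_p_stable_median} (the lower bound $(\sum_i \med(SU^*V'_i - SA_i)^p)^{1/p}/\med_p \geq (1-O(\eps))\|U^*V' - A\|_p$) then converts this to $\|U^*V' - A\|_p \leq (1+O(\eps))OPT + O(\eps/f)\|A\|_p$, which the final $\ell_p$ regression $U' = \argmin_U \|UV' - A\|_p$ can only improve.

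The main obstacle I anticipate is calibrating a single rounding granularity $1/(f\poly(k/\eps))$ that makes the displayed bound hold simultaneously for the fixed $V^*$ and for the random $V'$, whose norm is controlled only after the $O(1/\eps)$ Markov blow-up, while propagating constants through the additive slack in $\Delta$ and the cost floor $(\eps^2/f)\|A\|_p/d^{1/p}$ without introducing circularity; this is no more intricate than the analogous bookkeeping in Theorem 10 of \cite{ptas_for_lplra}. For the running time, $|\mathcal{I}| \leq O(f\poly(k/\eps)\log(fk/\eps))$ per entry of $M$, so $|\calC| = |\mathcal{I}|^{rk} \leq f^{O(rk)} + 2^{O(rk\log(k/\eps))}$ with $r = O(\max(k\eps^{-6}\log(k/\eps), \eps^{-9}))$. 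Per outer iteration, each per-column program for $V_{i,c}$ is solved by enumerating the $r!$ orderings of coordinates of $|MV_i - SA_i|$, converting the median constraint into linear constraints, and running an LP, costing $2^{O(r\log r)}\poly(nd)$ which is absorbed into $2^{O(rk\log(k/\eps))}$; the master LP in $\{x_{i,c}\}$ has $O(d\,|\textsc{CostBounds}|) = \poly(nd/\eps)$ variables; sampling and the final regression each cost $\poly(nd/\eps)$. Summing over outer iterations gives the claimed $f^{O(rk)} + 2^{O(rk\log(k/\eps))} + \poly(fnd/\eps)$ bound.
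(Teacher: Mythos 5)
Your proposal is correct and follows essentially the same route as the paper's proof: guess the multiplicatively-rounded, thresholded $SU^*$, certify LP feasibility via the degenerate assignment built from $V^*$, use Markov plus a union bound over $O(1/\eps)$ sampling trials to extract a $V'$ with controlled norm and median cost, and convert back through the one-sided median embedding before the final $\ell_p$ regression. The only differences are cosmetic (per-entry $p$-stable tail bounds in place of the $\ell_p$-distortion lemma for bounding the entries of $SU^*$, and $\|M - SU^*\|_{\max}\|V_i\|_1$ in place of $\poly(k/\eps)\|M - SU^*\|_p$ for the rounding error), and your flagged bookkeeping concern is resolved in the paper exactly as you anticipate, by taking the rounding granularity to be a sufficiently high power of $k/\eps$.
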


\begin{remark} \label{remark:guess_opt_remark}
We can obtain $\widehat{OPT}$ efficiently as follows. If $A_{SVD, k}$ is the rank-$k$ SVD of $A$ (which can be computed in polynomial time) then $OPT \leq \|A_{SVD, k} - A\|_p \leq (nd)^{1/p - 1/2}OPT$, meaning we can guess all integer powers of $1 + \eps$ between $\|A_{SVD, k} - A\|_p$ and $\frac{1}{(nd)^{1/p - 1/2}}\|A_{SVD, k} - A\|_p$ --- the number of guesses is $O(\frac{\log (nd)}{\eps})$. We can input all of those guesses to Algorithm \ref{algorithm:guessing_eps_approximation_general_lp}, and one of them will produce the right answer. This is done when applying Algorithm \ref{algorithm:guessing_eps_approximation_general_lp} within Algorithm \ref{algorithm:get_eps_approximation_after_initialization_general_lp}.
\end{remark}

\begin{remark}
Note that $f$ represents the approximation factor of the initialization algorithm used to obtain $A$. In our case $f$ will equal $\poly(k)$, but we will analyze this algorithm for a general $f$.
\end{remark}

\begin{proof}
Let $U^* \in \R^{n \times k}$, $V^* \in \R^{k \times d}$ such that $\|U^*V^* - A\|_p = OPT$. Without loss of generality, assume $V^*$ is a $q$-well-conditioned basis (where $q = \poly(k)$), meaning that for all $x \in \R^k$,
$$\frac{\|x\|_p}{q} \leq \|x^TV^*\|_p \leq q\|x\|_p$$
(Note that well-conditioned bases exist for all $p$, for instance see Lemma 10 of \cite{algorithm3_original}.) In particular, this implies that $\|V^*\|_p^p \leq kq^p$ by letting $x$ be each of the standard basis vectors. Now, let $S \in \R^{r \times n}$ be a random matrix where each entry is an i.i.d. $p$-stable random variable, where $r = O(\max(k/\eps^6 \log (k/\eps), 1/\eps^9))$ as in Algorithm \ref{algorithm:guessing_eps_approximation_general_lp}.

We first analyze the effect of rounding $SU^*$, multiplicatively, so that the absolute value of each entry is rounded to the nearest power of $1 + \frac{1}{f\poly(k/\eps)}$, or set to $0$ if it is too small (here, $\poly(k/\eps)$ in the denominator is $(k/\eps)^c$ for a sufficiently large constant $c$). Note that $\|U^*V^*\|_p \leq O(1)\|A\|_p$ by the triangle inequality. Hence, because $V^*$ is a well-conditioned basis, $O(1)\|A\|_p \geq \frac{1}{q}\|U^*\|_p$, and $\|U^*\|_p \leq O(q)\|A\|_p$. Therefore, $\|SU^*\|_\infty \leq \|SU^*\|_p \leq \poly(k/\eps)\|U^*\|_p \leq \poly(k/\eps)\|A\|_p$, where the second inequality is due to Lemma \ref{lemma:p_stable_lp_norm_distortion}, since $U^*$ has $k$ columns and $S$ has $\poly(k/\eps)$ rows.

Now, let $M_1$ be $SU^*$, but with the absolute value of each entry rounded to the nearest power of $1 + \frac{1}{f\poly(k/\eps)}$. In addition, let $M_2$ be $M_1$, but with each entry having absolute value less than $\frac{1}{f\poly(k/\eps)} \|A\|_p$ being replaced by $0$. Then,
$$\|M_2 - M_1\|_p^p \leq \Big(\frac{1}{f\poly(k/\eps)}\Big)^p\|A\|_p^p \cdot O\Big(\frac{k^2}{\eps^9} \log (k/\eps)\Big) \leq \Big(\frac{1}{f\poly(k/\eps)}\Big)^p\|A\|_p^p$$
where the first inequality holds because $M_1$ and $M_2$ have $rk = O(\max(k^2/\eps^6 \log(k/\eps), 1/\eps^9))$ entries. Moreover,
$$\|SU^* - M_1\|_p^p \leq \Big(\frac{1}{f \poly(k/\eps)}\Big)^p \|SU^*\|_p^p \leq \Big(\frac{1}{f\poly(k/\eps)}\Big)^p \cdot \poly(k/\eps) \|A\|_p^p \leq \Big(\frac{1}{f\poly(k/\eps)}\Big)^p \|A\|_p^p$$
where the first inequality is because for any $a \in \R$, and $t \in (0, 1)$, if $\widehat{a}$ is $a$ with its absolute value rounded to the nearest power of $(1 + t)$, then $|a - \widehat{a}|^p \leq t^p|a|^p$. The second inequality is simply because $\|SU^*\|_p^p \leq \poly(k/\eps) \|A\|_p^p$ as mentioned above. Hence,
\begin{equation} \label{eq:rounding_error}
\begin{split}
\|SU^* - M_2\|_p
& \leq \|SU^* - M_1\|_p + \|M_1 - M_2\|_p \\
& \leq \Big(\Big(\frac{1}{f\poly(k/\eps)}\Big)^p \|A\|_p^p\Big)^{\frac{1}{p}} + \Big(\Big(\frac{1}{f\poly(k/\eps)}\Big)^{p} \|A\|_p^p\Big)^{\frac{1}{p}} \\
& \leq \frac{1}{f\poly(k/\eps)} \|A\|_p + \frac{1}{f\poly(k/\eps)} \|A\|_p \\
& \leq \frac{1}{f\poly(k/\eps)}\|A\|_p
\end{split}
\end{equation}
where the first inequality is due to the triangle inequality.

Now, observe that Algorithm \ref{algorithm:guessing_eps_approximation_general_lp} will guess $M_2$ at some point --- that is, since $M_2 \in \calC$, $M$ will be equal to $M_2$ at some point. Suppose $M = M_2$. Let us condition on the following events involving $S$. Let $\calE_1$ denote the event that for all $V \in \R^{k \times d}$,
$$\Big(\sum_i \med(SU^*V_i - SA_i)^p\Big)^{1/p} \Big/ \med_p \geq (1 - O(\eps)) \|U^*V - A\|_p$$
$\calE_2$ the event that
$$\Big(\sum_i \med(SU^*V^*_i - SA_i)^p\Big)^{1/p} \Big/ \med_p \leq (1 + \eps) \|U^*V^* - A\|_p = (1 + \eps) OPT$$
and $\calE_3$ the event that
$$\|SU^*\|_p \leq \poly(k/\eps)\|U^*\|_p$$
By Lemma \ref{lemma:affine_embedding_lower_side_p_stable_median}, \ref{lemma:p_stable_matrix_median_preserves_norm} and \ref{lemma:p_stable_lp_norm_distortion} respectively, these each occur with probability $1 - O(1)$ (where the constant probability can be made as small as desired by increasing $r$ by a constant factor), and by a union bound they occur simultaneously with probability $1 - O(1)$. First, we use these events to examine the effect of rounding $SU^*$ to $M$, when the right factor is $V^*$, and more generally, when the $\ell_p$ norm of the right factor $V$ is at most $\poly(k/\eps)$.

\begin{claim}
\label{claim:connecting_rounded_median}
Suppose $\calE_1$, $\calE_2$ and $\calE_3$ hold, and suppose $V \in \R^{k \times d}$ such that $\|V\|_p^p \leq \poly(k/\eps)$. Then, 
$$\Big|\Big(\sum_{i = 1}^d \med(MV_i - SA_i)^p\Big)^{\frac{1}{p}} \big/ \med_p - \Big(\sum_{i = 1}^d \med(SU^*V_i - SA_i)^p\Big)^{\frac{1}{p}} \big/\med_p\Big| \leq \frac{1}{f\poly(k/\eps) \med_p}\|A\|_p$$
\end{claim}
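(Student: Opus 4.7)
The plan is to express the desired bound as the difference between two $\ell_p$-norms, of the $d$-dimensional vectors $(\med(MV_i - SA_i))_{i=1}^d$ and $(\med(SU^*V_i - SA_i))_{i=1}^d$. By the reverse triangle inequality for the $\ell_p$-norm on $\R^d$, the left-hand side is at most
\[
\frac{1}{\med_p}\Big(\sum_{i=1}^d |\med(MV_i - SA_i) - \med(SU^*V_i - SA_i)|^p\Big)^{1/p},
\]
so it suffices to bound the inside sum.

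Next I would reduce each term to an $\ell_\infty$-norm using the same pointwise $1$-Lipschitz property of $\med$ already invoked in Equation \eqref{eq:prev_algorithm_additive_error}, namely $|\med(u_1 + u_2) - \med(u_1)| \leq \|u_2\|_\infty$ for any vectors $u_1, u_2$. Applying this to $u_1 = SU^*V_i - SA_i$ and $u_2 = (M - SU^*)V_i$, and using $\|\cdot\|_\infty \leq \|\cdot\|_p$, I obtain $|\med(MV_i - SA_i) - \med(SU^*V_i - SA_i)| \leq \|(M - SU^*)V_i\|_p$ for each $i$. Summing the $p$-th powers reduces the task to bounding $\|(M - SU^*)V\|_p$.

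The key step is then the matrix inequality $\|EV\|_p \leq \|E\|_p \|V\|_p$ for $E = M - SU^* \in \R^{r \times k}$ and $V \in \R^{k \times d}$. I would prove this entrywise by Hölder: $|(EV)_{i,t}|^p = |\langle E^i, V_t\rangle|^p \leq \|E^i\|_p^p \|V_t\|_q^p$ where $q = p/(p-1)$ is the conjugate exponent. Summing over $i$ and $t$ gives $\|EV\|_p^p \leq \|E\|_p^p \sum_t \|V_t\|_q^p$. Since $p \in [1, 2)$ implies $q \geq p$, the standard monotonicity $\|V_t\|_q \leq \|V_t\|_p$ for finite-dimensional vectors yields $\sum_t \|V_t\|_q^p \leq \|V\|_p^p$, completing this step.

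Finally I would combine everything with the bounds at hand: by Equation \eqref{eq:rounding_error}, $\|M - SU^*\|_p \leq \frac{1}{f\poly(k/\eps)}\|A\|_p$, and by hypothesis $\|V\|_p \leq \poly(k/\eps)^{1/p}$. By choosing the constant in the $\poly(k/\eps)$ factor from the rounding grid large enough to absorb the polynomial factor from $\|V\|_p$, the product is at most $\frac{1}{f \poly(k/\eps)} \|A\|_p$, and dividing by $\med_p$ gives the claim. The main delicate point is the monotonicity $\|V_t\|_q \leq \|V_t\|_p$ on $\R^k$, which requires $q \geq p$; this is precisely where the restriction $p \in [1, 2)$ enters the argument, and the same approach would not work for $p > 2$.
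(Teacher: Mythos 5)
Your proposal is correct and follows essentially the same route as the paper: the reverse triangle inequality on the two $d$-vectors of medians, the $\ell_\infty$-Lipschitz property $|\med(u_1+u_2)-\med(u_1)|\leq\|u_2\|_\infty$ together with $\|\cdot\|_\infty\leq\|\cdot\|_p$, and a reduction to bounding $\|(M-SU^*)V\|_p$ in terms of $\|M-SU^*\|_p$ and $\|V\|_p$, finished off by Equation \eqref{eq:rounding_error}. The only difference is in that last sub-step, where you use H\"older plus the monotonicity $\|\cdot\|_q\leq\|\cdot\|_p$ for $q=p/(p-1)\geq p$ to get the sharper $\|EV\|_p\leq\|E\|_p\|V\|_p$, whereas the paper applies the triangle inequality row-by-row to get $\|EV\|_p\leq k\|V\|_p\|E\|_p$ (an argument that does not need $p<2$); both versions suffice since the extra polynomial factor is absorbed into the $\poly(k/\eps)$ in the rounding granularity.
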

\begin{proof}
Let $x \in \R^d$ be the vector whose $i^{th}$ coordinate is $\med(MV_i - SA_i)$, and let $y \in \R^d$ be the vector whose $i^{th}$ coordinate is $\med(SU^
*V_i - SA_i)$. Then, observe that
$$\Big|\Big(\sum_{i = 1}^d \med(MV_i - SA_i)^p\Big)^{\frac{1}{p}} - \Big(\sum_{i = 1}^d \med(SU^*V_i - SA_i)^p\Big)^{\frac{1}{p}}\Big| = |\|x\|_p - \|y\|_p|$$
Now, by the triangle inequality, this is at most $\|x - y\|_p$, which we can bound from above:
\begin{equation}
\begin{split}
\|x - y\|_p^p
& = \sum_{i = 1}^d |\med(MV_i - SA_i) - \med(SU^*V_i - SA_i)|^p \\
& \leq \sum_{i = 1}^d \|(M - SU^*)V_i\|_\infty^p \\
& \leq \sum_{i = 1}^d \|(M - SU^*)V_i\|_p^p \\
& = \|(M - SU^*)V\|_p^p
\end{split}
\end{equation}
Here, the first inequality is because $|\med(v_1 + v_2) - \med(v_1)| \leq \|v_2\|_\infty$ for any two vectors $v_1, v_2 \in \R^n$, and the second is because $\|v\|_\infty \leq \|v\|_p$ for any vector $v$.

Hence, 
$$\|x - y\|_p \leq \|(M - SU^*)V\|_p \leq \poly(k/\eps) \|M - SU^*\|_p \leq \frac{1}{f\poly(k/\eps)}\|A\|_p$$
Here, the second inequality holds because, even though $V$ is not necessarily a well-conditioned basis, for any vector $x \in \R^k$,
$$\|x^TV\|_p = \Big\|\sum_{i = 1}^k x_iV^i\Big\|_p \leq \sum_{i = 1}^k |x_i| \|V^i\|_p \leq k\|x\|_p\|V\|_p = \poly(k/\eps)\|x\|_p$$
where the first inequality is due to the triangle inequality, and the last equality is because $\|V\|_p \leq \poly(k/\eps)$. Hence, for any matrix $D \in \R^{n \times k}$,
$$\|DV\|_p^p = \sum_{i = 1}^n \|D^iV\|_p^p \leq \sum_{i = 1}^n \poly(k/\eps)\|D^i\|_p^p = \poly(k/\eps) \|D\|_p^p$$
and taking $p^{th}$ roots gives $\|DV\|_p \leq \poly(k/\eps) \|D\|_p$.
\end{proof}

\begin{claim}
Suppose $\calE_1$, $\calE_2$ and $\calE_3$ occur. Then, the linear program constructed in Algorithm \ref{algorithm:guessing_eps_approximation_general_lp} with variables in the set \textsc{Variables}, and constraints in the set \textsc{Constraints}, is feasible when $M = M_2$ is guessed.
\end{claim}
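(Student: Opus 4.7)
The plan is to exhibit an explicit feasible point for the linear program built directly from $V^*$. For each column index $i \in [d]$, set $\tau_i := \med(MV_i^* - SA_i)/\med_p$ and let $c_i$ be the smallest integer power of $(1+\eps)$ that is at least $\max\bigl(\tfrac{\eps^2}{f}\|A\|_p/d^{1/p}, \tau_i\bigr)$; then put $x_{i, c_i} := 1$ and $x_{i, c} := 0$ for $c \ne c_i$. The simplex constraints in \textsc{Constraints} hold trivially, so it remains to check that (a) each $c_i$ lies in $\textsc{CostBounds}$, (b) $\sum_{i,c} x_{i,c}\|V_{i,c}\|_p^p \le k q^p$, and (c) $\sum_{i,c} x_{i,c} C_{i,c}^p \le \Delta$.

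For (a), the lower endpoint is forced by construction, so the only issue is to show $\tau_i = O(\|A\|_p)$. Event $\calE_2$ gives $\med(SU^*V_i^* - SA_i)/\med_p \le \bigl(\sum_j \med(SU^*V_j^* - SA_j)^p\bigr)^{1/p}/\med_p \le (1+\eps)\,OPT = O(\|A\|_p)$, and the entrywise bound $|\med(MV_i^* - SA_i) - \med(SU^*V_i^* - SA_i)| \le \|(M - SU^*)V_i^*\|_\infty$ combined with \eqref{eq:rounding_error} and the well-conditionedness of $V^*$ keeps $\tau_i$ within a constant factor of this, yielding (a). For (b), since $\tau_i \le c_i$ by construction, $V_i^*$ is feasible for the per-column minimization defining $V_{i, c_i}$, so $\|V_{i, c_i}\|_p^p \le \|V_i^*\|_p^p$ by the optimality of $V_{i, c_i}$; summing over $i$ and using $\|V^*\|_p^p \le k q^p$ (since $V^*$ is a $q$-well-conditioned basis) gives the norm bound.

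For (c), we have $C_{i,c_i}^p \le c_i^p \le (1+\eps)^p\bigl((\tfrac{\eps^2}{f}\|A\|_p/d^{1/p})^p + \tau_i^p\bigr)$. The first term sums to $(1+\eps)^p(\eps^2/f)^p\|A\|_p^p$; for the second, Claim~\ref{claim:connecting_rounded_median} (applicable because $\|V^*\|_p \le \poly(k)$) combined with $\calE_2$ and Lemma~\ref{lemma:p_stable_median_omega_1} (which lets us treat $\med_p$ as an absolute constant) gives
\[
\Bigl(\sum_i \tau_i^p\Bigr)^{1/p} \le (1+\eps)\,OPT + \tfrac{1}{f \poly(k/\eps)}\|A\|_p \le (1+O(\eps))\widehat{OPT} + \tfrac{1}{f \poly(k/\eps)}\|A\|_p,
\]
using the hypothesis $\widehat{OPT} \ge (1-O(\eps))OPT$. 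Raising to the $p^{\text{th}}$ power and recombining with the first contribution, the total is at most $\Delta$ as long as the $O(\eps)$ constant in the definition of $\Delta$ is chosen large enough to absorb the $(1+\eps)^p$ slack from rounding $\tau_i$ up to the grid. The argument is essentially bookkeeping; the main obstacle is just tracking these constants carefully, since all the nontrivial probabilistic and algorithmic content has already been packaged into the conditioned events $\calE_1, \calE_2, \calE_3$ and Claim~\ref{claim:connecting_rounded_median}.
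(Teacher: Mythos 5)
Your proposal is correct and follows essentially the same route as the paper: you exhibit the integral assignment induced by $V^*$, use optimality of the per-column minimizers $V_{i,c}$ to get $\|V_{i,c_i}\|_p \le \|V_i^*\|_p$, and control the cost constraint via Claim \ref{claim:connecting_rounded_median} together with $\calE_2$. The only cosmetic difference is that you merge the paper's two cases (cost above versus below the smallest element of \textsc{CostBounds}) into a single $\max$, which introduces a harmless $(1+\eps)^p$ slack on the $(\eps^2/f)^p\|A\|_p^p$ term that, as you note, is absorbed by adjusting constants.
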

\begin{proof}
Consider a particular $i \in [d]$. Note that
\begin{equation}
\begin{split}
\frac{\med(MV_i^* - SA_i)}{\med_p}
& \leq \frac{\med(SU^*V_i^* - SA_i)}{\med_p} + \frac{\|(M - SU^*)V_i^*\|_\infty}{\med_p} \\
& \leq \frac{\med(SU^*V_i^* - SA_i)}{\med_p} + \frac{\|(M - SU^*)V_i^*\|_p}{\med_p} \\
& \leq \frac{\med(SU^*V_i^* - SA_i)}{\med_p} + \poly(k)\frac{\|(M - SU^*)\|_p}{\med_p} \\
& \leq \frac{\med(SU^*V_i^* - SA_i)}{\med_p} + \frac{1}{f\poly(k/\eps)}\|A\|_p \\
& \leq \frac{\Big(\sum_{j \in [d]} \med(SU^*V_j^* - SA_j)^p\Big)^{\frac{1}{p}}}{\med_p} + \frac{1}{f\poly(k/\eps)}\|A\|_p \\
& \leq (1 + \eps)OPT + \frac{1}{f\poly(k/\eps)}\|A\|_p \\
& \leq O(1)\|A\|_p
\end{split}
\end{equation}
The first inequality is true because for $v_1, v_2 \in \R^n$, $|\med(v_1 + v_2) - \med(v_1)| \leq \|v_2\|_\infty$, meaning $|\med(MV_i^* - SA_i) - \med(SU^*V_i^* - SA_i)| \leq \|(M - SU^*)V_i^*\|_\infty$. Here, the second inequality is by $\|x\|_p \geq \|x\|_\infty$, and the third is because $V^*$ is a $\poly(k)$ well-conditioned basis. The fourth inequality is because $\med_p$ is nonnegative and bounded away from $0$ by Lemma \ref{lemma:p_stable_median_omega_1}, and because $\|M - SU^*\|_p \leq \frac{1}{f\poly(k/\eps)}\|A\|_p$ by Equation \ref{eq:rounding_error}. Finally, the sixth inequality is because $\calE_2$ holds and the seventh is because $OPT \leq \|A\|_p$.

As a summary, we have shown that for each $i \in [d]$, $\med(MV_i^* - SA_i)/\med_p \leq O(1)\|A\|_p$. Hence, there are two cases --- either there exists $c_0 \in \textsc{CostBounds}$ such that $c_0 \geq \med(MV_i^* - SA_i)/\med_p \geq \frac{c_0}{1 + \eps}$, or $\med(MV_i^* - SA_i)/\med_p \leq (\frac{\eps^2}{f}) \|A\|_p/d^{\frac{1}{p}}$.

\begin{itemize}
    \item In the first case, we can assign $x_{i, c_0} = 1$ and $x_{i, c'} = 0$ for $c' \neq c_0$. Define $V_{i, c_0}$ as in Algorithm \ref{algorithm:guessing_eps_approximation_general_lp}, meaning it is equal to $\argmin_{V_i} \|V_i\|_p$ subject to the constraint that $\med(MV_i - SA_i)/\med_p \leq c_0$. Then, $\|V_{i, c_0}\|_p \leq \|V_i\|_p$ and moreover, since $\med(MV_i^* - SA_i)/\med_p \geq \frac{c_0}{1 + \eps}$, $\med(MV_{i, c_0} - SA_i)/\med_p \leq c_0 \leq (1 + O(\eps)) \med(MV_i^* - SA_i)/\med_p$.
    \item In the second case, we can assign $x_{i, c_0} = 1$ and $x_{i, c'} = 0$ for $c' \neq c_0$, where $c_0$ is now $(\frac{\eps^2}{f}) \|A\|_p/d^{\frac{1}{p}}$. Again, $V_{i, c_0}$ is equal to $\argmin_{V_i} \|V_i\|_p$ subject to the constraint that $\med(MV_i - SA_i)/\med_p \leq c_0$. Hence, since $\med(MV_i^* - SA_i)/\med_p \leq c_0$, $\|V_{i, c_0}\|_p \leq \|V_i^*\|_p$. Moreover, $\med(MV_{i, c_0} - SA_i)/\med_p \leq (\frac{\eps^2}{f})\|A\|_p/d^{\frac{1}{p}}$, which is sufficient for our purposes, as we see now.
\end{itemize}

We now conclude that this assignment to the $x_{i, c}$ satisfies all the constraints of the linear program. Clearly the constraints $x_{i, c} \geq 0$ (for all $i \in [d]$, $c \in \textsc{CostBounds}$) and $\sum_{c \in \textsc{CostBounds}} x_{i, c} = 1$ (for all $i \in [d]$) are satisfied (since for each $i \in [d]$, exactly one $x_{i, c}$ is set to $1$ and the rest are $0$). For each $i \in [d]$, let $c_i$ be the unique element of $\textsc{CostBounds}$ such that $x_{i, c_i} = 1$ according to our assignment. Then, for all $i \in [d]$, $\|V_{i, c_i}\|_p^p \leq \|V_i^*\|_p^p$ as mentioned above (where we wrote $\|V_{i, c_0}\|_p^p \leq \|V_i^*\|_p^p$), meaning
$$\sum_{i = 1}^d\sum_{c \in \textsc{CostBounds}} x_{i, c} \|V_{i, c}\|_p^p = \sum_{i = 1}^d \|V_{i, c_i}\|_p^p \leq \sum_{i = 1}^d \|V_i^*\|_p^p = \|V^*\|_p^p \leq kq^p = \poly(k)$$
where the last inequality is because each row of $V^*$ has norm at most $q$. In addition,
\begin{equation}
\begin{split}
\sum_{i = 1}^d\sum_{c \in \textsc{CostBounds}} x_{i, c}C_{i, c}^p
& = \sum_{i = 1}^d \med(MV_{i, c_i} - SA_i)^p/\med_p^p \\
& \leq \sum_{i = 1}^d \Big((1 + O(\eps))^p\med(MV_i^* - SA_i)^p/\med_p^p + (\eps^2/f)^p\|A\|_p^p/d\Big) \\
& = (1 + O(\eps))^p \frac{\sum_{i = 1}^d \med(MV_i^* - SA_i)^p}{\med_p^p} + \Big(\frac{\eps^2}{f}\Big)^p \|A\|_p^p
\end{split}
\end{equation}
By Claim \ref{claim:connecting_rounded_median}, this is at most
$$(1 + O(\eps))^p \Big( \Big(\sum_{i = 1}^d \med(SU^*V_i^* - SA_i)^p\Big)^{\frac{1}{p}} \big/\med_p + \frac{1}{f\poly(k/\eps) \med_p}\|A\|_p \Big)^p + \Big(\frac{\eps^2}{f}\Big)^p \|A\|_p^p$$
and this is in turn at most
$$(1 + O(\eps))^p \Big(\widehat{OPT} + \frac{1}{f\poly(k/\eps)}\|A\|_p \Big)^p + \Big(\frac{\eps^2}{f}\Big)^p \|A\|_p^p$$
Here, we used the fact that $\calE_2$ occurs, meaning $\Big(\sum_{i = 1}^d \med(SU^*V_i^* - SA_i)^p\Big)^{\frac{1}{p}} \big/\med_p \leq (1 + \eps)OPT$, $OPT \leq \frac{1}{1 - O(\eps)}\widehat{OPT} \leq (1 + O(\eps))\widehat{OPT}$, and moreover, $\med_p = \Omega(1)$ (regardless of $p$), meaning $\frac{1}{\med_p}$ is bounded above.

Hence, we can conclude that
$$\sum_{i = 1}^d \sum_{c \in \text{CostBounds}} x_{i, c}C_{i, c}^p \leq (1 + O(\eps))^p \Big(\widehat{OPT} + \frac{1}{f\poly(k/\eps)}\|A\|_p \Big)^p + \Big(\frac{\eps^2}{f}\Big)^p \|A\|_p^p$$
and we have shown that this assignment to the $x_{i, c}$ satisfies all the constraints of the LP.
\end{proof}

\begin{claim}
Suppose $\calE_1$, $\calE_2$ and $\calE_3$ occur. Then, when $M = M_2$ is guessed, with probability $1 - O(1)$, Algorithm \ref{algorithm:guessing_eps_approximation_general_lp} finds $V'$ such that $\|V'\|_p^p \leq \frac{2kq^p}{\eps}$ and $\sum_{i = 1}^d \med(MV_i' - SA_i)^p \Big/\med_p^p \leq (1 + 2\eps)\Delta$, where $\Delta$ is defined in Algorithm \ref{algorithm:guessing_eps_approximation_general_lp}.
\end{claim}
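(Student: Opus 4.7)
The plan is to apply Markov's inequality to each of the two quantities, then union-bound and finally amplify over the $10/\eps$ independent sampling trials.

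First I would compute the expectations of the two relevant quantities under a single sampling round, using linearity of expectation together with the LP constraints. Since each $V_i'$ is set to $V_{i, c_i}$ where $c_i$ is drawn independently from the distribution $\{x_{i, c}\}_{c \in \textsc{CostBounds}}$ (which is a valid probability distribution by the constraints $x_{i, c} \geq 0$ and $\sum_c x_{i, c} = 1$), we get
\begin{align*}
E\bigl[\|V'\|_p^p\bigr] &= \sum_{i \in [d]} E\bigl[\|V_i'\|_p^p\bigr] = \sum_{i \in [d]} \sum_{c \in \textsc{CostBounds}} x_{i, c} \|V_{i, c}\|_p^p \leq kq^p,\\
E\Bigl[\sum_{i = 1}^d \med(MV_i' - SA_i)^p/\med_p^p\Bigr] &= \sum_{i \in [d]} \sum_{c \in \textsc{CostBounds}} x_{i, c} C_{i, c}^p \leq \Delta,
\end{align*}
where the two inequalities are precisely the LP constraints added in Algorithm \ref{algorithm:guessing_eps_approximation_general_lp}, and feasibility of the LP is guaranteed by the previous claim when $M = M_2$.

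Next I would apply Markov's inequality to each of these nonnegative random variables. On the one hand, $\Prob{\|V'\|_p^p > \tfrac{2kq^p}{\eps}} \leq \eps/2$, so the norm constraint fails with probability at most $\eps/2$. On the other hand, because $\Prob{X \geq (1 + 2\eps) E[X]} \leq \tfrac{1}{1 + 2\eps}$, we have
\[
\Prob{\sum_{i = 1}^d \med(MV_i' - SA_i)^p/\med_p^p \leq (1 + 2\eps)\Delta} \geq 1 - \frac{1}{1 + 2\eps} = \frac{2\eps}{1 + 2\eps} \geq \eps
\]
(for $\eps$ less than an absolute constant). A union bound then shows that both events hold simultaneously with probability at least $\eps - \eps/2 = \eps/2$ in any single sampling round.

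Finally, since the $10/\eps$ sampling rounds are mutually independent (each round draws a fresh set of $c_i$), the probability that no round produces a $V'$ satisfying both the norm bound $\|V'\|_p^p \leq \tfrac{2kq^p}{\eps}$ and the cost bound $\sum_i \med(MV_i' - SA_i)^p/\med_p^p \leq (1 + 2\eps)\Delta$ is at most $(1 - \eps/2)^{10/\eps} \leq e^{-5}$, a small constant. Hence with probability $1 - O(1)$ the inner \texttt{Break} triggers for some $t$ and the claim follows. There is no real obstacle here; the one subtle point is verifying that the Markov bound on the cost really yields $\Omega(\eps)$ success probability (not $o(1)$), which is why the algorithm chooses the slack factor $1 + 2\eps$ rather than $1 + \eps$.
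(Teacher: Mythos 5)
Your proposal is correct and follows essentially the same route as the paper: bound the two expectations by the LP constraints via linearity of expectation, apply Markov's inequality with thresholds $2kq^p/\eps$ and $(1+2\eps)\Delta$, union-bound to get success probability $\eps/2$ per round, and amplify over the $10/\eps$ independent trials to get $(1-\eps/2)^{10/\eps} \leq e^{-5}$. The only thing the paper adds is a careful formal treatment of taking these expectations conditioned on a fixed value of the continuous sketching matrix $S$ (handled via discretization or disintegration in a remark), which your argument glosses over but which does not affect the substance.
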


\begin{proof}

The proof is by Markov's inequality. First, by the previous claim, if $\calE_1$, $\calE_2$ and $\calE_3$ hold, then solving the LP within Algorithm \ref{algorithm:guessing_eps_approximation_general_lp} gives $x_{i, c}$ for $i \in [d]$ and $c \in \textsc{CostBounds}$ such that
$$\sum_{i \in [d], c \in \textsc{CostBounds}} x_{i, c} \|V_{i, c}\|_p^p \leq kq^p \,\, \text{ and } \sum_{i \in [d], c \in \textsc{CostBounds}} x_{i, c} C_{i, c}^p \leq \Delta$$
Now, for each column $i \in [d]$, suppose we sample a single $c_i \in \textsc{CostBounds}$ according to the distribution on $\textsc{CostBounds}$ given by the $x_{i, c}$, and we let $V_i' = V_{i, c_i}$. Denote this distribution by $\pi_i$. Then, conditioning on a \textit{fixed} value of the $p$-stable matrix $S$ such that $\calE_1$, $\calE_2$ and $\calE_3$ hold,
$$E_{c_i \sim \pi_i ,\, \forall i \in [d]}\Big[\|V'\|_p^p \mid S\Big] = \sum_{i = 1}^d \sum_{c \in \textsc{CostBounds}} x_{i, c} \|V_{i, c}\|_p^p \leq kq^p$$
The first equality is by the linearity of expectation. Note that prior to sampling $c_i$ for each $i \in [d]$, the only source of randomness is $S$, and the appropriate rounded matrix $M$, as well as the minimizers $V_{i, c}$ and their costs $C_{i, c}$ are all determined by $S$. The second inequality holds if $S$ is such that $\calE_1$, $\calE_2$ and $\calE_3$ are satisfied. By the same argument
$$E_{c_i \sim \pi_i ,\, \forall i \in [d]}\Big[\sum_{i = 1}^d \med(MV_i' - SA_i)^p / \med_p^p \mid S\Big] = \sum_{i \in [d], c \in \textsc{CostBounds}} x_{i, c} C_{i, c}^p \leq \Delta$$
Hence, letting $\calE = \calE_1 \cap \calE_2 \cap \calE_3$,
$$E_{c_i \sim \pi_i ,\, \forall i \in [d]}\Big[\|V'\|_p^p \mid \calE\Big] = \frac{\int_{\calE} E_{c_i \sim \pi ,\, \forall i \in [d]}\Big[\|V'\|_p^p \mid S\Big] p(S) dS}{P[\calE]} \leq \frac{\int_{\calE} kq^p \cdot p(S) dS}{P[\calE]} = kq^p$$
Here, the first equality is by the identity $E[X] = E[E[X \mid Y]]$, where $X = \|V'\|_p^p$ and $Y = S$, the expectation is taken using the probability measure obtained by conditioning on $\calE$, and $p(S)$ denotes the p.d.f. of the $p$-stable matrix $S$. The second is because the integral is over $S$ for which $\calE_1$, $\calE_2$ and $\calE_3$ hold. Note that we take the expectation conditioned on a fixed value of the continuous random variable $S$. To make this fully rigorous, we can assume the entries of $S$ are rounded to the nearest integer multiples of an arbitrarily small $\delta > 0$. This would then allow us to treat the integral as a discrete sum, and it is easy to see that rounding $S$ in this way does not affect the analysis --- see the remark below on how we can justify the above steps, such as taking the expectation conditioned on a fixed value of $S$, without discretizing $S$. Similarly,
\begin{equation}
\begin{split}
& E_{c_i \sim \pi_i ,\, \forall i \in [d]}\Big[\sum_{i = 1}^d \med(MV_i' - SA_i)^p / \med_p^p \mid \calE\Big] \\
& = \frac{\int_{\calE} E_{c_i \sim \pi ,\, \forall i \in [d]}\Big[\sum_{i = 1}^d \med(MV_i' - SA_i)^p / \med_p^p \mid S\Big] p(S) dS}{P[\calE]} \\
& \leq \frac{\int_{\calE} \Delta \cdot p(S) dS}{P[\calE]} \\
& = \Delta
\end{split}
\end{equation}
Now, by Markov's inequality, if we sample $c_i$ according to $\pi_i$ for $i \in [d]$, then
$$P\Big[\|V'\|_p^p \geq \frac{2kq^p}{\eps} \mid \calE \Big] \leq \frac{\eps}{2}$$
and
$$P\Big[\sum_{i = 1}^d \med(MV_i' - SA_i)^p / \med_p^p \geq (1 + 2\eps)\Delta \mid \calE\Big] \leq \frac{1}{1 + 2\eps} \leq 1 - \eps$$
where the second inequality holds because for $\eps \in (0, \frac{1}{2})$, $(1 + 2x)(1 - x) = 1 + x - 2x^2 = 1 + x(1 - 2x) \geq 1$, meaning $1 - x \geq \frac{1}{1 + 2x}$. Hence, by a union bound, $$P\Big[\sum_{i = 1}^d \med(MV_i' - SA_i)^p / \med_p^p \geq (1 + 2\eps)\Delta \text{ or } \|V'\|_p^p \geq \frac{2kq^p}{\eps} \mid \calE\Big] \leq 1 - \frac{\eps}{2}$$
Finally, if we repeatedly sample $c_i$ for $i \in [d]$ over the course of $\frac{10}{\eps}$ trials, then the probability that none of the $V'$ satisfy the desired properties is at most
$$\Big(1 - \frac{\eps}{2}\Big)^{10/\eps} = \Big(1 - \frac{\eps}{2}\Big)^{(2/\eps) \cdot 5} \leq \frac{1}{e^5} \leq \frac{1}{100}$$
This completes the proof of the claim --- if we condition on $\calE_1$, $\calE_2$ and $\calE_3$, then with probability $\frac{99}{100}$, we obtain $V'$ such that $\|V'\|_p^p \leq \frac{2kq^p}{\eps}$ and $$\sum_{i = 1}^d \med(MV_i' - SA_i)^p \Big/ \med_p^p \leq (1 + 2\eps)\Delta$$
\end{proof}

\begin{remark}
In the proof of the above lemma, we take the expectation of $\|V'\|_p^p$ and $\sum_{i = 1}^d \med(MV_i' - SA_i)^p/\med_p^p$ conditioned on a fixed value of the $p$-stable matrix $S$, and then use this to compute $E[\|V'\|_p^p \mid \calE]$ and $E[\sum_{i = 1}^d \med(MV_i' - SA_i)^p/\med_p^p \mid \calE]$, using the identity $E[X] = E[E[X \mid Y]]$. Note that the inner expectation here is conditioned on a \textit{fixed} value of the continuous random variable $S$. We can make this fully rigorous by discretizing $S$, i.e. rounding the entries of $S$ to the nearest integer multiple of $\delta > 0$, where $\delta$ is arbitrarily small, as mentioned above (in fact, it suffices to round the entries of $S$ to a multiple of $\frac{\eps}{\poly(nd)}$). Alternatively, we can formalize this using disintegrations, by applying Theorem 3 of \cite{cp97_conditional_expectation}, with the following parameters:
\begin{itemize}
    \item The sample space $\Omega$ is $\R^{\poly(k/\eps) \times n} \times \mathcal{D}$, where $\R^{\poly(k/\eps) \times n}$ is the set of all possible $S$ and $\mathcal{D}$ is the collection of all possible choices of $c_i$, for $i \in [d]$.
    \item $\lambda$, in the statement of Theorem 3 of \cite{cp97_conditional_expectation}, would be the product measure of the Lebesgue measure on $\R^{\poly(k/\eps) \times n}$, with the counting measure on $\mathcal{D}$.
    \item $\rho$ would be the probability distribution on $\Omega$ which is absolutely continuous with respect to $\lambda$, such that its density with respect to $\lambda$, at a point $(S, c)$ in the sample space, is $p(S) q_{c \mid S}(c \mid S)$, where $p(S)$ is the p.d.f. of a $p$-stable matrix $S$, and $q_{c \mid S}(c \mid S)$ is the distribution on the choices $c_i$, for $i \in [d]$, that is given by Algorithm \ref{algorithm:guessing_eps_approximation_general_lp}.  (Note that by Fubini's Theorem, $\rho(\Omega) = \int_\Omega p(S)q_{c \mid S}(c \mid S) d\lambda = 1$, so $\rho$ is a probability distribution.)
    \item $\mu$ is the probability distribution on $\R^{\poly(k/\eps) \times n}$, such that $S \sim \mu$ has i.i.d. standard $p$-stable entries.
\end{itemize}
Note that $\lambda$ is $\sigma$-finite since the Lebesgue measure is $\sigma$-finite, and $\rho$ and $\mu$ are finite measures since they are probability distributions. Thus, the relevant measures are all $\sigma$-finite, and $\rho$ is absolutely continuous w.r.t. $\lambda$, meaning the hypotheses of Theorem 3 of \cite{cp97_conditional_expectation} are satisfied. With these parameters, from part (v) of Theorem 3 in \cite{cp97_conditional_expectation}, we find that for $S \in \R^{\poly(k/\eps) \times n}$, the collection of probability measures on $\mathcal{D}$ given by
$$\rho_S(A) = \sum_{c \in A} q_{c \mid S}(c \mid S)$$
for $A \subset \mathcal{D}$ is a valid disintegration of $\rho$ with respect to $\mu$. Thus, using this disintegration, we can rigorously define $E[\|V'\|_p^p \mid S]$ and $E[\sum_{i = 1}^d \med(MV_i' - SA_i)/\med_p^p \mid S]$ for a fixed value of the $p$-stable matrix $S$, and use these to compute $E[\|V'\|_p^p \mid \calE]$ and $E[\sum_{i = 1}^d \med(MV_i' - SA_i)/\med_p^p \mid \calE]$, using part (iii) of Definition 1 of \cite{cp97_conditional_expectation} (which is equivalent to the identity $E[X] = E[E[X|Y]]$).
\end{remark}

Now, suppose $\calE_1$, $\calE_2$ and $\calE_3$ hold, and that we have obtained $V'$ with $$\|V'\|_p^p \leq \frac{O(kq^p)}{\eps} \text{ and } \sum_{i = 1}^d \med(MV_i' - SA_i)^p \Big/ \med_p^p \leq (1 + 2\eps)\Delta$$ 
By the previous claim, this occurs with constant probability. By Claim \ref{claim:connecting_rounded_median}, since $\|V'\|_p \leq \poly(k/\eps)$, we know that
\begin{equation}
\begin{split}
\Big(\sum_{i = 1}^d \med(MV_i' - SA_i)^p\Big)^{\frac{1}{p}} \Big/ \med_p
& \geq \Big(\sum_{i = 1}^d \med(SU^*V_i' - SA_i)^p \Big)^{\frac{1}{p}} \Big/ \med_p - \frac{1}{f\poly(k/\eps)\med_p} \|A\|_p \\
& \geq (1 - O(\eps)) \|U^*V' - A\|_p - \frac{1}{f\poly(k/\eps) \med_p} \|A\|_p \\
& \geq (1 - O(\eps)) \|U'V' - A\|_p - \frac{1}{f\poly(k/\eps) \med_p} \|A\|_p
\end{split}
\end{equation}
Here, the first inequality is by Claim \ref{claim:connecting_rounded_median}, the second is because we are conditioning on $\calE_1$ (meaning the median-based sketch and $S$ provide a one-sided embedding for $U^*$ and $A$), and the third inequality is because $U' = \argmin_U \|UV' - A\|_p$. Therefore,
\begin{equation}
\begin{split}
\|U'V' - A\|_p
& \leq (1 + O(\eps)) \Big(\sum_{i = 1}^d \med(MV_i' - SA_i)^p\Big)^{\frac{1}{p}} \Big/ \med_p + \frac{1}{f\poly(k/\eps)\med_p} \|A\|_p \\
& \leq (1 + O(\eps)) (1 + 2\eps)^{1/p} \Delta^{1/p} + \frac{1}{f\poly(k/\eps)\med_p}\|A\|_p \\
& \leq (1 + O(\eps)) \Delta^{1/p} + \frac{1}{f\poly(k/\eps)\med_p}\|A\|_p \\
& \leq (1 + O(\eps)) \Delta^{1/p} + \frac{1}{f\poly(k/\eps)}\|A\|_p
\end{split}
\end{equation}
where the last inequality is because $\med_p = \Omega(1)$ as $p$ ranges through $[1, 2]$. Finally, we can bound $\Delta^{1/p}$ from above by observing that the function $f: x \to |x|^{1/p}$ is subadditive:
\begin{equation}
\begin{split}
\Delta^{1/p}
& = \Big((1 + O(\eps))^p \Big(\widehat{OPT} + \frac{1}{f\poly(k/\eps)}\|A\|_p \Big)^p + \Big(\frac{\eps^2}{f}\Big)^p \|A\|_p^p\Big)^{1/p} \\
& \leq \Big((1 + O(\eps))^p \Big(\widehat{OPT} + \frac{1}{f\poly(k/\eps)}\|A\|_p \Big)^p\Big)^{1/p} + \Big(\Big(\frac{\eps^2}{f}\Big)^p \|A\|_p^p\Big)^{1/p} \\
& = (1 + O(\eps)) \Big(\widehat{OPT} + \frac{1}{f\poly(k/\eps)}\|A\|_p \Big) + \frac{\eps^2}{f} \|A\|_p \\
& \leq (1 + O(\eps)) \Big(OPT + \frac{1}{f\poly(k/\eps)}\|A\|_p \Big) + \frac{\eps^2}{f} \|A\|_p
\end{split}
\end{equation}
where the first inequality holds because $(|x| + |y|)^{1/p} \leq |x|^{1/p} + |y|^{1/p}$, and the last is because $\widehat{OPT} \leq (1 + O(\eps))OPT$. In summary,
\begin{equation}
\begin{split}
\|U'V' - A\|_p
& \leq (1 + O(\eps))\Delta^{1/p} + \frac{1}{f\poly(k/\eps)} \|A\|_p \\
& = (1 + O(\eps))OPT + O\Big(\frac{\eps^2}{f}\Big) \|A\|_p
\end{split}
\end{equation}

Finally, we analyze the running time of Algorithm \ref{algorithm:guessing_eps_approximation_general_lp}.

\begin{claim}
The running time of Algorithm \ref{algorithm:guessing_eps_approximation_general_lp} is at most $f^{O(rk)}  + 2^{O(rk\log(k/\eps))} + \poly(fnd/\eps)$, where $r$, the number of rows in the $p$-stable matrix $S$, is at most $O(\max(k/\eps^6 \log(k/\eps), 1/\eps^9))$
\end{claim}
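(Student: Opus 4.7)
The plan is to multiply the number of guesses for the discretized sketched left factor $M$ by the per-iteration cost. First I count $|\mathcal{I}|$: this set consists of $0$ together with signed integer powers of $(1+\eta)$ for $\eta := 1/(f\cdot\poly(k/\eps))$, spanning a multiplicative interval of length $O(f\cdot\poly(k/\eps))$. Using $\log(1+\eta)\geq \eta/2$, the number of such powers is $O(\eta^{-1}\log(f\poly(k/\eps))) = f\cdot\poly(k/\eps)$, so $|\mathcal{I}| = f\cdot\poly(k/\eps)$. Hence
$$|\calC| \;=\; |\mathcal{I}|^{rk} \;=\; (f\cdot\poly(k/\eps))^{O(rk)} \;\leq\; f^{O(rk)} \cdot 2^{O(rk\log(k/\eps))}.$$

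Next I bound the work inside one iteration of the outer loop over $M \in \calC$. The set \textsc{CostBounds} has size $O(\eps^{-1}\log(fnd/\eps))$. For each pair $(i,c) \in [d]\times\textsc{CostBounds}$, computing $V_{i,c}$ requires minimizing $\|V_i\|_p$ subject to the single median constraint $\med(MV_i - SA_i)/\med_p \leq c$; I would handle this by enumerating over all $\binom{r}{\lceil r/2\rceil} = 2^{O(r)}$ subsets of coordinate indices that could have absolute value at most $c\cdot\med_p$, and for each subset solving a convex program on $V_i \in \R^k$ with $O(r)$ linear constraints in $\poly(nd)$ time, giving $2^{O(r)}\cdot\poly(nd)$ time per pair. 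Solving the aggregating LP, which has $O(d\cdot|\textsc{CostBounds}|)$ variables and a comparable number of constraints, takes $\poly(fnd/\eps)$ time. The $O(1/\eps)$ sampling trials and the concluding $\ell_p$-regression for $U' = \argmin_U \|UV' - A\|_p$ each take $\poly(nd/\eps)$ time. Summing, the per-$M$ cost is at most $2^{O(r)}\cdot\poly(fnd/\eps)$, which is $2^{O(rk\log(k/\eps))}\cdot\poly(fnd/\eps)$ because $r = O(\max(k/\eps^6\log(k/\eps), 1/\eps^9))$.

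Combining, the total runtime is at most
$$f^{O(rk)} \cdot 2^{O(rk\log(k/\eps))} \cdot \poly(fnd/\eps).$$
To convert this product into the claimed sum I apply AM-GM in the form $ab \leq a^2 + b^2$. First, $f^{O(rk)}\cdot 2^{O(rk\log(k/\eps))} \leq f^{O(rk)} + 2^{O(rk\log(k/\eps))}$ after doubling the hidden constants. Second, the trailing $\poly(fnd/\eps)$ factor can be absorbed by the same trick: writing the total as $2^{E+L}$ where $E = O(rk\log(fk/\eps))$ and $L = O(\log(fnd/\eps))$, I have $2^{E+L} = 2^E\cdot 2^L \leq 2^{2E} + 2^{2L}$, so the polynomial factor contributes an additive $\poly(fnd/\eps)$ term while the remaining $2^{2E}$ splits back into $f^{O(rk)} + 2^{O(rk\log(k/\eps))}$, yielding the claimed $f^{O(rk)} + 2^{O(rk\log(k/\eps))} + \poly(fnd/\eps)$ bound.

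The main obstacle is step (ii): arguing that the median-constrained $\ell_p$ minimization $\min \|V_i\|_p$ subject to $\med(MV_i - SA_i)/\med_p \leq c$ can really be solved in $2^{O(r)}\cdot\poly(nd)$ time. Here one must enumerate over configurations of which coordinates of $MV_i - SA_i$ lie below the threshold (rather than orderings, which would cost $r!$), and for each such configuration solve a convex program whose constraints are linear in $V_i$. Everything else is careful bookkeeping of polynomial factors and verifying that the base $\poly(fnd/\eps)$ cost of operations that occur only once (such as drawing $S$ and constructing $\mathcal{I}$) adds rather than multiplies.
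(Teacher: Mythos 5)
Your proposal is correct and follows the same overall structure as the paper's proof: count $|\calC| = f^{O(rk)}2^{O(rk\log(k/\eps))}$, bound the per-guess work by $\poly(fnd/\eps)$ times a $2^{O(r\,\polylog)}$ factor from the median-constrained subproblems, and convert the product into a sum via a power-mean inequality (the paper uses $abc \leq \tfrac{a^3+b^3+c^3}{3}$ where you use $ab \leq a^2+b^2$ twice; both are fine since the $O(\cdot)$ exponents absorb the constants). The one genuine difference is your step (ii): the paper handles the constraint $\med(MV_i - SA_i)/\med_p \leq c$ by enumerating all $r!$ orderings of the coordinates of $MV_i - SA_i$, which turns the median into a linear constraint for each ordering and costs $2^{O(r\log r)}$ convex programs per pair $(i,c)$; you instead enumerate the $\binom{r}{\lceil r/2\rceil} = 2^{O(r)}$ candidate subsets of coordinates forced to have absolute value at most $c\cdot\med_p$, imposing two linear inequalities per chosen coordinate. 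Your variant is valid — the feasible region is the union of these convex pieces, so minimizing over each and taking the best recovers the true minimizer — and it is in fact slightly cheaper ($2^{O(r)}$ versus $2^{O(r\log r)}$), though both are dominated by the $2^{O(rk\log(k/\eps))}$ term so the final bound is unchanged.
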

\begin{proof}
First, let us find the number of matrices in $|\calC|$. Note that each $M \in \calC$ has $rk$ entries (recall that $r$ is the number of rows of $S$). For each entry, there are $|\calI|$ choices. The magnitude of the largest possible guess is $\poly(k/\eps)\|A\|_p$, while that of the smallest possible guess is $\frac{1}{f\poly(k/\eps)}\|A\|_p$. Therefore, the number of guesses for each entry is $\frac{O(\log (fk/\eps))}{\log(1 + \frac{1}{f\poly(k/\eps)})}$. Since for $x < 1$, $\log(1 + x) \geq \frac{x}{2}$, the number of guesses per each entry of $M$ is in fact $O(f\poly(k/\eps) \log(fk/\eps))$. In summary,
$$|\calC| = (f\poly(k/\eps)\log(fk/\eps))^{rk} = f^{O(rk)} 2^{O(rk\log(k/\eps))}$$
Let us now calculate the running time needed for each guess. The size of $\textsc{CostBounds}$ is
$$\log\Big(\frac{fd}{\eps}\Big) \cdot \frac{1}{\log(1 + \eps)} \leq O\Big(\frac{1}{\eps} \log\Big(\frac{fd}{\eps}\Big)\Big)$$
For each $i \in [d]$ and $c \in \textsc{CostBounds}$, we solve the problem of minimizing $\|V_i\|_p$ subject to the constraint that $\med(MV_i - SA_i)/\med_p \leq c$. Computing $SA_i$ takes $O(nr)$ time. If $V_{i, c, best}$ is the solution to this problem, then there are at most $r!$ possible orderings for the coordinates of $MV_{i, c, best} - SA_i$. Therefore, we can make this optimization problem into a convex program by trying all orderings of the coordinates of $MV_i - SA_i$ (meaning that the orderings are added in the form of $\poly(k/\eps)$ additional constraints). Once we do this, $\med(MV_i - SA_i) /\med_p \leq c$ becomes a linear constraint. Hence, we solve $r!$ distinct convex programs. Since $M$ is an $r \times k$ matrix, and $SA_i$ is an $r$-dimensional vector, finding $V_{i, c, best}$ takes at most $r! \cdot \poly(r)$ running time. In summary, the total time taken to find $V_{i, c}$ and $C_{i, c}$ for all $i \in [d]$ and $c \in \textsc{CostBounds}$ is
\begin{equation}
\begin{split}
d \cdot |\textsc{CostBounds}| \cdot r! \cdot \poly(r) + O(nr)
& \leq O\Big(\frac{d}{\eps}\log\Big(\frac{fd}{\eps}\Big)\Big) \cdot r^{O(r)} + O(nr) \\
& \leq \poly(fnd/\eps) 2^{O(r\log r)}
\end{split}
\end{equation}
Next comes the running time for the convex program used to find the $x_{i, c}$. As calculated above, the number of variables is $d \cdot |\textsc{CostBounds}| = O\Big(\frac{d}{\eps}\log\Big(\frac{fd}{\eps}\Big)\Big)$ and the number of constraints is $d|\textsc{CostBounds}| + d + 2 = O\Big(\frac{d}{\eps}\log\Big(\frac{fd}{\eps}\Big)\Big)$. Therefore, the convex program can be solved in $\poly(fd/\eps)$ time.

Finally, we calculate the running time of the stage where we sample $c_i$ for every $i \in [d]$ in order to find a right factor $V'$. The time needed to sample $c_i$ is $d \cdot |\textsc{CostBounds}| = O\Big(\frac{d}{\eps}\log\Big(\frac{fd}{\eps}\Big)\Big)$, and for each sample of the $c_i$, computing the norm of the new $V'$ takes $O(nk)$ time, and computing $\sum_{i \in [d]} \med(MV_i' - SA_i)^p / \med_p^p$ takes $n \cdot \poly(k)$ time. The number of samples is $O(1/\eps)$, meaning the total running time of this stage is $\poly(fnd/\eps)$. After $V'$ has been found, the running time needed to find $U'$ by solving $\min_U \|UV' - A\|_p$ using convex programming is $\poly(nd)$. In summary, the total running time of Algorithm \ref{algorithm:guessing_eps_approximation_general_lp} is
$$f^{O(rk)}2^{O(rk\log(k/\eps))}\Big(\poly(fnd/\eps) 2^{O(r \log r)} + \poly(fnd/\eps)\Big)$$
and this is at most $f^{O(rk)}2^{O(rk\log(k/\eps))}\poly(fnd/\eps)$. The runtime in the theorem statement is due to the inequality $abc \leq \frac{a^3 + b^3 + c^3}{3}$, for $a, b, c \geq 0$.
\end{proof}
\end{proof}

We use this to obtain a $(1 + \eps)$-approximation algorithm with bicriteria rank $3k$.

\begin{theorem}[Correctness and Running Time of Algorithm \ref{algorithm:get_eps_approximation_after_initialization_general_lp}]
Let $A \in \R^{n \times d}$, $k \in \N$, and $\eps \in (0, c)$ where $c$ is a sufficiently small absolute constant. Then, Algorithm \ref{algorithm:get_eps_approximation_after_initialization_general_lp}, with these inputs, returns $\widehat{A} \in \R^{n \times d}$ such that $\widehat{A}$ has rank $3k$ and
$$\|\widehat{A} - A\|_p \leq (1 + O(\eps))\min_{A_k \text{ rank }k} \|A - A_k\|_p$$
The running time of Algorithm \ref{algorithm:get_eps_approximation_after_initialization_general_lp} is $2^{O(rk \log(k/\eps))} + \poly(nd/\eps)$, where $r$ is the same as in the statement of Theorem \ref{theorem:correctness_of_guessing_eps_approx}.
\end{theorem}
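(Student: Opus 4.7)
The plan is to reduce the analysis of Algorithm \ref{algorithm:get_eps_approximation_after_initialization_general_lp} to Theorem \ref{theorem:correctness_of_guessing_eps_approx} applied to the residual $C = A - B$, exploiting the key observation that the \emph{additive} error in Theorem \ref{theorem:correctness_of_guessing_eps_approx} scales as $\eps/f$ times $\|C\|_p$, so when $B$ is an $f = \poly(k)$-approximation we have $\|C\|_p \leq (f+1) \cdot OPT$ and the additive term becomes $O(\eps) \cdot OPT$.

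First I would argue that the optimal rank-$2k$ approximation error for $C$, call it $OPT_{C,2k}$, is at most $OPT = \min_{A_k \text{ rank } k}\|A_k - A\|_p$. This is because, if $A_k^*$ achieves $OPT$ on $A$, then $A_k^* - B$ has rank at most $2k$ (sum of two rank-$k$ matrices) and satisfies $\|(A_k^* - B) - C\|_p = \|A_k^* - A\|_p = OPT$. Next I would verify that the $\widehat{OPT}$-sweep finds a value within a $(1+\eps)$-factor of $OPT_{C,2k}$: the rank-$2k$ $\ell_2$-SVD error $\textsc{SvdError} = \|C - C_{SVD,2k}\|_p$ lies in $[OPT_{C,2k}, (nd)^{1/p - 1/2} OPT_{C,2k}]$ by the standard norm equivalence, so among the $O(\log(nd)/\eps)$ candidates $\textsc{SvdError}/(1+\eps)^t$ there is some $\widehat{OPT}$ satisfying $(1 - O(\eps))OPT_{C,2k} \leq \widehat{OPT} \leq (1+O(\eps))OPT_{C,2k}$, which is exactly the hypothesis Theorem \ref{theorem:correctness_of_guessing_eps_approx} needs.

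For this good $\widehat{OPT}$, Theorem \ref{theorem:correctness_of_guessing_eps_approx} (applied to the matrix $C$ with rank parameter $2k$ and $f = \poly(k)$) produces $U, V$ with $\|UV - C\|_p \leq (1 + O(\eps))OPT_{C,2k} + O(\eps/f)\|C\|_p$. Since $\|C\|_p = \|A - B\|_p \leq f \cdot OPT$ (by the $\poly(k)$-approximation guarantee of \textsc{PolyKErrorNotBiCriteriaApproximation}) and $OPT_{C,2k} \leq OPT$, I get $\|UV - C\|_p \leq (1+O(\eps))OPT$. Then $B + UV$ has rank at most $k + 2k = 3k$, and
\[
\|(B+UV) - A\|_p = \|UV - C\|_p \leq (1+O(\eps))OPT,
\]
which gives the claimed approximation. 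Since the outer loop keeps the best candidate against the true objective $\|\widehat{A} - A\|_p$, the returned $\widehat{A}$ satisfies this bound.

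For the running time, each call to Algorithm \ref{algorithm:guessing_eps_approximation_general_lp} with $f = \poly(k)$ costs $f^{O(rk)} + 2^{O(rk\log(k/\eps))} + \poly(fnd/\eps) = 2^{O(rk\log(k/\eps))} + \poly(nd/\eps)$, since the $f^{O(rk)}$ term absorbs into $2^{O(rk \log k)}$. The \textsc{PolyKErrorNotBiCriteriaApproximation} call and the rank-$2k$ SVD are both $\poly(nd)$, and we invoke Algorithm \ref{algorithm:guessing_eps_approximation_general_lp} a total of $O(\log(nd)/\eps)$ times, multiplying the per-call cost by only a $\poly(nd/\eps)$ factor, which keeps the bound at $2^{O(rk\log(k/\eps))} + \poly(nd/\eps)$. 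The main obstacle (really the only subtle point) is checking that the additive $O(\eps/f)\|C\|_p$ term genuinely collapses to $O(\eps) OPT$, which crucially relies on using an $f$-approximation initializer and matching the same $f$ in the call to Algorithm \ref{algorithm:guessing_eps_approximation_general_lp}; everything else is bookkeeping.
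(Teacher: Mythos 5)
Your proposal is correct and follows essentially the same route as the paper's proof: bound $OPT_{C,2k}$ by $OPT$ via the rank-$2k$ matrix $A_k^* - B$, collapse the additive $O(\eps/f)\|C\|_p$ term to $O(\eps)\,OPT$ using $\|C\|_p \leq \poly(k)\cdot OPT$ with matching $f = \poly(k)$, and account for the $O(\log(nd)/\eps)$ guesses of $\widehat{OPT}$ in the running time. The only difference is that you spell out the $\widehat{OPT}$-sweep justification inline, which the paper delegates to a remark.
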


\begin{proof}
Let $A_k$ be the optimal rank-$k$ approximation for $A$, and suppose $$(1 - O(\eps))OPT_{C, 2k} \leq \widehat{OPT} \leq (1 + O(\eps)) OPT_{C, 2k}$$ where $OPT_{C, 2k}$ is the error from the optimal rank-$2k$ approximation for $C$ (as the algorithm will indeed guess such an $\widehat{OPT}$ at some point). If this is the case, then when we call $$\textsc{GuessingAdditiveEpsApproximation}(C, 2k, \eps, f, \widehat{OPT}, p)$$ we obtain $U, V$ such that
\begin{equation}
\begin{split}
\|UV - C\|_p
& \leq (1 + O(\eps)) \min_{C_{2k} \text{ rank } 2k} \|C_{2k} - C\|_p + O\Big(\frac{\eps}{f}\Big) \|C\|_p \\
& \leq (1 + O(\eps)) \|(A_k - B) - (A - B)\|_p + O\Big(\frac{\eps}{\poly(k)}\Big) \cdot \poly(k)\|A - A_k\|_p \\
& \leq (1 + O(\eps)) \|A_k - A\|_p + O(\eps) \|A - A_k\|_p \\
& \leq (1 + O(\eps)) \|A_k - A\|_p
\end{split}
\end{equation}
where the second inequality is because $A_k - B$ has rank at most $2k$, and because $B$ is the result of a $\poly(k)$-approximation for $A$. Since $\|UV - C\|_p = \|UV - (A - B)\|_p = \|(UV + B) - A\|_p$, this completes the proof of correctness for Algorithm \ref{algorithm:get_eps_approximation_after_initialization_general_lp}. Note that $UV + B$ has rank at most $3k$ because $UV$ has rank at most $2k$ and $B$ has rank at most $k$.

Now, we analyze the running time of Algorithm \ref{algorithm:get_eps_approximation_after_initialization_general_lp}. First, the running time of Algorithm \ref{algorithm:poly_k_rank_exactly_k} (used to find $B$) is $\poly(nd)$. The running time of $\textsc{GuessingAdditiveEpsApproximation}$ is at most $f^{O(rk)} + 2^{O(rk \log(k/\eps))} + \poly(fnd/\eps)$, and since $f = \poly(k)$, this is
$$2^{O(rk\log k)} + 2^{O(rk\log(k/\eps))} + \poly(nd/\eps) = \boxed{2^{O(rk \log(k/\eps))} + \poly(nd/\eps)}$$
The number of times Algorithm \ref{algorithm:get_eps_approximation_after_initialization_general_lp} is called is $O((\log nd)/\eps)$, meaning the overall running time of Algorithm \ref{algorithm:get_eps_approximation_after_initialization_general_lp} is also the above. All other steps in Algorithm \ref{algorithm:get_eps_approximation_after_initialization_general_lp} can be done in polynomial time.
\end{proof}

\newpage
\appendix
\section{Hardness for $\ell_p$ Low Rank Approximation with Additive Error, based on \cite{ptas_for_lplra}} \label{appendix:additive_error_hardness}

\subsection{Background: Small Set Expansion Hypothesis}

The hardness proof in \cite{ptas_for_lplra} proceeds by a reduction from the Small Set Expansion problem --- our presentation of this problem follows that of \cite{ptas_for_lplra}.

\begin{problem} [Small Set Expansion Problem - As Presented in \cite{ptas_for_lplra}]
\label{problem:small_set_expansion_problem}
Let $G = (V, E)$ be a regular graph. For any subset $S \subset V$, the measure of $S$ is defined to be $\mu(S) := |S|/|V|$. The distribution $G(S)$ over the vertices of $G$ is generated as follows: first a uniformly random vertex $x \in S$ is selected, then a uniformly random neighbor $y$ of $x$ is selected (as a sample of $G(S)$). For $S \subset V$, the expansion of $S$ is defined to be $\Phi_G(S) := \text{Pr}_{y \sim G(S)}[y \not\in S]$. Finally, for $\delta \in (0, 1)$, $\Phi_G(\delta) := \min_{S \subset V \mid \mu(S) \leq \delta} \Phi_G(S)$.

The Small Set Expansion Problem is as follows: given a graph $G = (V, E)$ and $\eps, \delta \in (0, 1)$, the goal is to decide whether $\Phi_G(\delta) \leq \eps$ or $\Phi_G(\delta) \geq 1 - \eps$.
\end{problem}

In other words, $\Phi_G(S)$ is the proportion of neighbors that $S$ has, which do not belong to $S$ --- this can be considered as the ``expansion'' of $S$ --- and $\Phi_G(\delta)$ is the smallest expansion among all subsets of $V$ which have at most $\delta |V|$ vertices. The Small Set Expansion Hypothesis is as follows:

\begin{conjecture} [Small Set Expansion Hypothesis --- Conjecture 1.3 of \cite{rs_2010_sseh_original}]
For any fixed $\eps \in (0, 1)$, there exists $\delta \in (0, 1)$ such that it is NP-hard to decide whether $\Phi_G(\delta) \leq \eps$ or $\Phi_G(\delta) \geq 1 - \eps$.
\end{conjecture}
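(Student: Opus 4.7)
The statement is the Small Set Expansion Hypothesis of Raghavendra--Steurer, which is an unresolved conjecture rather than a lemma --- the paper cites it as an assumption and makes no attempt to prove it. Accordingly I cannot honestly sketch a proof plan that I believe would succeed; what I can do is describe what a proof would have to accomplish and explain the structural obstruction. A proof would consist of a polynomial-time many-one reduction from some canonical NP-hard gap problem (plausibly gap-Label-Cover, produced by the PCP theorem composed with Raz's parallel repetition) to the gap small-set-expansion problem with the parameters $(\eps, \delta)$ promised in the statement: the reduction has to simultaneously exhibit a small non-expanding set in the yes case and rule out the existence of any such set in the no case.

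The plan for such a reduction would be as follows. First I would fix $\eps$ and attempt to choose $\delta$ as a function of the Label Cover soundness parameter. Given a Label Cover instance $\Psi$, I would construct a regular graph $G_\Psi$ by composing $\Psi$ with a long-code or short-code gadget; in the yes case, a planted ``encoded-assignment'' vertex set $S^\star$ of measure at most $\delta$ would have $\Phi_{G_\Psi}(S^\star) \le \eps$, giving completeness. For soundness, I would try to show that in the no case, \emph{every} subset $S \subset V(G_\Psi)$ with $\mu(S) \le \delta$ satisfies $\Phi_{G_\Psi}(S) \ge 1 - \eps$, by invoking a hypercontractivity or Borell-type Gaussian noise-stability inequality on the gadget so that any purported low-expansion set decodes back to a labeling of $\Psi$ satisfying more constraints than the soundness allows.

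The hard part --- the step on which this plan breaks down, and the reason the conjecture has remained open for over a decade --- is the uniform soundness analysis: one must simultaneously rule out \emph{every} small candidate non-expander of any measure up to $\delta$, not merely a structured family arising from the gadget, and the available noise-stability tools do not yet deliver a $(1-\eps)$ expansion gap at that level of generality. Moreover, Raghavendra and Steurer showed in \cite{rs_2010_sseh_original} that SSEH implies the Unique Games Conjecture, so any successful execution of this plan would simultaneously resolve UGC, which is itself a major open problem. For these reasons the paper takes the pragmatic route of \emph{assuming} SSEH and devotes the subsequent appendix to reducing \emph{from} small-set expansion to $\ell_p$ low-rank approximation rather than to establishing SSEH itself.
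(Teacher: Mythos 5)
You are correct: this statement is the Small Set Expansion Hypothesis, an open conjecture that the paper (like \cite{ptas_for_lplra}) merely states and cites as a hardness assumption, with no proof given or attempted. Your recognition of this, and your account of why no proof is currently available, matches the paper's treatment exactly.
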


\subsection{Background: Hardness Proof from \cite{ptas_for_lplra} --- $\ell_p$ Low Rank Approximation for $p \in (1, 2)$}

Now, we summarize the reduction due to \cite{ptas_for_lplra} from the Small Set Expansion Problem to $\ell_p$ low rank approximation for $p \in (1, 2)$. The reduction in Section 5 of \cite{ptas_for_lplra} shows that given $k \in \N$ and $p \in (1, 2)$, and given a matrix $A \in \R^{n \times d}$, it is NP-hard to find an $O(1)$-approximation for the error from the best rank-$k$ approximation for $A$ in the $\ell_p$-norm. Ultimately, \cite{ptas_for_lplra} reduces from Problem \ref{problem:small_set_expansion_problem} to finding the best rank-$(k - 1)$ approximation for a $k \times k$ matrix. The first step is to reduce from Problem \ref{problem:small_set_expansion_problem} to computing the $2\to p^*$ norm, where $p^*$ is the H{\"o}lder conjugate of $p$:

\begin{theorem}[Theorem 2.4 of \cite{bbhksz_2012_stoc_original_2_to_q_norm} for $q \in 2\Z \setminus \{2\}$, and Theorem 21 of \cite{ptas_for_lplra} for general $q \in (2, \infty)$]
\label{theorem:2_to_q_norm_expansion_reduction}
Let $G$ be a regular graph, $\lambda \in (0, 1)$ and $q \in (2, \infty)$. Let $M$ be the normalized adjacency matrix of $G$, and let $V_{\geq \lambda}(G)$ be the subspace spanned by the eigenvectors of $M$ with eigenvalue at least $\lambda$. Finally, let $P_{\geq \lambda}(G)$ be the orthogonal projection matrix onto $V_{\geq \lambda}(G)$. Then,
\begin{itemize}
    \item For all $\delta > 0$, $\eps > 0$, $\|P_{\geq \lambda}(G)\|_{2 \to q} \leq \eps/\delta^{(q - 2)/2q}$ implies that $\Phi_G(\delta) \geq 1 - \lambda - \eps^2$.
    \item There is a constant $a = a(q)$ such that for all $\delta > 0$, $\Phi_G(\delta) > 1 - a\lambda^{2q}$ implies $\|P_{\geq \lambda}(G)\|_{2\to q} \leq 2/\sqrt{\delta}$.
\end{itemize}
\end{theorem}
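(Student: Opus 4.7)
The plan is to handle the two directions separately; throughout I work with functions on $V$ equipped with the uniform probability measure $\mu$, so that $\|\mathbf{1}_S\|_p = \mu(S)^{1/p}$ and $L^p$-norms are nondecreasing in $p$.

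\emph{First bullet.} Fix $S \subset V$ with $\mu(S) \leq \delta$ and let $f = \mathbf{1}_S$. A direct computation using regularity of $G$ yields $1 - \Phi_G(S) = \langle f, Mf\rangle/\|f\|_2^2 = \langle f, Mf\rangle/\mu(S)$. Splitting $f$ along the eigenbasis of $M$, using $\sigma_i \leq 1$ on $V_{\geq \lambda}(G)$ and $\sigma_i \leq \lambda$ on its complement, one obtains
\begin{equation*}
\langle f, Mf\rangle \;\leq\; \|P_{\geq \lambda} f\|_2^2 \;+\; \lambda\,\mu(S).
\end{equation*}
Self-adjointness of $P_{\geq \lambda}$ gives the duality $\|P_{\geq \lambda}\|_{2\to q} = \|P_{\geq \lambda}\|_{q^* \to 2}$ with $q^* = q/(q-1)$, hence
\begin{equation*}
\|P_{\geq \lambda} f\|_2 \;\leq\; \|P_{\geq \lambda}\|_{2\to q}\, \|f\|_{q^*} \;=\; \|P_{\geq \lambda}\|_{2\to q}\, \mu(S)^{1-1/q}.
\end{equation*}
Dividing by $\mu(S)$ gives $\|P_{\geq \lambda}f\|_2^2/\mu(S) \leq \|P_{\geq \lambda}\|_{2\to q}^2\, \mu(S)^{(q-2)/q}$, and plugging in $\|P_{\geq \lambda}\|_{2\to q} \leq \eps/\delta^{(q-2)/(2q)}$ together with $\mu(S) \leq \delta$ collapses this to $\eps^2$. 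Therefore $\Phi_G(S) \geq 1 - \lambda - \eps^2$.

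\emph{Second bullet.} The plan is the contrapositive: assume $\|P_{\geq \lambda}\|_{2\to q} > 2/\sqrt{\delta}$ and exhibit $T$ with $\mu(T) \leq \delta$ and $\Phi_G(T) \leq 1 - a\lambda^{2q}$. Pick $f$ with $\|f\|_2 = 1$ so that $g := P_{\geq \lambda} f$ satisfies $\|g\|_q > 2/\sqrt{\delta}$. Since $g \in V_{\geq \lambda}(G)$, we have $\langle g, Mg\rangle \geq \lambda \|g\|_2^2$, so $g$ is already a continuous witness to low expansion; what remains is to round it to an indicator. I would use a level-set threshold: pick $\tau$ and set $T = \{x : |g(x)| \geq \tau\}$, where $\tau$ is chosen so that a Markov bound applied to $\|g\|_q^q$ forces $\mu(T) \leq \delta$ while the large $L^q/L^2$ ratio forces $\int_T g^2\, d\mu$ to remain a constant fraction of $\|g\|_2^2$. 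The low-expansion of $g$ then transfers to $\mathbf{1}_T$ by comparing $\langle g, Mg\rangle$ with $\langle g\cdot \mathbf{1}_T, M(g\cdot \mathbf{1}_T)\rangle$ and approximating $g\cdot \mathbf{1}_T$ by a constant multiple of $\mathbf{1}_T$, losing constants depending on $q$ and $\lambda$.

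The main obstacle is precisely this rounding, because extracting a clean combinatorial set from a continuous witness requires simultaneously controlling three quantities ($\mu(T)$, $\int_T g^2\, d\mu$, and the quadratic form $\langle \mathbf{1}_T, M\mathbf{1}_T\rangle$), and a naive single threshold loses substantial factors in $\lambda$ and $q$. The factor $\lambda^{2q}$ in the hypothesis on $\Phi_G(\delta)$ is precisely what is needed to absorb these losses, suggesting the right approach is a dyadic decomposition of $g$ across level sets $\{|g| \in [2^j, 2^{j+1})\}$: bound the expansion at each level (each costing a factor of $\lambda$), sum over the $O(q)$ relevant levels carrying the $L^q$-mass of $g$, and combine via the triangle inequality to obtain a single $T$ violating the small-set expansion assumption; tracking constants along the way yields the admissible $a(q)$. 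For $q \in 2\Z$, the identity $g^q = (g^2)^{q/2}$ permits the cleaner tensor-power argument of \cite{bbhksz_2012_stoc_original_2_to_q_norm} and sidesteps the dyadic step entirely; extending to real $q \in (2,\infty)$ is where the $q$-dependence of $a$ enters and is, by my reading, the main technical hurdle.
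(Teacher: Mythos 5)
First, a point of reference: the paper does not prove this theorem. It is imported as a black box from \cite{bbhksz_2012_stoc_original_2_to_q_norm} (even $q$) and \cite{ptas_for_lplra} (general $q$), so you are reproving cited prior work rather than an argument the paper supplies. Your proof of the first bullet is complete and correct, and it is the standard one: the identity $1-\Phi_G(S)=\langle \mathbf{1}_S, M\mathbf{1}_S\rangle/\mu(S)$ for regular graphs, the spectral split of $\mathbf{1}_S$ (on the complement of $V_{\geq\lambda}(G)$ the eigenvalues may be negative, but the upper bound by $\lambda$ still holds, so nothing breaks), and the self-adjointness/duality step $\|P_{\geq\lambda}\|_{2\to q}=\|P_{\geq\lambda}\|_{q^*\to 2}$ applied to $\|\mathbf{1}_S\|_{q^*}=\mu(S)^{1-1/q}$. (One caveat: this requires the expectation-norm convention you adopt, which is that of \cite{bbhksz_2012_stoc_original_2_to_q_norm}; the paper's later manipulations of $\|\cdot\|_{2\to p^*}$, e.g.\ the claim $\|I\|_{2\to p^*}=1$, use counting-measure norms. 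The two differ by a fixed power of $|V|$, so this only affects constants for a fixed graph, but the specific thresholds $\eps/\delta^{(q-2)/2q}$ and $2/\sqrt{\delta}$ are convention-dependent.)

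The second bullet has a genuine gap: everything after ``what remains is to round it to an indicator'' is the theorem, and it is not carried out. Moreover, one step of the plan fails as stated. You claim that for the Markov threshold $\tau$ guaranteeing $\mu(T)\leq\delta$, ``the large $L^q/L^2$ ratio forces $\int_T g^2\,d\mu$ to remain a constant fraction of $\|g\|_2^2$.'' It does not: take $g$ equal to $1$ everywhere plus a spike of height $H=n^{1/q}\delta^{-1/2}$ on one of $n$ points. Then $\|g\|_2^2\approx 1$ is carried entirely by the flat part, $\|g\|_q>2/\sqrt{\delta}$ is carried entirely by the spike, and any threshold achieving $\mu(T)\leq\delta<1$ retains only the spike, so $\int_T g^2\,d\mu = n^{2/q-1}\delta^{-1}=o(\|g\|_2^2)$ for $q>2$. (Such a $g$ need not lie in $V_{\geq\lambda}(G)$, but your thresholding step does not use that membership, so it cannot be rescued without a new idea.) The actual argument in \cite{bbhksz_2012_stoc_original_2_to_q_norm} does not try to preserve the $L^2$ mass of $g$ on $T$; it works in contrapositive via a local Cheeger inequality (their Lemma 2.5) stating that small-set expansion $\Phi_G(\delta)\geq 1-\eta$ forces every function that is ``analytically sparse'' (small $\|f\|_1\|f\|_\infty/\|f\|_2^2$) to have small Rayleigh quotient, and then shows that the truncation of a $2\to q$-norm witness in $V_{\geq\lambda}$ violates this; the exponent in $a\lambda^{2q}$ comes out of combining that lemma with H\"older, not from summing $O(q)$ dyadic levels ``each costing a factor of $\lambda$'' — a heuristic that does not correspond to an inequality, since $\langle g,Mg\rangle\geq\lambda\|g\|_2^2$ is a global fact about $g$ and does not localize to level sets. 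The second item should therefore be regarded as cited, not proved.
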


This implies that the $2 \to q$ norm is hard to approximate --- for completeness, we include the proof of this from \cite{ptas_for_lplra}. In the next subsection, we show that this proof can be modified to obtain hardness for a multiplicative $O(1)$-approximation for $\ell_p$ low rank approximation with additive $\frac{1}{2^{\poly(k)}}\|A\|_p$ error.

\begin{theorem}[Theorem 7 of \cite{ptas_for_lplra}]
\label{theorem:2_to_q_norm_hard}
Assuming the Small Set Expansion Hypothesis, for any $q \in (2, \infty)$, and $r > 1$, it is NP-hard to approximate the $\|\cdot \|_{2 \to q}$ norm within a factor $r$.
\end{theorem}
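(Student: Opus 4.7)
The plan is to reduce the Small Set Expansion problem, whose NP-hardness is guaranteed by the SSEH, to the problem of approximating the $2 \to q$ norm, using Theorem \ref{theorem:2_to_q_norm_expansion_reduction} as the key technical bridge. Given a regular graph $G$ and small parameters $\delta, \eta \in (0, 1)$, the SSEH says that distinguishing $\Phi_G(\delta) \leq \eta$ from $\Phi_G(\delta) \geq 1 - \eta$ is NP-hard. To $G$ I will associate the orthogonal projection matrix $P_{\geq \lambda}(G)$ for a carefully chosen threshold $\lambda$, so that its $2 \to q$ norm encodes the expansion status of $G$.

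Concretely, I would pick $\lambda$ so that $a(q)\lambda^{2q} > \eta$, making the second bullet of Theorem \ref{theorem:2_to_q_norm_expansion_reduction} applicable in the NO case and yielding the upper bound $\|P_{\geq \lambda}(G)\|_{2 \to q} \leq 2/\sqrt{\delta}$, while also insisting that $\lambda + \eta < 1$, so that the contrapositive of the first bullet (with $\eps$ satisfying $\eps^2 < 1 - \lambda - \eta$) yields the lower bound $\|P_{\geq \lambda}(G)\|_{2 \to q} > \eps/\delta^{(q-2)/(2q)}$ in the YES case. For $\eta$ small enough, both constraints can be met simultaneously, producing a constant multiplicative gap between the two cases. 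To amplify this to a factor-$r$ gap for an arbitrary $r > 1$, I would invoke the tensor-power trick: the matrix $P_{\geq \lambda}(G)^{\otimes t}$ has $\|\cdot\|_{2 \to q}$ norm at least $\|P_{\geq \lambda}(G)\|_{2 \to q}^t$, since the tensor product of an extremizing vector for $P_{\geq \lambda}(G)$ is a feasible witness for the tensored matrix. Applying Theorem \ref{theorem:2_to_q_norm_expansion_reduction} directly to the tensor-product graph $G^{\otimes t}$ (and using that small-set expansion tensorizes appropriately) produces a matching upper bound in the NO case. Taking $t = O(\log r)$ then amplifies the constant gap to at least $r$, while keeping the matrix size polynomial in $|V(G)|$, so polynomial-time reducibility is preserved.

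The main obstacle will be the tight analysis of how the two bounds in Theorem \ref{theorem:2_to_q_norm_expansion_reduction} combine under tensoring to yield a separation by a factor of exactly $r$. In particular, one must verify that the multiplicativity $\|A^{\otimes t}\|_{2 \to q} \approx \|A\|_{2 \to q}^t$ holds in both directions; the lower bound direction is immediate by considering tensor products of witness vectors, but the upper bound direction requires more care and rests on the relationship between $P_{\geq \lambda}(G)^{\otimes t}$ and $P_{\geq \lambda^t}(G^{\otimes t})$, together with a symmetrization-style argument (straightforward for even integer $q$, and more delicate for general $q \in (2, \infty)$, where it is established using the full strength of Theorem \ref{theorem:2_to_q_norm_expansion_reduction} applied to $G^{\otimes t}$). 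Once this amplification is in place, a hypothetical polynomial-time algorithm approximating $\|\cdot\|_{2 \to q}$ to within factor $r$ would decide the SSE gap problem in polynomial time, contradicting the SSEH and proving the theorem.
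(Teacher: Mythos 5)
Your reduction has a genuine gap at its very first step: with a single measure parameter $\delta$, the two bullets of Theorem \ref{theorem:2_to_q_norm_expansion_reduction} do not separate the two cases at all. In the YES case the contrapositive of the first bullet gives the lower bound $\|P_{\geq \lambda}(G)\|_{2 \to q} > \eps/\delta^{(q-2)/(2q)}$, while in the NO case the second bullet gives the upper bound $\|P_{\geq \lambda}(G)\|_{2 \to q} \leq 2/\sqrt{\delta}$. Since $(q-2)/(2q) = 1/2 - 1/q < 1/2$, the YES-case lower bound is \emph{asymptotically smaller} than the NO-case upper bound as $\delta \to 0$; making the ratio exceed $1$ would require $\delta^{1/q} > 2/\eps$, i.e.\ $\delta > (2/\eps)^q > 1$, which is impossible. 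So there is no choice of $\eps, \delta, \lambda$ for which your "constant multiplicative gap" exists, and consequently your tensor-power amplification has nothing to amplify (raising a ratio less than $1$ to the power $t$ only makes it smaller). The amplification step is also moot for a second reason: once a diverging gap is available, no amplification is needed, and the multiplicativity issues and the tensorization of small-set expansion that you flag as "the main obstacle" never arise.

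The paper's proof (following \cite{ptas_for_lplra}) fixes exactly this problem by invoking the quantitatively stronger consequence of the SSEH from \cite{rst_2010}: it is hard to distinguish $\Phi_G(\delta) < 0.1$ from $\Phi_G(\delta') > 1 - \eta$ where $\delta'$ is a \emph{much larger} measure than $\delta$ (taken to be $\delta^{(q-2)/(8q)}$). Applying the second bullet at measure $\delta'$ rather than $\delta$ replaces the NO-case bound $2/\sqrt{\delta}$ by $2/\sqrt{\delta'}$, which is now dominated by the YES-case bound $\eps/\delta^{(q-2)/(2q)}$; the resulting gap is a negative power of $\delta$ and therefore tends to infinity as $\delta \to 0$, giving factor-$r$ hardness for every constant $r$ in one shot. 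If you want to salvage your write-up, the essential missing idea is this two-scale use of the hypothesis, not the tensoring.
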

\begin{proof}
(From \cite{ptas_for_lplra}, Page 46). Using \cite{rst_2010}, the Small Set Expansion Hypothesis implies that for any sufficiently small numbers $0 < \delta \leq \delta'$, there is no polynomial time algorithm that can distinguish between the following cases for a given graph $G$:
\begin{itemize}
    \item Yes case: $\Phi_G(\delta) < 0.1$
    \item No case: $\Phi_G(\delta') > 1 - 2^{-a'\log(1/\delta')}$
\end{itemize}
In particular, for all $\eta > 0$, if we let $\delta' = \delta^{(q - 2)/8q}$ and make $\delta$ small enough, then in the No case $\Phi_G(\delta^{(q - 2)/8q}) > 1 - \eta$. (Since $q > 2$, $\delta' \to 0$ as $\delta \to 0$.)

Using Theorem \ref{theorem:2_to_q_norm_expansion_reduction}, in the Yes case we know $\|P_{\geq 1/2}(G)\|_{2 \to q} \geq 1/(10\delta^{(q - 2)/2q})$, while in the No case, if we choose $\delta$ sufficiently small so that $\eta$ is smaller than $a(1/2)^{2q}$, then we know that $\|P_{\geq 1/2}(G)\|_{2 \to q} \leq 2/\sqrt{\delta'} = 2/\delta^{(q - 2)/4q}$. The gap between the Yes case and the No case is at least $\delta^{-(q - 2)/4q}/20$, which goes to $\infty$ as $\delta$ decreases.
\end{proof}

This proof shows that computing $\|P_{\geq 1/2}(G)\|_{2 \to p^*}$ within a constant factor is sufficient to decide the Small Set Expansion Problem. Now (if $G$ is assumed to be a graph with $k$ vertices) the following steps are used to reduce the problem of computing the $2 \to p^*$ norm of the $k \times k$ matrix $P_{\geq 1/2}(G)$ to the problem of rank-$(k - 1)$ $\ell_p$ low rank approximation. For convenience, let $P := P_{\geq 1/2}(G)$.

First, computing the $2 \to p^*$ norm of $P$ is equivalent to computing the $2 \to p^*$ norm of some invertible matrix $P_1$ which can be constructed in $\poly(k)$ time:

\begin{lemma} [Claim 14 of \cite{ptas_for_lplra}]
\label{lemma:reduction_step_1}
Let $A$ be a nonzero $n \times d$ matrix. For any $p, q \in (1, \infty)$ and any $\eps > 0$, there is an invertible and polynomial-time computable $\max(n, d) \times \max(n, d)$ matrix $B$ such that $(1 - \eps) \|A\|_{p \to q} \leq \|B\|_{p \to q} \leq (1 + \eps) \|A\|_{p \to q}$.
\end{lemma}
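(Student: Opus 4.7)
The plan is to first pad $A$ with zeros so that it becomes an $N \times N$ matrix $A'$, where $N = \max(n,d)$, and then take $B = A' + \delta I_N$ for a carefully chosen scalar $\delta > 0$. The padding step will preserve the $p \to q$ operator norm exactly, and $\delta$ will be chosen small enough that (i) the triangle inequality gives a multiplicative $(1 \pm \eps)$ approximation, while (ii) $-\delta$ avoids the finitely many eigenvalues of $A'$, so that $B$ is invertible.

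\textbf{Step 1: Zero-padding preserves the norm.} If $n > d$, append $n - d$ zero columns to $A$; if $n < d$, append $d - n$ zero rows; call the resulting $N \times N$ matrix $A'$. In either case $\|A'\|_{p \to q} = \|A\|_{p \to q}$: appending zero rows simply leaves $\|A'x\|_q = \|Ax\|_q$, while for appended zero columns any test vector $x \in \R^N$ may be replaced by its restriction $x'$ to the first $d$ coordinates, which satisfies $A'x = Ax'$ and $\|x'\|_p \leq \|x\|_p$, so the supremum is already attained on the original coordinate block.

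\textbf{Step 2: Choosing $\delta$ to preserve the norm up to $(1 \pm \eps)$.} By the triangle inequality for operator norms,
$$\bigl|\,\|A' + \delta I_N\|_{p \to q} - \|A'\|_{p \to q}\,\bigr| \;\leq\; \delta \cdot \|I_N\|_{p \to q}.$$
A direct calculation gives $\|I_N\|_{p \to q} = 1$ when $q \geq p$ and $\|I_N\|_{p \to q} = N^{1/q - 1/p}$ when $q < p$ (using that $\|x\|_q \leq N^{\max(0,1/q-1/p)} \|x\|_p$, tight at $e_1$ or at the all-ones vector); in particular it is an explicit constant depending only on $N,p,q$. Since $A$ is nonzero, $L := \max_{i,j} |A_{ij}|$ is positive, and taking the standard basis vector $e_j$ on the corresponding column gives $\|A\|_{p \to q} \geq \|Ae_j\|_q / \|e_j\|_p \geq L$. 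Thus any
$$\delta \;\leq\; \delta_{\max} \;:=\; \frac{\eps \cdot L}{\|I_N\|_{p \to q}}$$
forces $(1 - \eps)\|A\|_{p \to q} \leq \|B\|_{p \to q} \leq (1 + \eps)\|A\|_{p \to q}$.

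\textbf{Step 3: Invertibility in polynomial time.} The matrix $A' + \delta I_N$ is singular precisely when $-\delta$ is an eigenvalue of $A'$, and $A'$ has at most $N$ distinct eigenvalues. Consider the $N+1$ candidate values $\delta_j := \delta_{\max}/j$ for $j = 1, \ldots, N+1$: by pigeonhole at least one of them is not forbidden, and for each candidate invertibility can be checked in polynomial time (e.g.\ by computing the rank or the determinant of $A' + \delta_j I_N$ over the rationals). Return the first $\delta_j$ that yields an invertible $B = A' + \delta_j I_N$; by Steps~1--2 this $B$ satisfies the claimed inequalities and is computable in $\poly(nd)$ time.

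\textbf{Main obstacle.} There is essentially no serious obstacle: once the norm-preserving padding and the triangle inequality are in place, the only mild subtlety is the need for a strictly positive lower bound on $\|A\|_{p \to q}$ in order to calibrate $\delta$ without shrinking it to $0$; this is supplied by $A \neq 0$ and the trivial bound $\|A\|_{p \to q} \geq \max_{i,j}|A_{ij}|$. The remaining ``hard'' point is simply observing that one has enough room in the interval $(0, \delta_{\max}]$ to dodge the at most $N$ eigenvalues of $A'$, which a linear scan through $N+1$ candidates handles deterministically.
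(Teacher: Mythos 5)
Your proof is correct and follows essentially the same route as the paper's: pad $A$ with zero rows or columns to make it square (which preserves $\|\cdot\|_{p\to q}$ exactly), then add a diagonal perturbation of magnitude at most $\eps M/\|I\|_{p\to q}$, where $M$ is the largest entry in absolute value, and control the change via the triangle inequality together with the lower bound $\|A\|_{p\to q}\geq M$. Your pigeonhole scan over $N+1$ candidate values of $\delta$ to guarantee invertibility is a small extra refinement of a point the paper's one-line sketch (which fixes $\delta=\eps M/\|I\|_{p\to q}$ outright) leaves implicit, but it does not change the substance of the argument.
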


The proof of the above lemma proceeds as follows: first, $A$ is made into a square matrix by adding rows/columns of zeros --- then, $\frac{\eps M}{\|I\|_{p \to q}}$ is added to every diagonal entry to make it invertible, where $M$ is the absolute value of the largest entry of $A$. Hence, in our case, we will add $\frac{\eps M}{\|I\|_{2 \to p^*}}$ to every diagonal entry of $P$ to obtain $P_1$. The next step, after obtaining the invertible matrix $P_1$, is to compute a matrix $P_2$ such that finding $\|P_2\|_{p \to p^*}$ allows us to find $\|P_1\|_{2 \to p^*}$:

\begin{lemma} [Claim 13 of \cite{ptas_for_lplra}]
\label{lemma:reduction_step_2}
For any $A \in \R^{n \times d}$ and $p \in (1, \infty)$, if $p^*$ is its H{\"o}lder conjugate, then $\|AA^T\|_{p \to p^*} = \|A\|_{2 \to p^*}^2$.
\end{lemma}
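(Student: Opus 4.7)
The plan is to use the standard duality characterization of the $p \to q$ operator norm, turning it into a bilinear form, and then use Cauchy--Schwarz to split the bilinear form into two copies of an $\ell_p \to \ell_2$ norm. More precisely, for $p \in (1, \infty)$ and with $p^*$ its Hölder conjugate (so $(p^*)^* = p$), we have
$$\|M\|_{p \to p^*} \;=\; \sup_{\|x\|_p \leq 1,\, \|y\|_p \leq 1} y^T M x,$$
since the dual of the $\ell_{p^*}$ norm is the $\ell_p$ norm. I would first specialize this to $M = AA^T$ to get $y^T AA^T x = \langle A^T y, A^T x\rangle$, and then the upper bound $\|AA^T\|_{p \to p^*} \leq \|A^T\|_{p \to 2}^2$ follows immediately from Cauchy--Schwarz applied to the two vectors $A^T x$ and $A^T y$ in Euclidean space, taking the supremum over $x$ and $y$ independently. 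A second small duality computation gives $\|A^T\|_{p \to 2} = \|A\|_{2 \to p^*}$ (this is the standard identity that for any matrix, the $p \to q$ norm equals the $q^* \to p^*$ norm of the transpose, with $2^* = 2$), which converts the upper bound into the form stated in the lemma.

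For the matching lower bound, I would pick an optimizer $x^* \in \R^n$ of the $\ell_p \to \ell_2$ problem $\sup_{\|x\|_p \leq 1} \|A^T x\|_2$, so that $\|A^T x^*\|_2 = \|A^T\|_{p \to 2} = \|A\|_{2 \to p^*}$. Then I would evaluate $\|AA^T\|_{p \to p^*}$ on the pair $x = y = x^*$ in the bilinear characterization, which yields
$$(x^*)^T AA^T x^* \;=\; \|A^T x^*\|_2^2 \;=\; \|A\|_{2 \to p^*}^2,$$
giving $\|AA^T\|_{p \to p^*} \geq \|A\|_{2 \to p^*}^2$. Combining the two bounds produces the claimed equality. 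Strictly speaking, the optimizer $x^*$ exists by compactness of the unit ball in finite dimensions, but even if one were squeamish about this, one could simply take an $\eps$-optimizer and let $\eps \to 0$.

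The argument is essentially a one-screen computation, and I do not foresee any genuine obstacle; the only thing to be careful about is the bookkeeping of which norm lives on which side. In particular, I want to make sure that when I pass from $y^T (AA^T) x$ to $\langle A^T y, A^T x\rangle$ in order to apply Cauchy--Schwarz, I am using the symmetry of $AA^T$, so the same $\ell_p \to \ell_2$ quantity appears on both sides; otherwise the upper bound would involve two different-looking quantities. The identity $\|A^T\|_{p \to 2} = \|A\|_{2 \to p^*}$ is the only nontrivial-looking step, and I would justify it by the same duality rewriting $\|A^T\|_{p \to 2} = \sup_{\|x\|_p \le 1,\, \|z\|_2 \le 1} z^T A^T x = \sup_{\|x\|_p \le 1,\, \|z\|_2 \le 1} x^T A z = \|A\|_{2 \to p^*}$, using that the dual of $\ell_2$ is $\ell_2$ and of $\ell_{p^*}$ is $\ell_p$.
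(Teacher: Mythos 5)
Your proof is correct, and it is the standard argument for this identity (duality to turn the $p \to p^*$ norm into a bilinear form, Cauchy--Schwarz for the upper bound, and the diagonal choice $x = y = x^*$ for the lower bound); the paper itself does not reprove the lemma but cites Claim 13 of the referenced work, whose proof proceeds the same way. All the bookkeeping you flag (symmetry of $AA^T$, the identity $\|A^T\|_{p \to 2} = \|A\|_{2 \to p^*}$, existence of the optimizer by compactness) checks out.
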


Hence, by this claim, the appropriate $P_2$ will simply be $P_1P_1^T$. Finally, it is useful to reduce this problem to computing $\min_{p^* \to p}(P_3)$ of some well-chosen matrix $P_3$, since computing $\min_{p^* \to p}$ was shown by \cite{ptas_for_lplra} to be equivalent to rank-$(k - 1)$ $\ell_p$-low rank approximation:

\begin{lemma} [Fact 4 of \cite{ptas_for_lplra}]
\label{lemma:reduction_step_3}
For $p, q \in (1, \infty)$, if $A$ is an invertible matrix, then $\min_{p \to q}(A^{-1}) = (\|A\|_{q \to p})^{-1}$.
\end{lemma}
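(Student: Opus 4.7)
The plan is to prove the identity by a direct change-of-variables argument, using the invertibility of $A$ to turn a minimum over $x \neq 0$ into a maximum over $y \neq 0$ after substitution.

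First I would unfold the definitions. By definition,
$$\min_{p \to q}(A^{-1}) = \inf_{x \neq 0} \frac{\|A^{-1}x\|_q}{\|x\|_p},$$
while
$$\|A\|_{q \to p} = \sup_{y \neq 0} \frac{\|Ay\|_p}{\|y\|_q}.$$
The whole content of the lemma is to connect these two expressions.

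Next, since $A$ is invertible, the map $x \mapsto A^{-1}x$ is a bijection of $\R^n \setminus \{0\}$ onto itself. So I would substitute $y = A^{-1}x$, equivalently $x = Ay$, in the first expression:
$$\min_{p \to q}(A^{-1}) = \inf_{y \neq 0} \frac{\|y\|_q}{\|Ay\|_p} = \left(\sup_{y \neq 0} \frac{\|Ay\|_p}{\|y\|_q}\right)^{-1} = \|A\|_{q \to p}^{-1}.$$
The inversion step is legitimate because $Ay \neq 0$ for every $y \neq 0$ (again by invertibility), so the ratio $\|Ay\|_p / \|y\|_q$ is strictly positive and the supremum is finite (it is a continuous function on the compact unit sphere in $\ell_q$), making the reciprocal well-defined.

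There is essentially no main obstacle here; the one thing to be slightly careful about is confirming that the infimum is actually attained (or at least that reciprocating the supremum is valid), which follows from compactness of the unit sphere together with the positivity argument above. Rearranging gives the claimed identity $\min_{p \to q}(A^{-1}) = (\|A\|_{q \to p})^{-1}$.
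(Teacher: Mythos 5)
Your proof is correct: the substitution $x = Ay$, justified by invertibility, turns the infimum defining $\min_{p\to q}(A^{-1})$ into the reciprocal of the supremum defining $\|A\|_{q\to p}$, and your remarks about positivity and finiteness of that supremum make the reciprocation legitimate. The paper itself offers no proof here---it simply imports the statement as Fact 4 of the cited reference---and your change-of-variables argument is the standard one, so there is nothing to add.
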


\begin{lemma} [Lemma 1/Lemma 27 of \cite{ptas_for_lplra} - Equivalence of $\min_{p^* \to p}$ and Rank-$(k - 1)$ $\ell_p$ Low Rank Approximation]
\label{lemma:equivalence_of_lp_norm_lra_min_p_to_p}
Let $p \in (1, \infty)$ and let $p^*$ be the H{\"o}lder conjugate of $p$. Let $A \in \R^{n \times d}$ with $n \geq d$ and $k = d - 1$. Then,
$$\min_{U \in \R^{n \times k}, V \in \R^{k \times d}} \|UV - A\|_p = \min_{x \in \R^d, \|x\|_{p^*} = 1} \|Ax\|_p$$
\end{lemma}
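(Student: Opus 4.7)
The plan is to prove the two inequalities separately, and the arguments are dual in nature: the $\geq$ direction uses H\"older's inequality applied to a vector in the right null space of $UV$, and the $\leq$ direction builds an optimal $UV$ row-by-row as the $\ell_p$-nearest point to $A^i$ in the hyperplane $\{y : x \cdot y = 0\}$ where $x$ is the minimizer on the right.

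For the $\geq$ direction, fix any $U \in \R^{n \times k}, V \in \R^{k \times d}$ with $k = d - 1$, and let $M := UV$. Since $\mathrm{rank}(M) \leq d - 1$, the right null space of $M$ has dimension at least $1$, so we can choose $x \in \R^d$ with $Mx = 0$ and $\|x\|_{p^*} = 1$. For any matrix $B \in \R^{n \times d}$, H\"older's inequality applied row-wise gives $|B^i \cdot x| \leq \|B^i\|_p \, \|x\|_{p^*}$, so raising to the $p$-th power and summing yields $\|Bx\|_p \leq \|B\|_p \, \|x\|_{p^*}$. Applying this with $B = M - A$ and using $Mx = 0$, I get $\|Ax\|_p = \|(M - A)x\|_p \leq \|UV - A\|_p$. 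Since $x$ is feasible on the right, this gives the $\geq$ direction.

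For the $\leq$ direction, let $x^*$ achieve the minimum on the right, so $\|x^*\|_{p^*} = 1$. I would construct $M$ row-by-row: for each $i$, define $M^i$ to be the point on the hyperplane $H := \{y \in \R^d : x^* \cdot y = 0\}$ that is closest to $A^i$ in $\ell_p$. The key identity (see the final paragraph) is that this distance equals $|A^i \cdot x^*| / \|x^*\|_{p^*} = |A^i \cdot x^*|$. Stacking the rows produces $M \in \R^{n \times d}$ with $Mx^* = 0$, hence $\mathrm{rank}(M) \leq d - 1 = k$, so we can factor $M = UV$ with $U \in \R^{n \times k}$ and $V \in \R^{k \times d}$. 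By construction
\[
\|UV - A\|_p^p = \sum_{i = 1}^n \|M^i - A^i\|_p^p = \sum_{i = 1}^n |A^i \cdot x^*|^p = \|Ax^*\|_p^p,
\]
which proves the $\leq$ direction.

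The only nonroutine ingredient is the $\ell_p$ hyperplane-distance formula $\min\{\|z\|_p : x^* \cdot z = c\} = |c| / \|x^*\|_{p^*}$; I expect this to be the main place that requires care, though it is classical. The lower bound is an immediate H\"older estimate $|c| = |x^* \cdot z| \leq \|x^*\|_{p^*} \|z\|_p$, and the upper bound is achieved by the explicit H\"older-tight choice $z_j = c \cdot \mathrm{sign}(x^*_j) \, |x^*_j|^{p^* - 1} / \|x^*\|_{p^*}^{p^*}$, since this vector satisfies $x^* \cdot z = c$ and has $\|z\|_p^p = |c|^p \sum_j |x^*_j|^{(p^* - 1)p} / \|x^*\|_{p^*}^{p \cdot p^*} = |c|^p / \|x^*\|_{p^*}^{p^*}$ using $(p^* - 1)p = p^*$. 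Applying this with $c = -A^i \cdot x^*$ and $\|x^*\|_{p^*} = 1$ supplies the required construction in the previous paragraph, completing the proof.
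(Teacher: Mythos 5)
Your proof is correct and follows essentially the same route as the paper's (the argument the paper gives for the analogous Lemma \ref{lemma:equivalence_of_sum_of_p_norms_min_p_to_1}, and the cited Lemma 27 of \cite{ptas_for_lplra}): the lower bound via a unit-$\ell_{p^*}$ vector in the null space of $UV$ plus row-wise H\"older, and the upper bound via the H\"older-tight minimizer $z_j \propto \mathrm{sign}(x^*_j)|x^*_j|^{p^*-1}$ of $\|z\|_p$ over the affine hyperplane. The only blemish is the exponent in your last display — $\|z\|_p^p$ equals $|c|^p/\|x^*\|_{p^*}^{p}$ rather than $|c|^p/\|x^*\|_{p^*}^{p^*}$, since $pp^*-p^*=p$ — which is harmless here because $\|x^*\|_{p^*}=1$.
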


Hence, $P_3$ is in fact $P_2^{-1}$, and the $2 \to p^*$ norm of $P_{\geq 1/2}(G)$ can be approximated up to a constant factor if $\min_{p^* \to p}(P_3) = \min_{M \text{ rank }k - 1} \|M - P_3\|_p$ can be approximated up to a constant factor. In \cite{ptas_for_lplra}, it is also mentioned that assuming the Exponential-Time Hypothesis \cite{ip_2001_eth_original} together with SSEH implies that the running time required to obtain an $O(1)$-approximation is $2^{k^{\Omega(1)}}$ --- this is also true for our hardness result, as we mention in the proof of Theorem \ref{theorem:hardness_lp_additive_error}.

\subsection{Deterministic Embedding from $\ell_p^n$ to $\ell_1^{n^{O(\log n)}}$}

We can construct a matrix that can be used to embed $\ell_p^n$ into $\ell_1^{r}$, where $r = n^{O(\log n)}$ --- the matrix can be constructed deterministically, in $n^{O(\log n)}$ time. This embedding is then used to reduce $\ell_p$ low rank approximation to $\ell_{1, p}$ low rank approximation, and then to reduce $\ell_{1, p}$ low rank approximation to constrained $\ell_1$ low rank approximation. Throughout this section, we will use the following notation: for two functions $f, g > 0$, $f = O_p(g)$ if $f \leq C_p g$ for some constant $C_p$ depending on $p$. We also define $\Omega_p$ and $\Theta_p$ similarly. In addition, using the notation of \cite{knw_10_approximating_lp_norm_p_stable}, we say that $x \approx_\eps y$ for two $x, y \in \R$ if $|x - y| \leq \eps$.

\begin{theorem}[Deterministic Embedding of $\ell_p^n$ into $\ell_1^{r}$]
\label{theorem:embedding_lp_into_l1_deterministic}
Let $n \in \N$, and $p \in (1, 2)$. Then, there exists a matrix $R \in \R^{r \times n}$, where $r = n^{O(\log n)}$, such that for all $x \in \R^n$,
$$\Omega_p(1) \|x\|_p \leq \|Rx\|_1 \leq O_p(1) \|x\|_p$$
$R$ can be constructed deterministically in $n^{O(\log n)}$ time.
\end{theorem}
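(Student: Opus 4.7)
I plan to construct $R$ via the observation stated just before the theorem: it suffices to build real-valued random variables $R_1,\dots,R_n$ on a finite sample space $\Omega$ of size $r=n^{O(\log n)}$ satisfying
$$\Omega_p(1)\,\|x\|_p \;\leq\; E\Big[\,\big|\textstyle\sum_{j=1}^n R_j x_j\big|\,\Big] \;\leq\; O_p(1)\,\|x\|_p \qquad \forall\, x\in\R^n,$$
because then $R_{i,j}:=\Prob{\omega_i}\,R_j(\omega_i)$ gives an $r\times n$ matrix with $\|Rx\|_1 = E[|X|]$ for $X:=\sum_j R_j x_j$ and thus inherits the two-sided bound. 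In the idealized case where the $R_j$ are fully independent standard $p$-stable random variables, $p$-stability gives $X \stackrel{d}{=} \|x\|_p Z$ for a single standard $p$-stable $Z$, so $E[|X|] = \mu_p\|x\|_p$ with $\mu_p := E[|Z|] \in (0,\infty)$ (finite precisely because $p>1$). To make $\Omega$ finite, I would truncate each $R_j$ at magnitude $\poly(n)$ and round to the nearest multiple of $1/\poly(n)$, absorbing the perturbation using the tail bound $\Prob{|Z|>t}=O_p(t^{-p})$ recorded in \cite{l1_lower_bound_and_sqrtk_subset}; both operations shift $E[|X|]$ by at most $1/\poly(n)\cdot\|x\|_p$.

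The difficulty with full independence is that $n$ rounded $p$-stables force $|\Omega|=2^{\Omega(n)}$, which is much too large. To shrink $r$, I would instead draw the $R_j$ from an $h$-wise independent distribution with $h=\Theta(\log n)$, using the KM construction of \cite{km93_small_sample_space}, which supports $n$ such variables on a sample space of size $n^{O(\log n)}$. The crux of the argument, and the step I expect to be the main obstacle, is showing that $E[|X|] = \Theta_p(1)\|x\|_p$ persists under only $\Theta(\log n)$-wise independence. The plan is to invoke the FT-mollification machinery of \cite{knw_10_approximating_lp_norm_p_stable, knpw11_pth_moment}: one approximates $t\mapsto|t|$ by a smooth function $\widetilde{\Psi}$ whose dependence on $X$ is captured by a bounded number of derivatives, so that $E[\widetilde{\Psi}(X)]$ is essentially a polynomial in the first $\Theta(\log n)$ moments of $X$ and therefore coincides, up to an additive $o_p(1)\|x\|_p$ error, with its fully independent value. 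Combining this with the tail truncation of the previous paragraph converts $E[\widetilde{\Psi}(X)]$ back to $E[|X|]$ at the same accuracy; the bookkeeping that keeps this bound uniform over all $x\in\R^n$, rather than only with high probability, is the delicate part.

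Once the expectation identity holds, the Matoušek observation produces the desired $R\in\R^{r\times n}$ with $r=n^{O(\log n)}$. For the running time I would enumerate the $r$ sample points of the KM construction, read off each $\Prob{\omega_i}$ from that construction, evaluate each rounded truncated $p$-stable quantile $R_j(\omega_i)$ to $1/\poly(n)$ precision by numerical inversion of the CDF representation in \cite{nolan_97_pstable_density}, and multiply; the total work is $n^{O(\log n)}$ and is entirely deterministic, as claimed.
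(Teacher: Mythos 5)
Your architecture is the same as the paper's: the Matou\v{s}ek observation reducing the problem to constructing $R_1,\dots,R_n$ on a small sample space with $E[|\sum_j R_jx_j|]=\Theta_p(1)\|x\|_p$, truncated and rounded $p$-stables, $O(\log n)$-wise independence via \cite{km93_small_sample_space}, the limited-independence machinery of \cite{knw_10_approximating_lp_norm_p_stable,knpw11_pth_moment}, and numerical evaluation of the stable c.d.f.\ from \cite{nolan_97_pstable_density} for the deterministic running time. The one place where your plan as stated would not go through is the step you yourself flag as the crux. You propose to FT-mollify $t\mapsto |t|$ and argue that $E[\widetilde{\Psi}(X)]$ is controlled by the first $\Theta(\log n)$ moments of $X$. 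But the derandomization lemma you would invoke (Lemma 2.2 of \cite{knw_10_approximating_lp_norm_p_stable}, restated in the paper as Lemma \ref{lemma:smooth_function_removing_full_independence}) requires $\|f^{(\ell)}\|_\infty=O(\alpha^\ell)$ for \emph{all} $\ell\geq 0$, including $\ell=0$; any mollification of the unbounded function $|t|$ violates this, and rescaling to restore boundedness inflates the error past what you can afford. A moment-based version fares no better: even after truncating each $R_j$ at $D=\poly(n)$, the second moment of a truncated $p$-stable is $\Theta(D^{2-p})=\poly(n)$, so ``a polynomial in the first $\Theta(\log n)$ moments of $X$'' carries enormous coefficients and does not localize $E[|X|]$ to within $O_p(1)$.

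The paper avoids this by proving the two sides of $E[|X|]=\Theta_p(1)$ by different arguments, neither of which mollifies $|t|$. For the lower bound it mollifies the \emph{bounded} indicator $1_{[-b,b]}$ (Lemma \ref{lemma:smooth_approximations_to_interval}), combines it with anti-concentration of $k$-wise independent $p$-stable linear forms (Lemma \ref{lemma:anti_concentration_of_p_stables}) to conclude $\Prob{|X|\geq b}=\Omega(1)$, and hence $E[|X|]=\Omega(1)$. For the upper bound it writes $E[|X|]=\int_0^{Dn}\Prob{|X|\geq t}\,dt$ and compares the integrand pointwise to the tail of a genuine standard $p$-stable via Lemma 23 of \cite{knpw11_pth_moment} (Lemma \ref{lemma:tail_probability_small_independence_p_stable}), whose error term $O(k^{-1/p}/(1+t^{p+1})+k^{-2/p}/(1+t^2)+2^{-\Omega(k)})$ is integrable over $[0,Dn]$ once $k=C\log n$. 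If you restructure your central step along these lines --- or, equivalently, truncate $|t|$ at a constant level and then control the tail contribution with the same Lemma 23 --- the rest of your plan (the KM sample space, the rounding, and the $n^{O(\log n)}$-time evaluation of the c.d.f.) matches the paper; just note that the KM construction consumes the probabilities $\Prob{Z_j=z}$ rather than quantiles, and the paper includes a short coupling argument showing that computing these probabilities only to $1/\poly(n)$ accuracy does not disturb $E[|\langle Z,x\rangle|]=\Theta_p(1)$.
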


Our proof of Theorem \ref{theorem:embedding_lp_into_l1_deterministic} will be based on the following observation from \cite{matousek_metric_embeddings}:

\begin{fact}[Observation 2.7.2 of \cite{matousek_metric_embeddings}]
\label{fact:ell1_embedding_expected_value_observation}
Let $R_1, R_2, \ldots, R_n$ be real random variables on a probability space that has $k$ elements $\omega_1, \omega_2, \ldots, \omega_k$, and let $A \in \R^{k \times n}$ such that $A_{i, j} = \Prob{\omega_i} R_j(\omega_i)$. For $x \in \R^n$ let us set $X := \sum_{j = 1}^n R_j x_j$. Then, $E[ |X| ] = \|Ax\|_1$.
\end{fact}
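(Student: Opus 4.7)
The plan is to prove the identity by a direct computation, writing both sides as a finite sum indexed over the $k$ elements of the sample space and matching them term by term. First I would expand the expectation: since $X$ is a real-valued random variable on a finite probability space, one has
$$E[|X|] \;=\; \sum_{i=1}^k \Prob{\omega_i} \cdot |X(\omega_i)| \;=\; \sum_{i=1}^k \Prob{\omega_i} \cdot \Bigl| \sum_{j=1}^n R_j(\omega_i)\, x_j \Bigr|.$$

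Next I would compute the $i$th coordinate of the vector $Ax$ directly from the definition of $A$: for each $i \in [k]$,
$(Ax)_i \;=\; \sum_{j=1}^n A_{i,j}\, x_j \;=\; \Prob{\omega_i}\, \sum_{j=1}^n R_j(\omega_i)\, x_j,$
where I have factored the scalar $\Prob{\omega_i}$ out of the sum over $j$. Because $\Prob{\omega_i} \geq 0$, taking absolute values gives
$|(Ax)_i| \;=\; \Prob{\omega_i} \cdot \bigl| \sum_{j=1}^n R_j(\omega_i)\, x_j \bigr|,$
and summing over $i$ immediately yields $\|Ax\|_1 = E[|X|]$, which is the desired identity.

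I do not expect any step of this argument to be a genuine obstacle — it is essentially bookkeeping, and the only substantive observation is that the probability weights can be moved inside the absolute value because they are nonnegative. The real content of the fact, insofar as it matters for the paper, is that it reduces the task of constructing a low-distortion linear embedding of $\ell_p^n$ into $\ell_1^k$ to the purely distributional question of exhibiting random variables $R_1, \ldots, R_n$ on a small sample space with $E[|\sum_j R_j x_j|] = \Theta_p(\|x\|_p)$; for the deterministic construction in Theorem \ref{theorem:embedding_lp_into_l1_deterministic} this question will be answered using $O(\log n)$-wise independent approximately $p$-stable variables supported on a sample space of size $n^{O(\log n)}$.
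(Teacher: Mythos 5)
Your computation is correct and is exactly the standard argument for this observation (the paper itself does not reprove it, citing Mat\v{o}u\v{s}ek, whose proof is the same one-line expansion): $(Ax)_i = \Prob{\omega_i}\sum_j R_j(\omega_i)x_j$, so $|(Ax)_i| = \Prob{\omega_i}\,|X(\omega_i)|$ by nonnegativity of the probabilities, and summing over $i$ gives $\|Ax\|_1 = E[|X|]$. Nothing is missing, and your closing remark correctly identifies how the fact is actually used downstream.
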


Our embedding matrix $R$ will be constructed in the same way $A$ is constructed in the above observation. The random variables $R_i$ will be standard $p$-stable random variables (discretized appropriately) and will be selected so that they are $O(\log n)$-wise independent, rather than fully independent, so that the sample space only has size $n^{O(\log n)}$. We use lemmas from \cite{knw_10_approximating_lp_norm_p_stable} which are useful when dealing with identically distributed random variables that have limited independence. The first of these lemmas gives smooth approximations to indicator functions of an interval:

\begin{lemma}[Lemma 2.5 of \cite{knw_10_approximating_lp_norm_p_stable}]
\label{lemma:smooth_approximations_to_interval}
There exist constants $c', \eps_0 > 0$ such that for all $c > 0$ and $0 < \eps < \eps_0$, and for all $[a, b] \subset \R$, there exists a function $J_{[a, b]}^c: \R \to \R$ satisfying:
\begin{itemize}
    \item $\|(J_{[a, b]}^c)^{(\ell)} \|_\infty = O(c^\ell)$ for all $\ell \geq 0$.
    \item For all $x$ such that $a, b \not\in [x - \eps, x + \eps]$, and as long as $c > c'\eps^{-1} \log^3(1/\eps)$, $|J^c_{[a, b]}(x) - I_{[a, b]}(x)| < \eps$.
\end{itemize}
Here, in the first item above, for a function $f$, $f^{(\ell)}$ denotes the $\ell^{th}$ derivative of $f$.
\end{lemma}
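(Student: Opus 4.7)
The plan is to construct $J_{[a,b]}^c$ as a mollification of the indicator $I_{[a,b]}$ via convolution with a smooth kernel of scale roughly $1/c$, and to verify the two stated properties directly. Since this is Lemma 2.5 of \cite{knw_10_approximating_lp_norm_p_stable} and the construction there is standard in analysis, I would largely reproduce that approach while making the dependence on $c$ and $\eps$ explicit.

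First, I would fix once and for all a nonnegative Schwartz bump $\phi\colon\R\to\R$ with $\int \phi = 1$ and with all derivatives $\phi^{(\ell)}$ decaying superpolynomially, for example a smooth function obtained from the standard $C^\infty$ bump (or, as in \cite{knw_10_approximating_lp_norm_p_stable}, a carefully chosen Gevrey-class mollifier whose derivatives grow only like $O(1)^\ell \cdot (\log \ell)^{O(\ell)}$). Then I would rescale by setting $\phi_c(x) := c \,\phi(cx)$, so that $\phi_c$ integrates to $1$ and is essentially concentrated in an interval of length $O(1/c)$. Define
\[
J_{[a,b]}^c(x) \;:=\; (I_{[a,b]} * \phi_c)(x) \;=\; \int_{a}^{b} \phi_c(x-y)\,dy.
\]

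For the derivative bound, I would observe that for every $\ell\ge 1$ one can differentiate under the integral and then use the fundamental theorem of calculus to obtain
\[
(J_{[a,b]}^c)^{(\ell)}(x) \;=\; \phi_c^{(\ell-1)}(x-a) \;-\; \phi_c^{(\ell-1)}(x-b),
\]
so that $\|(J_{[a,b]}^c)^{(\ell)}\|_\infty \le 2\|\phi_c^{(\ell-1)}\|_\infty = 2 c^{\ell}\|\phi^{(\ell-1)}\|_\infty = O(c^\ell)$, where the implicit constant depends on $\ell$ but not on $c$ or $[a,b]$. The case $\ell = 0$ is immediate from $0 \le \phi_c$ and $\int \phi_c = 1$, giving $\|J_{[a,b]}^c\|_\infty \le 1$.

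For the pointwise approximation property, fix $x$ with $a,b\notin[x-\eps,x+\eps]$, so that the whole interval $(x-\eps,x+\eps)$ lies either entirely inside $[a,b]$ or entirely outside $[a,b]$. In the first case I would write
\[
\bigl|J_{[a,b]}^c(x) - 1\bigr| \;=\; \Bigl|\int_{\R\setminus[a,b]} \phi_c(x-y)\,dy\Bigr| \;\le\; \int_{|t|>\eps} \phi_c(t)\,dt,
\]
and the outside case is symmetric. The main obstacle is thus to choose $\phi$ so that this tail is at most $\eps$ whenever $c > c'\eps^{-1}\log^3(1/\eps)$. This is where the particular choice of $\phi$ from \cite{knw_10_approximating_lp_norm_p_stable} matters: a compactly supported bump would give an exact bound only when $c \gtrsim 1/\eps$ but at the cost of $\ell!$-type growth in the derivatives, whereas the sub-Gaussian/Gevrey mollifier used there yields the tail estimate $\int_{|t|>\eps}\phi_c(t)\,dt \le e^{-\Omega((c\eps)^{1/3})}$, which is at most $\eps$ exactly when $c \ge c'\eps^{-1}\log^{3}(1/\eps)$. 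The polylog factor, and its specific exponent $3$, arises from balancing the derivative bound against the tail bound for this family of mollifiers; this balancing step is the one technical point that requires care, and I would carry it out by directly estimating $\int_{|t|>\eps}\phi_c(t)\,dt$ from the Gevrey-type decay of $\phi$.
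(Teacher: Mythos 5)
First, note that the paper you are working from does not actually prove this lemma: it imports it verbatim as Lemma 2.5 of \cite{knw_10_approximating_lp_norm_p_stable}, so there is no in-paper argument to compare against. Your reconstruction has the right skeleton, and it is the same skeleton as in the cited source: convolve $I_{[a,b]}$ with a rescaled kernel $\phi_c(x) = c\phi(cx)$, get the derivative bounds from $(J^c_{[a,b]})^{(\ell)} = \phi_c^{(\ell-1)}(\cdot - a) - \phi_c^{(\ell-1)}(\cdot - b)$, and get the approximation property from a tail bound $\int_{|t|>\eps}\phi_c(t)\,dt = \int_{|s|>c\eps}\phi(s)\,ds$. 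The case analysis for the second bullet is also correct.

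The genuine gap is that the entire content of the lemma lives in the existence of the kernel $\phi$, which you assert rather than establish, and your hedges about it would actually defeat the statement. For the first bullet to be non-vacuous (and for its downstream use in Lemma \ref{lemma:smooth_function_removing_full_independence}, where $\|f^{(\ell)}\|_\infty = O(\alpha^\ell)$ feeds a Taylor-type argument), the constant in $O(c^\ell)$ must be uniform in $\ell$ --- equivalently $\|\phi^{(\ell)}\|_\infty \leq C^{\ell+1}$ for an absolute $C$, which can then be absorbed by replacing $c$ with $Cc$. If the constant is allowed to depend on $\ell$, as you write, then every Schwartz function satisfies the first bullet and there is no tension with the tail bound, i.e.\ nothing to prove; and a growth rate like $O(1)^\ell(\log\ell)^{O(\ell)}$ is super-exponential in $\ell$ and is \emph{not} of the form $O(c^\ell)$ with a uniform constant. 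Moreover, the uniform bound $\|\phi^{(\ell)}\|_\infty \leq C^{\ell+1}$ forces $\phi$ to be of exponential type (band-limited, by Bernstein/Paley--Wiener), so it cannot be compactly supported, and simple band-limited choices such as the Fej\'er kernel have only polynomial tails, which would force $c \gtrsim \eps^{-2}$ rather than $c \gtrsim \eps^{-1}\log^3(1/\eps)$. What is actually needed, and what you must construct or cite precisely, is a nonnegative band-limited $\phi$ with $\int\phi = 1$, $\|\phi^{(\ell)}\|_\infty \leq C^{\ell+1}$ for all $\ell$, and $\int_{|s|>T}\phi(s)\,ds \leq e^{-\Omega(T^{1/3})}$; one route is $\phi = |\psi|^2$ for a Paley--Wiener function $\psi$ with $|\psi(x)| \leq e^{-|x|^{1/3}}$, whose existence is the nontrivial analytic input and is exactly where the exponent $3$ in $c > c'\eps^{-1}\log^3(1/\eps)$ originates. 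Without pinning this down, the proof is a correct outline with its only hard step missing.
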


The following lemma from \cite{knw_10_approximating_lp_norm_p_stable} allows us to show that, if $X$ is a linear combination of $n$ fully independent $p$-stable random variables, $Y$ is a linear combination of $n$ $p$-stable random variables which are only $O(1)$-wise independent, and $f$ is a smooth function, then $E[f(X)]$ and $E[f(Y)]$ are not too far apart. When we apply this lemma, $f$ will be $J^c_{[a, b]}$ for some $c$ and $\eps$.

\begin{lemma}[Lemma 2.2 of \cite{knw_10_approximating_lp_norm_p_stable}]
\label{lemma:smooth_function_removing_full_independence}
There exists an $\eps_0 > 0$ such that the following holds. Let $n$ be a positive integer and $0 < \eps < \eps_0$, $0 < p < 2$ be given. Let $f: \R \to \R$ satisfy $\|f^{(\ell)}\|_\infty = O(\alpha^\ell)$ for all $\ell \geq 0$, for some $\alpha$ satisfying $\alpha^p \geq \log(1/\eps)$. Let $k = \alpha^p$. Let $a \in \R^n$ satisfy $\|a\|_p = O(1)$. Let $X_i$ be a $3Ck$ independent family of $p$-stable random variables for $C$ a suitably large even constant. Let $Y_i$ be a fully independent family of $p$-stable random variables. Let $X = \sum_i a_iX_i$ and $Y = \sum_i a_iY_i$. Then, $E[f(X)] = E[f(Y)] + O(\eps)$.
\end{lemma}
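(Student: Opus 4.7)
The plan is a Taylor-expansion argument that leverages the $3Ck$-wise independence. Set $m := 3Ck$. Because each $X_i$ has the same marginal law as each $Y_i$ and the family $\{X_i\}$ is $m$-wise independent, every joint moment $E[X_{i_1}\cdots X_{i_t}]$ with $t\leq m$ equals the corresponding $E[Y_{i_1}\cdots Y_{i_t}]$. If $P(z) := \sum_{\ell=0}^{m-1} f^{(\ell)}(0)\,z^{\ell}/\ell!$ denotes the degree-$(m-1)$ Taylor polynomial of $f$ about $0$, then expanding $P(X)$ as a polynomial in the $a_i X_i$ gives $E[P(X)] = E[P(Y)]$ whenever the expansion is well defined, and the task reduces to bounding the Taylor remainder on both sides.

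The difficulty is that standard $p$-stable variables have no finite moments of order $\geq p$, so $E[|X|^m]$ is infinite and the identity above is only formal. To handle this, truncate: with a threshold $T := \poly(n/\eps)$, set $\widetilde{X}_i := X_i\cdot \mathbf{1}[|X_i|\leq T]$ and $\widetilde{X} := \sum_i a_i\widetilde{X}_i$, and analogously $\widetilde{Y}_i$ and $\widetilde{Y}$. The standard $p$-stable tail bound $\Prob{|X_i|>T} = O(T^{-p})$ together with a union bound gives $\Prob{X\neq\widetilde{X}} = O(n/T^p)$; since $\|f\|_\infty = O(1)$ from the $\ell=0$ case of the derivative hypothesis, replacing $X$ by $\widetilde{X}$ costs only $O(n/T^p) = O(\eps)$ in expectation, and likewise for $Y$. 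Truncation preserves both the common marginal law and the $m$-wise independence of the $\widetilde{X}_i$, so the moment-matching conclusion $E[P(\widetilde{X})] = E[P(\widetilde{Y})]$ still applies, and we are left to control $|E[f(\widetilde{X}) - P(\widetilde{X})]|$ and its $Y$-counterpart.

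By Taylor's theorem with Lagrange remainder and the hypothesis $\|f^{(m)}\|_\infty = O(\alpha^m)$, the remainder satisfies $|f(\widetilde{X}) - P(\widetilde{X})|\leq (O(\alpha)^m/m!)\,|\widetilde{X}|^m$. The technical heart of the proof is therefore a moment bound of the form $E[|\widetilde{X}|^m]\leq A^m$ for some $A$ with $\alpha A \ll m^{1-1/p}$; combined with Stirling's $m!\geq (m/e)^m$ and the hypothesis $m = 3C\alpha^p$, this forces the remainder contribution to be $2^{-\Omega(m)} \leq 2^{-\Omega(\log(1/\eps))}\leq \eps$ once $C$ is a sufficiently large even constant. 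I would derive such a bound by expanding $\widetilde{X}^m$ as a multinomial and grouping terms by the partition of $\{1,\ldots,m\}$ induced by coordinate equalities: $m$-wise independence factors the contribution of each group into a product of univariate truncated moments $E[\widetilde{X}_i^s]$, which are $O(1)$ for $s<p$ and $O(T^{s-p}\cdot s/(s-p))$ for $s>p$, while $\|a\|_p = O(1)$ absorbs the combinatorial partition sum in a Khintchine-style fashion. The evenness of $C$, and hence $m$, eliminates any sign subtleties in this expansion.

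The main obstacle I anticipate is precisely this moment bound: the $T$-dependence from truncation must be tightly controlled, because $T$ is forced to be polynomial in $1/\eps$ by the tail estimate in the preceding paragraph, and yet we need $E[|\widetilde{X}|^m]^{1/m} = O(m^{1/p})$ up to constants that do not swallow our budget. The partition-grouping estimate is what makes this feasible, since a single group of size $s>p$ contributes only one factor of $T^{s-p}$ rather than $s$ of them, and the total $T$-exponent across all groups is therefore at most $m - p\cdot(\text{number of groups})$, an amount that can be absorbed into the $(C'm)^{m/p}$ bound once $C$ is large enough. By symmetry the same bound applies to $\widetilde{Y}$, so $|E[f(\widetilde{X})] - E[f(\widetilde{Y})]| = O(\eps)$, and combining this with the $O(\eps)$ truncation errors on each side yields $E[f(X)] = E[f(Y)] + O(\eps)$ as claimed.
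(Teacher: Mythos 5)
First, note that the paper does not prove this lemma: it is imported verbatim as Lemma~2.2 of the cited work of Kane, Nelson and Woodruff, so the only question is whether your argument stands on its own. It does not, and the failure is exactly at the step you yourself flag as the technical heart. For the Lagrange remainder $O(\alpha)^m E[|\widetilde{X}|^m]/m!$ to be $2^{-\Omega(m)}$ via Stirling, you need $E[|\widetilde{X}|^m]^{1/m} = O(m/\alpha) = O(m^{1-1/p})$ (your two stated targets, ``$\alpha A \ll m^{1-1/p}$'' and ``$A = O(m^{1/p})$'', are mutually inconsistent, and neither is the correct threshold). But in your own partition expansion the single block of size $m$ already contributes $\sum_i |a_i|^m E[|\widetilde{X}_i|^m] = \Theta(\|a\|_\infty^{m-p}\, T^{m-p}/m)$, so $E[|\widetilde{X}|^m]^{1/m} = \Omega(T^{1-p/m})$ when some $|a_i| = \Theta(1)$; meanwhile the truncation error forces $T \geq (n/\eps)^{1/p}$ (truncating $a_iX_i$ rather than $X_i$ only improves this to $(1/\eps)^{1/p}$, which under the hypothesis $\alpha^p \geq \log(1/\eps)$ can be as large as $e^{\alpha^p/p}$). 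Either way $T^{1-p/m}$ is exponentially larger than $m^{1-1/p}$, and the remainder bound explodes rather than vanishes --- this is already visible for $n=1$, $a_1=1$, where your two one-sided remainder estimates are each of size roughly $(\alpha T/m)^m$. Conversely, shrinking $T$ to $O(m/\alpha) = O(\alpha^{p-1})$ so that the remainder closes changes the law of $X$ by $\Theta(T^{-p}) = \Theta(\alpha^{-p(p-1)})$, which is $\Theta(1)$ at $p=1$ and in general far exceeds $\eps$. The underlying obstruction is structural: the degree-$(m-1)$ Taylor polynomial of $f$ about $0$ approximates $f$ only on an interval of radius $\Theta(m/\alpha) = \Theta(\alpha^{p-1})$, which the $p$-stable sum $X$ (of scale $\|a\|_p = O(1)$) exits with probability $\Theta(1)$ when $p$ is near $1$, and outside that interval $f - P$ is uncontrolled; so matching moments of one global Taylor polynomial cannot bound $E[f(X)] - E[f(Y)]$, and bounding $E[|f-P|]$ separately on each side, as you do, cannot succeed for any truncation level.

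The known proof localizes the expansion instead of globalizing it: one conditions on the set of indices $i$ with $|a_iX_i|$ above a threshold $\lambda$ --- a set that is small with high probability even under $O(k)$-wise independence, by applying bounded-independence concentration to the indicator sum $\sum_i \mathbf{1}[|a_iX_i| > \lambda]$ --- and Taylor-expands $f$ around the ``head'' sum in the ``tail'' variable, whose summands are individually bounded by $\lambda$ and whose truncated moments are therefore genuinely of the required size. Some version of this head/tail conditioning (or an equivalent localization) is unavoidable; the single expansion about the origin cannot be repaired by tuning $T$. Your preliminary steps (truncation preserves the common marginals and the $m$-wise independence; the swap from $X$ to $\widetilde{X}$ costs $O(\eps)$; truncated joint moments of order at most $m$ match) are all fine, but they do not reach the conclusion.
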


\begin{remark}
Note that this lemma only requires that the $X_i$ are $3Ck$-wise independent for some constant $C$ and some $k = O(\log 1/\eps)$ --- since we will anyways choose $\eps = \Theta(1)$, this does not affect the amount of independence that we need. 
\end{remark}

It is also useful to note that $p$-stable random variables have finite expectation for $p \in (1, 2)$:

\begin{lemma}
Let $p \in (1, 2)$ and suppose $X$ is a standard $p$-stable random variable. Then, $E[|X|] = O_p(1)$.
\end{lemma}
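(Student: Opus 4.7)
The plan is to prove this via the standard heavy-tail estimate for $p$-stable distributions, namely that $\Pr[|X| > t] \leq C_p / t^p$ for all $t \geq 1$, and then integrate. This tail bound is a classical consequence of the asymptotic form of the $p$-stable density (see e.g.\ Nolan, or Lemma 2.3/Lemma B.10 of \cite{l1_lower_bound_and_sqrtk_subset}): for $p \in (0, 2)$, the density of $X$ decays like $|x|^{-(p+1)}$ as $|x| \to \infty$, so a constant $C_p > 0$ exists with $\Pr[|X| > t] \leq C_p t^{-p}$ for $t \geq 1$.

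Given this tail bound, I would use the layer-cake identity
$$E[|X|] = \int_0^\infty \Pr[|X| > t]\, dt = \int_0^1 \Pr[|X| > t]\, dt + \int_1^\infty \Pr[|X| > t]\, dt.$$
The first integral is trivially at most $1$. For the second, I would apply the tail estimate to get
$$\int_1^\infty \Pr[|X| > t]\, dt \;\leq\; \int_1^\infty \frac{C_p}{t^p}\, dt \;=\; \frac{C_p}{p - 1},$$
where the integral converges precisely because $p > 1$. Combining, $E[|X|] \leq 1 + C_p/(p - 1) = O_p(1)$, as desired.

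The only subtlety, and the single place where the hypothesis $p \in (1, 2)$ is used, is that the integral $\int_1^\infty t^{-p}\, dt$ diverges when $p \leq 1$; for $p = 1$ (the Cauchy case) indeed $E[|X|] = \infty$, so the bound $p > 1$ is sharp. The main "obstacle" is therefore not really an obstacle but rather a citation issue: ensuring that the constant $C_p$ in the tail bound is made explicit (or at least invoked from a standard reference), which can be done either by appealing to the explicit form of the density from \cite{nolan_97_pstable_density} used earlier in Lemma~\ref{lemma:p_stable_median_omega_1}, or by quoting a known moment formula for stable laws. No further probabilistic machinery is needed.
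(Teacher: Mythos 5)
Your proof is correct and takes essentially the same approach as the paper: both rely on the heavy-tail asymptotic of the $p$-stable law (density $\Theta_p(|x|^{-(p+1)})$, equivalently tail $O_p(t^{-p})$) and then observe that the resulting integral converges precisely because $p > 1$. The only cosmetic difference is that you integrate the tail function via the layer-cake identity while the paper integrates $|x|$ directly against the density; the key input and the role of the hypothesis $p>1$ are identical.
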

\begin{proof}
Note that the p.d.f. of $X$, which we denote $p_X(x)$, is $\Theta(1/x^{p + 1})$ for large $x$:

\begin{theorem}[Theorem 1.12 of \cite{nolan_18_book_chapter_1}]
Let $X$ be a $p$-stable random variable with p.d.f. $p_X(x)$. Then,
$$\lim_{x \to \infty} \frac{p_X(x)}{(pC_p)/x^{p + 1}} = 1$$
for some constant $C_p$ depending on $p$.
\end{theorem}

\begin{remark}
Theorem 1.12 in \cite{nolan_18_book_chapter_1} is more general and also applies when $X$ is not symmetric.
\end{remark}

Hence, $p_X(x) \leq O_p(\frac{1}{x^{p + 1}})$ (for some constant $K$ and all $x$ with $|x| \geq K$), meaning
\begin{equation}
\begin{split}
E[|X|]
& \leq \int_{-K}^{K} |x| p_X(x) dx + 2 \int_{K}^{\infty} xp_X(x) dx \\
& \leq 2K + 2 \int_K^\infty x \cdot O\Big(\frac{1}{x^{p + 1}}\Big) dx \\
& \leq 2K + 2 \int_K^\infty O\Big(\frac{1}{x^p}\Big) dx \\
& \leq O_p(1)
\end{split}
\end{equation}
where the last inequality holds because the integral $\int_K^\infty \frac{1}{x^p} dx$ converges, since $p > 1$.
\end{proof}

Also, we note that linear combinations of $O(1)$-wise independent $p$-stable random variables are anti-concentrated:

\begin{lemma}[Anti-Concentration of $O(1)$-wise Independent $p$-stable Random Variables (Due to \cite{knw_10_approximating_lp_norm_p_stable})]
\label{lemma:anti_concentration_of_p_stables}
Let $R_1, R_2, \ldots, R_n$ be an $O(\log(1/\eps))$-wise independent family of $p$-stable random variables, and $x \in \R^n$ with $\|x\|_p = O(1)$. Then, for any $t \in \R$,
$$\ProbBig{\Big|\Big(\sum_{i = 1}^n R_i x_i\Big) - t\Big| \leq \eps} \leq O(\eps)$$
In particular, if $X$ is a standard $p$-stable random variable, then $\Prob{X \in [t - \eps, t + \eps]} \leq O(\eps)$ for $\eps \in (0, \eps_0)$ and any $t \in \R$.
\end{lemma}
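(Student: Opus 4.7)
The proof consists of two parts. For the ``in particular'' statement, I would note that for $p \in (1, 2)$, the density $p_X$ of a standard $p$-stable random variable is continuous and uniformly bounded on $\R$; this can be obtained by differentiating the integral formula for $\Prob{X > x}$ cited in the proof of Lemma \ref{lemma:p_stable_median_omega_1}, or invoked as a standard fact. Hence $\Prob{X \in [t - \eps, t + \eps]} \leq 2\eps\,\|p_X\|_\infty = O_p(\eps)$ for any $t \in \R$.

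For the main statement, write $X := \sum_{i=1}^n R_i x_i$ and let $Y := \sum_{i=1}^n Y_i x_i$ where $Y_1, \ldots, Y_n$ are fully independent standard $p$-stable random variables. By the stability property of Definition \ref{def:p_stable_rvs}, $Y$ has the same distribution as $\|x\|_p Z$ for a standard $p$-stable $Z$, so under the implicit assumption that $\|x\|_p = \Theta(1)$, $Y$ has density at most $O_p(1)$ everywhere, and $\Prob{Y \in J} = O_p(|J|)$ for any interval $J$. The plan is to sandwich the indicator $I_{[t-\eps, t+\eps]}$ from above by a smooth function $f := J^c_{[t - 2\eps, \, t + 2\eps]}$ supplied by Lemma \ref{lemma:smooth_approximations_to_interval}, with smoothness parameter $c = \Theta(\log^{1/p}(1/\eps))$ so that $c^p = \Theta(\log(1/\eps))$; the $\pm 2\eps$ buffer together with the approximation guarantee of Lemma \ref{lemma:smooth_approximations_to_interval} ensures $f \geq I_{[t-\eps,\,t+\eps]}$ pointwise.

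Apply Lemma \ref{lemma:smooth_function_removing_full_independence} with $\alpha = c$: the requirement $\alpha^p \geq \log(1/\eps)$ is met, and the independence demand $3C\alpha^p = O(\log(1/\eps))$ matches the hypothesis of our lemma, yielding $|E[f(X)] - E[f(Y)]| = O(\eps)$. Separately, $E[f(Y)] = O_p(\eps)$ follows from $\|f\|_\infty = O(1)$ (the $\ell = 0$ bound of Lemma \ref{lemma:smooth_approximations_to_interval}), the fact that $f$ is small outside an $O(\eps)$-neighborhood of $[t - 2\eps,\,t + 2\eps]$, and the density bound on $Y$. Putting it all together:
$$\Prob{X \in [t - \eps, t + \eps]} \;\leq\; E[f(X)] \;\leq\; E[f(Y)] + O(\eps) \;=\; O_p(\eps).$$

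The main obstacle is reconciling the parameter requirements of Lemmas \ref{lemma:smooth_approximations_to_interval} and \ref{lemma:smooth_function_removing_full_independence}. Lemma \ref{lemma:smooth_approximations_to_interval} naively demands $c \gtrsim \eps_1^{-1} \log^3(1/\eps_1)$ to achieve approximation error $\eps_1$, whereas the independence budget $O(\log(1/\eps))$ forces $c = O(\log^{1/p}(1/\eps))$, and these two constraints appear incompatible for small $\eps$. This is exactly the issue handled by the FT-mollification technique of \cite{knw_10_approximating_lp_norm_p_stable} (Section 2 therein): one constructs a smooth bounded majorant of $I_{[t-\eps,\,t+\eps]}$ whose $\ell$-th derivative is $O(\alpha^\ell)$ with $\alpha^p = O(\log(1/\eps))$ and whose expectation under the fully independent $Y$ remains $O(\eps)$, by accepting only a coarser pointwise approximation to $I$ while still controlling the integral against the bounded $p$-stable density. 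Once such a majorant is in hand, the sandwich step and the reduction to the fully independent case proceed exactly as sketched above.
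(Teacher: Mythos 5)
Your proposal is correct and follows essentially the same route as the paper: both arguments pass from the $O(\log(1/\eps))$-wise independent sum to the fully independent (hence genuinely $p$-stable) sum via Lemma \ref{lemma:smooth_function_removing_full_independence} applied to a smooth surrogate for the interval indicator, and then exploit the bounded density of a true $p$-stable to control the expectation in the fully independent case. The FT-mollified majorant you defer to is precisely what the paper imports as Lemma \ref{lemma:anti_concentration_functions} from Section A.4 of \cite{knw_10_approximating_lp_norm_p_stable} (used there as an $\Omega(1)$-scaled majorant via $f_{t,\eps}(t+\eps)=\Omega(1)$ and monotonicity in $|x-t|$, rather than a literal pointwise majorant), so your correctly flagged incompatibility between Lemma \ref{lemma:smooth_approximations_to_interval} and the independence budget is resolved exactly as you anticipate.
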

\begin{proof}
Let $R = \sum_{i = 1}^n R_ix_i$. The main tool in showing this lemma is the following result from \cite{knw_10_approximating_lp_norm_p_stable}:

\begin{lemma}[(Shown in Section A.4 of \cite{knw_10_approximating_lp_norm_p_stable})]
\label{lemma:anti_concentration_functions}
For some constant $\eps_0$ and any $t \in \R$ and $\eps \in (0, \eps_0)$, there exists a nonnegative function $f_{t, \eps} : \R \to \R$ such that:
\begin{itemize}
    \item For some $\alpha = O(1/\eps)$, $\|f_{t, \eps}^{(\ell)}\|_\infty = O(\alpha^\ell)$ for all $\ell \geq 0$.
    \item If $X$ is a standard $p$-stable random variable, then $E[f_{t, \eps}(X)] = O(\eps)$.
    \item $f_{t, \eps}(t + \eps) = \Omega(1)$
    \item As $|x - t|$ increases, $f_{t, \eps}(x)$ decreases.
\end{itemize}
\end{lemma}

The rest of the proof follows that of a similar claim within Theorem 2.1 of \cite{knw_10_approximating_lp_norm_p_stable}. Let $S_1, S_2, \ldots, S_n$ be a fully independent family of $p$-stable random variables. Note that $S = \sum_{i = 1}^n S_ix_i$ is a standard $p$-stable random variable, and by Lemma \ref{lemma:anti_concentration_functions}, $E[f_{t, \eps}(S)] = O(\eps)$. Now, by Lemma \ref{lemma:smooth_function_removing_full_independence}, since the $R_i$ are $O(\log(1/\eps))$-wise independent,
$$E[f_{t, \eps}(R)] \leq E[ f_{t, \eps}(S)] + O(\eps) \leq O(\eps)$$
where $E[f_{t, \eps}(S)] \leq O(\eps)$ by Lemma \ref{lemma:anti_concentration_functions}. Finally, for all $r \in [t - \eps, t + \eps]$, $f_{t, \eps}(r) \geq f_{t, \eps}(t + \eps)$, since $f_{t, \eps}(x)$ is decreasing in $|x - t|$. Hence,
$$E[f_{t, \eps}(R)] \geq \int_{t - \eps}^{t + \eps} f_{t, \eps}(r) p_R(r) dr \geq f_{t, \eps}(t + \eps) \ProbBig{R \in [t - \eps, t + \eps]}$$
where $p_R$ is the p.d.f. of $R$. Since $f_{t, \eps}(t + \eps) = \Omega(1)$, combining this with the previous chain of inequalities gives
$$O(\eps) \geq \Omega(1) \ProbBig{R \in [t - \eps, t + \eps]}$$
and this gives the desired result.
\end{proof}

We will study $E[|\langle Z, x \rangle|]$ for a fixed $x \in \R^n$, where $Z$ is a vector whose entries are $O(\log n)$-wise independent $p$-stable random variables, which are truncated at $\poly(n)$, i.e. $Z_i$ follows a $p$-stable distribution conditioned on $|Z_i| \leq \poly(n)$. To analyze the tails of $|\langle Z, x \rangle|$, we use the following lemma from \cite{knpw11_pth_moment}:

\begin{lemma}[Lemma 23 of \cite{knpw11_pth_moment}]
\label{lemma:tail_probability_small_independence_p_stable}
Let $p \in (0, 2]$. Suppose $x \in \R^n$, $\|x\|_p = 1$, $0 < \eps < 1$ is given, and $R_1, \ldots, R_n$ are $k$-wise independent $p$-stable random variables for $k \geq 2$. Let $Q$ be a standard $p$-stable random variable. Then, for all $t \geq 0$, $R = \sum_{i = 1}^n R_ix_i$ satisfies
$$\Big| \Prob{|Q| \geq t} - \Prob{|R| \geq t} \Big| = O\Big( \frac{k^{-1/p}}{1 + t^{p + 1}} + \frac{k^{-2/p}}{1 + t^2} + 2^{-\Omega(k)}\Big)$$
\end{lemma}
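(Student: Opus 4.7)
The plan is to combine a smoothing of the threshold indicator with Fourier-analytic moment-matching, exploiting $k$-wise independence through truncated moments. First, I would construct smooth sandwich functions $\phi_t^{\pm} : \R \to [0,1]$ satisfying $\phi_t^-(x) \le \mathbf{1}_{|x| \ge t} \le \phi_t^+(x)$, with $\phi_t^+ - \phi_t^-$ supported in a width-$\eta$ band around $\pm t$, and derivative bounds $\|(\phi_t^{\pm})^{(\ell)}\|_\infty = O(\eta^{-\ell})$ by adapting Lemma \ref{lemma:smooth_approximations_to_interval}. Then
$$
|\Prob{|Q| \ge t} - \Prob{|R| \ge t}| \le \max_{\pm} |E[\phi_t^{\pm}(R)] - E[\phi_t^{\pm}(Q)]| + |E[\phi_t^+(Q)] - E[\phi_t^-(Q)]|,
$$
and using the pointwise $p$-stable density bound $p_Q(s) = O(1/(1+|s|^{p+1}))$, the last term is controlled by $O(\eta/(1+t^{p+1}))$. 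An analogous anti-concentration estimate for $R$ would follow from Lemma \ref{lemma:anti_concentration_of_p_stables}, as long as $k \gtrsim \log(1/\eta)$.

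Next, I would compare $E[\phi_t^{\pm}(R)]$ to $E[\phi_t^{\pm}(Q)]$ through their characteristic functions $\psi_R$ and $\psi_Q(\xi) = e^{-|\xi|^p}$. By Fourier inversion,
$$
E[\phi_t^{\pm}(R)] - E[\phi_t^{\pm}(Q)] = \int \widehat{\phi_t^{\pm}}(\xi)\,\bigl(\psi_R(\xi) - \psi_Q(\xi)\bigr)\,d\xi.
$$
Since $\phi_t^{\pm}$ is essentially a pair of smooth bumps centered at $\pm t$, integration by parts $\ell$ times gives $|\widehat{\phi_t^{\pm}}(\xi)| = O(\min(1, (\eta\xi)^{-\ell}))$, so the Fourier mass concentrates at $|\xi| \lesssim 1/\eta$.

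Third, to bound $|\psi_R - \psi_Q|$ on this range I would truncate each $R_j$ at a level $T = \Theta(k^{1/p})$, forming $\tilde R_j$. By the $p$-stable tail estimate $\Prob{|R_j| > T} = O(T^{-p})$, replacing $R$ by $\tilde R$ inside the characteristic function incurs an error that collapses to $2^{-\Omega(k)}$ once the Fourier integral is absorbed. The truncated variables possess all moments, with $E|\tilde R_j|^m \lesssim T^{m-p}$ for $m > p$, and by $k$-wise independence the mixed moment $E[\prod_j \tilde R_j^{\alpha_j}]$ agrees with the fully independent case whenever $\sum_j \alpha_j \le k$. Taylor-expanding each factor $e^{i\xi \tilde R_j x_j}$ up to order $k-1$ and then taking the product, the resulting polynomial in $\xi$ matches the corresponding polynomial for a fully independent sum (whose expectation is exactly $\psi_Q(\xi)$ up to the same order), so $|\psi_{\tilde R}(\xi) - \psi_Q(\xi)|$ is driven by the Taylor remainder of order $k$, which behaves like $|\xi|^k T^{k-p} \|x\|_k^k$.

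Finally, I would split the Fourier integral at $|\xi| = \xi_0 \sim 1/T = k^{-1/p}$. On $|\xi| \le \xi_0$ the Taylor remainder is tiny, while on $|\xi| > \xi_0$ the bound $|\widehat{\phi_t^{\pm}}(\xi)| \lesssim 1/(\eta\xi)^2$ (via two integrations by parts, each producing a factor of $t$ against the $e^{\pm it\xi}$ oscillation and thereby reshaping the denominator into $1+t^2$) controls the tail. Choosing $\eta = \Theta(k^{-1/p})$ balances the smoothing contribution $\eta/(1+t^{p+1}) = k^{-1/p}/(1+t^{p+1})$ against the leading order-two Fourier error $k^{-2/p}/(1+t^2)$, with the truncation loss $2^{-\Omega(k)}$ absorbed separately.

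\paragraph{Main obstacle.} The hardest part will be extracting the two different $t$-dependencies in the final bound. The factor $1/(1+t^{p+1})$ arises naturally from integrating the $p$-stable density against the smoothing band, but producing the sharper $1/(1+t^2)$ decay in the second term requires pulling two factors of $t$ out of $\widehat{\phi_t^{\pm}}$ via integration by parts, which must be matched precisely against the second-order Taylor coefficient of $\psi_{\tilde R} - \psi_Q$ in $\xi$. This coefficient involves $E[\tilde R_j^2]$, which for $p < 2$ is dominated by the truncation level $T^{2-p}$ rather than by any intrinsic variance, so tracking the cancellations between truncation scale, smoothing scale, and Fourier decay without introducing spurious $t$- or $\|x\|_p$-dependence is where the bookkeeping becomes delicate.
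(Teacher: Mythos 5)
The paper does not prove this statement at all: it is imported verbatim as Lemma 23 of \cite{knpw11_pth_moment} and used as a black box, so there is no internal proof to compare against. Your sketch does follow the general philosophy of the source (FT-mollification of the threshold indicator plus a limited-independence moment-transfer step, i.e.\ the same machinery as Lemmas \ref{lemma:smooth_approximations_to_interval} and \ref{lemma:smooth_function_removing_full_independence}), so the high-level plan is reasonable, but as written it has concrete gaps that prevent it from being a proof.

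First, the Fourier inversion identity $E[\phi_t^{\pm}(R)]-E[\phi_t^{\pm}(Q)]=\int \widehat{\phi_t^{\pm}}(\xi)(\psi_R(\xi)-\psi_Q(\xi))\,d\xi$ is applied to a function $\phi_t^{\pm}$ that tends to $1$ at infinity; its Fourier transform is not an integrable function but a distribution with a singular part at $\xi=0$, so the integral as written is not defined. You would need to work with $1-\phi_t^{\pm}$ or with the compactly supported difference $\phi_t^{+}-\phi_t^{-}$. Second, the mechanism you propose for the $1/(1+t^2)$ decay --- ``two integrations by parts, each producing a factor of $t$ against the $e^{\pm it\xi}$ oscillation'' --- does not work: integration by parts in $\xi$ trades powers of $1/\xi$ for derivatives of $\phi_t^{\pm}$, and the modulation $e^{-it\xi}$ has modulus one, so no decay in $t$ appears from that step. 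The genuine source of the $t^{-(p+1)}$ and $t^{-2}$ factors is the decay of the $p$-stable density and its derivatives near $t$, which must be fed into the error terms of the Taylor/moment comparison; this is precisely the part you flag as the main obstacle, and the route you describe would not produce it. Third, the claim that truncating each $R_j$ at $T=\Theta(k^{1/p})$ costs only $2^{-\Omega(k)}$ is unjustified: a union bound gives $\sum_j \Prob{|R_jx_j|>T|x_j|}=O(\|x\|_p^p\,T^{-p})=O(1/k)$, which is polynomially rather than exponentially small, and the exponential term in the cited lemma comes from a more careful conditioning on the number of large summands under $k$-wise independence. Unless you intend to reprove Lemma 23 of \cite{knpw11_pth_moment} in full, the cleanest course is to cite it, as the paper does.
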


Now we show that $E[|\langle Z, x \rangle|] = \Theta(1) \|x\|_p$:

\begin{lemma}[Expected Value of $|\langle Z, x \rangle|$]
\label{lemma:p_stable_inner_product_log_independence_expected}
Let $n \in \N$ and $x \in \R^n$. For $i \in [n]$, let $Z_i$ be a standard $p$-stable random variable truncated at $D = \poly(n)$, and suppose the $Z_i$ are $O(\log n)$-wise independent. Let $Z \in \R^n$ so that the $i^{th}$ coordinate of $Z$ is $Z_i$. Then, $E[|\langle Z, x \rangle|] = \Theta_p(1) \|x\|_p$.
\end{lemma}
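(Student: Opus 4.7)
The plan is to reduce $E[|\langle Z, x\rangle|]$ to $E[|Q|]\cdot\|x\|_p$, where $Q$ is a standard $p$-stable, via a two-step comparison: first remove the truncation, then replace $O(\log n)$-wise independence with full independence. By homogeneity assume $\|x\|_p = 1$; the previous lemma already gives $E[|Q|] = \Theta_p(1)$ since $p$-stable absolute moments are finite for $p > 1$. The key structural observation is that, because each $|Z_i|\leq D = \poly(n)$, the random variable $Z' := \langle Z, x\rangle$ is deterministically bounded by $D\cdot\|x\|_1 \leq D\cdot n^{1 - 1/p} =: T = \poly(n)$, so $E[|Z'|] = \int_0^T \Prob{|Z'|\geq t}\,dt$ and I only need to compare tail probabilities on $[0, T]$.

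For the first step I would introduce a coupling: on a common probability space with the $Z_i$, take $R_1,\ldots,R_n$ to be $O(\log n)$-wise independent \emph{untruncated} standard $p$-stables with $Z_i = R_i\cdot\mathbf{1}[|R_i|\leq D]$, which is consistent with the hypothesis since truncating an $O(\log n)$-wise independent family of $p$-stables yields precisely the given object. On the event $\calE = \{\forall i : |R_i|\leq D\}$ one has $Z' = \langle R, x\rangle$, and the standard $p$-stable tail $\Prob{|R_i|>D} = O(D^{-p})$ together with a union bound gives $\Prob{\calE^c}\leq n^{-\Omega(1)}$ for $D$ a large enough polynomial. Hence $|\Prob{|Z'|\geq t} - \Prob{|\langle R, x\rangle|\geq t}|\leq n^{-\Omega(1)}$ uniformly in $t$. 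For the second step I would invoke Lemma \ref{lemma:tail_probability_small_independence_p_stable} with $k = C\log n$ for a large constant $C$, obtaining
\[
|\Prob{|\langle R, x\rangle|\geq t} - \Prob{|Q|\geq t}| = O\!\left(\tfrac{(\log n)^{-1/p}}{1+t^{p+1}} + \tfrac{(\log n)^{-2/p}}{1+t^2} + 2^{-\Omega(\log n)}\right).
\]

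Assembling the two bounds and integrating over $[0, T]$, the two decaying pieces integrate to $o(1)$ even on all of $[0,\infty)$ (convergence uses $p+1 > 1$ and $2 > 1$), while the constant-in-$t$ errors $2^{-\Omega(\log n)}$ and $n^{-\Omega(1)}$ contribute at most $T$ times themselves. Finally, $\int_0^T \Prob{|Q|\geq t}\,dt = E[|Q|] - E[(|Q|-T)_+]$, and $E[(|Q|-T)_+] = O_p(T^{1-p}) = o(1)$ from the asymptotic $\Prob{|Q|>t} = \Theta_p(t^{-p})$ and $p > 1$. Combining these estimates yields $E[|Z'|] = E[|Q|] + o(1) = \Theta_p(1)$, and restoring the $\|x\|_p$ normalization finishes the proof. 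The main obstacle is exactly the constant-in-$t$ additive term $2^{-\Omega(k)}$ in Lemma \ref{lemma:tail_probability_small_independence_p_stable}: a constant-magnitude error integrated against $[0, T]$ with $T = \poly(n)$ is not automatically small, and the fix is to take the constant $C$ in $k = C\log n$ (together with the exponent of $D$) large enough that both $2^{-\Omega(\log n)}$ and $n^{-\Omega(1)}$ beat $1/T$, which is free since the hypothesis only demands $O(\log n)$-wise independence with an unspecified implicit constant.
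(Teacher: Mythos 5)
Your proposal is correct, but it proves the lower bound by a genuinely different route than the paper. The paper splits the two directions across two different techniques: for $E[|\langle Z, x\rangle|] \leq O_p(1)$ it uses exactly your tail-comparison argument (write the expectation as $\int_0^{\poly(n)} \Prob{|\langle Z', x\rangle| \geq t}\,dt$, invoke Lemma \ref{lemma:tail_probability_small_independence_p_stable}, and integrate), while for $E[|\langle Z, x\rangle|] \geq \Omega_p(1)$ it instead runs the smooth-indicator/anti-concentration machinery of \cite{knw_10_approximating_lp_norm_p_stable} (Lemmas \ref{lemma:smooth_approximations_to_interval}, \ref{lemma:smooth_function_removing_full_independence}, \ref{lemma:anti_concentration_of_p_stables}) to show that $|\langle Z, x\rangle| \geq b$ with probability at least $\Omega(1)$ for a fixed constant $b$. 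You obtain both directions in one pass from the two-sided estimate $E[|\langle Z, x\rangle|] = E[|Q|] \pm o(1)$, which is more economical and, for this lemma, lets you bypass the $J^c_{[a,b]}$ construction entirely. The one place your route genuinely needs care is the one you flagged: the constant-in-$t$ terms ($2^{-\Omega(k)}$ from Lemma \ref{lemma:tail_probability_small_independence_p_stable} and the de-truncation error $n\cdot O(D^{-p})$) get multiplied by $T = \poly(n)$ after integration, so the independence parameter $C\log n$ and the degree of $D$ must be chosen after fixing the degree of $T$; since $T = D\|x\|_1 \leq D n^{1-1/p}$ grows only linearly in $D$ while the truncation error decays like $D^{-p}$ with $p > 1$, this closes, exactly as you argue. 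Two cosmetic remarks: your coupling $Z_i = R_i \cdot \mathbf{1}[|R_i| \leq D]$ zeroes out large values rather than conditioning on $|R_i| \leq D$ as the paper defines truncation, but the two differ only on an event of probability $n^{-\Omega(1)}$ and the paper is equally informal on this point; and your lower bound is $\Omega_p(1) - o(1)$, so it is an asymptotic-in-$n$ statement, though the paper's own argument has the same character.
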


\begin{proof}
Let $x \in \R^n$ with $\|x\|_p = 1$, and define $Z$ as in the statement of this lemma. Here, the amount of independence of the coordinates of $Z$ is actually $O(\max(\log n, \log(1/\eps)))$, where $\eps$ is a constant that will be chosen to be sufficiently small later. We will prove the lemma in two steps:

\begin{itemize}
    \item With constant probability, $|\langle Z, x \rangle| \geq \Omega(1) \med_p$ which is $\Omega(1)$ by Lemma \ref{lemma:p_stable_median_omega_1}. Therefore, $E[|\langle Z, x \rangle|] \geq \Prob{|\langle Z, x \rangle \geq \Omega(1)} \cdot \Omega(1) = \Omega(1)$.
    \item $E[|\langle Z, x \rangle|] \leq O_p(1)$.
\end{itemize}
Then, we can extend to general $x$ by multiplying both sides by $\|x\|_p$.

The proof of the first step is similar to that of Theorem 2.1 of \cite{knw_10_approximating_lp_norm_p_stable}, and Lemma I.3 of \cite{l1_lower_bound_and_sqrtk_subset}. First assume for simplicity that the coordinates of $Z$ are standard $p$-stable random variables that are not truncated at $D = \poly(n)$ --- for $D = n^c$ and $c$ sufficiently large, the probability of $\|Z\|_\infty \leq D$ is $1 - o(1)$, so this assumption does not affect the probability calculations in the proof of the first step. Let $b > 0$ be a small constant, chosen so that if $X$ is a $p$-stable random variable, then with probability $0.999$, $|X| \geq b$. In addition, define $Y \in \R^n$ so that the coordinates of $Y$ are i.i.d. $p$-stable random variables. Note that $\langle Y, x \rangle$ is equal to $\|x\|_pX = X$, where $X$ is a $p$-stable random variable. Hence, if $1_{[-b, b]}$ is the indicator function of $[-b, b]$, then
$$E[1_{[-b, b]}(\langle Y, x \rangle)] = \Prob{|\langle Y, x \rangle| \leq b} \leq 0.001$$
For convenience, let $f_0 = 1_{[-b, b]}$, and define $f = J_{[-b, b]}^c$, as in Lemma \ref{lemma:smooth_approximations_to_interval}, for some sufficiently small $\eps > 0$ and sufficiently large $c$. Now, we wish to show that $E[1_{[-b, b]}(\langle Z, x \rangle)] = \Prob{|\langle Z, x \rangle| \leq b}$ is small. Following the proofs of Theorem 2.1 of \cite{knw_10_approximating_lp_norm_p_stable} and Lemma I.3 of \cite{l1_lower_bound_and_sqrtk_subset}, we will proceed by showing that
$$E[f_0(\langle Y, x \rangle)] \approx_{O(\eps)} E[f(\langle Y, x \rangle)] \approx_{O(\eps)} E[f(\langle Z ,x \rangle)] \approx_{O(\eps)} E[f_0(\langle Z, x \rangle)]$$
First, we show that $|E[f_0(\langle Y, x \rangle)] - E[f(\langle Y, x \rangle)]| \leq O_p(\eps)$ as follows. If $\langle Y, x \rangle$ is not in the intervals $[-b - \eps, -b + \eps]$ or $[b - \eps, b + \eps]$, then $|f_0(\langle Y, x \rangle) - f(\langle Y, x \rangle)| \leq \eps$ by the construction in Lemma \ref{lemma:smooth_approximations_to_interval}. On the other hand, since $\|x\|_p = 1$, $\langle Y, x \rangle$ is equal to $X$, where $X$ is a standard $p$-stable random variable. By Lemma \ref{lemma:anti_concentration_of_p_stables}, $X$ is anti-concentrated, meaning
$$\Prob{X \in [-b - \eps, -b + \eps]} = \Prob{X \in [b - \eps, b + \eps]} = O_p(\eps)$$
Moreover, $|f_0| \leq 1$ and $|f| \leq O(1)$ by the construction in Lemma \ref{lemma:smooth_approximations_to_interval}. Hence, if we let $\calE_1$ and $\calE_2$ be the events that $\langle Y, x \rangle$ is in $[-b - \eps, -b + \eps]$ and $[b - \eps, b + \eps]$ respectively, then
\begin{equation}
\begin{split}
\Big|E[f_0(\langle Y, x \rangle)] - E[f(\langle Y, x \rangle)]\Big|
& \leq E\Big[ |f_0(\langle Y, x \rangle) - f(\langle Y, x \rangle)|\Big] \\
& \leq E\Big[ |f_0(\langle Y, x \rangle) - f(\langle Y, x \rangle)| \mid \neg \calE_1 \cap \neg \calE_2 \Big] \\
& \,\, + E\Big[ |f_0(\langle Y, x \rangle) - f(\langle Y, x \rangle)| \mid \calE_1 \cup \calE_2 \Big] \\
& \leq O_p(\eps) + \Prob{\calE_1 \cup \calE_2} \cdot O(1) \\
& \leq O_p(\eps)
\end{split}
\end{equation}
where the first inequality is by Jensen's inequality. Next, observe that $|E[f(\langle Y, x \rangle)] - E[f(\langle Z, x \rangle)]| \leq O(\eps)$ due to Lemma \ref{lemma:smooth_function_removing_full_independence}.

Finally, we show that $|E[f(\langle Z, x \rangle)] - E[f_0(\langle Z, x \rangle)]| \leq O(\eps)$. By Lemma \ref{lemma:anti_concentration_of_p_stables}, $\langle Z, x \rangle$ is anti-concentrated, and in particular, $\ProbBig{\langle Z, x \rangle \in [-b - \eps, -b + \eps]} \leq O(\eps)$ and $\ProbBig{\langle Z, x \rangle \in [b - \eps, b + \eps]} \leq O(\eps)$. Moreover, outside of the intervals $[-b - \eps, -b + \eps]$ and $[b - \eps, b + \eps]$, $|f - f_0| \leq O(\eps)$. Hence, by the same reasoning used to show that $|E[f(\langle Y, x \rangle)] - E[f_0(\langle Y, x \rangle)]| \leq O(\eps)$, we know that $|E[f(\langle Z, x \rangle)] - E[f_0(\langle Z, x \rangle)]| \leq O(\eps)$. In summary, we have shown that
$$E[f_0(\langle Y, x \rangle)] \approx_{O(\eps)} E[f(\langle Y, x \rangle)] \approx_{O(\eps)} E[f(\langle Z, x \rangle)] \approx_{O(\eps)} E[f_0(\langle Z, x \rangle)]$$
meaning
$$E[1_{[-b, b]}(\langle Z, x \rangle)] \leq E[1_{[-b, b]}(\langle Y, x \rangle)] + O(\eps) \leq 0.001 + O(\eps)$$
Letting $\eps$ be a sufficiently small constant, we find that $\Prob{|\langle Z, x \rangle| \geq b} \geq \Omega(1)$, meaning $E[|\langle Z, x \rangle|] \geq \Omega_p(1)$.

Now, we show that $E[|\langle Z, x \rangle|] \leq O_p(1)$. We now drop the assumption that the $Z_i$ are not truncated --- in other words, let the $Z_i$ follow a standard $p$-stable distribution, conditioned on $|Z_i| \leq D = \poly(n)$. Also, define $Z' \in \R^n$, so that the coordinates of $Z'$ are standard (un-truncated) $p$-stable random variables, and the coordinates of $Z'$ are $O(\log n)$-wise independent. Finally, let $X$ be a standard $p$-stable random variable.

First, we write the expectation of $|\langle Z, x \rangle|$ in terms of the tails of $|\langle Z', x \rangle|$ --- then, we can relate the tails of $|\langle Z', x \rangle|$ to those of $|X|$ using Lemma \ref{lemma:tail_probability_small_independence_p_stable}. Note that
\begin{equation}
\begin{split}
E\Big[|\langle Z, x \rangle|\Big]
& = E\Big[|\langle Z', x \rangle| \mid \|Z'\|_\infty \leq D \Big] \\
& = \int_0^\infty \ProbBig{|\langle Z', x \rangle| \geq t \mid \|Z'\|_\infty \leq D} dt \\
& = \int_0^{Dn} \ProbBig{|\langle Z', x \rangle| \geq t \mid \|Z'\|_\infty \leq D} dt \\
& = \int_0^{Dn} \frac{\Prob{|\langle Z', x \rangle| \geq t \text{ and } \|Z'\|_\infty \leq D}}{\Prob{\|Z'\|_\infty \leq D}} dt \\
& \leq O(1) \int_0^{Dn} \ProbBig{|\langle Z', x \rangle| \geq t} dt
\end{split}
\end{equation}
Here, the third equality is because, if $\|Z'\|_\infty \leq D$, then $\|Z'\|_2 \leq D\sqrt{n}$, and by the Cauchy-Schwarz inequality, $|\langle Z', x \rangle| \leq \|Z'\|_2 \|x\|_2 \leq D\sqrt{n}\sqrt{n} = Dn$. The first inequality holds because with probability $1 - o(1)$, $\|Z'\|_\infty \leq D$ as long as $D = n^c$ for sufficiently large $c$.

Hence, our task is now to show that the integral $\int_0^{Dn} \Prob{|\langle Z', x \rangle | \geq t} dt$ is at most $O_p(1)$. Since the coordinates of $Z'$ are $O(\log n)$-wise independent $p$-stable random variables (which are in particular not truncated), Lemma \ref{lemma:tail_probability_small_independence_p_stable} is now applicable, and we can use it to relate the integral to the expectation of $|X|$. First, note that
$$0 \leq \int_0^{Dn} \Prob{|X| \geq t} dt \leq \int_0^\infty \Prob{|X| \geq t} dt = E[|X|] \leq O_p(1)$$
and therefore, it suffices to show that
$$\Big| \int_0^{Dn} \ProbBig{|\langle Z', x \rangle| \geq t} dt - \int_0^{Dn} \ProbBig{|X| \geq t} dt \Big| \leq O_p(1)$$
as follows:
\begin{equation}
\begin{split}
& \Big| \int_0^{Dn} \ProbBig{|\langle Z', x \rangle| \geq t} dt - \int_0^{Dn} \ProbBig{|X| \geq t} dt \Big| \\
& \leq \int_0^{Dn} \Big| \ProbBig{|\langle Z', x \rangle| \geq t} dt - \ProbBig{|X| \geq t} \Big| dt \\
& \leq \int_0^{Dn} O\Big(  \frac{(\log n)^{-1/p}}{1 + t^{p + 1}} + \frac{(\log n)^{-2/p}}{1 + t^2} + 2^{-\Omega(\log n)} \Big) dt \\
& \leq o(1) \int_0^{Dn} \frac{1}{1 + t^{p + 1}} dt + o(1) \int_0^{Dn} \frac{1}{1 + t^2} dt + \int_0^{Dn} 2^{-\Omega(\log n)} dt \\
& \leq o(1) \int_0^{\infty} \frac{1}{1 + t^{p + 1}} dt + o(1) \int_0^{\infty} \frac{1}{1 + t^2} dt + \int_0^{Dn} 2^{-\Omega(\log n)} dt \\
& \leq o(1) + \int_0^{Dn} 2^{-\Omega(\log n)} dt \\
& \leq o(1)
\end{split}
\end{equation}
Here, the second inequality is by Lemma \ref{lemma:tail_probability_small_independence_p_stable}, and the last inequality holds if the coordinates of $Z$ and $Z'$ are $(C \log n)$-wise independent for some sufficiently large $C$, since $Dn = \poly(n)$.
\end{proof}

In order to have finitely many values for the $Z_i$ (and hence a finite sample space) we can round the $Z_i$ to the nearest integer multiple of $\frac{1}{\poly(n)}$:

\begin{lemma}[Expected Value of $|\langle Z, x \rangle|$ After Rounding $Z$]
\label{lemma:rounded_p_stable_inner_product_log_independence_expected}
Let $n \in \N$ and $x \in \R^n$. For $i \in [n]$, let $Z_i$ be a standard $p$-stable random variable truncated at $D = \poly(n)$, and rounded to the nearest multiple of $\frac{1}{F}$, where $F = \poly(n)$. Suppose the $Z_i$ are $O(\log n)$-wise independent. Let $Z \in \R^n$ so that the $i^{th}$ coordinate of $Z$ is $Z_i$. Then, $E[|\langle Z, x \rangle|] = \Theta_p(1) \|x\|_p$.
\end{lemma}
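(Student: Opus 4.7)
The plan is to reduce this lemma to the previous Lemma~\ref{lemma:p_stable_inner_product_log_independence_expected} by a direct perturbation argument, treating rounding as a small $\ell_\infty$ perturbation of the unrounded truncated vector. Let $Z' \in \R^n$ be defined identically to $Z$ except that its coordinates $Z'_i$ are the truncated-but-\emph{unrounded} standard $p$-stable variables; the $Z'_i$ are $O(\log n)$-wise independent, so Lemma~\ref{lemma:p_stable_inner_product_log_independence_expected} applies to give $E[|\langle Z', x \rangle|] = \Theta_p(1)\|x\|_p$.

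Next I would control the effect of rounding. By construction $\|Z - Z'\|_\infty \leq 1/F$ deterministically, so by the reverse triangle inequality followed by H\"older's inequality,
\begin{equation*}
\Big| |\langle Z, x \rangle| - |\langle Z', x \rangle| \Big| \;\leq\; |\langle Z - Z', x \rangle| \;\leq\; \|Z - Z'\|_\infty \|x\|_1 \;\leq\; \frac{n}{F} \|x\|_1 \;\leq\; \frac{n^{1 + 1/p^*}}{F}\|x\|_p,
\end{equation*}
where $p^*$ is the H\"older conjugate of $p$. Taking expectations and choosing the polynomial $F = \poly(n)$ sufficiently large (say $F \geq n^{10}$), we conclude
\begin{equation*}
\Big| E[|\langle Z, x \rangle|] - E[|\langle Z', x \rangle|] \Big| \;\leq\; \frac{n^{1 + 1/p^*}}{F}\|x\|_p \;=\; o(1)\|x\|_p.
\end{equation*}

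Combining with Lemma~\ref{lemma:p_stable_inner_product_log_independence_expected}, there exist constants $c_p, C_p > 0$ such that $c_p\|x\|_p \leq E[|\langle Z', x \rangle|] \leq C_p\|x\|_p$, so the additive $o(1)\|x\|_p$ error is absorbed into the constants and yields $E[|\langle Z, x\rangle|] = \Theta_p(1)\|x\|_p$. There is no real obstacle here: the only thing to check is that $F$ can indeed be chosen as a large enough polynomial in $n$, which is already allowed by the statement of the lemma, and that the lower bound survives the perturbation, which it does because the perturbation is $o(1)\|x\|_p$ while the lower bound is $\Omega_p(1)\|x\|_p$.
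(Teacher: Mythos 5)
Your proposal is correct and follows essentially the same route as the paper: reduce to the unrounded truncated case via the previous lemma, bound the difference of expectations by $E[|\langle Z - Z', x\rangle|] \leq \|Z - Z'\|_\infty \|x\|_1 \leq \poly(n)/F \cdot \|x\|_p$, and absorb this into the $\Theta_p(1)$ constants by taking $F$ a sufficiently large polynomial (the paper even gets away with $F = Cn$ after normalizing $\|x\|_p = 1$). Your intermediate bound $\|Z - Z'\|_\infty\|x\|_1 \leq \frac{n}{F}\|x\|_1$ carries a superfluous factor of $n$ (it should be $\frac{1}{F}\|x\|_1$), but since it is only a looser upper bound the argument is unaffected.
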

\begin{proof}
Let $Y \in \R^n$ so that the coordinates of $Y$ are $O(\log n)$-wise independent $p$-stable random variables truncated at $D = \poly(n)$, and suppose $Z \in \R^n$ so that the coordinates of $Z$ are those of $Y$, but rounded to the nearest multiple of $\frac{1}{F}$ where $F = \poly(n)$. Finally, fix $x \in \R^n$ with $\|x\|_p = 1$. Then, by Lemma \ref{lemma:p_stable_inner_product_log_independence_expected}, $E[|\langle Y, x \rangle|] = \Theta_p(1)$. Therefore,
\begin{equation}
\begin{split}
\Big| E[|\langle Z, x \rangle|] - E[|\langle Y, x \rangle|] \Big| 
& \leq E\Big[\Big| |\langle Z, x \rangle| - |\langle Y, x \rangle|\Big| \Big] \\
& \leq E\Big[ |\langle Z - Y, x \rangle|\Big] \\
& \leq \frac{n}{F}
\end{split}
\end{equation}
and to have $E[|\langle Z, x \rangle|] = \Theta_p(1)$, it suffices to choose $F = Cn$ for a sufficiently large absolute constant $C$.
\end{proof}

Hence, we can assume that each of the $Z_i$ has at most $\poly(n)$ values. Now, since we only need the $Z_i$ to be $O(\log n)$-wise independent, a sample space of size $n^{O(\log n)}$ suffices to construct the embedding matrix in Fact \ref{fact:ell1_embedding_expected_value_observation}, and this sample space can be efficiently constructed by the following theorem of \cite{km93_small_sample_space}:

\begin{theorem}[Theorem 3.2 of \cite{km93_small_sample_space}]
\label{theorem:constructing_k_wise_independent_sample_spaces}
Let $X = (X_1, X_2, \ldots, X_n)$ be a random vector, such that the $X_i$ are identically distributed and take values in $[r]$. In addition, suppose the $X_i$ satisfy all independence constraints belonging to a set $\calC$, i.e., $\calC$ consists of subsets of $[n]$ such that, for all $S \in \calC$,
$$\ProbBig{X_i = x_i \text{ }\forall i \in S} = \prod_{i \in S} \Prob{X_i = x_i}$$
Then, it is possible to construct a joint distribution over the $X_i$ with sample space $\Omega$, such that $|\Omega| \leq |\calC|$, in running time $O(rn |\calC|^{2.62})$. 

In particular, if we wish for the $X_i$ to be $k$-wise independent (corresponding to the case when $\calC = \binom{[n]}{k}$) then $|\Omega| \leq \binom{n}{k}$ and the running time is $O(rn^{O(k)})$.
\end{theorem}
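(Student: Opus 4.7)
The plan is to phrase the construction as a linear programming feasibility problem and then invoke the classical fact that a basic feasible solution of an LP with $m$ linearly independent equality constraints has at most $m$ strictly positive variables; the support of such a BFS will be the desired sample space $\Omega$. Concretely, I introduce variables $p_\omega \ge 0$ for each $\omega \in [r]^n$ (to be interpreted as $\Pr[X = \omega]$) and impose normalization $\sum_\omega p_\omega = 1$ together with, for every $S \in \calC$ and every $x \in [r]^S$, the marginalization identity
$$
\sum_{\omega:\,\omega_i = x_i\,\forall i \in S} p_\omega \;=\; \prod_{i \in S}\Pr[X_i = x_i].
$$
Feasibility is immediate because the fully independent product distribution satisfies every constraint in $\calC$.

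The next step is the counting of effective constraints. Written naively the number of marginalization constraints is $\sum_{S \in \calC} r^{|S|}$, but there are heavy linear dependencies: summing the constraints for a fixed $S$ over all $x \in [r]^S$ recovers the normalization identity, and more subtly, for $S \subset S'$ with both in $\calC$ the constraints for $S$ are linear combinations of those for $S'$ (project out the extra coordinates). A careful M\"obius/inclusion--exclusion analysis on the subsets appearing in $\calC$ shows that the number of linearly independent equality constraints is at most $|\calC|$. Applying the basic feasible solution bound then yields $|\Omega| \le |\calC|$. This counting step is the step I expect to be the main obstacle: the naive bound $\sum_{S \in \calC} r^{|S|}$ is exponentially loose, and sharpness of the $|\calC|$ bound requires identifying precisely which marginalization constraints are genuinely independent.

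For the algorithmic side, we cannot write down the full LP since it has $r^n$ variables. Instead the construction is iterative. Start from the implicit product distribution, representing it as a sum of at most a small number of atomic supports, and then repeatedly sparsify: given the current support $T$, compute the constraint matrix restricted to $T$, find a nonzero vector $v$ in its kernel, and move $p \mapsto p + \alpha v$ with $\alpha$ chosen of maximal magnitude so that feasibility is preserved and at least one additional coordinate becomes zero. Each such pivot requires solving a linear system whose size is at most $|\calC|$, costing $O(|\calC|^{2.62})$ with fast matrix inversion; the outer loop contributes a factor of $rn$ (bounded essentially by the initial support's size after a first round of coarse sparsification, combined with the $rn$-bit description of each atom). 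This yields the claimed $O(rn \cdot |\calC|^{2.62})$ runtime. The specialization to $k$-wise independence follows by substituting $|\calC| = \binom{n}{k}$ and noting that an initial $k$-wise independent distribution with poly$(n^k)$ support can be produced, e.g., from a product construction, before beginning the sparsification loop.
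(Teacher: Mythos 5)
The paper does not actually prove this statement --- it is imported wholesale from Koller and Megiddo \cite{km93_small_sample_space} --- so there is no in-paper proof to compare against. Your high-level skeleton (an LP feasibility formulation over $p_\omega$, feasibility via the product distribution, and the fact that a basic feasible solution has support bounded by the number of linearly independent equality constraints) is indeed the route Koller--Megiddo take, so the approach is the right one in spirit.

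However, the step you yourself flag as the main obstacle --- showing that the number of linearly independent constraints is at most $|\calC|$ when $\calC$ is a collection of \emph{subsets} --- is an obstacle because it is false, and no M\"obius argument will rescue it. Take $n = 2$, $r = 2$, $\calC = \{\{1,2\}\}$, uniform marginals: the constraints force the product distribution, whose support has size $4 > 1 = |\calC|$. The resolution is that in Koller--Megiddo a ``constraint'' is a pair (subset, assignment), so the relevant count is $\sum_{S \in \calC} r^{|S|}$, which after removing redundancies is $\sum_{j=0}^{k}\binom{n}{j}(r-1)^j$ for $k$-wise independence --- not $\binom{n}{k}$. Your BFS bound then gives support size $(rn)^{O(k)}$ rather than $\binom{n}{k}$; this still suffices for the application in the paper (where $r = \poly(n)$ and $k = O(\log n)$, so $(rn)^{O(k)} = n^{O(\log n)}$), but it is not the bound you claim to prove. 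Separately, your algorithmic sketch does not close: you propose to ``start from the implicit product distribution, representing it as a sum of at most a small number of atomic supports,'' but the product distribution has support $r^n$, and a ``first round of coarse sparsification'' of an object you cannot enumerate is exactly the problem the algorithm must solve; the constructive-Carath\'eodory pivoting you describe only runs in time polynomial in $|\calC|$ once you explain how to generate candidate atoms and bound the number of pivots without ever touching all $r^n$ points, which is the genuinely nontrivial content of the Koller--Megiddo algorithm and is missing here.
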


In summary, combining Fact \ref{fact:ell1_embedding_expected_value_observation}, Lemma \ref{lemma:rounded_p_stable_inner_product_log_independence_expected} and Lemma \ref{theorem:constructing_k_wise_independent_sample_spaces}, we obtain the following:

\begin{theorem}[Deterministic Embedding of $\ell_p^n$ into $\ell_1^{n^{O(\log n)}}$]
Let $n \in \N$, and $p \in (1, 2)$. Then, there exists a matrix $R \in \R^{r \times n}$, where $r = n^{O(\log n)}$, such that for all $x \in \R^n$,
$$\Omega_p(1) \|x\|_p \leq \|Rx\|_1 \leq O_p(1) \|x\|_p$$
$R$ can be constructed deterministically in $n^{O(\log n)}$ time.
\end{theorem}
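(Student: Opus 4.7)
The plan is to directly assemble the three preceding ingredients---Fact \ref{fact:ell1_embedding_expected_value_observation} (the ``expectation equals $\ell_1$'' identity), Lemma \ref{lemma:rounded_p_stable_inner_product_log_independence_expected} (the $\Theta_p(1)\|x\|_p$ bound on $E[|\langle Z,x\rangle|]$ for $O(\log n)$-wise independent rounded truncated $p$-stables), and Theorem \ref{theorem:constructing_k_wise_independent_sample_spaces} (small-sample-space construction of $k$-wise independent distributions)---so the main work is verifying that the hypotheses of each dovetail with the outputs of the previous.

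First I would fix the marginal distribution for each coordinate: let $F$ be the law of a standard $p$-stable random variable, conditioned on $|X|\le D$ and then rounded to the nearest multiple of $1/F$, where $D=\poly(n)$ and $F=\poly(n)$ are chosen as in Lemma \ref{lemma:rounded_p_stable_inner_product_log_independence_expected}. Then $F$ is supported on a set of size $r = \poly(n)$. Applying Theorem \ref{theorem:constructing_k_wise_independent_sample_spaces} with this marginal distribution, $k = C\log n$ for the absolute constant $C$ required by Lemma \ref{lemma:rounded_p_stable_inner_product_log_independence_expected}, and $\calC = \binom{[n]}{k}$, yields a joint distribution over identically-distributed, $k$-wise independent random variables $R_1,\dots,R_n$ whose sample space $\Omega = \{\omega_1,\dots,\omega_m\}$ has size $m \leq \binom{n}{k} = n^{O(\log n)}$, constructible in $O(rn^{O(k)}) = n^{O(\log n)}$ time.

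Next I would define the embedding matrix $R \in \R^{m\times n}$ by $R_{i,j} := \Prob{\omega_i}\,R_j(\omega_i)$, exactly as in Fact \ref{fact:ell1_embedding_expected_value_observation}. For any fixed $x\in\R^n$, setting $X := \sum_{j=1}^n R_jx_j$ and using $\Prob{\omega_i}\ge 0$,
\[
\|Rx\|_1 \;=\; \sum_{i=1}^m \Bigl|\sum_{j=1}^n \Prob{\omega_i}R_j(\omega_i)\,x_j\Bigr| \;=\; \sum_{i=1}^m \Prob{\omega_i}\,|X(\omega_i)| \;=\; E\bigl[|X|\bigr].
\]
Letting $Z\in\R^n$ be the random vector with $Z_j = R_j$, this is $E[|\langle Z,x\rangle|]$. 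Because the $Z_j$ are identically distributed as rounded truncated $p$-stables and are $O(\log n)$-wise independent, Lemma \ref{lemma:rounded_p_stable_inner_product_log_independence_expected} applies and gives $E[|\langle Z,x\rangle|] = \Theta_p(1)\,\|x\|_p$. Combining, $\Omega_p(1)\|x\|_p \le \|Rx\|_1 \le O_p(1)\|x\|_p$, as required.

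For the running time, constructing the distribution costs $n^{O(\log n)}$ by Theorem \ref{theorem:constructing_k_wise_independent_sample_spaces}, and forming $R$ requires one pass over the $m\cdot n = n^{O(\log n)}$ entries; the row count $r = m = n^{O(\log n)}$. The only minor subtlety---and the most delicate point to check---is that the marginal distribution used in Theorem \ref{theorem:constructing_k_wise_independent_sample_spaces} is supported on a finite set while Lemma \ref{lemma:rounded_p_stable_inner_product_log_independence_expected} is stated for that same rounded/truncated marginal, so no additional approximation is incurred; truncation at $D$ and rounding at $1/F$ introduce $o(1)$ perturbations already accounted for inside the statement of Lemma \ref{lemma:rounded_p_stable_inner_product_log_independence_expected}. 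Nothing beyond these verifications is needed.
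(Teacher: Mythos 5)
Your assembly of Fact \ref{fact:ell1_embedding_expected_value_observation}, Lemma \ref{lemma:rounded_p_stable_inner_product_log_independence_expected}, and Theorem \ref{theorem:constructing_k_wise_independent_sample_spaces} is exactly the skeleton of the paper's proof, and the distortion bound $\|Rx\|_1 = E[|\langle Z, x\rangle|] = \Theta_p(1)\|x\|_p$ goes through as you describe. However, there is a genuine gap in the computability claim, and it sits precisely at the point you dismiss as "no additional approximation is incurred." To run the construction of Theorem \ref{theorem:constructing_k_wise_independent_sample_spaces} you must hand it the marginal distribution of a single coordinate $Z_i$ explicitly, i.e., the point masses $\Prob{t \leq Z_i \leq t + 1/F}$ for each grid point $t$. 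These are differences of the c.d.f. of a standard $p$-stable random variable, which for general $p \in (1,2)$ has no closed form; it is only given by Nolan's integral representation $\Prob{X \leq x} = \frac{1}{\pi}\int_0^{\pi/2} e^{-x^{p/(p-1)}V(\theta;p)}\,d\theta$. So an additional approximation \emph{is} incurred: you can only compute these marginals numerically, to some finite accuracy, in the allotted time.

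Closing this gap requires two further arguments, both of which the paper supplies. First, one must show the integral can be evaluated deterministically to within additive error $1/\poly(n)$ in $\poly(n)$ time --- the paper does this by truncating the integration range away from the singularities at $0$ and $\pi/2$, Taylor-expanding the trigonometric factors and the exponential to $\poly(n)$ terms, and applying the trapezoidal rule with a bound on the second derivative of the integrand. Second, one must show that feeding Theorem \ref{theorem:constructing_k_wise_independent_sample_spaces} a marginal that is only $1/\poly(n)$-close to the true discretized $p$-stable law still yields $E[|\langle W, x\rangle|] = \Theta_p(1)\|x\|_p$; the paper proves this via a coupling through a common uniform random variable, showing the coupled vectors disagree only on an event of probability $1/\poly(n)$, on which the contribution to the expectation is negligible since $\|\mathbf{W}\|_2, \|\mathbf{Z}\|_2 \leq \poly(n)$ almost surely. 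Without some version of these two steps, the claim that $R$ is deterministically constructible in $n^{O(\log n)}$ time is not established.
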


\begin{remark}
The running time depends on $p$ as well, and goes to infinity as $p \to 1$. In particular, the number of terms in the Taylor series of $e^x$ needed below depends on $p$. This does not affect our reductions later on, since we can treat $p$ as a constant when reducing $\ell_{1, p}$ low rank approximation to constrained $\ell_1$ low rank approximation for a fixed $p$ (and similarly when reducing $\ell_p$ low rank approximation to $\ell_{1, p}$ low rank approximation).
\end{remark}

\begin{proof}
We apply Theorem \ref{theorem:constructing_k_wise_independent_sample_spaces} with the $X_i$ being $O(\log n)$-wise independent. Here, $X$ has the same distribution as the $Z$ specified in the statement of Lemma \ref{lemma:rounded_p_stable_inner_product_log_independence_expected}. Each of the coordinates $Z_i$ of $Z$ takes on values between $-D$ and $D$, where $D = \poly(n)$, in increments of $\frac{1}{F}$ where $F = \poly(n)$. To run the procedure described in Theorem \ref{theorem:constructing_k_wise_independent_sample_spaces}, we must approximately compute the distribution of $Z_i$, meaning we must find $\Prob{t \leq Z_i \leq t + F}$, for all integer multiples $t$ of $F$ that are between $-D$ and $D$.

\begin{claim}
Each of the probabilities $\Prob{t \leq Z_i \leq t + F}$ can be computed with error at most $\frac{1}{\poly(n)}$ in $\poly(n)$ time.
\end{claim}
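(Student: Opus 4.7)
The plan is to compute each probability $\Prob{t \leq Z_i \leq t + 1/F}$ by first writing it as a difference of tail probabilities of an un-truncated standard $p$-stable random variable $X$, then using a numerical quadrature of the integral formula for the c.d.f.\ that already appears in the proof of Lemma~\ref{lemma:p_stable_median_omega_1}, and finally correcting for the truncation at $D$ and the rounding to multiples of $1/F$.

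First I would reduce to the un-truncated case: since $Z_i$ is a standard $p$-stable $X$ conditioned on $|X| \leq D$ and then rounded,
\[
\ProbBig{t \leq Z_i \leq t + 1/F} \;=\; \frac{\ProbBig{\tilde t \leq X \leq \tilde t + 1/F}}{\ProbBig{|X| \leq D}}
\]
for a suitable $\tilde t$ (a rounded version of $t$). For $D = \poly(n)$ sufficiently large, $\Prob{|X| \leq D} = 1 - 1/\poly(n)$ by the $x^{-(p+1)}$ tail bound already used in the paper (Theorem~1.12 of \cite{nolan_18_book_chapter_1}); this factor can be computed to additive error $1/\poly(n)$ the same way as the numerator below and then inverted. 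Thus it suffices to compute $\Prob{\tilde t \leq X \leq \tilde t + 1/F} = \Prob{X > \tilde t} - \Prob{X > \tilde t + 1/F}$ to error $1/\poly(n)$.

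Next I would use the Nolan c.d.f.\ formula recalled in the proof of Lemma~\ref{lemma:p_stable_median_omega_1}: for $x>0$,
\[
\ProbBig{X > x} \;=\; 1 - \frac{1}{\pi}\int_{0}^{\pi/2} e^{-x^{p/(p-1)} V(\theta;p)}\, d\theta,
\]
with the symmetric version handling $x \leq 0$. I would approximate this integral by the trapezoidal (or Simpson) rule on a grid of $N = \poly(n)$ points, after splitting $[0,\pi/2]$ into two pieces: a bulk region $[0,\pi/2 - \eta]$ where $V(\theta;p)$ is bounded by a constant $D_p$ (uniformly in $p$ on any compact sub-interval of $(1,2)$, exactly as used in the claim inside Lemma~\ref{lemma:p_stable_median_omega_1}), and a boundary piece $[\pi/2 - \eta, \pi/2]$ of length $\eta = 1/\poly(n)$ whose contribution is trivially at most $\eta = 1/\poly(n)$. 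On the bulk region the integrand is infinitely differentiable with derivatives bounded by $\poly(n)$ (since $|x| \leq D \leq \poly(n)$, and the relevant trigonometric functions and their derivatives are bounded away from singularities), so standard error bounds for composite Simpson's rule give additive error $1/\poly(n)$ with $N = \poly(n)$ sample points.

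Each evaluation of the integrand requires computing $\cos$, $\sin$, a constant power, and an exponential to additive accuracy $1/\poly(n)$; each of these can be done in $\poly(n)$ time via truncated Taylor series (noting that all arguments are polynomially bounded, so only $\poly(n)$ Taylor terms are needed). Multiplying the per-evaluation error by the number of evaluations and adding the quadrature error and the boundary piece still yields total additive error $1/\poly(n)$, and the total work is $\poly(n)$. The main obstacle I anticipate is making the constants in the quadrature error uniform across the entire range of $\tilde t \in [-D,D]$ — in particular, for large $|\tilde t|$ the exponential $e^{-\tilde t^{p/(p-1)} V(\theta;p)}$ becomes extremely small, but this is harmless for additive error since the integrand is then uniformly small and the subtraction $\Prob{X>\tilde t} - \Prob{X>\tilde t+1/F}$ can be bounded directly using the $x^{-(p+1)}$ tail estimate, which is itself computable to $1/\poly(n)$ accuracy. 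This finishes the claim, and plugging these distribution estimates into Theorem~\ref{theorem:constructing_k_wise_independent_sample_spaces} yields a sample space of size $n^{O(\log n)}$ constructible in $n^{O(\log n)}$ time, completing the proof of Theorem~\ref{theorem:embedding_lp_into_l1_deterministic}.
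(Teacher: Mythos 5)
Your overall route is the same as the paper's: write the probability via the Nolan integral representation of the $p$-stable c.d.f., evaluate the integrand with truncated Taylor series, and integrate numerically using a $\poly(n)$-point quadrature rule whose error is controlled by derivative bounds on the integrand. The extra bookkeeping you do for the conditioning on $|X|\le D$ and for expressing the interval probability as a difference of two tail probabilities is consistent with what the paper leaves implicit.

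There is, however, one concrete error in the step that drives your quadrature bound. You claim that $V(\theta;p)$ is bounded by a constant $D_p$ on the ``bulk'' region $[0,\pi/2-\eta]$, citing the claim inside Lemma \ref{lemma:p_stable_median_omega_1}; but that claim is proved only for $\theta\in[\pi/8,\pi/2-c]$, precisely because it needs $\sin(p\theta)$ bounded away from $0$. As $\theta\to 0^+$ we have $\sin(p\theta)\to 0$ while $\cos\theta\to 1$, so $V(\theta;p)=\big(\cos\theta/\sin p\theta\big)^{p/(p-1)}\cdot\cos((p-1)\theta)/\cos\theta\to\infty$, and $V'$ and $V''$ blow up as well. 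Consequently your assertion that the integrand has derivatives bounded by $\poly(n)$ on $[0,\pi/2-\eta]$ because ``the relevant trigonometric functions and their derivatives are bounded away from singularities'' is not justified as stated. (The derivatives of $e^{-x^{p/(p-1)}V(\theta;p)}$ do in fact remain polynomially bounded near $\theta=0$ because the exponential damping beats the blow-up of $V'$, but that needs a separate argument you do not give, and the resulting bound degrades as $x\to 0$, which matters since $x$ can be as small as $1/F=1/\poly(n)$.) The clean fix --- and what the paper does --- is to excise a $1/\poly(n)$-neighborhood of \emph{both} endpoints, integrating over $[1/K,\pi/2-1/K]$ with $K=\poly(n)$: since the integrand lies in $[0,1]$, each excised piece contributes at most $1/K$ additively, and on the remaining interval $\sin(p\theta)$ and $\cos\theta$ are genuinely bounded below by $\Omega(1/K)$, so $V$, $V'$, $V''$ are all $\poly(n)$ and your quadrature analysis goes through. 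With that one change your argument matches the paper's.
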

\begin{proof}

Computing these probabilities reduces to the problem of computing the c.d.f. of a standard $p$-stable random variable at $t$, where the numerator and denominator of $t$ are at most $\poly(n)$. Let $X$ be a standard $p$-stable random variable. Then, as noted in the proof of Lemma \ref{lemma:p_stable_median_omega_1}, it is known due to \cite{nolan_97_pstable_density} that for $x > 0$
$$\Prob{X > x} = 1 - \frac{1}{\pi} \int_0^{\frac{\pi}{2}} e^{-x^{\frac{p}{p - 1}} \cdot V(\theta; p)} d\theta$$
where 
$$V(\theta; p) = \Big(\frac{\cos \theta}{\sin p\theta}\Big)^{\frac{p}{p - 1}} \cdot \frac{\cos (p - 1)\theta}{\cos \theta}$$
Hence,
$$\Prob{X \leq x} = \frac{1}{\pi} \int_0^{\frac{\pi}{2}} e^{-x^{\frac{p}{p - 1}} \cdot V(\theta; p)} d\theta$$

Hence, we must compute the integral $\int_0^{\frac{\pi}{2}} e^{-x^{\frac{p}{p - 1}} \cdot V(\theta; p)} d\theta$ within an accuracy of $O(\frac{1}{K})$, where $K = \poly(n)$. Observe that for $\theta \in [0, \frac{\pi}{2}]$, $\cos \theta \geq 0$, and for $p \in [1, 2]$, $0 \leq p\theta \leq \pi$, meaning $\sin p\theta \geq 0$, and $0 \leq (p - 1)\theta \leq \frac{\pi}{2}$, meaning $\cos (p - 1)\theta \geq 0$. Therefore, $e^{-x^{\frac{p}{p - 1}}V(\theta; p)} \leq 1$, meaning it suffices to evaluate
$$I := \int_{\frac{1}{K}}^{\frac{\pi}{2} - \frac{1}{K}} e^{-x^{\frac{p}{p - 1}} \cdot V(\theta ; p)} d\theta$$
which is at most $\frac{2}{K}$ away from the original integral.

As a first step, we show how to compute the exponent, $-x^{\frac{p}{p - 1}}V(\theta; p)$, within an error of $\frac{1}{2^{\poly(n)}}$ in $\poly(n)$ time. First, note that $|x^{\frac{p}{p - 1}}| \leq \poly(n)$, so we must simply compute $V(\theta; p)$ within an error of $\frac{1}{2^{\poly(n)}}$. However, observe that $\cos \theta$, $\sin p\theta$, $\cos ((p - 1)\theta)$, and $\cos \theta$ are bounded away from $0$ on $[\frac{1}{K}, \frac{\pi}{2} - \frac{1}{K}]$, since their only singularities are at $0$ and $\frac{\pi}{2}$. Furthermore, we can show that $\cos \theta, \sin p\theta \geq \Omega(\frac{1}{K})$ on this interval, by noting that $\cos \theta \geq \frac{1}{2}(\frac{\pi}{2} - x)$, $\sin p\theta \geq \min\Big(\frac{x}{2}, \frac{\pi/2 - x}{2}\Big)$, and $\cos (p - 1)\theta \geq \cos \theta$.

This means that all of the factors in $V(\theta; p)$ are bounded above by $1$ and below by $\frac{1}{\poly(n)}$. Hence, it suffices to compute each of those factors up to an additive $\frac{1}{2^{\poly(n)}}$ error, in order to compute them up to a multiplicative $(1 \pm \frac{1}{2^{\poly(n)}})$ error, since they are all at least $\frac{1}{\poly(n)}$. This suffices to compute $V(\theta; p)$ up to a multiplicative $(1 \pm \frac{1}{2^{\poly(n)}})$ error, and hence up to an additive $\frac{1}{2^{\poly(n)}}$ error (since it is at most $\poly(n)$).

Now, let us compute each of the factors in $V(\theta; p)$ by approximating them with their Taylor series. First, let $P_T$ be the degree-$T$ Taylor polynomial for $f(\theta) = \sin p\theta$. Then, by Taylor's theorem, for all $x \in [0, C]$ (where $C = O(1)$)
$$|P_T(x) - \sin (px)| \leq \max_{x \in [0, C]} |f^{(T)}(x)| \cdot \Big| \frac{x^{T + 1}}{(T + 1)!}\Big| \leq \frac{p^T x^T}{(T + 1)!}$$
and this expression is at most $\frac{1}{2^{O(n \log n)}}$ for $T = O(n)$. We can analyze the other factors of $V(\theta; p)$ similarly.

Hence, for any $\theta \in [\frac{1}{K}, \frac{\pi}{2} - \frac{1}{K}]$, if we let $E := x^{\frac{p}{p - 1}} V(\theta; p)$, then we can compute $E$ up to an additive $\frac{1}{2^{\poly(n)}}$ error. Since the function $f(x) = e^{-x}$ is Lipschitz continuous on $(-\infty, 0]$ with Lipschitz constant $1$, $|e^{-E} - e^{-E'}| \leq \frac{1}{2^{\poly(n)}}$, where $E'$ is our estimate of $E$. In addition, we can compute $e^{-E'}$ up to additive $\frac{1}{2^{\poly(n)}}$ error, by using the first $O(n)$ terms of the Taylor series of $e^{-x}$, through similar reasoning as above: again let $P_T$ be the degree $T$ Taylor polynomial of $e^x$, centered at $0$. We know that $0 \leq x^{\frac{p}{p - 1}}V(\theta; p) \leq \poly(n)$. Now, for $x \in [-\poly(K), 0]$,
$$|e^x - P_T(x)| \leq \Big(\max_{y \in [-\poly(n), 0]} |e^y|\Big) \Big|\frac{x^{T + 1}}{(T + 1)!}\Big| \leq \Big|\frac{x^{T + 1}}{(T + 1)!}\Big|$$
Since $x \leq \poly(n)$, the right-hand side is at most $\frac{1}{2^{\poly(n)}}$ if $T = n^c$ for a sufficiently large $c > 0$.

In summary, we can evaluate the integrand $e^{-x^{\frac{p}{p - 1}}V(\theta; p)}$ at any $\theta \in [\frac{1}{K}, \frac{\pi}{2} - \frac{1}{K}]$, up to an additive $\frac{1}{2^{\poly(n)}}$ error, in $\poly(n)$ time. The final step is to show that we can compute $\int_{\frac{1}{K}}^{\frac{\pi}{2} - \frac{1}{K}} e^{-x^{\frac{p}{p - 1}}V(\theta; p)} d\theta$ by evaluating the integrand at $\poly(n)$ many points. We can do this with the trapezoidal rule, by bounding the second derivative of the integrand. Let
$$f(\theta) = e^{-x^{\frac{p}{p - 1}} V(\theta; p)}$$
Then,
$$f'(\theta) = e^{-x^{\frac{p}{p - 1}} V(\theta; p)} \cdot (-x^{\frac{p}{p - 1}} \cdot V'(\theta; p))$$
and
$$f''(\theta) = e^{-x^{\frac{p}{p - 1}} V(\theta; p)} \cdot (-x^{\frac{p}{p - 1}} \cdot V'(\theta; p))^2 + e^{-x^{\frac{p}{p - 1}} V(\theta; p)} \cdot (-x^{\frac{p}{p - 1}} \cdot V''(\theta; p))$$
Observe that $e^{-x^{\frac{p}{p - 1}} V(\theta; p)} \leq 1$, and $x^{\frac{p}{p - 1}} \leq \poly(n)$. Finally, note that the numerators of $V'(\theta; p)$ and $V''(\theta; p)$ will be at most $1$, and the denominators will only involve $\sin p\theta$ and $\cos \theta$, which are at least $\frac{1}{\poly(n)}$ --- therefore, $V'(\theta; p)$ and $V''(\theta; p)$ are at most $\poly(n)$. Therefore, the second derivative of the integrand is at most $\poly(n)$, and it suffices to evaluate the integrand at $\poly(n)$ many points to get an estimate of $\int_{\frac{1}{K}}^{\frac{\pi}{2} - \frac{1}{K}} e^{-x^{\frac{p}{p - 1}} V(\theta; p)} d\theta$ with $\frac{1}{\poly(n)}$ additive error.
\end{proof}

Observe that if we compute the distribution of the $Z_i$ approximately rather than exactly, then this may affect $E[|\langle Z, x \rangle|]$. We now show that this is not an issue:

\begin{claim}
Suppose $Z_1, Z_2, \ldots, Z_n$ are as specified in the statement of Lemma \ref{lemma:rounded_p_stable_inner_product_log_independence_expected}, and $W_1, W_2, \ldots, W_n$ are $O(\log n)$-wise independent random variables, such that for all $z \in \R$, $|\Prob{Z_i = z} - \Prob{W_i = z}| \leq \frac{1}{\poly(n)}$, and $W_i$ takes on the same values as $Z_i$. Then, $E[|\langle W, x \rangle|] = \Theta_p(1)$.
\end{claim}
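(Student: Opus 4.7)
My plan is to replicate the proof of Lemma \ref{lemma:rounded_p_stable_inner_product_log_independence_expected} (equivalently Lemma \ref{lemma:p_stable_inner_product_log_independence_expected}) with $W$ in place of $Z$, tracking the effect of the $1/\poly(n)$ marginal perturbation as an $o(1)$ additional error. I would first upgrade the precision guarantee of the previous claim: each intermediate step in its computation of $\Prob{Z_i = z}$ (Taylor truncation of $e^{-x}$, approximation of $V(\theta;p)$, trapezoidal quadrature) is in fact bounded by $2^{-\poly(n)}$ rather than merely $1/\poly(n)$, so we may strengthen the hypothesis to $|\Prob{W_i = z} - \Prob{Z_i = z}| \leq 2^{-\poly(n)}$. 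This is the slack needed below. We also assume $\|x\|_p = 1$ without loss of generality, so that $\|x\|_\infty \le 1$ and $|\langle W, x\rangle|, |\langle Z, x\rangle| \leq Dn = \poly(n)$ deterministically.

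The two substantive steps of Lemma \ref{lemma:p_stable_inner_product_log_independence_expected} are (i) a lower bound $E[|\langle Z, x\rangle|] \geq \Omega_p(1)$ proved via anti-concentration (Lemma \ref{lemma:anti_concentration_of_p_stables}) combined with the smooth-indicator trick (Lemmas \ref{lemma:smooth_approximations_to_interval} and \ref{lemma:smooth_function_removing_full_independence}), and (ii) an upper bound $E[|\langle Z, x\rangle|] \leq O_p(1)$ proved via $p$-stable tail matching (Lemma \ref{lemma:tail_probability_small_independence_p_stable}). Each of these three ingredients depends only on the first $k = O(\log n)$ moments of the individual coordinates and on the $O(\log n)$-wise independence of the joint distribution. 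Since the $W_i$ are $O(\log n)$-wise independent just like the $Z_i$, and the per-coordinate moments satisfy
$$|E[W_i^\ell] - E[Z_i^\ell]| \leq s \cdot 2^{-\poly(n)} \cdot D^\ell = 2^{-\poly(n)}$$
for all $\ell \leq k$ (using support size $s = 2DF+1 = \poly(n)$, entrywise bound $D = \poly(n)$, and the fact that $D^\ell \leq n^{O(\log n)}$ is absorbed into the super-polynomial precision), the conclusions of all three lemmas carry over to $W$ with only an extra $2^{-\poly(n)}$ additive error. Plugging these back into the template of Lemma \ref{lemma:p_stable_inner_product_log_independence_expected} gives $E[|\langle W, x\rangle|] = \Theta_p(1)$, which after undoing the normalization yields $\Theta_p(1)\|x\|_p$ in general.

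The main obstacle is justifying that the proofs of Lemmas \ref{lemma:smooth_function_removing_full_independence}, \ref{lemma:anti_concentration_of_p_stables}, and \ref{lemma:tail_probability_small_independence_p_stable}, which are stated for exactly $p$-stable marginals, remain valid when the marginals are only $2^{-\poly(n)}$-close to $p$-stable. The original proofs in \cite{knw_10_approximating_lp_norm_p_stable, knpw11_pth_moment} proceed by comparing moments up to order $O(\log n)$ (either directly via multinomial expansion, or indirectly via Taylor-expanding a smooth test function against the distribution), so the $2^{-\poly(n)}$ moment-closeness established above is exactly what is needed to push each conclusion through with only a $2^{-\poly(n)}$ extra error. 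The surplus $2^{-\poly(n)}$ precision is essential here: a naive $1/\poly(n)$ precision would be dwarfed by the $D^\ell = n^{O(\log n)}$ moment-growth factor, and without the computational-precision upgrade the claim would genuinely fail along this line of argument.
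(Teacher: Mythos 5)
There is a genuine gap, and your route diverges substantially from the paper's. The paper does not re-run the moment-based machinery with perturbed marginals at all; it uses a direct coupling. Both $Z_i$ and $W_i$ are realized as functions $Z_{U_i}, W_{U_i}$ of a common family of $O(\log n)$-wise independent uniforms $U_i$ (via their discrete c.d.f.'s), a value $U_i$ is ``bad'' only if it falls in one of the symmetric-difference intervals of total measure $\poly(n)/F$, and on the complement of the bad event $\mathbf{Z}=\mathbf{W}$ exactly. Since $|\langle \mathbf{Z},x\rangle|, |\langle \mathbf{W},x\rangle| \le \poly(n)$ deterministically, the bad event contributes only $1/\poly(n)$ to the expectation, so $1/\poly(n)$ marginal closeness (for a sufficiently large polynomial) already suffices. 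Your assertion that $1/\poly(n)$ precision is fatally insufficient is an artifact of your chosen route, not a feature of the claim.

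The gaps in your route are concrete. First, the upgrade to $2^{-\poly(n)}$ computational precision is asserted, not proved, and it is very doubtful that the paper's computation delivers it: the c.d.f.\ is evaluated by truncating the Nolan integral to $[\frac{1}{K},\frac{\pi}{2}-\frac{1}{K}]$ (contributing error $\Theta(1/K)$) and applying the trapezoidal rule with $\poly(n)$ nodes (error $\Omega(1/N^2)$ in general), so $\poly(n)$ work yields only $1/\poly(n)$ accuracy; getting $2^{-\poly(n)}$ would require exponentially many nodes or a genuinely different quadrature analysis near the endpoint singularities of $V(\theta;p)$. Since you yourself argue that $1/\poly(n)$ precision is swamped by the $D^\ell = n^{O(\log n)}$ moment-growth factor, your argument does not close without this unproven upgrade. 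Second, the claim that Lemmas \ref{lemma:smooth_function_removing_full_independence}, \ref{lemma:anti_concentration_of_p_stables}, and \ref{lemma:tail_probability_small_independence_p_stable} ``depend only on the first $O(\log n)$ moments of the individual coordinates'' is not correct as stated: an (untruncated) $p$-stable random variable with $p<2$ has no finite moments of order $\ge p$, and the proofs in \cite{knw_10_approximating_lp_norm_p_stable, knpw11_pth_moment} proceed by FT-mollification and tail/distributional arguments rather than by matching low-order marginal moments. Establishing that each of those lemmas is stable under a $2^{-\poly(n)}$ perturbation of the marginals would require reopening their proofs, which your proposal does not do. I recommend abandoning the moment-comparison route and using the coupling argument, which proves the claim as stated under the $1/\poly(n)$ hypothesis.
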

\begin{proof}
Let $Z$ be a real discrete random variable whose distribution is as specified in the statement of Lemma \ref{lemma:rounded_p_stable_inner_product_log_independence_expected}, and $W$ be a random variable so that for all $z \in \R$, $|\Prob{Z = z} - \Prob{W = z}| \leq \frac{1}{F}$, where $F = n^c$ for a sufficiently large constant $c$. Suppose $Z$ takes on the values $z_1, z_2, \ldots, z_r$ (and $W$ takes on these values). For convenience, define $p_{Z, i} = \Prob{Z = z_i}$ and $p_{W, i} = \Prob{W = z_i}$, and $s_{Z, i} = \sum_{j = 1}^i p_{Z, i}$ and $s_{W, i} = \sum_{j = 1}^i p_{W, i}$.

Let $P_U$ be the uniform distribution on $[0, 1]$. Define the random variables $Z_U$ and $W_U$ as follows. Suppose $U$ is drawn from $P_U$. If $U$ is in the interval $[s_{Z, i - 1}, s_{Z, i}]$ (where $S_{Z, 0} = 0$), then define $Z_U = z_i$. Similarly, if $U$ is in the interval $[s_{W, i - 1}, s_{W, i}]$, then define $W_U = z_i$. Note that $Z_U$ and $Z$ have the same distributions, and $W_U$ and $W$ have the same distributions. Finally, we say $U$ is \textit{bad} if $U \in [s_{Z, i - 1}, s_{Z, i}]$ and $U \in [s_{W, j - 1}, s_{W, j}]$ for $i \neq j$. Note that $\Prob{U \text{ is bad}} \leq \frac{\poly(n)}{F}$, since for each $i \in [n]$,
$$|s_{Z, i} - s_{W, i}| \leq \sum_{j = 1}^i |p_{Z, i} - p_{W, i}| \leq \frac{n}{F}$$
meaning $\Prob{U \in [s_{Z, i - 1}, s_{Z, i}] \cap [s_{W, j - 1}, s_{W, j}]}$ is $O(\frac{n}{F})$ if $i \neq j$, and we can perform a union bound over all such pairs $i, j$ to find that $\Prob{U \text{ is bad}} \leq O(\frac{n^3}{F})$.

Now, let $U_1, U_2, \ldots, U_n$ be $O(\log n)$-wise independent random variables drawn from $P_U$, and define $Z_{U_1}, Z_{U_2}, \ldots, Z_{U_n}$ and $W_{U_1}, W_{U_2}, \ldots, W_{U_n}$ as described above. Let $\textbf{Z} = (Z_{U_1}, Z_{U_2}, \ldots, Z_{U_n})$ and $\textbf{W} = (W_{U_1}, W_{U_2}, \ldots, W_{U_n})$. Finally, let $x \in \R^n$ so that $\|x\|_p = 1$. By Lemma \ref{lemma:rounded_p_stable_inner_product_log_independence_expected}, we know that $E[|\langle \textbf{Z}, x \rangle|] = \Theta_p(1)$, and we wish to show that $E[|\langle \textbf{W}, x \rangle|] = \Theta_p(1)$. Let $\calE$ be the event that $U_i$ is bad for some $i$ --- observe that if none of the $U_i$ were bad, then $\textbf{Z} = \textbf{W}$. However, 
$$\Prob{\calE} \leq \sum_{i = 1}^n \Prob{U_i \text{ is bad}} \leq O\Big(\frac{n^4}{F}\Big) \leq \frac{1}{n^{10}}$$ 
if the constant $c$ is chosen large enough. Hence,
$$\Theta_p(1) = E[|\langle \textbf{Z}, x \rangle|] = E[|\langle \textbf{Z}, x \rangle| \mid \calE] \Prob{\calE} + E[|\langle \textbf{Z}, x \rangle| \mid \neg\calE] \Prob{\neg \calE}$$
and
$$E[|\langle \textbf{W}, x \rangle|] = E[|\langle \textbf{W}, x \rangle| \mid \calE] \Prob{\calE} + E[|\langle \textbf{W}, x \rangle| \mid \neg\calE] \Prob{\neg \calE}$$
Note that with probability $1$, $\|\mathbf{Z}\|_2, \|\mathbf{W}\|_2 \leq \poly(n)$, meaning $E[|\langle \textbf{W}, x \rangle| \mid \calE] \Prob{\calE} \leq \frac{1}{\poly(n)}$ and $E[|\langle \textbf{Z}, x \rangle| \mid \calE] \Prob{\calE} \leq \frac{1}{\poly(n)}$ if $F$ is chosen large enough. In addition, if $\calE$ does not hold, then $\mathbf{W} = \mathbf{Z}$. Hence,
$$\Big||E[|\langle \mathbf{W}, x \rangle|] - E[|\langle \mathbf{Z}, x \rangle|] \Big| \leq \frac{1}{\poly(n)}$$
and $E[|\langle \mathbf{W}, x \rangle|] = \Theta_p(1)$. This proves the claim.
\end{proof}

If we let $Z \in \R^n$ have coordinates given by the statement of Lemma \ref{lemma:rounded_p_stable_inner_product_log_independence_expected}, then in $n^{O(\log n)}$ time, we can construct a sample space $\Omega$ for the $Z_i$, of size at most $n^{O(\log n)}$, such that the $O(\log n)$-wise independence of the $Z_i$ holds. We can then construct the embedding matrix $R$ as described in Fact \ref{fact:ell1_embedding_expected_value_observation}, using the $Z_i$ in place of the random variables $R_i$ in the statement of Fact \ref{fact:ell1_embedding_expected_value_observation}, and letting $R$ be the resulting matrix $A$ given in Fact \ref{fact:ell1_embedding_expected_value_observation}.
\end{proof}

\subsection{Hardness for $\ell_p$ and $\ell_{1, p}$ Low Rank Approximation with Additive Error, $p \in (1, 2)$}

The goal of this section is to use the techniques of \cite{ptas_for_lplra} to show the following stronger hardness result.

\begin{theorem} [Hardness for $\ell_p$ and $\ell_{1, p}$ Low Rank Approximation with Additive Error]
\label{theorem:hardness_lp_additive_error}
Suppose the SSEH and ETH hold. Then, for $p \in (1, 2)$, at least $2^{k^{\Omega(1)}}$ time is required to achieve the following guarantees:
\begin{itemize}
    \item Given a matrix $A \in \R^{n \times d}$, $k \in \N$, find a matrix $\widehat{A} \in \R^{n \times d}$ of rank at most $k$ such that
    $$\|\widehat{A} - A\|_p \leq O(1) \min_{A_k \text{ rank }k} \|A - A_k\|_p + \frac{1}{2^{\poly(k)}} \|A\|_p$$
    \item Given a matrix $A \in \R^{n \times d}$, $k \in \N$, find a matrix $\widehat{A} \in \R^{n \times d}$ of rank at most $k$ such that
    $$\|\widehat{A} - A\|_{1, p} \leq O(1) \min_{A_k \text{ rank }k} \|A - A_k\|_{1, p} + \frac{1}{2^{\poly(k)}}\|A\|_{1, p}$$
    where for a matrix $M \in \R^{n \times d}$, $\|M\|_{1, p} = \sum_{j = 1}^d \|M_j\|_p$.
\end{itemize}
\end{theorem}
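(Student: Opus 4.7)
The plan is to revisit the SSEH-to-$\ell_p$-LRA reduction of \cite{ptas_for_lplra} and instrument each step with an explicit bit-complexity analysis. The goal is to produce a hard instance $A$ (for $\ell_p$-LRA in part 1 and $\ell_{1,p}$-LRA in part 2) every entry of which is a rational with numerator and denominator bounded by $2^{\poly(k)}$, while still preserving the $O(1)$-hardness of $O(1)$-approximation under SSEH + ETH. Once this is in place, the payoff is immediate: $\|A\|_p \leq 2^{\poly(k)}$ and, when nonzero, $OPT := \min_{\text{rank }k}\|A-M\|_p \geq 2^{-\poly(k)}$ since it is the infimum of a semi-algebraic function over rationals of bounded bit complexity. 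Consequently $\|A\|_p/OPT \leq 2^{g(k)}$ for some fixed polynomial $g$, and choosing the polynomial in $\tfrac{1}{2^{\poly(k)}}\|A\|_p$ larger than $g$ forces the additive slack to be $O(OPT)$; any algorithm meeting the additive-error guarantee therefore delivers an $O(1)$-approximation, which is $2^{k^{\Omega(1)}}$-hard by \cite{ptas_for_lplra}.

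For the bit-complexity tracking, I would proceed step by step through the chain summarized in the previous subsection. Starting from a regular graph $G$ on $k$ vertices, the normalized adjacency matrix $M_G$ has entries in $\{0,1/\deg\}$, hence $\poly(k)$-bit rationals. The spectral projector $P := P_{\ge 1/2}(G)$ is not exactly rational, so I replace it with a $\widetilde P$ satisfying $\|\widetilde P - P\|_2 \leq 2^{-\poly(k)}$, produced by standard rational-arithmetic numerical linear algebra in $\poly(k)$ time. Such a perturbation preserves the Yes/No gap of Theorem \ref{theorem:2_to_q_norm_hard}: a $2^{-\poly(k)}$ perturbation of a $k\times k$ matrix changes its $2 \to p^*$ norm by at most $\poly(k)\cdot 2^{-\poly(k)}$, negligible compared to the $k^{\Omega(1)}$ gap. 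The three subsequent algebraic steps --- Lemma \ref{lemma:reduction_step_1} (adding $\tau I$ with a rational $\tau$ computable to $\poly(k)$ bits), Lemma \ref{lemma:reduction_step_2} (forming $P_1 P_1^T$), and Lemma \ref{lemma:reduction_step_3} (inverting to get $P_3$ via Cramer's rule, whose entries are ratios of $k\times k$ integer determinants) --- each preserve $\poly(k)$-bit rational entries. The final instance $A = P_3$ is a $k\times k$ matrix with entries of bit complexity $\poly(k)$, and Lemma \ref{lemma:equivalence_of_lp_norm_lra_min_p_to_p} translates the $2\to p^*$ norm into the best rank-$(k-1)$ $\ell_p$-approximation cost, completing part 1.

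For the $\ell_{1,p}$ statement, the plan is a parallel reduction that establishes a variant of Lemma \ref{lemma:equivalence_of_lp_norm_lra_min_p_to_p} in which $\min_{UV:\,\mathrm{rank}\le d-1}\|UV - A\|_{1,p}$ is related up to an $O_p(1)$ factor to $\min_{\|x\|_{p^*}=1}\|Ax\|_p$. The best rank-$(d-1)$ approximation in the $\ell_{1,p}$ norm corresponds to projecting each column of $A$ onto a common hyperplane $\{z:\langle v,z\rangle=0\}$ in $\ell_p$, and the columnwise hyperplane distance admits the closed form from \cite{ptas_for_lplra}, giving a total $\ell_{1,p}$ error that can be related back to $\|Av\|_p$ on the hard instance --- if necessary, by first equalizing column $\ell_p$-norms through a diagonal rescaling (which preserves bit complexity and the rank structure). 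With this analog in hand, the same bit-complexity analysis of the four reduction steps applies verbatim, and the $2^{-\poly(k)}\|A\|_{1,p}$ additive slack is again absorbed into $O(1)\cdot OPT$.

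The main obstacle will be showing that the $2^{-\poly(k)}$ approximation of $P_{\ge 1/2}(G)$, followed by Cramer's-rule inversion, does not inflate the error enough to collapse the $k^{\Omega(1)}$ Yes/No gap of Theorem \ref{theorem:2_to_q_norm_hard}. This requires a careful condition-number analysis of $P_1 = P + \tau I$ and $P_2 = P_1 P_1^T$: after the shift by $\tau I$, the smallest singular value of $P_1$ is $\Omega(\tau)$, so $\|P_3\|_2 = O(1/\tau^2)$, and choosing $\tau = 2^{-\poly(k)}$ with a slightly smaller polynomial than the one governing the perturbation of $\widetilde P$ keeps the propagated error well below the gap. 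A secondary (milder) obstacle is the $\ell_{1,p}$ analog of Lemma \ref{lemma:equivalence_of_lp_norm_lra_min_p_to_p}; a fallback, if the clean statement proves awkward, is to exploit the norm-equivalence $\|A\|_p \leq \|A\|_{1,p} \leq k^{(p-1)/p}\|A\|_p$ on $k\times k$ matrices, which still yields a $k^{\Omega(1)}$ hardness bound after rescaling the $\poly(k)$ constants in $g$.
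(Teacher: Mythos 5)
Your first bullet (the $\ell_p$ case) follows essentially the same route as the paper: run the chain $P_{\geq 1/2}(G) \to P_1 \to P_2 = P_1P_1^T \to P_3 = P_2^{-1}$ while tracking bit complexity, conclude every entry of the instance is a $\poly(k)$-bit rational, and absorb the $2^{-\poly(k)}\|A\|_p$ slack into $O(1)\cdot OPT$. Two remarks. First, your justification that $OPT \geq 2^{-\poly(k)}$ "since it is the infimum of a semi-algebraic function over rationals of bounded bit complexity" does not hold up: for irrational $p$ the map $x \mapsto \|Ax\|_p$ is not semi-algebraic, and even for rational $p$ you would need effective root-separation bounds you have not supplied. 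The clean (and the paper's) argument is structural: $OPT = \min_{p^*\to p}(P_3) = (\|P_2\|_{p\to p^*})^{-1}$ by Lemma \ref{lemma:reduction_step_3}, and $\|P_2\|_{p\to p^*} = \|P_1\|_{2\to p^*}^2 \leq \poly(k)$ since $\|P\|_{2\to p^*}\leq \poly(k)$, so in fact $OPT \geq 1/\poly(k)$. Second, your $2^{-\poly(k)}$-accurate projector and $\tau = 2^{-\poly(k)}$ diagonal shift are overkill and are what create the condition-number obstacle you then have to fight; rounding $P$ to multiples of $1/\poly(k)$ and taking the shift to be a \emph{constant} fraction of the largest entry (this is how Lemma \ref{lemma:reduction_step_1} is actually invoked, with $\eps = \Theta(1)$) keeps every intermediate matrix at $O(\log k)$ bits and makes the Cramer's-rule inversion harmless. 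Both issues are repairable within your plan.

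The second bullet is where there is a genuine gap. The exact analogue of Lemma \ref{lemma:equivalence_of_lp_norm_lra_min_p_to_p} for the $\ell_{1,p}$ objective is \emph{not} $\min_{\|x\|_{p^*}=1}\|Ax\|_p$ but $\min_{\|x\|_{p^*}=1}\|Ax\|_1$: projecting each row $A^i$ onto the hyperplane $\{z : \langle x, z\rangle = 0\}$ in $\ell_p$ costs exactly $|\langle A^i, x\rangle|$, and summing over rows gives $\|Ax\|_1$, an $\ell_1$ aggregate of the per-row errors. On a square invertible instance this quantity is $\min_{p^*\to 1}(A) = (\|A^{-1}\|_{1\to p^*})^{-1} = (\max_j \|(A^{-1})_j\|_{p^*})^{-1}$, which is computable in polynomial time, so no hardness can be extracted from the square matrix $P_3$ alone. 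Relating $\|Ax\|_1$ to $\|Ax\|_p$ for a $k$-dimensional vector (your fallback via $\|v\|_p \leq \|v\|_1 \leq k^{1-1/p}\|v\|_p$, and likewise a diagonal rescaling) loses a multiplicative $k^{\Omega(1)}$, which is fatal: the Yes/No gap in Theorem \ref{theorem:2_to_q_norm_hard} is only an (arbitrarily large) constant, so an $O(1)$-approximation for $\ell_{1,p}$ would then yield only a $k^{\Omega(1)}$-approximation for the $2\to p^*$ norm and decide nothing. The missing ingredient is a low-distortion linear embedding of $\ell_p^k$ into $\ell_1^m$: composing the instance with a matrix $R$ satisfying $\|Ry\|_1 = \Theta(\|y\|_p)$ turns $\min_{p^*\to 1}(RP_3)$ into $\Theta(\min_{p^*\to p}(P_3))$, which is the hard quantity, at the cost of making the instance $k\times m$ rather than square. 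Since the reduction must be deterministic and run in $2^{o(k^{\Omega(1)})}$ time, one also has to construct $R$ explicitly (the paper does this with $O(\log k)$-wise independent discretized $p$-stable variables, giving $m = k^{O(\log k)}$, which is still subexponential in $k$). Without this step your proposed reduction for the $\ell_{1,p}$ bullet does not go through.
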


Using the hardness of the second guarantee, we will show hardness for constrained $\ell_1$ low rank approximation as well. It is worth noting that Algorithm \ref{algorithm:guessing_eps_approximation_general_lp} can in fact be used to achieve the first guarantee in $2^{\poly(k)} + \poly(nd)$ time, by letting $f = 2^{\poly(k)}$ (instead of letting $f = \poly(k)$ as is done when Algorithm \ref{algorithm:guessing_eps_approximation_general_lp} is called by Algorithm \ref{algorithm:get_eps_approximation_after_initialization_general_lp}).

First we show the following lemma, similar to Lemma \ref{lemma:equivalence_of_lp_norm_lra_min_p_to_p}, which will be used to show hardness for $\ell_{1, p}$-norm low rank approximation:

\begin{lemma}[Equivalence of $\min_{p^* \to 1}$ and Rank---$(k - 1)$ $\ell_{1, p}$ Low Rank Approximation]
\label{lemma:equivalence_of_sum_of_p_norms_min_p_to_1}
Let $p \in (1, \infty)$ and let $p^*$ be the H{\"o}lder conjugate of $p$. Let $A \in \R^{n \times d}$ with $n \geq d$ and $k = d - 1$. Then,
$$\min_{U \in \R^{d \times k}, V \in \R^{k \times n}} \|UV - A^T\|_{1, p} = \min_{x \in \R^d, \|x\|_{p^*} = 1} \|Ax\|_1 =: \min_{p^* \to 1}(A)$$
\end{lemma}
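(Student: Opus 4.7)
The plan is to follow the same duality strategy used for the entrywise $\ell_p$ case in Lemma \ref{lemma:equivalence_of_lp_norm_lra_min_p_to_p}, but applied column-by-column so that the row-wise minima over a fixed $(d-1)$-dimensional subspace aggregate into an $\ell_1$ objective rather than an $\ell_p$ objective.

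First, I would rewrite the objective. Writing $A^j \in \R^d$ for the $j$-th row of $A$ (viewed as a column vector, so that $A^j = (A^T)_j$), one has
\begin{equation*}
\|UV - A^T\|_{1,p} = \sum_{j=1}^{n} \bigl\|(UV)_j - A^j\bigr\|_p .
\end{equation*}
Since each column $(UV)_j$ lies in the column span of $U$, and any $k$-dimensional subspace of $\R^d$ arises as the column span of some $U \in \R^{d\times k}$ (with $V$ free to choose any desired representation), the left-hand side of the claimed identity equals
\begin{equation*}
\min_{\substack{S \subseteq \R^d \\ \dim S \le d-1}} \sum_{j=1}^{n} \operatorname{dist}_p(A^j, S),
\end{equation*}
where $\operatorname{dist}_p(a, S) = \min_{v \in S}\|v - a\|_p$. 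Since enlarging $S$ can only decrease each distance, the minimum is attained at a hyperplane $S = H_x := \{v \in \R^d : x^\top v = 0\}$ for some $x \in \R^d \setminus \{0\}$, and we may rescale so that $\|x\|_{p^*}=1$.

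Next, I would compute $\operatorname{dist}_p(a, H_x)$ for a fixed unit vector $x$ with $\|x\|_{p^*}=1$. Substituting $w = a - v$, the problem becomes $\min_{w\,:\,x^\top w = x^\top a}\|w\|_p$. By H\"older's inequality, $|x^\top a| = |x^\top w| \leq \|x\|_{p^*}\|w\|_p = \|w\|_p$, so $\operatorname{dist}_p(a, H_x) \geq |x^\top a|$. The matching upper bound follows from the standard equality case in H\"older's inequality: choosing $w_i = c \cdot \operatorname{sgn}(x_i)|x_i|^{p^*/p}$ and selecting $c$ so that $x^\top w = x^\top a$ yields a feasible $w$ with $\|w\|_p = |x^\top a|$ (this uses $p,p^* \in (1,\infty)$ so the exponents are well defined). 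Hence $\operatorname{dist}_p(a, H_x) = |x^\top a|$.

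Finally, summing over the rows of $A$ gives
\begin{equation*}
\sum_{j=1}^{n} \operatorname{dist}_p(A^j, H_x) \;=\; \sum_{j=1}^{n} |x^\top A^j| \;=\; \sum_{j=1}^{n} |(Ax)_j| \;=\; \|Ax\|_1,
\end{equation*}
and minimizing over $x$ with $\|x\|_{p^*}=1$ yields the claimed identity. The only mildly delicate step is checking that restricting to hyperplanes (rather than arbitrary subspaces of dimension at most $d-1$) is without loss of generality, which is immediate from monotonicity of $\operatorname{dist}_p$ in the subspace; all other steps are routine applications of H\"older duality that hold for each column independently and therefore aggregate compatibly into the $\ell_{1,p}$ norm.
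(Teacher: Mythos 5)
Your proposal is correct and follows essentially the same route as the paper's proof: both reduce the rank-$(d-1)$ factorization to minimizing over hyperplanes $H_x$ with $\|x\|_{p^*}=1$, compute the row-wise $\ell_p$ distance to $H_x$ as $|\langle A^j, x\rangle|$ via H\"older's inequality together with the explicit extremal vector $w_i = c\,\operatorname{sgn}(x_i)|x_i|^{p^*/p}$, and sum over rows to obtain $\|Ax\|_1$. The only cosmetic difference is that you justify the passage between factorizations and hyperplanes by monotonicity of the distance in the subspace, whereas the paper argues both inequalities directly by pairing each $x$ with a $V$ spanning $x^\perp$ and vice versa.
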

\begin{proof}
The proof is the same as that of Lemma 27 of \cite{ptas_for_lplra}, but we make small modifications now that the objective is the $\ell_{1, p}$-norm. First, assume the rank of $A$ is $d$, since both sides are $0$ otherwise. In addition, for a matrix $M \in \R^{n \times d}$, define
$$\|M\|_{p, 1} = \sum_{i = 1}^n \|M^i\|_p$$
In other words, $\|M\|_{p, 1} = \|M^T\|_{1, p}$. Now, let $x \in \R^d$ such that $\|x\|_{p^*} = 1$, and take $V \in \R^{k \times d}$ so that the rows of $V$ span a $(k - 1)$---dimensional subspace of $\R^d$ orthogonal to $x$. In addition, define $U = \argmin_{U \in \R^{n \times k}} \|UV - A\|_{p, 1}$, meaning that each row $U^i$ of $U$ is the minimizer $\argmin_{U^i} \|U^iV - A^i\|_p$. Then, we can write
$$\|U^iV - A^i\|_p = \min_{\langle y, x \rangle = 0} \|y - A^i\|_p = \min_{\langle z, x \rangle = - \langle A^i, x \rangle} \|z\|_p$$
\begin{claim}
$\min_{\langle z, x \rangle = -\langle A^i, x \rangle} \|z\|_p = |\langle A^i, x \rangle|$
\end{claim}
\begin{proof}
First, by H{\"o}lder's inequality, for all $z \in \R^d$,
$$|\langle x, z \rangle | \leq \|x\|_{p^*} \|z\|_p = \|z\|_p$$
since $\|x\|_p^* = 1$. Hence, if $\langle z, x \rangle = - \langle A^i, x \rangle$, then
$$\|z\|_p \geq | \langle x, z \rangle| = | \langle A^i, x \rangle |$$
On the other hand, if we choose $z$ so that the $j^{th}$ coordinate of $z$ is $z_j = - \langle x, A^i \rangle (\text{sgn}(x_j)|x_j|^{\frac{p^*}{p}})$ then
$$\langle z, x \rangle = -\langle x, A^i \rangle \sum_j (\text{sgn}(x_j)|x_j|^{\frac{p^*}{p}}) \cdot x_j = -\langle x, A^i \rangle \sum_j |x_j|^{\frac{p^* + p}{p}} = -\langle x, A^i \rangle \sum_j |x_j|^{p^*} = -\langle x, A^i \rangle$$
where the third equality is because $\frac{1}{p^*} + \frac{1}{p} = 1$, meaning $p^* + p = pp^*$, and the last equality is because $\|x\|_{p^*} = 1$. In addition,
$$\|z\|_p^p = |\langle x, A^i \rangle|^p \sum_j |x_j|^{p^*} = |\langle x, A^i \rangle|^p$$
meaning $\|z\|_p = |\langle x, A^i \rangle|$. This proves the claim.
\end{proof}

By this claim, $\|U^i V - A^i\|_p = |\langle A^i, x \rangle|$, meaning
$$\|UV - A\|_{p, 1} = \sum_{i = 1}^n \|U^i V - A^i\|_p = \sum_{i = 1}^n |\langle A^i, x \rangle| = \|Ax\|_1$$
Hence, for every $x \in \R^d$ with $\|x\|_{p^*} = 1$, if the rows of $V \in \R^{k \times d}$ span the subspace orthogonal to $x$, then
$$\|Ax\|_1 = \|UV - A\|_{p, 1} \geq \min_{U \in \R^{n \times k}, V \in \R^{k \times d}} \|UV - A\|_{p, 1}$$
Similarly, if $V \in \R^{k \times d}$, we can assume without loss of generality that $V$ has rank exactly $k$, and choose $x$ perpendicular to the rows of $V$ such that $\|x\|_{p^*} = 1$. Then, as we have shown,
$$\|UV - A\|_{p, 1} = \|Ax\|_1 \geq \min_{p^*\to 1}(A)$$
meaning
$$\min_{U \in \R^{n \times k}, V \in \R^{k \times d}} \|UV - A\|_{p, 1} \geq \min_{p^* \to 1}(A)$$
Finally, to complete the proof, note that $\|UV - A\|_{p, 1}$ is equal to $\|(UV - A)^T\|_{1, p} = \|V^TU^T - A^T\|_{1, p}$.
\end{proof}

Now we show the main result of this subsection.

\begin{theorem} [Hardness for $\ell_p$ and $\ell_{1, p}$ Low Rank Approximation with Additive Error]
Suppose the SSEH and ETH hold. Then, for $p \in (1, 2)$, at least $2^{k^{\Omega(1)}}$ time is required to achieve the following guarantees:
\begin{itemize}
    \item Given a matrix $A \in \R^{n \times d}$, $k \in \N$, find a matrix $\widehat{A} \in \R^{n \times d}$ of rank at most $k$ such that
    $$\|\widehat{A} - A\|_p \leq O(1) \min_{A_k \text{ rank }k} \|A - A_k\|_p + \frac{1}{2^{\poly(k)}} \|A\|_p$$
    \item Given a matrix $A \in \R^{n \times d}$, $k \in \N$, find a matrix $\widehat{A} \in \R^{n \times d}$ of rank at most $k$ such that
    $$\|\widehat{A} - A\|_{1, p} \leq O(1) \min_{A_k \text{ rank }k} \|A - A_k\|_{1, p} + \frac{1}{2^{\poly(k)}}\|A\|_{1, p}$$
\end{itemize}
\end{theorem}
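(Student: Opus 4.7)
My plan is to strengthen the reduction chain of \cite{ptas_for_lplra} from the Small-Set Expansion Problem so that the hard instance $A$ has $\poly(k)$-bit entries, and then handle the $\ell_{1,p}$ case by composing with the deterministic embedding of Theorem \ref{theorem:lp_to_l1_deterministic_intro}. The key observation that powers both reductions: once $A$ has $\poly(k)$-bit integer entries (after clearing denominators) and rank $>k$ on hard instances, Cramer's-rule bounds give $\sigma_{k+1}(A) \geq 2^{-\poly(k)}$ and hence $OPT_p := \min_{A_k \text{ rank } k}\|A_k - A\|_p \geq 2^{-\poly(k)}$, while $\|A\|_p \leq 2^{\poly(k)}$ trivially. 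Thus $\frac{1}{2^{\poly(k)}}\|A\|_p \leq OPT_p$ for an appropriately large choice of the $\poly(k)$ in the denominator, and the additive-error guarantee collapses to an ordinary $O(1)$-multiplicative approximation, which by \cite{ptas_for_lplra} requires $2^{k^{\Omega(1)}}$ time under SSEH $+$ ETH.

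For the first part ($\ell_p$ LRA), I would trace the chain SSEH $\to \|P_{\geq 1/2}(G)\|_{2 \to p^*}$ (Theorem \ref{theorem:2_to_q_norm_hard}) $\to \|P_1\|_{2 \to p^*}$ (Lemma \ref{lemma:reduction_step_1}) $\to \|P_1 P_1^T\|_{p \to p^*}$ (Lemma \ref{lemma:reduction_step_2}) $\to \min_{p^* \to p}((P_1P_1^T)^{-1})$ (Lemma \ref{lemma:reduction_step_3}) $\to$ rank-$k$ $\ell_p$ LRA of a $(k+1) \times (k+1)$ matrix (Lemma \ref{lemma:equivalence_of_lp_norm_lra_min_p_to_p} with $d = k+1$), and verify that each step preserves $\poly(k)$-bit entries. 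Adding the scaled identity in Lemma \ref{lemma:reduction_step_1} and forming $P_1 P_1^T$ in Lemma \ref{lemma:reduction_step_2} are routine; the inversion in Lemma \ref{lemma:reduction_step_3} is controlled by Cramer's rule, whose numerators and denominators are determinants of $(k+1) \times (k+1)$ integer matrices with $\poly(k)$-bit entries and are hence themselves of $\poly(k)$ bits. The main obstacle is the very first step: $P_{\geq 1/2}(G)$ is a spectral projection whose entries may be irrational. I plan to replace it by a matrix $\widetilde P$ with $\poly(k)$-bit rationals computed in polynomial time by a standard eigensolver, and argue $\|\widetilde P - P_{\geq 1/2}(G)\|_{2 \to p^*} \leq 2^{-\poly(k)}$, which is far below the $\Omega(1)$ Yes/No gap in Theorem \ref{theorem:2_to_q_norm_hard} and so preserves the hardness.

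For the second part ($\ell_{1,p}$ LRA), I would take the $\ell_p$ hard instance $A^* \in \R^{(k+1) \times (k+1)}$ produced above and compose it with the deterministic embedding $R \in \R^{r \times (k+1)}$ of Theorem \ref{theorem:lp_to_l1_deterministic_intro}, where $r = k^{O(\log k)}$, forming $B := (RA^*)^T \in \R^{(k+1) \times r}$. By Lemma \ref{lemma:equivalence_of_sum_of_p_norms_min_p_to_1} (applied with $n = r$, $d = k+1$) the optimal rank-$k$ $\ell_{1,p}$ LRA error of $B$ equals $\min_{\|x\|_{p^*}=1}\|RA^*x\|_1$, which by the $\ell_p \to \ell_1$ embedding property equals $\Theta(\min_{\|x\|_{p^*}=1}\|A^*x\|_p)$, i.e.\ a $\Theta(1)$-factor of the rank-$k$ $\ell_p$ LRA error of $A^*$ via Lemma \ref{lemma:equivalence_of_lp_norm_lra_min_p_to_p}. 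Since $R$ is built from $O(\log n)$-wise independent $p$-stable variables truncated and rounded to $\poly(n)$ precision (Lemma \ref{lemma:rounded_p_stable_inner_product_log_independence_expected}), its entries are $\poly(k)$-bit, so $B$ retains $\poly(k)$-bit entries and the additive-to-multiplicative argument from the overview transfers verbatim. The reduction itself runs in time $k^{O(\log k)} = 2^{O(\log^2 k)} \ll 2^{k^{\Omega(1)}}$, so an $o(2^{k^{\Omega(1)}})$-time algorithm for the $\ell_{1,p}$ additive-error problem would contradict SSEH $+$ ETH.
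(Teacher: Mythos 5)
Your proposal follows essentially the same route as the paper: round $P_{\geq 1/2}(G)$ to $\poly(k)$-bit precision, track bit complexity through $P_1$, $P_1P_1^T$, and $(P_1P_1^T)^{-1}$ via Cramer's rule, lower-bound $OPT$ by $2^{-\poly(k)}$ so the additive term is absorbed into the $O(1)$ multiplicative factor, and compose with the deterministic $\ell_p \to \ell_1$ embedding (via the $\ell_{1,p}$ intermediate) for the second bullet. The only cosmetic differences are that the paper lower-bounds $OPT$ directly from $\|P_1P_1^T\|_{p \to p^*} \leq \poly(k)$ rather than via $\sigma_{k+1}$ (it explicitly notes your determinant-based route as an alternative), and it rounds the entries of the product $P_3^T R^T$ after forming it rather than asserting $\poly(k)$-bit entries for $R$ itself --- the latter is the safer choice, since only the magnitude bound on the entries of $R$ (which follows from $\Pr[\omega_i] \leq 1$ and the $\poly(k)$ truncation), and not the bit-length of the sample-space probabilities from the Koller--Megiddo construction, is actually guaranteed.
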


\begin{proof}
The main idea for showing hardness of the first guarantee is that while performing the reduction due to \cite{ptas_for_lplra}, we can ensure that the matrix $A$ has entries with at most $\poly(k)$ bits. As a result of this, we show that the first guarantee is in fact equivalent to achieving an $O(1)$-approximation. To show hardness for the second guarantee, we use this observation, and also combine the reduction of \cite{ptas_for_lplra} with Theorem \ref{theorem:embedding_lp_into_l1_deterministic} above. We use Theorem \ref{theorem:embedding_lp_into_l1_deterministic} to reduce the computation of $\min_{p^*\to p}(A)$ to that of $\min_{p^*\to 1}(RA)$, and this is equivalent to finding the error from the best rank---$(k - 1)$ approximation to $A^TR^T$ in the $\ell_{1, p}$-norm, by Lemma \ref{lemma:equivalence_of_sum_of_p_norms_min_p_to_1}. We now begin the proof.

As in the reduction of \cite{ptas_for_lplra}, let $G$ be a regular graph that has $k$ vertices, so that $P := P_{\geq 1/2}(G)$ is a $k \times k$ matrix. Note that we can write $P = UU^T$, where $U \in \R^{k \times t}$ (for some $t \in \N$) has orthonormal columns. The entries of $U$ have absolute value at most $1$, meaning the entries of $P$ have absolute value at most $k$ by the triangle inequality. In addition, the entries of $P$ can be rounded to the nearest integer multiple of $\frac{1}{D}$, where $D = \poly(k)$ is a sufficiently large power of $2$, without significantly changing $\|P\|_{2 \to p^*}$. We can see this as follows. Let $\widehat{P}$ be $P$ with its entries rounded to the nearest integer multiple of $\frac{1}{D}$. Then,
\begin{equation}
\begin{split}
\Big|\|P\|_{2 \to p^*} - \|\widehat{P}\|_{2 \to p^*}\Big|
& \leq \|P - \widehat{P}\|_{2 \to p^*} \\
& \leq \|P -\widehat{P}\|_{2\to2} \\
& \leq \|P - \widehat{P}\|_F \\
& \leq \frac{k}{D}
\end{split}
\end{equation}
Here, the first inequality is by the triangle inequality. The second is because $\|x\|_{p^*} \leq \|x\|_2$ for any $x \in \R^n$, and the third is because the spectral norm is at most the Frobenius norm. Finally, the last inequality is because $\|P - \widehat{P}\|_\infty \leq \frac{1}{D}$, and $P$ and $\widehat{P}$ are $k \times k$ matrices.

Now, we must simply choose a large enough $D$ so that we can distinguish between the Yes and No cases of the Small Set Expansion Problem using a constant-factor approximation to $\|\widehat{P}\|_{2 \to p^*}$, where the Yes and No cases are as specified in the proof of Theorem \ref{theorem:2_to_q_norm_hard}. As mentioned in the proof of Theorem \ref{theorem:2_to_q_norm_hard}, in the Yes case, $$\|P\|_{2 \to p^*} \geq 1/(10\delta^{(p^* - 2)/2p^*}) =: C_1$$ while in the No case $$\|P\|_{2 \to p^*
} \leq 2/\delta^{(p^* - 2)/4p^*} =: C_2$$ (where $\delta \in (0, 1)$ can be any sufficiently small number). Hence, it suffices to choose $D \geq 50k\delta^{(p
^*- 2)/4p^*}$. To see why, note that in the No case,
$$\|\widehat{P}\|_{2 \to p^*} \leq \|P\|_{2 \to p^*} + \frac{k}{D} \leq \frac{2}{\delta^{(p^* - 2)/4p^*}} + \frac{1}{50\delta^{(p^* - 2)/4p^*}} \leq \frac{3}{\delta^{(p^* - 2)/4p^*}}$$
while in the Yes case,
$$\|\widehat{P}\|_{2 \to p^*} \geq \|P\|_{2 \to p^*} - \frac{k}{D} \geq \frac{1}{10\delta^{(p^* - 2)/2p^*}} - \frac{1}{50\delta^{(p^*- 2)/4p^*}} \geq \frac{1}{20\delta^{(p^* - 2)/2p^*}}$$
where the last inequality holds because $\frac{1}{\delta} > 1$ and $(p^* - 2)/2p^* > (p^* - 2)/4p^*$, meaning $\frac{1}{20\delta^{(p^* - 2)/2p^*}} \geq \frac{1}{50\delta^{(p^* - 2)/4p^*}}$. Note that the gap between the Yes and No case is still
$$\frac{1/(20\delta^{(p^* - 2)/2p^*})}{3/\delta^{(p^* - 2)/4p^*}} = \frac{1}{60\delta^{(p^* - 2)/4p^*}}$$
and this can be made arbitrarily large as $\delta \to 0$, meaning computing $\|\widehat{P}\|_{2 \to p^*}$ within an arbitrary constant factor is NP-hard. We now reduce the problem of computing $\|\widehat{P}\|_{2 \to p^*}$ to that of computing the optimal rank---$(k - 1)$ approximation of a well-chosen matrix where each entry has at most $\poly(k)$ bits.

First, observe that each entry of $\widehat{P}$ has at most $O(\log k)$ bits in its numerator and denominator, since each entry is at most $k$ in absolute value, and each entry has denominator $D$, which has $O(\log k)$ bits. Now, we show that each of the steps in the reduction of \cite{ptas_for_lplra} preserves the bit complexity of $\widehat{P}$, meaning that none of the $P_i$ mentioned in the previous subsection will have more than $\poly(k)$ bits. First, we can construct $P_1$ so that $(1 - \eps)\|\widehat{P}\|_{2 \to p^*} \leq \|P_1\|_{2 \to p^*} \leq (1 + \eps) \|\widehat{P}\|_{2 \to p^*}$ by Lemma \ref{lemma:reduction_step_1} --- recall that this can be done by adding $\frac{\eps M}{\|I\|_{2 \to p^*}}$ to each diagonal entry of $\widehat{P}$, where $I$ is the identity matrix. Since only a constant-factor approximation to $\|\widehat{P}\|_{2 \to p^*}$ is needed to decide the Small Set Expansion problem, we can let $\eps = \Theta(1)$. In addition, recall that $M$ has at most $O(\log k)$ bits in both its numerator and denominator (where $M$ is the largest entry of $\widehat{P}$). Finally, $\|I\|_{2 \to p^*} = 1$ since $p^* > 2$, meaning $\|x\|_p^* \leq \|x\|_2$ for all $x$. Hence, all entries of $P_1$ have at most $O(\log k)$ bits in the numerator and denominator.

Furthermore, recall that all entries of $\widehat{P}$ are integer multiples of $\frac{1}{D}$, meaning all entries of $P_1$ are integer multiples of $\frac{1}{K}$ for some $K = \poly(k)$. Hence, we can compute $P_2 = P_1P_1^T$, and all of the entries of $P_2$ are at most $\poly(k)$ and are integer multiples of $\frac{1}{K^2}$, meaning all entries of $P_2$ have at most $O(\log k)$ bits in their numerators and denominators. Note that by Lemma \ref{lemma:reduction_step_2}, $\|P_2\|_{p \to p^*} = \|P_1\|_{2 \to p^*}^2$, meaning in order to get a constant-factor approximation to $ \|P_1\|_{2 \to p^*}$ it suffices to get a constant-factor approximation to $\|P_2\|_{p \to p^*}$.

Finally, we compute $P_3 = P_2^{-1}$ as described in the previous subsection. Each entry of $P_3$ is a cofactor of $P_2$, divided by the determinant of $P_2$. Each cofactor is the sum of at most $k!$ terms, each of which is a product of $(k - 1)$ entries of $P_2$. Similarly, the determinant of $P_2$ is the sum of $k!$ products of $k$ entries of $P_2$. Since each entry of $P_2$ is an integer multiple of $\frac{1}{D^2}$, each of these products has a denominator of $D^{2k}$, which is at most $2^{\poly(k)}$ (meaning it has $\poly(k)$ bits). The numerator of each of these products has absolute value at most $k! \cdot \poly(k)$, meaning it has at most $\poly(k)$ bits. In summary, for each entry of $P_3$, its numerator and denominator have at most $\poly(k)$ bits, and our argument so far establishes the following claim:

\begin{claim}
There exists a matrix $P_3 \in \R^{k \times k}$, which can be deterministically computed in $\poly(k)$ time, such that its entries have at most $\poly(k)$ bits in their numerators and denominators, and computing a constant-factor approximation for $\|\widehat{P}\|_{2 \to p^*}$ reduces to computing a constant-factor approximation for $\min_{p^* \to p}(P_3)$.
\end{claim}

We now show the following claim, which we use for showing hardness both for $\ell_p$ low rank approximation and $\ell_{1, p}$ low rank approximation:

\begin{claim} \label{claim:lower_bound_on_objective}
$\min_{p^* \to p}(P_3) \geq \frac{1}{\poly(k)}$
\end{claim}
\begin{proof}
Since each entry of $P$ has absolute value at most $k$, $\|P\|_{2 \to p^*} \leq \poly(k)$, meaning $\|\widehat{P}\|_{2 \to p^*} \leq \poly(k)$. Next, note that by Lemma \ref{lemma:reduction_step_1},
$$\|P_1\|_{2 \to p^*} \leq O(1) \|\widehat{P}\|_{2 \to p^*} \leq \poly(k)$$
Moreover, by Lemma \ref{lemma:reduction_step_2},
$$\|P_2\|_{p\to p^*} = \|P_1\|_{2 \to p^*}^2 \leq \poly(k)$$
and finally, by Lemma \ref{lemma:reduction_step_3},
$$\min_{p^* \to p}(P_3) = \min_{p^* \to p}(P_2^{-1}) = ( \|P_2\|_{p \to p^*})^{-1} \geq \frac{1}{\poly(k)}$$
\end{proof}

Now we show hardness for $\ell_p$ low rank approximation. Let $A = P_3$ --- by Lemma \ref{lemma:equivalence_of_lp_norm_lra_min_p_to_p}, $\min_{p^* \to p}(A) = \min_{M_{k - 1} \text{ rank } (k - 1)} \|M_{k - 1} - A\|_p$. To show the desired result, it suffices to show the following claim:

\begin{claim}
$\frac{1}{2^{\poly(k)}}\|A\|_p \leq OPT_p := \min_{M_{k - 1} \text{ rank }(k - 1)} \|M_{k - 1} - A\|_p$
\end{claim}

\begin{proof}
Note that by the previous claim, $OPT_p = \min_{p^* \to p}(A) \geq \frac{1}{\poly(k)}$. On the other hand, $\|A\|_p \leq 2^{\poly(k)}$, since the entries of $A$ have at most $\poly(k)$ bits, meaning they are at most $2^{\poly(k)}$. Therefore, $\|A\|_p \leq 2^{\poly(k)}$, and this proves the claim. Alternatively, it is possible to prove this claim using an argument similar to Claim 1 on page 15 of \cite{ptas_for_lplra}.
\end{proof}

Hence, if we find $\widehat{A}$ of rank at most $k - 1$ such that
$$\|\widehat{A} - A\|_p \leq O(1) \min_{M_{k - 1} \text{ rank }(k - 1)} \|M_{k - 1} - A\|_p + \frac{1}{2^{\poly(k)}}\|A\|_p$$
then the claim above in fact shows that $\|\widehat{A} - A\|_p \leq O(1) \min_{M_{k - 1} \text{ rank }(k - 1)} \|M_{k - 1} - A\|_p$ --- this is enough to obtain a constant-factor approximation to $\|\widehat{P}\|_{2 \to q}$, and decide the Small Set Expansion instance. We now show a similar hardness result for low rank approximation in the $\ell_{1, p}$ norm by using Theorem \ref{theorem:embedding_lp_into_l1_deterministic}.

\begin{claim} \label{claim:reduction_to_l1p_low_rank_approx}
There exists a matrix $B \in \R^{k \times k^{O(\log k)}}$ which can be computed in $k^{O(\log k)}$ time such that the following properties hold:
\begin{itemize}
    \item The entries of $B$ have at most $\poly(k)$ bits in their numerators and denominators.
    \item Computing a constant-factor approximation for $\|\widehat{P}\|_{2 \to p^*}$ reduces to computing a constant-factor approximation for $OPT_{1, p} := \min_{M_{k - 1} \text{ rank }(k - 1)} \|M_{k - 1} - B\|_{1, p}$, and
    \item $OPT_{1, p} \geq \frac{1}{\poly(k)}$
\end{itemize}
\end{claim}

\begin{proof}
Recall that to compute a constant-factor approximation for $\|\widehat{P}\|_{2 \to p^*}$, it suffices to find a constant-factor approximation for
$$\min_{p^* \to p}(P_3) = \min_{\|x\|_{p^*} = 1} \|P_3x\|_p$$
By Theorem \ref{theorem:embedding_lp_into_l1_deterministic}, there exists a matrix $R \in \R^{k^{O(\log k)} \times k}$ such that for all $y \in \R^k$,
$$\Omega(1) \|y\|_p \leq \|Ry\|_1 \leq O(1)\|y\|_p$$
and $R$ can be constructed in $k^{O(\log k)}$ time so that the above holds with constant probability. Hence,
$$\min_{p^* \to 1}(RP_3) = \min_{\|x\|_{p^*} = 1} \|RP_3x\|_1$$
is within a constant factor of $\min_{p^* \to p}(P_3)$, and it suffices to compute a constant factor approximation to $\min_{p^* \to 1}(RP_3)$, which by Lemma \ref{lemma:equivalence_of_sum_of_p_norms_min_p_to_1} is equal to $\min_{M_{k - 1} \text{ rank }(k - 1)} \|M_{k - 1} - P_3^TR^T\|_{1, p}$.

Finally, to complete the proof of the claim, we define $B$ to be $P_3^TR^T$, but with each of its entries rounded to the nearest integer multiple of $\frac{1}{2^{\poly(k)}}$. Let
$$OPT_{1, p}^{0} = \min_{M_{k - 1} \text{ rank }(k - 1)} \|M_{k - 1} - P_3^TR^T\|_{1, p}$$
with $M^{0}$ being the corresponding minimizer, and
$$OPT_{1, p} = \min_{M_{k - 1} \text{ rank }(k - 1)} \|M_{k - 1} - B\|_{1, p}$$
with $M$ being the corresponding minimizer. Then, note that 
$$\|M - B\|_{1, p} \leq \|M^{0} - B\|_{1, p} \leq \|M^0 - P_3^TR^T\|_{1, p} + \frac{1}{2^{\poly(k)}}$$
where the last inequality is because $$\|B - P_3^T R^T\|_{1, p} = \sum_{j = 1}^{k^{O(\log k)}} \|B_j - (P_3^TR^T)_j\|_p \leq k^{O(\log k)} \cdot \frac{1}{2^{\poly(k)}} \leq \frac{1}{2^{\poly(k)}}$$
Similarly,
$$\|M^0 - P_3^TR^T\|_{1, p} \leq \|M - P_3^TR^T\|_{1, p} \leq \|M - B\|_{1, p} + \frac{1}{2^{\poly(k)}}$$
meaning that $\Big| \|M - B\|_{1, p} - \|M^0 - P_3^TR^T\|_{1, p} \Big| \leq \frac{1}{2^{\poly(k)}}$. Since $\|M^0 - P_3^TR^T\|_{1, p} \geq \frac{1}{\poly(k)}$ by Claim \ref{claim:lower_bound_on_objective}, this means that $\|M - B\|_{1, p}$ is within a constant factor of $\|M^0 - P_3^TR^T\|_{1, p}$.

In summary, to obtain a constant-factor approximation to $\|\widehat{P}\|_{2\to p^*}$ with constant probability, it suffices to obtain a constant-factor approximation to $\min_{M_{k - 1} \text{ rank }(k - 1)} \|M_{k - 1} - B\|_{1, p}$. Moreover, the entries of $B$ have at most $\poly(k)$ bits in their numerators and denominators. We can see this as follows. First, recall that $P_3$ has entries with at most $\poly(k)$ bits in their numerators and denominators, meaning its entries are at most $2^{\poly(k)}$. In addition, the entries of $R$ are at most $O(1)$, since for any standard basis vector $e_i$, $\|Re_i\|_1 = \Theta(1)$. Therefore, the entries of $P_3^TR^T$ are at most $2^{\poly(k)}$. Finally, after we round the entries of $P_3^TR^T$ to the nearest integer multiple of $\frac{1}{2^{\poly(k)}}$ to obtain $B$, the entries of $B$ are at most $2^{\poly(k)}$ and are multiples of $\frac{1}{2^{\poly(k)}}$, meaning they have at most $\poly(k)$ bits. Finally, $OPT_{1, p} = \Theta(1) \cdot OPT_{1, p}^0 \geq \frac{1}{\poly(k)}$.
\end{proof}

Since the entries of $B$ have at most $\poly(k)$ bits in their numerators and denominators, $\|B\|_{1, p} \leq 2^{\poly(k)}$. In addition, $OPT_{1, p} \geq \frac{1}{\poly(k)}$, meaning $\|B\|_{1, p} \leq 2^{\poly(k)} OPT_{1, p}$, and if we find $\widehat{B}$ of rank at most $k - 1$ such that
$$\|\widehat{B} - B\|_{1, p} \leq O(1) \min_{M_{k - 1} \text{ rank }(k - 1)} \|M_{k - 1} - B\|_{1, p} + \frac{1}{2^{\poly(k)}} \|B\|_{1, p}$$
then $\|\widehat{B} - B\|_{1, p}$ is a constant factor approximation to $OPT_{1, p}$.

In summary, we have shown that achieving either of the guarantees in Theorem \ref{theorem:hardness_lp_additive_error} is sufficient to decide Problem \ref{problem:small_set_expansion_problem}. Now, suppose we assume the Exponential-Time Hypothesis (ETH) in addition to the Small Set Expansion Hypothesis --- ETH is the assumption that any algorithm for solving $3$-SAT takes at least $2^{\Omega(n)}$ time, where $n$ is the number of variables in the $3$-SAT instance. Since SSEH is the assumption that the Small Set Expansion Problem is NP-hard, we can assume there is a reduction from 3-SAT to Small Set Expansion that takes an instance with $n$ variables to a graph with at most $m = n^{\frac{1}{c}}$ vertices for some $c > 0$ (since the reduction takes polynomial time, it creates a graph with size at most $\poly(n)$). Hence, assuming SSEH and ETH, there is no algorithm which can decide any Small Set Expansion instance in $2^{o(m^c)}$ time, since this would imply that any 3-SAT instance could be decided in $2^{o(n)}$ time. Therefore, an algorithm achieving either of the guarantees in Theorem \ref{theorem:hardness_lp_additive_error} requires $2^{k^c}$ time.
\end{proof}

\subsection{Hardness for Constrained $\ell_1$ Low Rank Approximation with Additive Error}

We recall the constrained $\ell_1$ low rank approximation problem:

\begin{problem}[Constrained $\ell_1$ Low Rank Approximation]
\label{problem:constrained_l1_low_rank_approximation}
Given a matrix $A \in \R^{n \times d}$ and a subspace $V \subset \R^n$, find a matrix $\widehat{A}$ of rank at most $k$ minimizing $\|\widehat{A} - A\|_1$, such that the columns of $\widehat{A}$ are in $V$.
\end{problem}

\begin{theorem}[Hardness for Constrained $\ell_1$ Low Rank Approximation with Additive Error]
\label{theorem:constrained_l1_lra_randomized_hardness}
Suppose the SSEH and ETH hold. Then, for $k \in \N$, at least $2^{k^{\Omega(1)}}$ time is required to achieve the following guarantee with constant probability: given a matrix $A \in \R^{n \times d}$, $Y \in \R^{n \times t}$ for some $t \leq n$, and $k \in \N$, find a matrix $\widehat{A}$ of rank at most $k$ such that
$$\|\widehat{A} - A\|_1 \leq O(1) \min_{A_k} \|A_k - A\|_1 + \frac{1}{2^{\poly(k)}}\|A\|_1$$
and the columns of $\widehat{A}$ are contained in the column span of $Y$ --- here, the minimum on the right-hand side is taken over $A_k$ having rank $k$, such that the columns of $A_k$ are contained in the column span of $Y$.
\end{theorem}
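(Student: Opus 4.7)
The plan is to reduce $\ell_{1,p}$ low rank approximation with additive error (shown to require $2^{k^{\Omega(1)}}$ time in Theorem \ref{theorem:hardness_lp_additive_error}) to constrained $\ell_1$ low rank approximation, using the deterministic $\ell_p \to \ell_1$ embedding of Theorem \ref{theorem:embedding_lp_into_l1_deterministic} for any fixed $p \in (1,2)$. Take a hard $\ell_{1,p}$ instance $B \in \R^{k \times k^{O(\log k)}}$ (whose entries have $\poly(k)$ bits, as produced in the previous subsection) and apply Theorem \ref{theorem:embedding_lp_into_l1_deterministic} with $n = k$ to deterministically compute, in $k^{O(\log k)}$ time, a matrix $R \in \R^{m \times k}$, $m = k^{O(\log k)}$, with $\Omega_p(1)\|x\|_p \le \|Rx\|_1 \le O_p(1)\|x\|_p$ for every $x \in \R^k$. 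The reduction outputs the constrained $\ell_1$ instance $A := RB$ together with constraint matrix $Y := R$ (note $t = k \le m = n$, so the shape condition on $Y$ in the theorem statement is satisfied).

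Correctness is driven by two observations. First, since $R$ is an $\ell_p \to \ell_1$ embedding it has trivial kernel, so every rank-$k$ matrix whose columns lie in the column span of $R$ has a unique preimage $\widehat{B}$ of rank at most $k$ under the map $\widehat{B} \mapsto R\widehat{B}$; this is a bijection between feasible solutions of the two problems. Second, the embedding applied column by column yields $\|RM\|_1 = \Theta_p(1)\|M\|_{1,p}$ for every matrix $M$ with $k$ rows. Therefore, if the constrained solver returns $\widehat{A} = R\widehat{B}$ meeting the guarantee
\[
\|R\widehat{B}-RB\|_1 \le O(1)\min_{B_k}\|RB_k-RB\|_1 + \frac{1}{2^{\poly(k)}}\|RB\|_1,
\]
then dividing through by the $\Theta_p(1)$ distortion factor transforms this directly into
\[
\|\widehat{B}-B\|_{1,p} \le O(1)\min_{B_k}\|B_k-B\|_{1,p} + \frac{1}{2^{\poly(k)}}\|B\|_{1,p},
\]
which is precisely the guarantee whose hardness is established in Theorem \ref{theorem:hardness_lp_additive_error} (the distortion factor is absorbed into the $\poly(k)$ inside the exponent of the additive term).

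For the running time conclusion, the reduction itself takes $k^{O(\log k)} = 2^{O(\log^2 k)}$ time (to build $R$, form $RB$, and recover $\widehat{B}$ from $R\widehat{B}$ via the left inverse of $R$), and it preserves success probability because it is deterministic. Hence a (possibly randomized) algorithm solving constrained $\ell_1$ LRA in time $T$ with constant success probability would yield an algorithm for the $\ell_{1,p}$ problem running in time $T + 2^{O(\log^2 k)}$ with the same success probability; Theorem \ref{theorem:hardness_lp_additive_error} forces this sum to be at least $2^{k^{\Omega(1)}}$ under SSEH and (randomized) ETH, and since $2^{O(\log^2 k)} = o(2^{k^{\Omega(1)}})$, the same lower bound must apply to $T$ alone. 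The only bookkeeping to watch is that both the multiplicative factor and the additive slack survive the embedding, but this is immediate from the two-sided equality $\|RM\|_1 = \Theta_p(1)\|M\|_{1,p}$, so I foresee no genuine technical difficulty beyond invoking the deterministic embedding of Theorem \ref{theorem:embedding_lp_into_l1_deterministic}.
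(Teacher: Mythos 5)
Your proposal is correct and follows essentially the same route as the paper: take the hard $\ell_{1,p}$ instance $B$ from the previous subsection, apply the deterministic embedding $R$ of Theorem \ref{theorem:lp_to_l1_deterministic_intro} to form the constrained instance $A = RB$ with constraint subspace $R$, and use $\|RM\|_1 = \Theta_p(1)\|M\|_{1,p}$ to transfer both the multiplicative and additive parts of the guarantee, with the $k^{O(\log k)}$ reduction cost being negligible against $2^{k^{\Omega(1)}}$. The only cosmetic difference is that the paper fixes $p = 3/2$ and phrases the target rank as $k-1$, neither of which affects the argument.
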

\begin{proof}
We reduce $\ell_{1, p}$ low rank approximation (more specifically, the problem of obtaining the guarantee in Theorem \ref{theorem:hardness_lp_additive_error}) to the above guarantee for constrained $\ell_1$ low rank approximation. For convenience, let $p = \frac{3}{2}$. Recall that by Claim \ref{claim:reduction_to_l1p_low_rank_approx} and the discussion following its proof, given any instance of Problem \ref{problem:small_set_expansion_problem} on a graph of size $k$, a matrix $B \in \R^{k \times k^{O(\log k)}}$ can be computed in $k^{O(\log k)}$ time such that, in order to decide the instance of Problem \ref{problem:small_set_expansion_problem}, it suffices to compute a matrix $\widehat{B} \in \R^{k \times k^{O(\log k)}}$ such that
\begin{equation} \label{eq:l1_p_guarantee}
\begin{split}
\|\widehat{B} - B\|_{1, p} \leq O(1) \min_{M_{k - 1} \text{ rank }(k - 1)} \|M_{k - 1} - B\|_{1, p} + \frac{1}{2^{\poly(k)}} \|B\|_{1, p}
\end{split}
\end{equation}
Now, by Theorem \ref{theorem:embedding_lp_into_l1_deterministic}, there exists a matrix $R \in \R^{k^{O(\log k)} \times k}$, which can be computed in $k^{O(\log k)}$ time, such that for all $x \in \R^k$,
$$\Omega(1) \|x\|_p \leq \|Rx\|_1 \leq O(1)\|x\|_p$$
Now, suppose we could find $M \in \R^{k \times k^{O(\log k)}}$, with rank at most $k - 1$, such that
\begin{equation} \label{eq:constrained_l1_guarantee}
\begin{split}
\|RM - RB\|_1 \leq O(1) \min_{M_{k - 1} \text{ rank }(k - 1)} \|RM_{k - 1} - RB\|_1 + \frac{1}{2^{\poly(k)}} \|RB\|_1
\end{split}
\end{equation}
Note that this is a solution to the constrained $\ell_1$ low rank approximation problem, that achieves the guarantee described in the statement of this theorem (replacing $A$ with $RB$, the subspace $Y \in \R^{n \times t}$ with $R$, and the target rank $k$ with $k - 1$). However, for any matrix $M \in \R^{k \times k^{O(\log k)}}$,
$$\|RM\|_1 = \sum_{i = 1}^d \|RM_i\|_1 = \sum_{i = 1}^d \Theta(1) \|M_i\|_p = \Theta(1) \|M\|_{1, p}$$
where the second equality is because for any $x \in \R^n$, $\|Rx\|_1$ is within a constant factor of $\|x\|_p$. Hence, Equation \ref{eq:constrained_l1_guarantee} implies that
$$\|M - B\|_{1, p} \leq O(1) \min_{M_{k - 1} \text{ rank }(k - 1)} \|M_{k - 1} - B\|_{1, p} + \frac{1}{2^{\poly(k)}} \|B\|_{1, p}$$
i.e. $M$ satisfies the guarantee of Equation \ref{eq:l1_p_guarantee}. Therefore, achieving the guarantee in Equation \ref{eq:constrained_l1_guarantee} is enough to decide the original instance of Problem \ref{problem:small_set_expansion_problem}, and therefore requires $2^{k^{\Omega(1)}}$ time to achieve.
\end{proof}

\newpage
\section{$\ell_p$ Column Subset Selection Lower Bound}
\label{appendix:lp_css_lower_bound_full_proofs}

We use a construction similar to \cite{l1_lower_bound_and_sqrtk_subset}, to show that column subset selection algorithms cannot give an approximation factor better than $O(k^{\frac{1}{p} - \frac{1}{2}})$ for $\ell_p$-low rank approximation, for $1 < p < 2$, if $k \cdot \polylog(k)$ columns are chosen. The hard distribution for $\ell_1$ column subset selection from \cite{l1_lower_bound_and_sqrtk_subset}, which we use here for $\ell_p$ column subset selection, is as follows: $A \in \R^{(k + n) \times n}$ is a random matrix where each of the first $k$ rows has i.i.d. $N(0, 1)$ entries, and the remaining $n \times n$ submatrix is the identity matrix. We show that if $n = k \cdot \poly(\log k)$, then any subset of $r \leq k \cdot \poly(\log k)$ columns cannot give an approximation factor better than $O(k^{\frac{1}{p} - \frac{1}{2}})$, for sufficiently large $k$, unless $n - r = o(n)$. The proofs in this section follow those of the analogous lemmas in Section G.3 of \cite{l1_lower_bound_and_sqrtk_subset} with slight modifications --- for each of our lemmas, we note the corresponding lemma in \cite{l1_lower_bound_and_sqrtk_subset}. First, we state some definitions.

\begin{definition} \label{def:flat_vectors}
Suppose $\beta, \gamma \in \R$ and $p \in (1, 2)$. Then, we define
$$Y_{\beta, \gamma, p} = \Big\{y \in \R^n \mid \|y\|_p \leq O(k^{\gamma}), |y_i| \leq \frac{1}{k^\beta} \leq 1 \,\, \forall i \in [n] \Big\}$$
\end{definition}

\begin{definition} \label{def:event_flat_vectors_stay_flat}
(Similar to Definition G.19 of \cite{l1_lower_bound_and_sqrtk_subset}) Let $V \in \R^{n \times r}$ have orthonormal columns, and $A \in \R^{k \times r}$ be a random matrix, for which each entry is i.i.d. $N(0, 1)$. Then, we define the event $\calEHat(A, V, \beta, \gamma, p)$ as follows: $\forall y \in Y_{\beta, \gamma, p}$, $AV^Ty$ has at most $O(\frac{k}{\log k})$ coordinates with absolute value at least $\Omega(\frac{1}{\log k})$, and $\|A\|_2 = O(\sqrt{r})$.
\end{definition}

\begin{lemma} [Gaussian Matrices Have Small Operator Norm - Due to \cite{rv_10_random_matrices}, as stated in \cite{l1_lower_bound_and_sqrtk_subset}]
\label{lemma:gaussian_matrix_operator_norm}
Let $A \in \R^{r \times k}$ be a random matrix for which each entry is i.i.d. $N(0, 1)$. With probability at least $1 - e^{-\Theta(r)}$, the maximum singular value of $\|A\|_2$ is at most $O(\sqrt{r})$.
\end{lemma}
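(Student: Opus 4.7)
The plan is to prove this as the classical concentration bound on the spectral norm of a Gaussian matrix, via the standard combination of an $\eps$-net argument on the sphere with Gaussian concentration. By homogeneity, $\|A\|_2 = \sup_{x \in S^{k-1}} \|Ax\|_2$, where $S^{k-1}$ is the unit sphere in $\R^k$. First I would fix a $\frac{1}{2}$-net $\mathcal{N} \subset S^{k-1}$; a standard volume-packing argument produces such a net of cardinality $|\mathcal{N}| \leq 5^k = e^{O(k)}$. A short triangle-inequality calculation then shows $\|A\|_2 \leq 2 \max_{x \in \mathcal{N}} \|Ax\|_2$, reducing the problem to controlling $\|Ax\|_2$ for each $x$ in a finite set of size $e^{O(k)}$.

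The second step is pointwise concentration. For each fixed $x \in S^{k-1}$, the vector $Ax \in \R^r$ has i.i.d. standard Gaussian coordinates, because the rows of $A$ are independent standard Gaussians in $\R^k$. Hence $\|Ax\|_2$ is $\chi$-distributed with $r$ degrees of freedom, and in particular $E[\|Ax\|_2] \leq \sqrt{r}$. Since the map $A \mapsto \|Ax\|_2$ is $1$-Lipschitz in the Frobenius norm, Borell's Gaussian concentration inequality yields $\Prob{\|Ax\|_2 > \sqrt{r} + t} \leq e^{-t^2/2}$ for every $t \geq 0$. Taking $t = C\sqrt{r}$ for a sufficiently large absolute constant $C$ and union-bounding over $\mathcal{N}$, the failure probability is at most $e^{O(k)} \cdot e^{-C^2 r/2}$.

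In the regime of interest, $r \geq k$ (this is the parameter regime relevant to the column subset selection lower bound, where $r$ is the number of selected columns and satisfies $r \geq k$); choosing $C$ large enough then makes this probability $e^{-\Theta(r)}$. The event $\max_{x \in \mathcal{N}} \|Ax\|_2 \leq (C+1)\sqrt{r}$ together with the net-approximation inequality of the first paragraph gives $\|A\|_2 \leq 2(C+1)\sqrt{r} = O(\sqrt{r})$ on an event of probability at least $1 - e^{-\Theta(r)}$, which is the claimed bound.

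The only non-routine ingredients are the sphere-net approximation lemma and Gaussian concentration for $1$-Lipschitz functions, both of which are textbook; I do not anticipate a substantive obstacle. A sharper form of the bound, namely $\|A\|_2 \leq \sqrt{r} + \sqrt{k} + t$ with failure probability $2e^{-t^2/2}$ via Gordon's comparison inequality, is available in the Rudelson--Vershynin reference cited, but is unnecessary here since the statement asks only for $O(\sqrt{r})$ with an $e^{-\Theta(r)}$ tail.
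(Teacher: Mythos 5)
Your proof is correct, and it is the standard net-plus-concentration argument; the paper itself offers no proof of this lemma, citing \cite{rv_10_random_matrices} instead, so your write-up is in effect a self-contained derivation of the cited fact. One point you handle that deserves emphasis: as literally stated (with no relation imposed between $r$ and $k$), the lemma is false, since for $k \gg r$ the operator norm of an $r \times k$ Gaussian matrix concentrates around $\sqrt{k} + \sqrt{r} = \Theta(\sqrt{k})$, not $O(\sqrt{r})$; your union bound correctly exposes this through the $e^{O(k)} \cdot e^{-C^2 r/2}$ term, and you rightly note that the assumption $k = O(r)$ is satisfied wherever the lemma is invoked (in Lemma \ref{lemma:high_prob_flat_vectors_stay_flat} and Claim \ref{claim:condition_on_single_event} one has $k \leq r$). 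All the individual steps — the $5^k$ bound on the half-net, the reduction $\|A\|_2 \leq 2\max_{x \in \mathcal{N}}\|Ax\|_2$, the observation that $Ax \sim N(0, I_r)$ for unit $x$, the $1$-Lipschitzness in Frobenius norm, and the Borell--TIS tail bound — are accurate, so there is no gap.
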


\begin{lemma} [Flat Vectors Stay Flat - Similar to Lemma G.20 of \cite{l1_lower_bound_and_sqrtk_subset}]
\label{lemma:high_prob_flat_vectors_stay_flat}
Suppose $k \geq 1$, $c_2 \geq c_1 \geq 1$, and $k \leq r = O(k (\log k)^{c_1})$, $r \leq n = O(k (\log k)^{c_2})$. Let $V \in \R^{n \times r}$ have orthonormal columns, and let $A \in \R^{k \times r}$ be a random matrix with each entry being i.i.d. $N(0, 1)$. Finally, suppose $\beta, \gamma \in \R$ such that $0 < \gamma \leq p \gamma < \beta(2 - p) \leq \beta$. Then,
$$\ProbBig{\calEHat(A, V, \beta, \gamma, p)} \geq 1 - 2^{-\Theta(k)}$$
\end{lemma}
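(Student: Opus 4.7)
The plan is a three-step argument: contract $y$ to a small $\ell_2$ ball via a Hölder-type bound, apply Gaussian concentration plus a Chernoff bound to a fixed point in that ball, then close with an $\epsilon$-net argument in $\mathbb{R}^r$, all conditional on the operator norm bound from Lemma \ref{lemma:gaussian_matrix_operator_norm}.

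First, for any $y \in Y_{\beta,\gamma,p}$ I would split $\|y\|_2^2 = \sum_i |y_i|^{2-p}|y_i|^p \leq \|y\|_\infty^{2-p}\|y\|_p^p \leq O(k^{p\gamma - \beta(2-p)})$. The hypothesis $p\gamma < \beta(2-p)$ gives $\|y\|_2 \leq O(k^{-\delta})$ with $\delta := (\beta(2-p)-p\gamma)/2 > 0$. Since $V$ has orthonormal columns, $u := V^T y$ satisfies $\|u\|_2 \leq \|y\|_2 \leq O(k^{-\delta})$, so the image $U := V^T(Y_{\beta,\gamma,p})$ is contained in the Euclidean ball $B_2(0, Ck^{-\delta}) \subset \mathbb{R}^r$.

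Next, condition on the event $\|A\|_2 \leq O(\sqrt{r})$, which by Lemma \ref{lemma:gaussian_matrix_operator_norm} holds with probability at least $1 - e^{-\Theta(r)} \geq 1 - 2^{-\Theta(k)}$. For any fixed $u \in B_2(0, Ck^{-\delta})$, the vector $Au$ has i.i.d.\ $N(0, \|u\|_2^2)$ coordinates, so for each $i$,
\begin{equation*}
\Pr\bigl[|(Au)_i| \geq 1/(2\log k)\bigr] \leq 2\exp\bigl(-\Theta(k^{2\delta}/\log^2 k)\bigr) =: p_*.
\end{equation*}
A Chernoff-type union over subsets of size $k/\log k$, using $\binom{k}{k/\log k} \leq 2^{O(k\log\log k/\log k)} = 2^{o(k)}$, gives per-point failure
\begin{equation*}
\Pr\Bigl[\#\{i : |(Au)_i| \geq 1/(2\log k)\} \geq k/\log k\Bigr] \leq \binom{k}{k/\log k} p_*^{k/\log k} \leq \exp\bigl(-\Omega(k^{1+2\delta}/\log^3 k)\bigr).
\end{equation*}

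Finally, I would take a standard $\epsilon$-net $\mathcal{N}$ of $B_2(0, Ck^{-\delta})\subset \mathbb{R}^r$ at granularity $\epsilon = \Theta(1/(\sqrt r\,\log k))$, which has $|\mathcal{N}| \leq (O(k^{-\delta}\sqrt r\log k))^r \leq 2^{O(r\log k)} = 2^{O(k\log^{c_1+1}k)}$. For any $u \in U$ with nearest net point $\widetilde u$, the conditioning on $\|A\|_2$ yields $\|A(u-\widetilde u)\|_\infty \leq \|A\|_2\|u-\widetilde u\|_2 \leq 1/(2\log k)$, so if $Au$ has $\geq k/\log k$ coordinates with absolute value $\geq 1/\log k$, then $A\widetilde u$ has $\geq k/\log k$ coordinates with absolute value $\geq 1/(2\log k)$. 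Union bound over $\mathcal{N}$ yields a conditional failure of at most $2^{O(k\log^{c_1+1}k)} \cdot \exp(-\Omega(k^{1+2\delta}/\log^3 k))$, which is $2^{-\Theta(k)}$ once $k^{2\delta}/\log^{c_1+4}k \to \infty$, giving the stated probability bound on combining with the operator-norm event. The main obstacle is the tension between the per-point exponent $k^{1+2\delta}/\log^3 k$ and the net log-cardinality $O(k\log^{c_1+1}k)$: the whole plan depends on $\delta$ being a positive constant independent of $k$ (which is exactly what the hypothesis $p\gamma < \beta(2-p)$ gives), so that $k^{2\delta}$ eventually beats any polylogarithmic factor determined by $c_1$; a minor bookkeeping point is ensuring the threshold drop from $1/\log k$ to $1/(2\log k)$ across the net transfer only affects constants in the exponent.
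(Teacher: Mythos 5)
Your proof is correct, and it takes a genuinely different and more compact route than the paper's. The paper follows the structure of Lemma G.20 of \cite{l1_lower_bound_and_sqrtk_subset}: it dyadically decomposes $y = \sum_j y^j$ into level sets with coordinates in $[2^{-(j+1)}, 2^{-j})$, builds a separate net $\calN_j$ in $y$-space ($\R^n$) for each of the $O(\log k)$ relevant levels, proves that each $AV^Ty^j$ has at most $O(k/\log^2 k)$ coordinates exceeding $\Omega(1/\log^2 k)$, truncates levels beyond $j_1 = O(\log k)$ using the operator-norm bound, and sums. You instead observe that the interpolation inequality $\|y\|_2^2 \leq \|y\|_\infty^{2-p}\|y\|_p^p \leq O(k^{p\gamma - \beta(2-p)})$ collapses all of $Y_{\beta,\gamma,p}$ into a single Euclidean ball of radius $O(k^{-\delta})$ with $\delta = (\beta(2-p)-p\gamma)/2 > 0$, push it through the contraction $V^T$ into $\R^r$, and run one net plus one Chernoff bound. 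This is essentially the paper's per-level computation ($\|y^j\|_2^2 \leq s_j 2^{-2j} \leq O(k^{p\gamma}2^{-(2-p)j})$) summed over $j$, so nothing is lost: both arguments end up comparing a per-point failure exponent of order $k^{1+2\delta}/\polylog(k)$ against a net of size $2^{O(k\,\polylog(k))}$, and both use the hypothesis $p\gamma < \beta(2-p)$ only through $\delta > 0$. Your version buys a shorter proof, a net in dimension $r$ rather than $n$, and only one threshold/count bookkeeping step instead of the $\log k$-fold accumulation from $O(1/\log^2 k)$ to $O(1/\log k)$.

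Two small presentational points. First, you should not literally compute the Gaussian tail of $(Au)_i$ \emph{conditioned} on $\|A\|_2 \leq O(\sqrt r)$, since conditioning perturbs the distribution; instead bound the two bad events (operator norm too large; some net point has too many large coordinates) unconditionally and take a union bound, using the operator-norm event only in the deterministic net-transfer step --- your closing sentence suggests this is what you intend. Second, the net-cardinality expression $(O(k^{-\delta}\sqrt r\log k))^r$ should be written as $(1 + O(k^{-\delta}\sqrt r \log k))^r \leq 2^{O(r\log k)}$ so that it remains a valid upper bound even if the base drops below $1$.
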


\begin{proof}
The proof is the same as that of Lemma G.20 in \cite{l1_lower_bound_and_sqrtk_subset} --- it uses a net argument and a union bound. Rather than constructing a net for all of $Y_{\beta, \gamma, p}$, the coordinates of a point $y \in Y_{\beta, \gamma, p}$ are first divided between points $y^j$ with disjoint supports, such that $y^j$ has coordinates between $\frac{1}{2^{j + 1}}$ and $\frac{1}{2^j}$. Then, for each $j$, an $\eps$-net $\calN_j$ (for a suitable $\eps$) is constructed for all points of the same form as $y^j$, with coordinates between $\frac{1}{2^{j + 1}}$ and $\frac{1}{2^j}$. It is shown that with high probability, all points in $\calN_j$ have at most $O(k/\log^2 k)$ coordinates which are at least $\Omega(\frac{1}{\log^2 k})$ --- if this occurs, then this implies that $\calEHat(A, V, \beta, \gamma, p)$ occurs. Moreover, it is only necessary to consider $O(\log k)$ distinct values of $j$ --- for sufficiently large values of $j \geq j^*$, $AV^T y^j$ makes a very small contribution to the coordinates of $AV^Ty$, and in fact, $\sum_{j \geq j^*} AV^Ty^j$ makes a very small contribution to the coordinates of $AV^Ty$ (the proof of this last statement is where Lemma \ref{lemma:gaussian_matrix_operator_norm} is used).

We now begin the proof. Let $Y_{\beta, \gamma, p}$ be as in Definition \ref{def:flat_vectors}, and take $y \in Y_{\beta, \gamma, p}$. As mentioned before, write $y = \sum_{j = j_0}^\infty y^j$, for $j \geq j_0$, $y^j$ has coordinates in the interval $[\frac{1}{2^{j + 1}}, \frac{1}{2^j})$. Note that the $y^j$ have disjoint supports, and we can take $j_0$ so that $\frac{1}{2^{j_0}} = \frac{1}{k^\beta}$ by the definition of $Y_{\beta, \gamma}$. Let $s_j$ be the support size of $y^j$. Then,
$$\Big(\frac{s_j}{2^{p(j + 1)}}\Big)^{\frac{1}{p}} \leq \|y^j\|_p \leq \|y\|_p \leq O(k^\gamma)$$
meaning
$$s_j \leq O(2^{p(j + 1)}k^{p\gamma})$$
In addition, it will later be useful to bound from above $\|y^j\|_2$:
$$\|y^j\|_2^2 \leq \frac{1}{2^{2j}} \cdot s_j \leq O(k^{p\gamma} \cdot 2^{p(j + 1) - 2j})$$

We use this to construct an $\eps$-net $\calN_j$ for points of the same form as $y^j$, and bound its size:
$$\calN_j := \{x \in \R^n \mid \exists x' \in \Z^n ,\, x = \eps x' ,\, \|x\|_p \leq O(k^\gamma), \, \forall i \in [n], \text{ either } \frac{1}{2^{j + 1}} \leq |x_i| < \frac{1}{2^j} \text{ or } x_i = 0\}$$
In other words, it is the integer grid, but contracted by a factor of $\eps$, and with coordinates and $\ell_p$-norm in the same range as $x$. This is an $\eps$-net in the $\ell_\infty$ norm, meaning it is an $\eps \sqrt{n}$-net in the $\ell_2$ norm. We will choose $\eps = O(\frac{1}{nrk^3}) = O(\frac{1}{k^{c_1 + c_2 + 3}})$, meaning $\calN_j$ is an $O(\frac{1}{k^{c_1 + c_2/2 + 3}})$-net for the $\ell_2$ norm. Now, we bound the size of $\calN_j$ --- for each coordinate $x_i$, since it is between $\frac{1}{2^{j + 1}}$ and $\frac{1}{2^j}$ and the net has a granularity of $\eps$, the number of choices for $x_i$ is $1 + \frac{1}{2^j\eps} \leq 1 + \frac{2}{\eps}$ since $\frac{1}{2^j} \leq \frac{1}{k^\beta} \leq 2$. The number of coordinates of $y^j$ is at most $n = O(k (\log k)^{c_2})$, meaning
\begin{equation}
\begin{split}
|\calN_j|
& \leq \Big(1 + \frac{2}{\eps}\Big)^{O(k (\log k)^{c_2})} \\
& \leq 2^{O(k (\log k)^{c_2} \log \frac{1}{\eps})} \\
& \leq 2^{O(k (\log k)^{c_2 + 1})}
\end{split}
\end{equation}

In preparation for the union bound over points in $\calN_j$, we consider the event $\calE(y^j)$ that for a single $y^j$, $AV^Ty^j$ has at least $O(\frac{k}{\log^2 k})$ coordinates which are at least $O(\frac{1}{\log^2 k})$ in absolute value. Notice that a single coordinate of $AV^Ty^j$ is $N(0, \|V^Ty^j\|_2^2)$ since the rows of $A$ are i.i.d. $N(0, 1)$. Hence, the probability $q$ that a particular coordinate of $AV^Ty^j$ is greater than $O(\frac{1}{\log^2 k})$ is, by properties of the Gaussian distribution, at most
$$\exp\Big(-\frac{1}{\|V^Ty^j\|_2^2} \cdot \frac{1}{\log^4 k}\Big)$$
since the probability that a Gaussian random variable $N(0, \sigma^2)$ has absolute value greater than $t$ is at most $e^{O(-\frac{t^2}{\sigma^2})}$. Hence, letting $i_0 = O(\frac{k}{\log^2 k})$, the probability that $AV^Ty^j$ has at least $O(\frac{k}{\log^2 k})$ coordinates greater than $O(\frac{1}{\log^2 k})$ in absolute value is at most
\begin{equation}
\begin{split}
\sum_{i = i_0}^{k} q^i(1 - q)^i \binom{k}{i}
& \leq k2^k q^i \\
& \leq k2^k \exp\Big(-\frac{i}{\|V^Ty^j\|_2^2} \cdot \frac{1}{\log^4 k}\Big) \\
& \leq k2^k \exp\Big(-\Theta\Big(\frac{k}{\|V^Ty^j\|_2^2 \log^6k}\Big)\Big) \\
& \leq k2^k \exp\Big(-\Theta\Big(\frac{k^{1 - p\gamma}}{2^{p - (2 - p)j} \cdot \log^6k}\Big)\Big) \\
& \leq k2^k \exp\Big(-\Theta\Big(\frac{k^{1 + \beta(2 - p) - p\gamma}}{\log^6k}\Big)\Big) \exp\Big(-\Theta\Big(\frac{k^{1 - p\gamma}}{2^{p - (2 - p)j} \cdot \log^6k}\Big)\Big) \\
& \leq \exp\Big(-\Theta\Big(\frac{k^{1 - p\gamma}}{2^{p - (2 - p)j} \cdot \log^6k}\Big)\Big)
\end{split}
\end{equation}
The first inequality is because $(1 - q)^i \leq 1$, $\binom{k}{i} \leq 2^k$, and there are at most $k$ summands. The third inequality is because $i \geq i_0 = \Theta(\frac{k}{\log^2 k})$. The fourth inequality is because $\|V^Ty^j\|_2^2 \leq \|V^T\|_2^2 \|y^j\|_2^2 \leq O(k^{p\gamma} \cdot 2^{p(j + 1) - 2j})$ because $V$ has orthonormal columns and because of the upper bound on $\|y^j\|_2$ mentioned above. The fifth inequality is because $2^{(2 - p)j} \geq 2^{(2 - p)j_0} = \frac{1}{k^{\beta(2 - p)}}$. Finally, the sixth inequality is because $\beta(2 - p) > p\gamma$, meaning $k^{\beta(2 - p) - p\gamma}/(\log^6 k) = \omega(1)$, and $k2^k$ is absorbed in the remaining exponential --- this is where we use the hypothesis that $p\gamma < (2 - p)\beta$.

Now we show that we can ignore the effect of $y^j$ for $j \geq \Theta(\log k)$ --- we then perform a union bound over the remaining $O(\log k)$ nets $\calN_j$ to show that $\calE(y^j)$ holds for all $y^j$ in those $\calN_j$. In particular, as in \cite{l1_lower_bound_and_sqrtk_subset}, let $j_1 = \lceil 100(c_1 + c_2 + 1) \log k \rceil$. Then, for $j \geq j_1$, $\|\sum_{j = j_1}^\infty y^j \|_2 \leq \Theta(2^{-j_1} \sqrt{n}) \leq \frac{1}{k^{100(1 + c_1)}}$ since all coordinates of $\sum_{j = j_1}^\infty y^j$ are at most $\frac{1}{2^{j_1}}$ and the $y^j$ have disjoint supports. Hence,
\begin{equation}
\begin{split}
\|AV^Ty^j\|_\infty
& \leq \|AV^Ty^j\|_2 \\
& \leq \|A\|_2 \|V^T\|_2 \|y^j\|_2 \\
& \leq O(\sqrt{r}) \cdot \frac{1}{k^{100(1 + c_1)}} \\
& \leq O\Big(\frac{\sqrt{k}(\log k)^{c_1/2}}{k^{100(1 + c_1)}}\Big) \\
& \leq \frac{1}{k^{100}}
\end{split}
\end{equation}
where the third inequality is because $A$ has top singular value at most $O(\sqrt{r})$ and $V$ has orthonormal columns. Hence, if $y^{---j_1} := \sum_{j = j_0}^{j_1} y^j$, then $y$ has at least $O(\frac{k}{\log k})$ coordinates with absolute value at least $O(\frac{1}{\log k})$, if and only if this is the case for $y^{---j_1}$.

Finally, by a union bound over $O(\log k)$ nets $\calN_{j_0}, \ldots, \calN_{j_1}$, we can show that for all $j$ between $j_0$ and $j_1$, with high probability, $AV^Ty^j$ has at most $O(k/\log^2 k)$ coordinates with absolute value greater than or equal to $O(1/\log^2 k)$ --- therefore, outside of a set of $O(\log k) \cdot O(k/\log^2 k) = O(k/\log k)$ coordinates, each coordinate of $AV^Ty$ is at most $O(\log k) \cdot O(1/\log^2 k) = O(1/\log k)$.

First, the probability that there exists $\widehat{y^j} \in \calN_j$ for some $j$ between $j_0$ and $j_1$, such that $\calE(\widehat{y^j})$ occurs, is

\begin{equation}
\begin{split}
P\Big[ \exists y^j \in \bigcup_{j = j_0}^{j_1} \calN_j,\, \calE(y^j) \text{ happens}\Big]
& \leq \sum_{j = j_0}^{j_1} |\calN_j| \cdot \exp\Big(-\Theta\Big(\frac{k^{1 - p\gamma}}{2^{p - (2 - p)j} \cdot \log^6k}\Big)\Big) \\
& \leq \sum_{j = j_0}^{j_1} 2^{O(k (\log k)^{c_2 + 1})} \cdot \exp\Big(-\Theta\Big(\frac{k^{1 - p\gamma} \cdot 2^{(2 - p)j}}{\log^6k}\Big)\Big) \\
& \leq O(\log k) \cdot 2^{O(k (\log k)^{c_2 + 1})} \cdot \exp\Big(-\Theta\Big(\frac{k^{1 - p\gamma} \cdot k^{\beta(2 - p)}}{\log^6k}\Big)\Big) \\
& \leq O(\log k) \cdot 2^{-\Theta(k)} \\
& \leq 2^{-\Theta(k)}
\end{split}
\end{equation}
The first inequality above is by a union bound and our upper bound on the probability of $\calE(\widehat{y^j})$ for an individual $\widehat{y^j}$ in $\calN_j$. The third inequality is because $2^j \geq k^\beta$. The fourth inequality is because $\beta(2 - p) - p\gamma > 0$, meaning that $k \poly(\log k) = o(k^{1 - p\gamma + \beta(2 - p)}/\log^6 k)$.

Finally, given a vector $y^j$ with each entry having absolute value in $[\frac{1}{2^{j + 1}}, \frac{1}{2^j})$ such that $\|y^j\|_p \leq k^\gamma$ and each entry is at most $\frac{1}{k^\beta}$, there exists $\widehat{y^j} \in \calN_j$ such that $\|y^j - \widehat{y^j}\|_2 \leq O(\frac{1}{k^{c_1 + c_2/2 + 3}})$. Hence, assuming that $\calE(\widehat{y^j})$ does not occur for any $\widehat{y^j}$ in $\calN_j$ for $j$ between $j_0$ and $j_1$,
\begin{equation}
\begin{split}
\|AV^Ty^j - AV^T\widehat{y^j}\|_\infty
& \leq \|AV^Ty^j - AV^T\widehat{y^j}\|_2 \\
& \leq O(\sqrt{r}) \|y^j - \widehat{y^j}\|_2 \\
& \leq O\Big(\frac{\sqrt{r}}{k^{c_1 + c_2/2 + 3}}\Big) \\
& \leq O\Big(\frac{1}{k^3}\Big)
\end{split}
\end{equation}
since $A$ has operator norm $O(\sqrt{r})$ and $V^T$ has operator norm $\leq 1$. Hence, with probability $1 - 2^{-\Theta(k)}$, for all $y \in Y_{\beta, \gamma, p}$, and all $j$ between $j_0$ and $j_1$, $AV^Ty^j$ has at most $O(\frac{k}{\log^2 k})$ coordinates with absolute value more than $O(\frac{1}{\log^2 k})$.

In summary, as discussed above, this implies that with probability $1 - 2^{-\Theta(k)}$, for all $y \in Y_{\beta, \gamma, p}$, $AV^Ty$ has at most $O(k/\log k)$ coordinates with absolute value more than $O(1/\log k)$. This completes the proof.
\end{proof}

\begin{lemma}[Paying with Either Regression Cost or Norm - Similar to Lemma G.22 of \cite{l1_lower_bound_and_sqrtk_subset}]
\label{lemma:paying_with_either_regression_cost_or_norm}
For any $t, k \geq 1$, and any constants $c_2 \geq c_1 \geq 1$, let $k \leq r = O(k (\log k)^{c_1})$, and $r \leq n = O(k (\log k)^{c_2})$. Let $V \in \R^{n \times r}$ be a matrix with orthonormal columns. For an arbitrary constant $\alpha \in (0, 0.5)$, if $A \in \R^{k \times r}$ is such that $\calEHat(A, V, \frac{1 + \alpha}{2}, \frac{1}{p} - \frac{1}{2} - \alpha, p)$ holds, then with probability $1 - 2^{-\Theta(tk)}$, there are at least $\lceil\frac{t}{10}\rceil$ such $j \in [t]$ that $\forall x \in \R^r$, either $\|Ax - v_j\|_p \geq \Omega(k^{\frac{1}{p} - \frac{1}{2} - \alpha})$ or $\|Vx\|_p \geq \Omega(k^{\frac{1}{p} - \frac{1}{2} - \alpha})$.
\end{lemma}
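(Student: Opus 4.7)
The plan is to reduce the claim to a per-$j$ tail bound and then combine across $j \in [t]$ via independence. Precisely, for each fixed $j \in [t]$ I would show that, conditional on $\calEHat(A, V, \beta, \gamma, p)$ with $\beta = (1+\alpha)/2$ and $\gamma = \frac{1}{p} - \frac{1}{2} - \alpha$, the probability over the independent Gaussian vector $v_j \in \R^k$ (which, in the hard-instance construction, is a standard Gaussian) that there exists $x \in \R^r$ with both $\|Ax - v_j\|_p \leq cT$ and $\|Vx\|_p \leq cT$ is at most $2^{-\Omega(k)}$, where $T = k^{1/p - 1/2 - \alpha}$ and $c > 0$ is a sufficiently small absolute constant depending only on $\alpha, p$. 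Given this per-$j$ estimate, independence of the $v_j$'s combined with a binomial-tail estimate gives
\[
\Pr\!\Big[\#\{j \in [t] : j \text{ admits such an }x\} > \tfrac{9t}{10}\Big] \leq \binom{t}{9t/10}\bigl(2^{-\Omega(k)}\bigr)^{9t/10} \leq 2^{-\Omega(tk)},
\]
which is the claimed $1 - 2^{-\Theta(tk)}$ bound on having at least $\lceil t/10\rceil$ ``bad'' indices.

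For the per-$j$ bound, the idea is to use $\calEHat$ to argue that $Ax$ is rigidly ``flat'' whenever $\|Vx\|_p \leq cT$. Fix such an $x$ and set $y = Vx \in \R^n$; since $V$ has orthonormal columns, $V^TV = I_r$, so $Ax = AV^T y$. I would then apply the dyadic decomposition $y = \sum_{\ell \geq \ell_0} y^{(\ell)}$ used in the proof of Lemma~\ref{lemma:high_prob_flat_vectors_stay_flat}, where $y^{(\ell)}$ collects the coordinates of $y$ of magnitude in $[2^{-(\ell+1)}, 2^{-\ell})$. For scales $\ell$ with $2^{-\ell} \leq k^{-\beta}$ (the ``bulk''), a suitable rescaling of $y^{(\ell)}$ lies in $Y_{\beta, \gamma, p}$, and $\calEHat$ yields that $AV^T y^{(\ell)}$ has at most $O(k/\log k)$ coordinates of magnitude $\geq \Omega(1/\log k)$; summing over the $O(\log k)$ active dyadic levels shows the bulk contribution to $Ax$ is flat. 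For the ``peak'' scales, the $\ell_p$-constraint on $y$ forces the total peak support to have size at most $c^p k^{p(\gamma+\beta)} = c^p k^{1-p\alpha/2} = o(k)$, and the operator-norm bound $\|A\|_2 = O(\sqrt{r})$ from $\calEHat$ together with $\|V^T y_P\|_2 \leq \|y_P\|_p \leq cT$ controls the peak contribution in $\ell_2$ by $O(\sqrt{r}\,T)$.

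Collecting these, $Ax$ lies in the Minkowski sum of a ``flat'' set (vectors with at most $O(k/\log k)$ coordinates of magnitude $\geq \Omega(1/\log k)$) and an $\ell_2$-ball of controlled radius. Then $\|Ax - v_j\|_p \leq cT$ forces $v_j$ to be well-matched, on $k - O(k/\log k)$ of its coordinates, by a small-magnitude vector, which by standard Gaussian anti-concentration has probability $2^{-\Omega(k)}$ for any single fixed $x$. To promote this from a single $x$ to the whole admissible set $\{x : \|Vx\|_p \leq cT\}$, I would build a net parametrized through the dyadic decomposition of $y$, exactly as in the proof of Lemma~\ref{lemma:high_prob_flat_vectors_stay_flat}, whose cardinality can be controlled at $2^{O(k)}$; a union bound then retains the $2^{-\Omega(k)}$ per-$j$ bound.

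The hard part I anticipate is this last step: making the net fine enough to approximate all admissible $x$ in the metric needed to transfer the Gaussian anti-concentration inequality, while keeping $\log|N| = O(k)$ so that the product $|N| \cdot 2^{-\Omega(k)}$ remains $2^{-\Omega(k)}$. A naive $\ell_2$-net on $\{x : \|Vx\|_p \leq cT\}$ has $\log|N| = \Theta(r\log k)$, which is much larger than $O(k)$, so the multiscale/dyadic parametrization (inherited from Lemma~\ref{lemma:high_prob_flat_vectors_stay_flat}) is essential for controlling the effective dimension. Everything else --- the reduction to the single-$x$ Gaussian tail, and the binomial-tail combination across the $t$ columns --- is routine.
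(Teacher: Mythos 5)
Your overall skeleton matches the paper's: a per-column anti-concentration bound of $2^{-\Omega(k)}$ over the Gaussian $v_j$, conditioned on the deterministic event $\calEHat$, followed by a binomial tail bound across the independent $v_j$ to get $2^{-\Theta(tk)}$; and the substitution $y = Vx$, $Ax = AV^Ty$ is also how the paper proceeds. However, there is a genuine gap in the two steps you yourself flag as the crux. First, your treatment of the ``peak'' part $y_0$ (coordinates of magnitude $> k^{-\beta}$) via the operator-norm bound is a dead end: the resulting $\ell_2$ control $\|AV^Ty_0\|_2 \leq O(\sqrt{r}\,T)$ with $T = k^{1/p-1/2-\alpha}$ is enormous (e.g.\ $\approx k^{1-\alpha}\polylog k$ for $p$ near $1$), so it permits $AV^Ty_0$ to have $\gg k$ coordinates of magnitude $\Omega(1/\log k)$ --- indeed its $\ell_2$ norm can exceed that of $v_j$ itself --- and the ``flat set plus small $\ell_2$-ball'' Minkowski structure therefore does not constrain $Ax$ in any way that feeds into anti-concentration. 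The paper makes no attempt to bound this contribution at all; the correct observation is that $y_0$ is $s$-sparse with $s \leq k^{(\beta+\gamma)p} = k^{1-p\alpha/2} = o(k/\log k)$, so $AV^Ty_0$ ranges over a set parametrized by a net of size $2^{O(s\log k)} = 2^{o(k)}$, and for each \emph{fixed} net point $z = AV^Ty_0$ one applies the Chernoff-type bound that a standard Gaussian vector lands within $O(1/\log k)$ of $z$ on $\geq k/10$ coordinates with probability $2^{-\Theta(k)}$ (anti-concentration around an arbitrary fixed target, not smallness of $|v_i|$ as your phrase ``matched by a small-magnitude vector'' suggests).

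Second, and relatedly, your plan to control the net by reusing the dyadic parametrization from the proof of Lemma \ref{lemma:high_prob_flat_vectors_stay_flat} does not close the gap you identify: the nets $\calN_j$ there have size $2^{\Theta(k(\log k)^{c_2+1})}$, which is far too large to union-bound against a $2^{-\Theta(k)}$ anti-concentration estimate (it works in that lemma only because the per-point failure probability there is $2^{-\Omega(k^{1+\beta(2-p)-p\gamma}/\polylog k)}$, superpolynomially smaller than $2^{-k}$). The resolution is structural: the flat part $y_1$ (all coordinates $\leq k^{-\beta}$) needs \emph{no} net whatsoever, because $\calEHat$ is a uniform, deterministic statement over all of $Y_{\beta,\gamma,p}$ that is independent of $v_j$; it guarantees $AV^Ty_1$ perturbs at most $O(k/\log k)$ coordinates by more than $O(1/\log k)$, simultaneously for every admissible $y_1$. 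Only the sparse peak part $y_0$ is discretized, and its net is small precisely because of the sparsity bound above (plus the hypothesis $(\beta+\gamma)p < 1$, i.e.\ $\alpha > 0$). Your per-scale chaining of $\calEHat$ over the bulk dyadic levels also does not quite work as stated ($O(\log k)$ levels each contributing $O(k/\log k)$ bad coordinates of size $\Omega(1/\log k)$ gives no control), but this is moot once you apply $\calEHat$ to the whole flat part $y_1 \in Y_{\beta,\gamma,p}$ in one shot.
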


\begin{remark}
The meaning of this lemma is that when a subset $A_S$ of columns of the hard instance $A \in \R^{(k + n) \times n}$ is chosen, then there exists a large enough subset of the remaining columns such that for each of those columns $v_j$, the regression coefficient vector $y_j$ used to fit those $v_j$ (using $A_S$ as the left factor) either leads to a large regression error on the top $k$ rows, or has a large norm, in which case it leads to a large regression cost on the bottom $n$ rows, which are simply the identity matrix.
\end{remark}

\begin{proof}
The proof is similar to that of the analogous Lemma G.22 of \cite{l1_lower_bound_and_sqrtk_subset}. Here we use a simplified version of that argument, that was given to us by Peilin Zhong.

We show the desired statement using a net argument. For convenience, let $\gamma = \frac{1}{p} - \frac{1}{2} - \alpha$, and let $\beta = \frac{1 + \alpha}{2}$. Then, note that $p\gamma = 1 - \frac{p}{2} - p\alpha$, which is less than $\beta(2 - p) = 1 - \frac{p}{2} + \alpha \cdot \frac{2 - p}{2}$. Hence, by Lemma \ref{lemma:high_prob_flat_vectors_stay_flat}, $\calEHat(A, V, \frac{1 + \alpha}{2}, \frac{1}{p} - \frac{1}{2} - \alpha, p)$ holds with probability $1 - 2^{-\Theta(k)}$. For a particular $x \in \R^r$, if we let $y := Vx$, meaning $x = V^Ty$, then the statement becomes equivalent to showing that there are at least $\lceil \frac{t}{10} \rceil$ indices $j \in [t]$ such that $\forall y \in \R^n$, either $\|y\|_p \geq \Omega(k^\gamma)$, or $\|AV^Ty - v_j\|_p \geq \Omega(k^\gamma)$. Throughout this proof, we consider a \textit{fixed} matrix $A \in \R^{k \times r}$ such that $\calEHat(A, V, \beta, \gamma, p)$ holds --- all randomness will be over the coordinates of the $v_j$.

Let $B_{\gamma, p}$ denote the $\ell_p$-ball of radius $O(k^\gamma)$ with center $0$. Then, we must show that for all $y \in B_{\gamma, p}$, $\|AV^Ty - v\|_p \geq \Omega(k^\gamma)$ with high probability. For each such $y$, we can write it as $y_0 + y_1$, where all the coordinates of $y_0$ are either $0$ or greater than $\frac{1}{k^\beta}$ in absolute value, all the coordinates of $y_1$ are at most $\frac{1}{k^\beta}$, and the supports of $y_0$ and $y_1$ are disjoint. Recall that this means $y_1 \in Y_{\beta, \gamma, p}$ as defined above. Since $\calEHat(A, V, \beta, \gamma, p)$ holds, for each $y \in B_{\gamma, p}$, its corresponding $y_1$ is such that $AV^Ty_0$ has at most $O(k/\log k)$ coordinates that are at least $O(1/\log k)$ in absolute value.

We first build a net for all possible $y_0$. For convenience, let $T_{\beta, \gamma, p}$ be the set of all possible $y_0$, i.e. $T_{\beta, \gamma, p} := \{x \in \R^n \mid \forall i \in [n], \text{ either }  |x_i| \geq \frac{1}{k^\beta} \text{ or } x_i = 0\}$. Then, for any $\eps > 0$, the following is an $\eps$-net in the $\ell_\infty$ norm for $T_{\beta, \gamma, p}$ (and hence an $\eps \sqrt{n}$-net in the $\ell_2$ norm):
$$\calN := \{x \in \R^n \mid \exists x' \in \Z^n ,\, x = \eps x' ,\, \|x\|_p \leq O(k^\gamma), \, \forall i \in [n], \text{ either }  |x_i| \geq \frac{1}{k^\beta} \text{ or } x_i = 0\}$$
Let us calculate the size of $|\calN|$. Note that if $s$ is the support size of $y_0$, then $\frac{1}{k^{\beta p}}s \leq \|y_0\|_p^p \leq \|y\|_p^p \leq k^{\gamma p}$ meaning $s \leq k^{\beta p + \gamma p}$. Moreover, for each coordinate of $x \in \calN$, the number of choices is at most $O(\frac{k^\gamma}{\eps})$ (since each coordinate is at most $k^\gamma$ and has a granularity of $\eps$). Hence, if we let $\eps = O(1/rnk^3)$ as in \cite{l1_lower_bound_and_sqrtk_subset}, then
$$|\calN| \leq \binom{n}{s} \cdot (k^{\gamma}/\eps)^{O(s)} = 2^{O(s)\log k} = 2^{k^{\beta p + \gamma p} \log k}$$
since $n = O(k(\log k)^{c_2})$.

We will perform a union bound over all $y_0 \in \calN$, to show that the failure event $\calE_1(y_0)$ does not occur, where $\calE_1(y_0)$ is the event that for some $y_1 \in Y_{\beta, \gamma, p}$ with disjoint support from $y_0$, if $y := y_0 + y_1$, then $\|AV^Ty - v\|_p \leq O(k^\gamma)$. First, let $\calE_2(y_0)$ be the event that $AV^Ty_0 - v$ has at most $O(k/\log k)$ coordinates that are at least $O(1/\log k)$ in absolute value.

\begin{claim}[Similar to Claim G.23 of \cite{l1_lower_bound_and_sqrtk_subset}]
Assume $\calEHat(A, V, \beta, \gamma, p)$ holds. Then, for all $y_0 \in T_{\beta, \gamma, p}$, $\calE_1(y_0)$ implies $\calE_2(y_0)$.
\end{claim}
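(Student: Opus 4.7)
The plan is to decompose $AV^Ty_0 - v$ as
\[
AV^Ty_0 - v \;=\; \bigl[AV^T(y_0 + y_1) - v\bigr] \;-\; AV^Ty_1,
\]
where $y_1 \in Y_{\beta,\gamma,p}$ is the witness guaranteed by $\calE_1(y_0)$. The idea is to bound the number of ``large'' coordinates of each summand separately, then combine via the triangle inequality applied coordinate-wise.

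For the first summand, $\calE_1(y_0)$ supplies $\|AV^T(y_0+y_1) - v\|_p \leq O(k^\gamma)$. A straightforward Markov/counting argument on the $\ell_p$-norm then shows that the number of coordinates of $AV^T(y_0+y_1) - v$ with absolute value at least $c/\log k$ is at most
\[
\frac{O(k^{\gamma p})}{(c/\log k)^p} \;=\; O\!\bigl(k^{\gamma p}(\log k)^p\bigr).
\]
Since $\gamma = \tfrac{1}{p} - \tfrac{1}{2} - \alpha$ gives $\gamma p = 1 - \tfrac{p}{2} - p\alpha < 1$, this is $o(k/\log k)$, and in particular is $O(k/\log k)$ for $k$ sufficiently large. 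For the second summand, the hypothesis $\calEHat(A,V,\beta,\gamma,p)$ applies directly to $y_1 \in Y_{\beta,\gamma,p}$, yielding that $AV^Ty_1$ has at most $O(k/\log k)$ coordinates of absolute value at least $\Omega(1/\log k)$.

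To finish, if some coordinate $i$ of $AV^Ty_0 - v$ has absolute value at least $C/\log k$ (for the constant $C$ hidden in the $\calE_2$ statement), then by the triangle inequality at least one of the corresponding coordinates of $AV^T(y_0+y_1) - v$ or $AV^Ty_1$ must exceed $C/(2\log k)$. Summing the two bounds above gives $O(k/\log k) + O(k/\log k) = O(k/\log k)$ such indices $i$, which is exactly the conclusion $\calE_2(y_0)$.

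The only real subtlety is calibrating the universal constants: the threshold $\Omega(1/\log k)$ produced by $\calEHat$ needs to be brought into agreement with the threshold in $\calE_2$, and the constant $c$ used in the Markov step needs to be chosen so that the resulting count is absorbed into the overall $O(k/\log k)$ budget. Once these constants are fixed consistently (which we can do since $\gamma p < 1$ leaves a genuine polynomial slack between $k^{\gamma p}(\log k)^p$ and $k/\log k$), the argument closes. No randomness over $v$ is used in this particular claim---$v$ is treated as fixed---so everything follows from deterministic manipulation under the events $\calEHat$ and $\calE_1(y_0)$.
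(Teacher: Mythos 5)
Your proof is correct and follows essentially the same route as the paper's: the same decomposition $AV^Ty_0 - v = (AV^Ty - v) - AV^Ty_1$, the same Markov-type count showing $AV^Ty - v$ has at most $k^{\gamma p}\log^p k = o(k/\log k)$ large coordinates (using $\gamma p < 1$), and the same application of $\calEHat$ to $y_1$. Your explicit remarks on constant calibration and on the claim being deterministic in $v$ are both accurate and slightly more careful than the paper's writeup.
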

\begin{proof}
The proof is similar to that of Claim G.23 of \cite{l1_lower_bound_and_sqrtk_subset}. Suppose $\calE_1(y_0)$ occurs, meaning there is $y_1 \in Y_{\beta, \gamma, p}$ such that if $y := y_0 + y_1$, then $\|AV^Ty - v\|_p \leq O(k^\gamma)$. Let $s$ be the number of coordinates of $AV^Ty - v$ that are greater than $\frac{1}{\log k}$ in absolute value --- then, $\frac{s}{\log^p k} \leq \|AV^Ty - v\|_p^p \leq k^{p\gamma}$, meaning $s \leq k^{p\gamma} \log^p k$, and this is $o(k/\log k)$, since $p\gamma < p \cdot (\frac{1}{p} - \frac{1}{2}) < 1$.

In addition, observe that
$$AV^Ty_0 - v = (AV^Ty - v) - AV^Ty_1$$
since $y = y_0 + y_1$. Since $\calEHat(A, V, \beta, \gamma, p)$ occurs, $AV^Ty_1$ has at most $O(k/\log k)$ coordinates that are greater than $\frac{1}{\log k}$ in absolute value, by the previous lemma. Hence, since $AV^Ty - v$ has at most $o(k/\log k)$ coordinates that are greater than $\Omega(1/\log k)$ in absolute value, $AV^Ty_0 - v$ also has at most $O(k/\log k)$ coordinates that are greater than $\Omega(1/\log k)$ in absolute value, meaning $\calE_2(y_0)$ holds.
\end{proof}

Now, we show that the probability of $\calE_2(y_0)$ is small, where the randomness is over the entries of a vector $v \in \R^k$ with i.i.d. $N(0, 1)$ entries.

\begin{claim}
Let $z \in \R^k$, and let $v$ be a $k$-dimensional vector with i.i.d. $N(0, 1)$ entries. Then, with probability $1 - 2^{-\Theta(k)}$, there exist \textit{at most} $O(k/\log k)$ coordinates $i \in [k]$ such that $|z_i - v_i| \leq O(1/\log k)$.
\end{claim}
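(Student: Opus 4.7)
The plan is an anti-concentration plus Chernoff argument that exploits the independence of the coordinates of $v$.

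First I would observe that, for each $i$, the random variable $v_i - z_i$ has an $N(-z_i, 1)$ distribution, so its density is pointwise bounded above by $1/\sqrt{2\pi}$ regardless of the value of $z_i$. Hence for any absolute constant $c > 0$,
$$\ProbBig{|v_i - z_i| \leq c/\log k} \leq \frac{2c}{\sqrt{2\pi}\,\log k} =: p_0 = O(1/\log k),$$
and the events $\{|v_i - z_i| \leq c/\log k\}$ are mutually independent across $i \in [k]$ because the $v_i$ are independent.

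Next I would let $X_i$ be the indicator of the event $\{|v_i - z_i| \leq c/\log k\}$ and let $X = \sum_{i=1}^k X_i$, a sum of independent Bernoulli random variables each with success probability at most $p_0$. Then $\mu := E[X] \leq kp_0 = O(k/\log k)$, and a standard multiplicative Chernoff bound gives
$$\ProbBig{X \geq (1+\delta)\mu} \leq \exp\bigl(-\delta^2 \mu/(2+\delta)\bigr)$$
for every $\delta > 0$. Choosing $\delta$ to be a suitably large absolute constant (and $c$ sufficiently small in the threshold) makes $(1+\delta)\mu \leq C_1 k/\log k$ for an appropriate constant $C_1$, so $X \leq C_1 k/\log k$ with failure probability at most $\exp(-\Omega(\mu)) = 2^{-\Omega(k/\log k)}$.

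The one subtlety is the stated failure-probability exponent $2^{-\Theta(k)}$. Since the uniform Gaussian density upper bound at scale $1/\log k$ forces $p_0 = \Theta(1/\log k)$ and cannot be sharpened, a Bernoulli Chernoff bound can only deliver $2^{-\Omega(\mu)} = 2^{-\Omega(k/\log k)}$ while simultaneously keeping the number of bad coordinates at $O(k/\log k)$; this is the main technical limitation of the argument. Fortunately, this weaker rate is already enough for the surrounding argument in Lemma~\ref{lemma:paying_with_either_regression_cost_or_norm}: the net $\calN$ used there has size $2^{O(s\log k)}$ with $s \leq k^{1-p\alpha/2}$, and for any fixed $\alpha > 0$, $s\log k = o(k/\log k)$, so a union bound over $\calN$ still succeeds with room to spare. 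I would therefore present the Chernoff argument with the honest rate $2^{-\Omega(k/\log k)}$ and observe that this is what is actually used downstream.
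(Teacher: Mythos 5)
Your argument is the same as the paper's (a pointwise bound on the Gaussian density gives per-coordinate closeness probability $O(1/\log k)$, then a Chernoff bound over the independent indicators), and your quantitative caveat is correct and identifies a real imprecision in the claim as stated: with the threshold $O(k/\log k)$ the failure probability genuinely cannot be $2^{-\Theta(k)}$, since the event that some fixed set of $Ck/\log k$ coordinates are all close already has probability at least $(c/\log k)^{Ck/\log k}(1-c/\log k)^{k} = 2^{-O(k\log\log k/\log k)} = 2^{-o(k)}$; the honest rate at that threshold is $2^{-\Theta(k\log\log k/\log k)}$. Your check that the weaker rate survives the union bound over $\calN$ is also right, since $|\calN| \le 2^{O(k^{1-p\alpha/2}\log k)} = 2^{o(k/\log k)}$, and the same is true of the subsequent amplification over $j \in [t]$ and the union bound over column subsets in the final theorem (which costs only $2^{O(n\log k)} \ll 2^{\Theta(nk/\log k)}$). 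An alternative, arguably cleaner, fix is to note that the claim is actually invoked downstream with the threshold $k/10$ rather than $O(k/\log k)$ --- and at threshold $\eps k$ for constant $\eps$ the bound $\binom{k}{\eps k}(O(1/\log k))^{\eps k} = 2^{-\Theta(k\log\log k)}$ does deliver (better than) the stated $2^{-\Theta(k)}$ --- so one can simply restate the claim with a constant-fraction threshold and keep the paper's exponent.
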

\begin{proof}
Let $i \in [k]$. Then,
\begin{equation}
\begin{split}
\Prob{|z_i - v_i| \leq O(1/\log k)}
& = \int_{v_i - O(1/\log k)}^{v_i + O(1/\log k)} \frac{1}{\sqrt{2\pi}} e^{-x^2/2} dx \\
& \leq O(1/\log k)
\end{split}
\end{equation}
Hence, if $Z_i$ is $1$ if $|z_i - v_i| \leq O(1/\log k)$ and $0$ otherwise, and $Z = \sum Z_i$ is the number of coordinates on which the difference between $z$ and $v$ is at most $O(1/ \log k)$, then $E[Z] \leq O(k/\log k)$. Since the $Z_i$ are independent, by a Chernoff bound, with probability $1 - 2^{-\Theta(k)}$, $Z \leq O(k/\log k)$. This proves the claim.
\end{proof}

By applying the claim above with $z = AV^Ty_0$ for any particular $y_0 \in \R^n$, the probability that $AV^Ty_0 - v$ has at least $\frac{k}{10}$ coordinates on which the difference is at most $O(1/\log k)$ (meaning $AV^Ty_0 - v$ has less than $\frac{9k}{10}$ coordinates on which the difference is greater than $O(1/\log k)$) is at most $2^{-\Theta(k)}$ --- since this implies $\calE_2(y_0)$, the probability of $\calE_2(y_0)$ is also at most $2^{-\Theta(k)}$. Now, we union bound over the net $\calN$ to show that this occurs for \textit{all} $y_0 \in \calN$ with high probability:
\begin{equation}
\begin{split}
\ProbBig{\exists y_0 \in \calN, \calE_2(y_0)}
& \leq |\calN| \cdot 2^{-\Theta(k)} \\
& \leq 2^{k^{(\beta + \gamma)p}\log k} 2^{-\Theta(k)} \\
& \leq 2^{-\Theta(k)}
\end{split}
\end{equation}
Here we use the hypothesis that $\beta + \gamma < \frac{1}{p}$, meaning $k^{(\beta + \gamma)p} = o(k)$.

The above argument shows that for all $y_0 \in \calN$, with probability at least $1 - 2^{-\Theta(k)}$, the event $\mathcal{F}$ holds that there does not exist $y_1 \in Y_{\beta, \gamma, p}$ such that $\|AV^Ty - v\|_p \leq O(k^\gamma)$ where $y = y_0 + y_1$. Assume that $\mathcal{F}$ holds. Now, suppose $y_0 \in T_{\beta, \gamma, p}$, and there exists $y_1 \in Y_{\beta, \gamma, p}$ such that, if $y = y_0 + y_1$, then $\|AV^Ty - v\|_p \leq O(k^\gamma)$. Let $\widehat{y}_0 \in \calN$ such that $\|\widehat{y}_0 - y_0\|_2 \leq \eps \sqrt{n} = O(1/r\sqrt{n}k^3)$. Then, letting $\widehat{y} := \widehat{y}_0 + y_1$, we obtain
$$\|AV^T\widehat{y} - v\|_p \leq \|AV^Ty - v\|_p + \|AV^T(\widehat{y} - y)\|_p \leq O(k^\gamma) + k^{\frac{1}{p} - \frac{1}{2}} \|AV^T(\widehat{y}_0 - y_0)\|_2$$
Here, the first inequality is by the triangle inequality, and the second is because $\widehat{y}_0 \in \calN$, meaning $\|AV^T\widehat{y} - v\|_p \leq O(k^\gamma)$. Finally, because of our choice of $\widehat{y}_0$ as the net vector closest to $y_0$, and because $k^{\frac{1}{p} - \frac{1}{2}} \leq \sqrt{k}$, we obtain $\|AV^T\widehat{y} - v\|_p \leq O(k^\gamma) + \sqrt{k}\sqrt{r} \cdot O\Big(\frac{1}{r\sqrt{n}k^3}\Big) = O(k^\gamma)$ since $\|A\|_2 \leq O(\sqrt{r})$, and $\|V\|_2 \leq 1$.

Therefore, if for all $y_0 \in \calN$, there does not exist a corresponding $y_1$ for which $y := y_0 + y_1$ satisfies $\|AV^Ty - v\|_p$, then the same holds for all $y_0 \in T_{\beta, \gamma, p}$. This means that, with probability $1 - 2^{-\Theta(k)}$ over the entries of $v$, for all $y_0 \in T_{\beta, \gamma, p}$, there does not exist $y_1 \in Y_{\beta, \gamma, p}$ with disjoint support from $y_0$ such that $\|AV^Ty - v\|_p \leq O(k^\gamma)$, meaning for all $y \in B_{\gamma, p}$, $\|AV^Ty - v\|_p \geq \Omega(k^\gamma)$. We have shown that for any $j \in [t]$, the event $\calE_3(v_j)$ holds with probability $1 - 2^{-\Theta(k)}$, where $\calE_3(v_j)$ is the event that with probability $1 - 2^{-\Theta(k)}$, for all $y \in \R^n$, either $\|y\|_p \geq \Omega(k^\gamma)$ or $\|AV^Ty - v_j\|_p \geq \Omega(k^\gamma)$. 

Note that the $v_j$ are independent for different $j$. Hence, we can show that $\calE_3(v_j)$ holds simultaneously for $\Omega(t)$ indices $j \in [t]$ with the desired probability, as follows. Let $Z_j = 1$ if $\calE_3(v_j)$ does not hold. Then, the probability that at least $\frac{9t}{10}$ of the $Z_j$ are equal to $1$ is at most
$$\sum_{d = \frac{9t}{10}}^{t} \binom{t}{d} (2^{-\Theta(k)})^d \leq O(t) \cdot 2^t \cdot 2^{-\Theta(tk)} = 2^{-\Theta(tk)}$$
meaning that with probability $1 - 2^{-\Theta(tk)}$, $\calE_3(v_j)$ holds for at least $\frac{t}{10}$ indices $j \in [t]$.
\end{proof}

Now we prove the main result of this appendix:

\begin{theorem}[Lower Bound for $\ell_p$ Low Rank Approximation through Column Subset Selection - Similar to Theorem G.28 of \cite{l1_lower_bound_and_sqrtk_subset}]
For a sufficiently large $k \in \N$, and any constant $c \geq 1$, let $n = O(k(\log k)^c)$ and let $M \in \R^{(k + n) \times n}$ be a random matrix such that the top $k \times n$ submatrix has i.i.d. $N(0, 1)$ entries, and the bottom $n \times n$ submatrix is the identity matrix. Then, with probability $1 - 2^{-\Theta(k)}$, for any subset $S \subset [n]$ with $|S| \leq \frac{n}{2} =: r$,
$$\min_{X \in \R^{r \times n}} \|M_SX - M\|_p \geq \Omega(k^{\frac{1}{p} - \frac{1}{2} - \alpha}) \min_{M_k \text{ rank }k} \|M_k - M\|_p$$
where $\alpha \in (0, \frac{1}{p} - \frac{1}{2})$ can be arbitrary.
\end{theorem}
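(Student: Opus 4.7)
The plan is to combine an easy upper bound on $OPT$ with a lower bound on the regression error of any fixed column subset, obtained from Lemma \ref{lemma:paying_with_either_regression_cost_or_norm}, and then take a union bound over subsets.

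\emph{Upper bound on $OPT$.} Writing the top $k \times n$ Gaussian block as $G$, the rank-$k$ matrix $\binom{G}{\,0\,}$ agrees with $M$ on the top and differs on the bottom by $I_n$, so $OPT \leq n^{1/p}$.

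\emph{Lower bound for a fixed $S$.} Fix any $S \subset [n]$ with $|S| = r \leq n/2$, and partition $M_S = \binom{A}{V}$, where $A \in \R^{k \times r}$ has i.i.d.\ $N(0,1)$ entries and the columns of $V \in \R^{n \times r}$ are distinct standard basis vectors (hence $V$ has orthonormal columns and $\|Vx\|_p = \|x\|_p$ for all $x$). For any $j \notin S$, write $M_j = \binom{g_j}{e_j}$; since $Vx$ and $e_j$ have disjoint supports,
\[
\|M_S x - M_j\|_p^p \;=\; \|Ax - g_j\|_p^p \;+\; \|Vx - e_j\|_p^p \;\geq\; \|Ax - g_j\|_p^p \;+\; \|x\|_p^p.
\]
Setting $\beta = (1+\alpha)/2$ and $\gamma = \tfrac{1}{p}-\tfrac{1}{2}-\alpha$, a short calculation gives $\beta(2-p) - p\gamma = \alpha(2+p)/2 > 0$, so the hypotheses of Lemma \ref{lemma:high_prob_flat_vectors_stay_flat} are satisfied and $\calEHat(A,V,\beta,\gamma,p)$ holds with overwhelming probability over $A$. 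Conditioning on this event and applying Lemma \ref{lemma:paying_with_either_regression_cost_or_norm} with $t = n-r \geq n/2$, at least $\Omega(n)$ of the indices $j \notin S$ satisfy: for every $x \in \R^r$, either $\|Ax - g_j\|_p = \Omega(k^{1/p - 1/2 - \alpha})$ or $\|x\|_p = \Omega(k^{1/p - 1/2 - \alpha})$. Each such column therefore contributes $\Omega(k^{1 - p/2 - p\alpha})$ to the $p$-th power of the regression cost, and summing and taking $p$-th roots yields
\[
\min_{X \in \R^{r \times n}} \|M_S X - M\|_p \;\geq\; \Omega\bigl(n^{1/p}\, k^{1/p - 1/2 - \alpha}\bigr),
\]
which, combined with $OPT \leq n^{1/p}$, gives the required $\Omega(k^{1/p - 1/2 - \alpha})$ lower bound for this particular $S$.

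\emph{Union bound over $S$.} Finally, I would take a union bound over the at most $\binom{n}{\leq n/2} \leq 2^n = 2^{O(k(\log k)^c)}$ admissible subsets. The Chernoff-style estimate inside the proof of Lemma \ref{lemma:paying_with_either_regression_cost_or_norm} gives failure probability $2^{-\Theta((n-r)k)} \leq 2^{-\Theta(nk)}$ per $S$ for the ``bad columns'' event, which trivially absorbs $2^n$. The main obstacle is the $\calEHat$ event: the ``$2^{-\Theta(k)}$'' bound quoted in Lemma \ref{lemma:high_prob_flat_vectors_stay_flat} is too weak to survive the union bound. However, inspection of its proof reveals that the actual per-$S$ failure probability is of the form $\exp\bigl(-\Theta\bigl(k^{1+\alpha(2+p)/2}/\log^6 k\bigr)\bigr)$, which, for $k$ sufficiently large, dominates $2^{O(k(\log k)^c)}$ and hence survives the union bound. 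Verifying that the constants in the proof of Lemma \ref{lemma:high_prob_flat_vectors_stay_flat} are indeed sharp enough is the main piece of bookkeeping I expect to require care; once it is in place, the failure probability for all $S$ simultaneously is $2^{-\Theta(k)}$, which is exactly what the theorem claims.
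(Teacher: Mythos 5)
Your decomposition of the per-column cost, the upper bound $OPT \leq n^{1/p}$, and the use of Lemma \ref{lemma:paying_with_either_regression_cost_or_norm} with $\beta = (1+\alpha)/2$, $\gamma = \tfrac1p - \tfrac12 - \alpha$ all match the paper's proof, and the arithmetic checks out. Where you diverge is exactly the step you flag as the main remaining work: the union bound over the events $\calEHat(A_S, I_r, \beta, \gamma, p)$. You propose to sharpen the failure probability in Lemma \ref{lemma:high_prob_flat_vectors_stay_flat} to $\exp(-\Theta(k^{1+\alpha(2+p)/2}/\log^6 k))$ per subset so that it beats $2^{O(k(\log k)^c)}$; this is a viable route (the exponent $1 - p\gamma + \beta(2-p) = 1 + \alpha(2+p)/2$ you extract is correct, and it does dominate for large $k$), but it forces you to re-audit the net argument and also to handle the operator-norm clause of $\calEHat$, whose natural failure bound $e^{-\Theta(r)}$ is borderline against $2^n$. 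The paper sidesteps all of this with a monotonicity observation (Claim \ref{claim:condition_on_single_event}): the single global event $\calEHat(A, I_n, \beta, \gamma, p)$ for the full $k \times n$ Gaussian block already implies $\calEHat(A_S, I_r, \beta, \gamma, p)$ for \emph{every} subset $S$ simultaneously, because a flat vector supported on $S$, padded with zeros, is still a flat vector in $\R^n$ (so $A_S I_r \widetilde{y} = A I_n y$), and $\|A_S\|_2 \leq \|A\|_2 \leq O(\sqrt{n}) = O(\sqrt{r})$ since $r = n/2$. One thus conditions on a single event of probability $1 - 2^{-\Theta(k)}$ and only union-bounds the $2^{-\Theta(nk)}$ failures from Lemma \ref{lemma:paying_with_either_regression_cost_or_norm}, which trivially absorbs $\binom{n+1}{r+1}$. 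If you adopt that containment argument, the bookkeeping you were worried about disappears and the stated $2^{-\Theta(k)}$ overall failure probability follows immediately.
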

\begin{proof}

Throughout the proof, fix $\alpha \in (0, \frac{1}{p} - \frac{1}{2})$, $\gamma = \frac{1}{p} - \frac{1}{2} - \alpha$, and $\beta = \frac{1 + \alpha}{2}$. In addition, let $A \in \R^{k \times n}$ be the submatrix of $M$ consisting of its top $k$ rows. The proof of this theorem follows that of Theorem G.28 of \cite{l1_lower_bound_and_sqrtk_subset}: we apply Lemma \ref{lemma:paying_with_either_regression_cost_or_norm} with $A_S$ in the place of $A$, to show that each subset $S$ of size at most $r$ incurs a large cost, then perform a union bound over all such $S$. For convenience, we consider the $p^{th}$ power of the $\ell_p$ norm throughout the proof.

First, notice that $\min_{M_k \text{ rank }k} \|M_k - M\|_p^p \leq n$, since we could take $M_k$ to be the $(k + n) \times n$ matrix whose first $k$ rows are the same as those of $M$, and whose last $n$ rows are $0$. Hence, it suffices to show that for any subset $S \subset [n]$ of size at most $r$, $\min_{X \in \R^{r \times n}} \|M_SX - M\|_p^p \geq k^{p\gamma} n$. First, fix a subset $S \subset [n]$ with $|S| \leq r$. Observe that the cost of fitting a single column $M_l$ of $M$ using $M_S$ is
$$\text{cost}(S, l) := \min_{x_l \in \R^r} \Big( \|A_Sx_l - A_l\|_p^p + \|x_l\|_p^p - 1\Big)$$
(where the notation $\text{cost}(S, l)$ is from the proof of Theorem G.28 in \cite{l1_lower_bound_and_sqrtk_subset}). This is because the first $k$ entries of $M_l$ are given by $A_l$, and one of the last $n$ entries of $M_l$ is $1$ and the others are $0$. We can apply Lemma \ref{lemma:paying_with_either_regression_cost_or_norm} now, as follows. Suppose $\calEHat(A_S, I_r, \beta, \gamma, p)$ occurs. Then, since the columns of $A$ are independent, and $A_l$ is independent from $A_S$ for $l \not\in S$, by Lemma \ref{lemma:paying_with_either_regression_cost_or_norm}, with probability $1 - 2^{-\Theta(nk)}$, for at least $\Omega(n)$ of the indices $l \in [n] \setminus S$, either $\|A_Sx_l - A_l\|_p^p \geq \Omega(k^{\gamma p})$ or $\|x_l\|_p^p \geq \Omega(k^{\gamma p})$ (since while applying that lemma, we can take $n = r$ and the matrix $V \in \R^{n \times r}$ to be $I_r$, which has orthonormal columns --- note that we are taking $t = n$ in that lemma). Hence,
$$\min_{X \in \R^{r \times n}} \|M_SX - M\|_p^p = \sum_{l \in [n] \setminus S} \text{cost}(S, l) \geq \Omega(n \cdot k^{\gamma p}) \geq \Omega(k^{\gamma p} \cdot OPT)$$
Now, instead of conditioning on $\calEHat(A_S, I_r, \beta, \gamma, p)$ simultaneously for all $S$, we can simply condition on $\calEHat(A, I_n, \beta, \gamma, p)$:
\begin{claim}[Similar to Claim G.29 of \cite{l1_lower_bound_and_sqrtk_subset})]
\label{claim:condition_on_single_event}
$\calEHat(A, I_n, \beta, \gamma, p)$ implies $\calEHat(A_S, I_r, \beta, \gamma, p)$.
\end{claim}
\begin{proof}
In the proof of this claim, we use $Y_{\beta, \gamma, p, t}$ to denote the instance of $Y_{\beta, \gamma, p}$ in $\R^t$. Suppose $\calEHat(A, I_n, \beta, \gamma, p)$ occurs. Then, for all $y \in Y_{\beta, \gamma, p, n} \subset \R^n$, $Ay$ has at most $O(k/\log k)$ coordinates with absolute value at least $\Omega(1/\log k)$. In particular, for any subset $S \subset [n]$, let $y_S \in \R^n$ be a point whose support is contained in $S$, and let $\widetilde{y_S}$ be $y_S$, with all coordinates not indexed by $S$ removed. Observe that $A_SI_r\widetilde{y_S} = AI_ny_S$, and since $\calEHat(A, I_n, \beta, \gamma, p)$ holds, $AI_ny_S$ has at most $O(k/\log k)$ coordinates with absolute value at least $\Omega(1/\log k)$. This implies that for any $y \in Y_{\beta, \gamma, p, r}$, $A_SI_ry$ has at most $O(k/\log k)$ coordinates with absolute value at least $\Omega(1/\log k)$. Moreover, since $A_S$ is a subset of columns of $A$, the operator norm of $A_S$ is at most $\|A\|_2 \leq \sqrt{n} = O(\sqrt{r})$, since $r = \frac{n}{2}$.
\end{proof}

Hence, we can perform a union bound over all subsets $S \subset [n]$ with $|S| \leq r$, while conditioning on $\calEHat(A, I_n, \beta, \gamma, p)$. For convenience, let $\calETilde = \calEHat(A, I_n, \beta, \gamma, p)$. Then,
\begin{equation}
\begin{split}
& \ProbBig{\exists S \subset [n] ,\, |S| \leq r \text{ s.t. } \min_{X \in \R^{r \times n}} \|M_SX - M\|_p^p < O(k^{\gamma p} \cdot OPT)} \\
& = \ProbBig{\exists S \subset [n] ,\, |S| \leq r \text{ s.t. } \min_{X \in \R^{r \times n}} \|M_SX - M\|_p^p < O(k^{\gamma p} \cdot OPT) \Big| \calETilde} \Prob{\calETilde} \\
& + \ProbBig{\exists S \subset [n] ,\, |S| \leq r \text{ s.t. } \min_{X \in \R^{r \times n}} \|M_SX - M\|_p^p < O(k^{\gamma p} \cdot OPT) \Big| \neg\calETilde} \Prob{\neg\calETilde} \\
& \leq \ProbBig{\exists S \subset [n] ,\, |S| \leq r \text{ s.t. } \min_{X \in \R^{r \times n}} \|M_SX - M\|_p^p < O(k^{\gamma p} \cdot OPT) \Big| \calETilde} + \Prob{\neg\calETilde} \\
& \leq \sum_{S \subset [n], |S| \leq r} \ProbBig{\min_{X \in \R^{r \times n}} \|M_SX - M\|_p^p < O(k^{\gamma p}OPT) \mid \calETilde} + 2^{-\Theta(k)} \\
& \leq \sum_{S \subset [n], |S| \leq r} 2^{-\Theta(nk)} + 2^{-\Theta(k)} \\
& = \binom{n + 1}{r + 1} 2^{-\Theta(nk)} + 2^{-\Theta(k)} \\
& \leq 2^{O(r \log n)}2^{-\Theta(nk)} + 2^{-\Theta(k)} \\
& = 2^{O(n \log k)}2^{-\Theta(nk)} + 2^{-\Theta(k)} \\
& = 2^{-\Theta(k)}
\end{split}
\end{equation}
Here, the first inequality is because probabilities are at most $1$. The second is because $\calETilde$ occurs with probability at least $1 - 2^{-\Theta(k)}$, and by a union bound over $S \subset [n]$ with $|S| \leq r$. The third is because of our observation above, that $\widetilde{\calE}$ implies $\calEHat(A_S, I_r, \beta, \gamma, p)$, and if $A_S$ is such that $\calEHat(A_S, I_r, \beta, \gamma, p)$ holds, then the probability that $\min_{X \in \R^{r \times n}} \|M_SX - M\|_p^p < O(k^{\gamma p} \cdot OPT)$ is at most $2^{-\Theta(nk)}$. After that, the second equality is because the number of subsets of $[n]$ of size at most $r$ is $\binom{n + 1}{r + 1}$.

In summary, the probability that there is a subset achieving error less than $O(k^{\gamma p} \cdot OPT)$ is at most $2^{-\Theta(k)}$.
\end{proof}

\newpage
\section{$\poly(k)$-Approximation Algorithms with Bicriteria Rank Independent of $n$ and $d$} \label{appendix:poly_k_bicriteria_rank}

\subsection{A $\poly(k)$-Approximation with $\poly(k)$ Bicriteria Rank}

We can remove the $O(\log d)$-factor in the bicriteria rank of Algorithm \ref{algorithm:column_subset_sampling}, at the cost of an increase in the approximation factor to $O(k^{\frac{2}{p} - 1} \poly(\log k))$. This works as follows: note that Algorithm \ref{algorithm:column_subset_sampling} selects columns in $O(\log d)$ blocks, each having size $r$ (which is $O(k \log k)$ if $p = 1$ and $O(k \log k \log \log k)$ if $p \in (1, 2)$) --- out of these $O(\log d)$ blocks, we show that there exists a subset of blocks of size $r$ (hence giving a column subset of size $r^2$) which spans an $O(k^{\frac{2}{p} - 1} \poly(\log k))$-approximation to $A$. An advantage of this algorithm is that both the rank \textit{and} the approximation factor are polynomial in $k$, and do not depend on $n$ and $d$.

\begin{algorithm}
\caption{Obtaining a $\poly(k)$-approximation of rank $O(r^2)$, where $r = O(k \log k)$ if $p = 1$ and $O(k \log k \log \log k)$ if $p \in (1, 2)$. This algorithm simply finds the left factor, and the right factor can be obtained using linear programming.}
\label{algorithm:poly_k_rank_k_check_subsets_algorithm}
\begin{algorithmic}
\Require $A \in \R^{n \times d}$, $k \in \N$, $p \in [1, 2)$
\Ensure $U \in \R^{n \times O(r^2)}$
\Procedure{PolyKErrorAndRankApproximation}{$A, k, p$}
\State {$S_1, S_2, \ldots, S_{b} \gets \textsc{RandomColumnSubsetSelection}(A, k, p)$ (Note that the output $S$ is being discarded, since all we need are the blocks. Also note that the number of blocks $b$ is $O(\log d)$.)}
\State {$U \gets $ The $n \times O(r^2)$ zero matrix}
\State {$\textsc{MinError} \gets 0$}
\If {$b < \frac{C}{\eps} r$}
    \State {// Here $C$ is a sufficiently large absolute constant. In this case \textsc{RandomColumnSubsetSelection}}
    \State {// already gives $O(k^2 \log^2 k/\eps)$ columns.}
    \State {$S \gets \cup_{i = 1}^b S_i$}
    \State {$U \gets A_S$}
\Else
    \For {$I \subset [b], |I| = r$}
        \State {$T \gets \cup_{i \in I} S_i$}
        \State {$U_{temp} \gets A_T$}
        \State {$V_{temp} \gets \argmin_{V \in \R^{r^2 \times d}} \|U_{temp} V - A\|_p$}
        \If {$\|U_{temp}V_{temp} - A\|_p \leq \textsc{MinError}$}
            \State {$\textsc{MinError} \gets \|U_{temp}V_{temp} - A\|_p$}
            \State {$U \gets U_{temp}$}
        \EndIf
    \EndFor
\EndIf
\EndProcedure
\end{algorithmic}
\end{algorithm}

\begin{theorem} [Column Subset Selection - Removing Dependence on $\log d$] \label{thm:poly_k_bicriteria_rank}
Let $A \in \R^{n \times d}$, $p \in [1, 2)$, and $k \in \N$. In addition, let $r = O(k \log k)$ if $p = 1$ and $r = O(k \log k \log \log k)$ otherwise. Then, with constant probability, Algorithm \ref{algorithm:poly_k_rank_k_check_subsets_algorithm} returns $U \in \R^{n \times O(r^2/\eps)}$ and $V \in \R^{O(r^2/\eps) \times d}$ such that
$$\|UV - A\|_p \leq O(r^{\frac{2}{p} - 1} (\log k)^{\frac{1}{p}}) \min_{A_k \text{ rank }k} \|A_k - A\|_p$$
with running time $\nnz(A) + d^{1 + \eps}\poly(k \log(d))$, aside from the time required for Algorithm \ref{algorithm:column_subset_sampling}.

This running time can be achieved if the multiple-response $\ell_p$-regression is performed using a single sampling/rescaling matrix based on the $\ell_p$-Lewis weights of $A_T$, where $T$ is the subset of columns returned by Algorithm \ref{algorithm:column_subset_sampling}.
\end{theorem}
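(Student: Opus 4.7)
The plan is to show that among the $\binom{b}{r}$ candidate unions of $r$ blocks enumerated by Algorithm \ref{algorithm:poly_k_rank_k_check_subsets_algorithm}, at least one yields a column subset of size $\le r^2$ whose span certifies the stated approximation factor; exhaustive search then picks it out. First I would dispose of the easy branch $b < Cr/\eps$: here $|\cup_{i = 1}^b S_i| \le br = O(r^2/\eps)$, and Theorem \ref{thm:column_sampling_better_approximation_factor} applied to the union already gives an $O(r^{1/p - 1/2}(\log k)^{1/p})$-approximation, which is no worse than $O(r^{2/p - 1}(\log k)^{1/p})$.

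For the main branch, let $S_1, \ldots, S_b$ be the blocks returned by Algorithm \ref{algorithm:column_subset_sampling} and define an auxiliary matrix $M \in \R^{n \times d}$ as follows: for each $j \in [d]$, let $i(j)$ be the unique iteration in which $j$ entered some $S_i$, let $z_{i(j), j}$ be the regression coefficient vector used within round $i(j)$ to fit $A_j$ out of the sampled columns of that round (zero-error if $j$ was a sampled rather than a covered index), and set $M_j = A_{S_{i(j)}} z_{i(j), j}$. By the per-round, per-column accounting inside the proof of Theorem \ref{thm:column_sampling_better_approximation_factor}, with constant probability $\|M - A\|_p \le O(r^{1/p - 1/2}(\log k)^{1/p}) \cdot OPT$, where $OPT = \min_{A_k \text{ rank }k}\|A - A_k\|_p$; by construction every column of $M$ lies in the span of a single block $A_{S_i}$. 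Now apply Theorem \ref{theorem:subset_existence} to $M$ with rank parameter $k$: there exists a set $J$ of $r$ column indices such that $M_J$ spans an $O(r^{1/p - 1/2})$-approximation to $M$ relative to $M$'s best rank-$k$ approximation. Since $\min_{M_k \text{ rank }k} \|M_k - M\|_p \le \|A^*_k - M\|_p \le OPT + \|A - M\|_p = O(r^{1/p - 1/2}(\log k)^{1/p}) \cdot OPT$ by a triangle inequality against the optimal rank-$k$ $A^*_k$, we obtain that $M_J$ spans an $O(r^{2/p - 1}(\log k)^{1/p}) \cdot OPT$-approximation to $M$.

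Because each of the $r$ columns of $M_J$ lies in the span of a single block, there is a set $I^* \subset [b]$ with $|I^*| \le r$ such that $\mathrm{span}(M_J) \subseteq \mathrm{span}(A_{T^*})$ for $T^* = \cup_{i \in I^*} S_i$, so $|T^*| \le r^2$. Hence some $X^*$ makes $\|A_{T^*} X^* - M\|_p \le O(r^{2/p - 1}(\log k)^{1/p}) \cdot OPT$, and one final triangle inequality against $\|M - A\|_p$ yields $\|A_{T^*} X^* - A\|_p \le O(r^{2/p - 1}(\log k)^{1/p}) \cdot OPT$. Since Algorithm \ref{algorithm:poly_k_rank_k_check_subsets_algorithm} sweeps every size-$r$ subset $I \subset [b]$ and retains the minimizer, its output does at least as well as $A_{T^*}$. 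For the runtime I would follow the theorem's hint: precompute a single $\ell_p$-Lewis-weight sampling and rescaling matrix $R$ for $A_T$ (with $T$ the full set returned by Algorithm \ref{algorithm:column_subset_sampling}) in $\nnz(A) + \poly(k \log d)$ time via Lemma \ref{lemma:sampling_by_lewis_weights}, carry out each of the $\binom{b}{r}$ multiple-response regressions on the compressed matrix $R A$ rather than on $A$, and evaluate each candidate's error through the same compressed sketch, absorbing the remaining work into a $d^{1 + \eps} \poly(k \log d)$ term.

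The main obstacle I expect is the construction of $M$: one must verify, by inspecting the proof of Theorem \ref{thm:column_sampling_better_approximation_factor} (not merely its statement, which only bounds the error attained by the \emph{optimal} right factor for all of $A_S$ simultaneously), that the cost bound is in fact a sum of \emph{per-column} errors in which each discarded column $A_j$ is fit using only columns from its own block $S_{i(j)}$. Once that is confirmed, the crucial block-support property of $M$ is immediate, and the remainder is a clean composition of Theorem \ref{theorem:subset_existence} with two triangle inequalities, which together square the $r^{1/p - 1/2}$ factor but leave the $(\log k)^{1/p}$ factor unchanged.
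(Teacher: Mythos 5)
Your proposal is correct and follows essentially the same route as the paper's proof: construct the block-supported intermediate matrix $M$ from the per-round, per-column accounting in Theorem \ref{thm:column_sampling_better_approximation_factor}, apply the existence result of Theorem \ref{theorem:subset_existence} to $M$, transfer back to $A$ via two triangle inequalities, and pay $\binom{b}{r} \leq d^{\eps}$ for the enumeration with a single shared Lewis-weight sketch. The "obstacle" you flag — that the covering cost in Theorem \ref{thm:column_sampling_better_approximation_factor} is genuinely a sum of per-column errors each charged to a single block — is exactly the observation the paper relies on, and it does hold by inspection of that proof.
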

\begin{proof}
Let $T \subset [d]$ be the subset of columns returned by Algorithm \ref{algorithm:column_subset_sampling}. We will show that with constant probability, $T$ contains a subset of size $r^2$ which provides a left factor for $A$ giving a $\poly(k)$-approximation.

Note that $A_T$ is a left factor which already gives an $O(r^{\frac{1}{p} - \frac{1}{2}} (\log k)^{\frac{1}{p}})$-approximation for $A$. By the proof of Theorem \ref{thm:column_sampling_better_approximation_factor}, to obtain this approximation factor, it suffices to fit each column $A_j$ of $A$ using at most $r$ columns of $A_T$ --- i.e. there is $B_j \in \R^{|T|}$ having at most $r$ nonzero entries, such that we can use $A_TB_j$ as our approximation to $A_j$. This is because in the proof of Theorem \ref{thm:column_sampling_better_approximation_factor}, to bound the cost on each discarded column $A_j$, we only consider the $\ell_p$-regression error obtained by fitting $A_j$ using the block of $r$ columns that were sampled on the round where $A_j$ is discarded.

Let $B \in \R^{|T| \times d}$ such that the $j^{th}$ column of $B$ is $B_j$. Then, as a restatement of Theorem \ref{thm:column_sampling_better_approximation_factor}, $\|A_T B - A\|_p \leq O(r^{\frac{1}{p} - \frac{1}{2}} (\log k)^{\frac{1}{p}}) \min_{A_k \text{ rank }k} \|A - A_k\|_p$. 

Let $M = A_T B$ --- we will find a $\poly(k)$-approximation for $M$ and show that it is also a $\poly(k)$-approximation for $A$ by the triangle inequality. By Theorem \ref{theorem:subset_existence}, there exists a subset of columns of $M$, of size $r$, giving rise to a left factor $U_0 \in \R^{n \times r}$ such that
$$\min_{V \in \R^{r \times d}} \|U_0V - M\|_p \leq O(r^{\frac{1}{p} - \frac{1}{2}}) \min_{M_k \text{ rank }k} \|M_k - M\|_p$$
Let $V_0$ be the optimal right factor for $U_0$. Then, letting $M_k$ and $A_k$ be the optimal rank-$k$ approximations in the $\ell_p$-norm for $M$ and $A$ respectively,
\begin{equation}
\begin{split}
\|U_0V_0 - A\|_p
& \leq \|U_0V_0 - M\|_p + \|M - A\|_p \\
& \leq O(r^{\frac{1}{p} - \frac{1}{2}}) \|M_k - M\|_p + \|M - A\|_p \\
& \leq O(r^{\frac{1}{p} - \frac{1}{2}}) \|A_k - M\|_p + \|M - A\|_p \\
& \leq O(r^{\frac{1}{p} - \frac{1}{2}}) \Big(\|A_k - A\|_p + \|A - M\|_p \Big) + \|M - A\|_p \\
& \leq O(r^{\frac{1}{p} - \frac{1}{2}}) \|A - M\|_p + O(r^{\frac{1}{p} - \frac{1}{2}}) \|A - A_k\|_p \\
& \leq O(r^{\frac{1}{p} - \frac{1}{2}}) \cdot O(r^{\frac{1}{p} - \frac{1}{2}} (\log k)^{\frac{1}{p}}) \|A - A_k\|_p \\
& = O(r^{\frac{2}{p} - 1} (\log k)^{\frac{1}{p}}) \|A - A_k\|_p
\end{split}
\end{equation}
meaning $U_0$ is a left factor giving an $O(r^{\frac{2}{p} - 1}(\log k)^{\frac{1}{p}})$-approximation for $A$.

Finally, for each column $v$ of $U_0$, $v$ is a column of $M$, meaning it can be written as a linear combination of at most $r$ columns of $A_T$ --- these are the $r$ columns in the block which was sampled in the round when $v$ was approximately covered and discarded. Hence, $A_T$ has a subset of columns of size $r^2$, which is the union of $r$ blocks out of the $O(\log d)$ blocks which were sampled over the course of Algorithm \ref{algorithm:column_subset_sampling}, such that this subset spans an $O(r^{\frac{2}{p} - 1} (\log k)^{\frac{1}{p}})$ approximation for $A$. Since Algorithm \ref{algorithm:poly_k_rank_k_check_subsets_algorithm} checks all subsets of this size, it gives an $O(r^{\frac{2}{p} - 1} (\log k)^{\frac{1}{p}})$-approximation for $A$.

Let us now analyze the runtime of this algorithm, aside from the time required to perform \textsc{RandomColumnSubsetSelection}. If the number of blocks $b < \frac{C}{\eps}r$ ($C$ is a sufficiently large absolute constant mentioned in Algorithm \ref{algorithm:poly_k_rank_k_check_subsets_algorithm} and will be chosen appropriately later), then \textsc{RandomColumnSubsetSelection} already returns a subset of $O(r^2/\eps)$ columns.

Otherwise, $r \leq \frac{\eps b}{C}$, and Algorithm \ref{algorithm:poly_k_rank_k_check_subsets_algorithm} checks all subsets of $[b]$ of size $r$. In this case, the number of subsets checked is
$$\binom{b}{r} \leq \binom{b}{\frac{\eps b}{C}} \leq 2^{H_2(\frac{\eps}{C})b} = d^{O(1) \cdot H_2(\frac{\eps}{C})} \leq d^\eps$$
where the first inequality holds as long as $r \leq \frac{\eps b}{C} \leq \frac{b}{2}$, and in the second inequality $H_2$ is the binary entropy function. The first equality is because $b = O(\log d)$, meaning $2^b = d^{O(1)}$. Finally, the fourth inequality is because we can choose $C$ to be sufficiently large.

Hence, the number of subsets checked is at most $d^\eps$ if $C$ is chosen to be sufficiently large. For each subset, we perform multiple-response $\ell_p$-regression. We can speed up this $\ell_p$-regression as follows. Let $S \in \R^{\widetilde{O}(k \log d) \times n}$ be a sampling and rescaling matrix generated according to the Lewis weights of $A_T$ (recall that $T$ is the subset of columns returned by Algorithm \ref{algorithm:column_subset_sampling}). Recall that by Lemma \ref{lemma:lewis_weights_affine_contraction}, with probability $1 - O(1)$, for all $V \in \R^{|T| \times d}$,
\begin{equation} \label{eq:no_affine_contraction}
\begin{split}    
\|SA_TV - SA\|_p^p \geq \Omega(1) \|A_TV - A\|_p^p - O(1) \|A_TV^* - A\|_p^p
\end{split}
\end{equation}
where $V^* = \argmin_{V'} \|A_T V' - A\|_p$.

Now, out of all the subsets tried by Algorithm \ref{algorithm:poly_k_rank_k_check_subsets_algorithm}, let $U_{temp}^*$ be the optimal subset of blocks, and let $U_{temp}$ be an arbitrary subset of blocks that is tried. We make the following definitions: $V_{temp}^* = \argmin_V \|U_{temp}^* V - A\|_p$, $V_{temp, S}^* = \argmin_V \|SU_{temp}^*V - SA\|_p$, and $V_{temp, S} = \argmin_V \|SU_{temp} V - SA\|_p$. Then, since $U_{temp}$ and $U_{temp}^*$ are both given by subsets of columns of $A$, we can apply Equation \ref{eq:no_affine_contraction}. Applying it to $U_{temp}$ gives us
\begin{equation} \label{eq:block_subset_lower_bound}
\begin{split}
\|SU_{temp}V_{temp, S} - SA\|_p^p
& \geq \Omega(1) \|U_{temp}V_{temp, S} - A\|_p^p - O(1) \|A_TV^* - A\|_p^p
\end{split}
\end{equation}

Now, let $U_{temp}^{**}$ be the subset of blocks tried by Algorithm \ref{algorithm:poly_k_rank_k_check_subsets_algorithm} which minimizes the sketched error, i.e. it is the subset $U_{temp}$ minimizing $\|SU_{temp}V_{temp, S} - SA\|_p$. Define $V_{temp, S}^{**} = \argmin_V \|SU_{temp}^{**}V - SA\|_p$ Then,
\begin{equation} \label{eq:best_subset_lower_bound}
\begin{split}
\|SU_{temp}^*V_{temp}^* - SA\|_p^p
& \geq \|SU_{temp}^*V_{temp, S}^* - SA\|_p^p \\
& \geq \|SU_{temp}^{**}V_{temp, S}^{**} - SA\|_p^p \\
& \geq \Omega(1) \|U_{temp}^{**}V_{temp, S}^{**} - A\|_p^p - O(1) \|A_TV^* - A\|_p^p
\end{split}
\end{equation}
where the first inequality is because $V_{temp, S}^*$ is the minimizer of the sketched error for the left factor $U_{temp}^*$, the second is because $U_{temp}^{**}$ is the subset of blocks minimizing the sketched error, and the third is by Equation \ref{eq:block_subset_lower_bound}.

Finally, by Lemma \ref{lemma:lewis_weights_no_dilation_contraction}, with probability $1 - O(1)$,
$$\|SU_{temp}^*V_{temp}^* - SA\|_p^p \leq O(1)\|U_{temp}^*V_{temp}^* - A\|_p^p$$
with probability $1 - O(1)$.

Putting this together, if $U_{temp}^{**}$ is the subset of blocks minimizing the sketched $\ell_p$ regression error, and $V_{temp, S}^{**}$ is its corresponding right factor (that minimizes the sketched $\ell_p$ regression error), then with probability $1 - O(1)$,
\begin{equation}
\begin{split}
\|U_{temp}^{**}V_{temp, S}^{**} - A\|_p^p
& \leq O(1)\|U_{temp}^*V_{temp}^* - A\|_p^p + O(1) \|A_TV^* - A\|_p^p \\
& \leq \Big(O(r^{\frac{2}{p} - 1}(\log k)^{\frac{1}{p}})\Big)^p \min_{A_k \text{ rank } k} \|A - A_k\|_p^p
\end{split}
\end{equation}
where the second inequality is because $U_{temp}^*$ is an $O(r^{\frac{2}{p} - 1}(\log k)^{\frac{1}{p}})$-approximate column subset as shown above, while $A_T$ provides an $O(r^{\frac{1}{p} - \frac{1}{2}}(\log k)^{\frac{1}{p}})$-approximation by our analysis of Algorithm \ref{algorithm:column_subset_sampling}.

Hence, by taking $p^{th}$ powers, we find that performing multiple-response $\ell_p$ regression for each of the subsets of blocks, while reusing a single sampling and rescaling matrix $S$ generated using the $\ell_p$ Lewis weights of $A_T$, only worsens the approximation guarantees by an $O(1)$ factor.

Now we finish analyzing the running time when Algorithm \ref{algorithm:poly_k_rank_k_check_subsets_algorithm} is implemented using $\ell_p$ Lewis weights. Computing the $\ell_p$ Lewis weights of $A_T$, and generating the sampling matrix $S$, takes at most $\nnz(A) + \poly(k)$ time, by Lemma \ref{lemma:sampling_by_lewis_weights}, and performing the multiplication $SA$ takes time at most $\nnz(A)$. Now, for each of the $d^\eps$ subsets we try, we perform multiple-response $\ell_p$ regression on the $\widetilde{O}(k \log d) \times d$ matrix $SA$, fitting the columns of $SA$ using a $\widetilde{O}(k \log d) \times r^2$ matrix $SU_{temp}$. Each of the $d$ $\ell_p$-regression steps takes time $\poly(k \log d)$. Therefore, each multiple-response $\ell_p$ regression step takes $d \poly(k \log d)$ time, and overall, trying all subsets takes $d^{1 + \eps}\poly(k \log d)$ time. This completes the proof.
\end{proof}

\begin{remark}
Note that in \cite{algorithm3_original}, it was shown that every $A \in \R^{n \times d}$ has a subset of $k$ columns spanning an $O(k)$-approximation to the optimum. Using this in our analysis, instead of Theorem \ref{theorem:subset_existence} allows us to get an $O(r^{\frac{1}{p} - \frac{1}{2}} \cdot k \cdot (\log k)^{\frac{1}{p}})$-approximation instead (in particular, for $p = 1$, this is an $\widetilde{O}(k^{\frac{3}{2}})$-approximation) while giving us a rank $O(rk)$-solution, in particular removing one $\log(k)$ factor from the rank in the $p = 1$ case.
\end{remark}

\begin{remark}
In the analysis of our algorithm with rank at most $k$, we will take $\eps$ from Theorem \ref{thm:poly_k_bicriteria_rank} to just be a sufficiently small constant to optimize the running time, but to also keep the rank at most $O(k^2 \log^2 k)$.
\end{remark}

\begin{remark}
Note that this kind of analysis is not applicable to any bi-criteria column subset selection algorithm, and it uses special properties of Algorithm \ref{algorithm:column_subset_sampling} --- given an arbitrary bi-criteria column subset selection algorithm, we cannot use this analysis to reduce the rank. The key property is that for each column that is discarded by Algorithm \ref{algorithm:column_subset_sampling}, it can be fit using at most $r$ columns belonging to the left factor that Algorithm \ref{algorithm:column_subset_sampling} returns. Note that Algorithm 1 of \cite{DBLP:conf/nips/SongWZ19} also has this property, and hence it can also be used to obtain a $\poly(k)$-approximation algorithm with $\poly(k)$ bicriteria rank, albeit with a somewhat larger approximation factor.
\end{remark}

\subsection{Reducing the Rank to At Most $k$}

We can combine our algorithm from the previous subsection with Theorem C.19 of \cite{l1_lower_bound_and_sqrtk_subset}:

\begin{theorem}[Theorem C.19 of \cite{l1_lower_bound_and_sqrtk_subset} --- Reducing the Rank]
\label{theorem:decreasing_bicriteria_rank_earlier}
Given a rank $r$ matrix $B = U_BV_B \in \R^{n \times d}$, where $U_B \in \R^{n \times r}$ and $V_B \in \R^{r \times d}$, $p \in [1, 2)$, and $k \in [r]$ there exists an algorithm which takes $(n + d) \cdot \poly(k)$ time to output two matrices $U \in \R^{n \times k}$ and $V \in \R^{k \times d}$ such that
$$\|UV - B\|_p \leq \poly(r) \min_{B_k \text{ rank }k}\|B_k - B\|_p$$
\end{theorem}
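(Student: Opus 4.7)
The plan is to exploit the factored form $B = U_B V_B$ via a two-sided sketch-and-solve approach based on $\ell_p$ Lewis weights. First I would compute an $\ell_p$ Lewis-weight row-sampling matrix $S \in \R^{m \times n}$ for $U_B$ and a column-sampling matrix $R \in \R^{d \times m'}$ for $V_B^T$, with $m, m' = O(r \polylog r)$; by Lemma~\ref{lemma:sampling_by_lewis_weights} these can be constructed in $\nnz(U_B) + \nnz(V_B) + \poly(r)$ time, i.e.\ $(n+d)\poly(r)$. The sketched matrix $\widetilde{B} := SBR = (SU_B)(V_B R)$ then has size $O(r\polylog r) \times O(r\polylog r)$ and is formed in the same time bound.

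The next step would be to solve a rank-$k$ $\ell_p$ approximation problem on $\widetilde{B}$ itself. Because $\widetilde{B}$ is $\poly(r) \times \poly(r)$, its rank-$k$ truncated SVD $\widetilde{B}_k = \widetilde{X}\widetilde{Y}$ already provides a $\poly(r)$-approximation to the rank-$k$ $\ell_p$ optimum of $\widetilde{B}$, via the standard relation $\|M\|_p \leq N^{O(1/p - 1/2)} \|M\|_F$ applied with $N = \poly(r)$. To lift back, I would use the fact that the rows of $\widetilde{Y}$ lie inside $\mathrm{rowspan}(V_B R)$: solving a linear system for $W \in \R^{k \times r}$ with $W(V_B R) = \widetilde{Y}$ recovers a lifted right factor $V := W V_B \in \R^{k \times d}$. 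The left factor $U := \argmin_{U'} \|U'V - B\|_p$ is then obtained by a multi-response $\ell_p$-regression, which is itself solvable in $(n+d)\poly(k)$ time via a further Lewis-weight sampling step on $V^T$.

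Correctness has two components: (i) reducing unrestricted rank-$k$ approximations of $B$ to candidates of the form $U_B P V_B$ with rank-$k$ $P \in \R^{r \times r}$ (i.e.\ whose rows and columns lie inside $\mathrm{rowspan}(V_B)$ and $\mathrm{colspan}(U_B)$) at a cost of a $\poly(r)$ factor in $\ell_p$ error; and (ii) transporting the small-matrix solution back to the ambient problem, where the subspace-embedding and affine-contraction properties of $S$ and $R$ (Lemmas~\ref{lemma:lewis_weights_no_dilation_contraction} and~\ref{lemma:lewis_weights_affine_contraction}) guarantee that the sketched objective differs from the true $\ell_p$-error by at most a constant factor over this restricted family.

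The main obstacle is piece (i): showing that the optimal rank-$k$ approximation $B_k^\ast$ to $B$, whose row and column spans may be arbitrary, can be replaced by a candidate supported inside the row and column spans of $B$ at only a $\poly(r)$ loss. I would attack this by fixing a $\poly(r)$-well-conditioned $\ell_p$-basis for each of the two $r$-dimensional spaces and projecting the factors of $B_k^\ast$ coordinate-wise onto those bases, bounding the distortion by combining the triangle inequality with the well-conditioned-basis bounds (the kind used to derive Lemma 10 of \cite{algorithm3_original}). Once such a projection lemma is in hand, the rest of the argument reduces to the now-standard two-sided Lewis-weight sketch-and-solve analysis, yielding the claimed $\poly(r)$-approximation in $(n+d)\poly(r)$ time, which matches the stated $(n+d)\poly(k)$ bound in the regime $r = \poly(k)$ in which the lemma is invoked.
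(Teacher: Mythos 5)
The paper does not actually prove this statement from scratch: its ``proof'' is a citation to Theorem C.19 of \cite{l1_lower_bound_and_sqrtk_subset} plus a remark that the Cauchy-sketch lemmas used there generalize to $p$-stable sketches for $p\in(1,2)$. Your two-sided Lewis-weight sketch-and-solve architecture is a legitimate alternative route, and most of it checks out: every residual of the form $U_BXV_B-B=U_B(X-I_r)V_B$ has columns in $\mathrm{colspan}(U_B)$ and rows in $\mathrm{rowspan}(V_B)$, so the subspace-embedding property of $S$ and $R$ gives constant distortion simultaneously over the entire constrained family; the truncated SVD of $SBR$ automatically lies in that family and is a $\poly(r)$-approximate $\ell_p$ minimizer of the small problem via $\|M\|_2\le\|M\|_p\le N^{1/p-1/2}\|M\|_2$ with $N=\poly(r)$ entries; and the lift-back plus the final $\ell_p$-regression for $U$ are fine.

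The genuine gap is your piece (i), which you correctly flag as the crux but do not prove: you need a rank-$k$ matrix $X\in\R^{r\times r}$ with $\|U_BXV_B-B\|_p\le\poly(r)\cdot OPT$. The attack you sketch --- fix well-conditioned bases and ``project the factors of $B_k^*$ coordinate-wise onto those bases, bounding the distortion by the triangle inequality and the well-conditioned-basis bounds'' --- does not work as stated. A well-conditioned basis controls $\|U_Bx\|_p$ against $\|x\|_q$; it does not supply a \emph{linear} map onto $\mathrm{colspan}(U_B)$ whose $\ell_p\to\ell_p$ norm on the residuals is $\poly(r)$ (the naive $\ell_2$ orthogonal projection costs $n^{1/p-1/2}$, not $r^{1/p-1/2}$), while the column-wise $\ell_p$-regression ``projection'' is nonlinear and destroys the rank-$k$ structure. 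Worse, the individual factors $X,Y$ of $B_k^*=XY$ need not be close to the spans of $U_B,V_B$ even when the product $XY$ is close to $B$, so projecting the factors rather than the product is the wrong object to control. The correct way to get piece (i) is the argument already present in this paper as Theorem \ref{theorem:subset_existence}: sample columns by the $\ell_p$ Lewis weights of $Y$ to form $R'$ and replace the left factor by $BR'(YR')^{+}$, whose columns automatically lie in $\mathrm{colspan}(U_B)$, losing only a $\poly(k)$ factor via the normal equations and the $\ell_p$/$\ell_2$ comparison; then apply the symmetric step with a row-sampling matrix for the new left factor to pull the right factor into $\mathrm{rowspan}(V_B)$. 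With that existence lemma substituted for your projection step, your proof goes through.
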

\begin{proof}
In the case $p = 1$, this is Theorem C.19 of \cite{l1_lower_bound_and_sqrtk_subset}. Theorem C.19 of \cite{l1_lower_bound_and_sqrtk_subset} also generalizes to $p \in (1, 2)$. This is because all of the lemmas used in the proof of this theorem can be extended to the $\ell_p$ norm case. Claim C.18 of \cite{l1_lower_bound_and_sqrtk_subset}, which is that multiplying any vector in a low-dimensional subspace by a dense Cauchy matrix leads to at most a constant-factor decrease in its $\ell_1$-norm, can be extended to the $\ell_p$-norm by considering matrices with $p$-stable entries instead. Similarly, Lemma D.23 of \cite{l1_lower_bound_and_sqrtk_subset} (multiplying a fixed matrix by a dense Cauchy matrix does not increase its $\ell_1$-norm significantly) can be generalized to the $\ell_p$-norm case by using matrices with $p$-stable entries. Claims B.10 and B.11 of \cite{l1_lower_bound_and_sqrtk_subset} already apply to the $\ell_p$-norm case, and Lemmas D.8 and D.10 of \cite{l1_lower_bound_and_sqrtk_subset} are generalized to the $\ell_p$-norm case in Lemma E.8 and E.10 respectively.
\end{proof}

Denote the algorithm described in Theorem C.19 of \cite{l1_lower_bound_and_sqrtk_subset} by \textsc{RemoveBicriteriaRank}$(U_B, V_B, k, p)$, where $B = U_BV_B$ is a matrix of rank at most $r$ and $k$ is the desired rank. Then, we obtain a $\poly(k)$-approximation algorithm, shown in Algorithm \ref{algorithm:poly_k_rank_exactly_k}:

\begin{algorithm}
\caption{Reducing the bicriteria rank of Algorithm \ref{algorithm:poly_k_rank_k_check_subsets_algorithm}.}
\label{algorithm:poly_k_rank_exactly_k}
\begin{algorithmic}
\Require $A \in \R^{n \times d}$, $k \in \N$, $p \in [1, 2)$
\Ensure $W \in \R^{n \times k}$, $Z \in \R^{k \times d}$
\Procedure{PolyKErrorNotBiCriteriaApproximation}{$A, k, p$}
\State {$U \gets \textsc{PolyKErrorAndRankApproximation}(A, k, p)$}
\State {$V \gets \argmin_{V'} \|UV' - A\|_p$}
\State {$W, Z \gets \textsc{RemoveBicriteriaRank}(U, V, k, p)$}
\EndProcedure
\end{algorithmic}
\end{algorithm}

\begin{theorem}[Analysis of Algorithm \ref{algorithm:poly_k_rank_exactly_k}]
\label{theorem:analysis_of_rank_exactly_k_algorithm}
Let $A \in \R^{n \times d}$ and $k \in \N$. Then, with constant probability, Algorithm \ref{algorithm:poly_k_rank_exactly_k} returns $W \in \R^{n \times k}$ and $Z \in \R^{k \times d}$ such that
$$\|WZ - A\|_p \leq \poly(k) \min_{A_k \text{ rank }k} \|A_k - A\|_p$$
The running time of Algorithm \ref{algorithm:poly_k_rank_exactly_k} is $\nnz(A) + (d^{1.1} + n)\poly(k\log d)$, aside from the running time of Algorithm \ref{algorithm:column_subset_sampling}.
\end{theorem}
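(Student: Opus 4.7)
The plan is to chain the two main ingredients already proved: Theorem \ref{thm:poly_k_bicriteria_rank}, which produces a $\poly(k)$-approximation with bicriteria rank $r' = O(r^2) = \poly(k)$, and Theorem \ref{theorem:decreasing_bicriteria_rank_earlier}, which converts any rank-$r'$ factorization into a rank-$k$ factorization at the cost of a $\poly(r')$ multiplicative factor in the error. Since $r' = \poly(k)$, composing the two still yields a $\poly(k)$-approximation overall.

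For correctness, I would first invoke Theorem \ref{thm:poly_k_bicriteria_rank} with $\eps = 0.1$ to obtain, with constant probability, a left factor $U \in \R^{n \times O(r^2)}$ and a corresponding right factor $V$ satisfying $\|UV - A\|_p \le O(r^{2/p - 1}(\log k)^{1/p}) \cdot OPT = \poly(k) \cdot OPT$, where $OPT := \min_{A_k \text{ rank }k} \|A_k - A\|_p$. Let $M := UV$, of rank $r' = O(r^2)$. Applying Theorem \ref{theorem:decreasing_bicriteria_rank_earlier} to the factorization $(U, V)$ produces $W \in \R^{n \times k}$ and $Z \in \R^{k \times d}$ satisfying
\[
\|WZ - M\|_p \le \poly(r') \cdot \min_{M_k \text{ rank }k} \|M_k - M\|_p.
\]
Taking $M_k$ to be the optimal rank-$k$ $\ell_p$-approximation $A_k$ for $A$ and using the triangle inequality,
\[
\min_{M_k \text{ rank }k} \|M_k - M\|_p \le \|A_k - M\|_p \le \|A_k - A\|_p + \|A - M\|_p \le OPT + \poly(k) \cdot OPT = \poly(k) \cdot OPT.
\]
A final triangle inequality then gives $\|WZ - A\|_p \le \|WZ - M\|_p + \|M - A\|_p \le \poly(k) \cdot OPT$, as required.

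For the running time I would sum the three stages. By Theorem \ref{thm:poly_k_bicriteria_rank} with $\eps = 0.1$, the call to $\textsc{PolyKErrorAndRankApproximation}$ costs $\nnz(A) + d^{1.1}\poly(k \log d)$, apart from the time spent inside Algorithm \ref{algorithm:column_subset_sampling}. The intermediate regression $V \gets \argmin_{V'} \|UV' - A\|_p$ does not need to be solved exactly; it suffices to compute a sketched optimum using an $\ell_p$-Lewis-weight sampling-and-rescaling matrix $S$ for $U$ with $\widetilde{O}(r')$ rows, which by Lemma \ref{lemma:sampling_by_lewis_weights} can be constructed in $\nnz(A) + \poly(k)$ time. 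By Lemma \ref{lemma:lewis_weights_no_dilation_contraction} and Lemma \ref{lemma:lewis_weights_affine_contraction} (used exactly as in the runtime analysis inside the proof of Theorem \ref{thm:poly_k_bicriteria_rank}), the resulting $V$ still makes $UV$ a $\poly(k)$-approximation to $A$, which is all the correctness argument above actually needs. This stage therefore costs $\nnz(A) + d\cdot\poly(k \log d)$. Finally, $\textsc{RemoveBicriteriaRank}$ runs in $(n + d)\poly(k)$ time by Theorem \ref{theorem:decreasing_bicriteria_rank_earlier}. Summing yields the claimed $\nnz(A) + (d^{1.1} + n)\poly(k \log d)$.

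The only thing to be mildly careful about is swapping the exact regression step for a sketched one in the runtime analysis, but Theorem \ref{theorem:decreasing_bicriteria_rank_earlier} is happy to accept any rank-$r'$ factorization $B = U_B V_B$, and the triangle-inequality chain above only uses the bound $\|UV - A\|_p \le \poly(k) \cdot OPT$, which the sketched regression already provides. Beyond that, the proof is just a direct composition of Theorem \ref{thm:poly_k_bicriteria_rank} and Theorem \ref{theorem:decreasing_bicriteria_rank_earlier}, so no genuinely new machinery is needed.
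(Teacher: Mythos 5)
Your proposal is correct and follows essentially the same route as the paper's proof: apply Theorem \ref{thm:poly_k_bicriteria_rank} with $\eps = 0.1$, bound $\min_{M_k}\|M_k - UV\|_p$ by $\poly(k)\cdot OPT$ via the triangle inequality with $A_k$ as the candidate, compose with Theorem \ref{theorem:decreasing_bicriteria_rank_earlier}, and use Lewis-weight sampling for the intermediate regression in the runtime analysis. No gaps.
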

\begin{proof}
For a matrix $M \in \R^{n \times d}$ and $k \in \N$, let $OPT_{M, k} := \min_{M_k \text{ rank }k} \|M_k - M\|_p$. Now, by the guarantees on Algorithm \ref{algorithm:poly_k_rank_k_check_subsets_algorithm}, $\|UV - A\|_p \leq \poly(k) OPT_{A, k}$. Let $r = O(k^2 \log^2 k)$ be the bicriteria rank of Algorithm \ref{algorithm:poly_k_rank_k_check_subsets_algorithm}. Then, by Theorem \ref{theorem:decreasing_bicriteria_rank_earlier}, $\|WZ - UV\|_p \leq \poly(r) OPT_{UV, k} = \poly(k) OPT_{UV, k}$. Therefore,
$$\|WZ - A\|_p \leq \|WZ - UV\|_p + \|UV - A\|_p \leq \poly(k) OPT_{UV, k} + \poly(k) OPT_{A, k}$$
and it suffices to show that $OPT_{UV, k} \leq \poly(k) OPT_{A, k}$. However, if $A_k = \argmin_{A' \text{ rank }k} \|A' - A\|_p$, then
$$OPT_{UV, k} \leq \|A_k - UV\|_p \leq \|A_k - A\|_p + \|UV - A\|_p \leq OPT_{A, k} + \poly(k) OPT_{A, k} \leq \poly(k) OPT_{A, k}$$
Here, the first inequality is because $A_k$ has rank at most $k$. The second inequality is by the triangle inequality, and the third is because of the approximation guarantees on Algorithm \ref{algorithm:poly_k_rank_k_check_subsets_algorithm}. In summary, $OPT_{UV, k} \leq \poly(k) OPT_{A, k}$, meaning $\|WZ - A\|_p \leq \poly(k) OPT_{A, k}$.

Now, let us analyze the running time of Algorithm \ref{algorithm:poly_k_rank_exactly_k}. Taking $\eps = 0.1$ in Algorithm \ref{algorithm:poly_k_rank_k_check_subsets_algorithm}, we find that the running time needed to obtain $U$ is $\nnz(A) + d^{1.1} \poly(k\log d)$, aside from the time taken to run Algorithm \ref{algorithm:column_subset_sampling}. We can also find $V$ in $\nnz(A) + d \cdot \poly(k)$ time. To see this, note that we can multiply $U$ and $A$ by a sampling matrix $S$ corresponding to the Lewis weights of $U$ --- by Lemma \ref{lemma:sampling_by_lewis_weights}, $S$ can be generated in $\nnz(U) + \poly(k)$ time, and computing $SU$ and $SA$ takes at most $\nnz(A)$ time. By Lemma \ref{lemma:lewis_weights_affine_contraction}, performing $\ell_p$ regression using $SU$ and $SA$ instead of $U$ and $A$ leads to an $O(1)$-approximate right factor $V$. Each $\ell_p$ regression takes $\poly(k)$ time, meaning finding $V$ takes $d \cdot \poly(k)$ time. Finally, by Theorem C.19 of \cite{l1_lower_bound_and_sqrtk_subset}, finding $W$ and $Z$ takes $(n + d) \cdot \poly(k)$ time. The overall running time is therefore $\nnz(A) + (d^{1.1} + n)\poly(k)$, aside from the time taken to run Algorithm \ref{algorithm:column_subset_sampling}.
\end{proof}

\begin{remark}
Note that the bicriteria rank $r$ has to be $\poly(k)$, rather than $\poly(k, \log d)$, before applying the algorithm described in Theorem C.19 of \cite{l1_lower_bound_and_sqrtk_subset}, since the approximation factor is $\poly(r)$.
\end{remark}

\begin{remark}
It is also possible to combine our Algorithm \ref{algorithm:poly_k_rank_k_check_subsets_algorithm} with Algorithm 4 of \cite{algorithm3_original} to reduce the bicriteria rank to at most $k$, and obtain a $\poly(k)$-approximate matrix, although this requires $2^{O(k \log k)} + \poly(nd)$ running time. Algorithm 4 of \cite{algorithm3_original} as originally presented instead obtains a $\poly(k, \log d)$-approximate matrix, since the initialization for Algorithm 4 in \cite{algorithm3_original} has bicriteria rank $O(k \log d)$, and the approximation factor of Algorithm 4 of \cite{algorithm3_original} is proportional to the bicriteria rank of the initialization (due to the use of well-conditioned bases \cite{ddhkm_09_well_conditioned_basis}). This is further discussed in Appendix E of \cite{dwzzr19}. We could remove the $\log d$, in the approximation factor of Algorithm 4 of \cite{algorithm3_original}, by using our Algorithm \ref{algorithm:poly_k_rank_k_check_subsets_algorithm} as the initialization instead.
\end{remark}

\newpage
\section*{Acknowledgements}

We are grateful to Peilin Zhong for suggesting the proof of Lemma \ref{lemma:paying_with_either_regression_cost_or_norm}, which is part of the $\ell_p$ column subset selection lower bound. A. Mahankali was supported by the SURF award from CMU's Undergraduate Research Office. D. Woodruff was supported by the Office of Naval Research (ONR) grant N00014-18-1-256.



\end{document}